\documentclass[12pt]{article}
\usepackage{caption}
\usepackage{a4wide}
\usepackage{amsmath,amsfonts,
amssymb, amsthm,dsfont,xcolor, MnSymbol, relsize} 
\usepackage{hyperref}
\usepackage{csquotes}
\usepackage{soul}
\usepackage{cancel}
\usepackage{tikz}
\usepackage{float}
\usetikzlibrary{arrows}
\usepackage{capt-of}

\usepackage[active]{srcltx}
\usepackage{dutchcal}

\usepackage{mathrsfs}
\usepackage{amscd,graphicx,mdframed}
\usepackage[shortlabels]{enumitem}
\usepackage[nottoc]{tocbibind}
\usepackage[toc,page]{appendix}

\newtheorem{theorem}{Theorem}[section]
\newtheorem{definition}[theorem]{Definition}
\newtheorem{proposition}[theorem]{Proposition}
\newtheorem{corollary}[theorem]{Corollary}
\newtheorem{lemma}[theorem]{Lemma}

\theoremstyle{definition}
\newtheorem{remark}[theorem]{Remark}

\newtheorem{notation}[theorem]{Notation}

\newtheorem{hypothesis}[theorem]{Hypothesis}

\def\x{\mathbf{x}}
\def\e{{\cal{e}}}

\def\R{\mathbb{R}}

\def\Co{{\mathbb C}} 
\def\H{{\cal H}} 
\def\B{\mathfrak{B}}

\def\cc{\mathcal{c}}
\def\ham{\mathfrak{H}}

\def\Op{\mathfrak{Op}} 

\def\X{\mathcal X}
\def\U{\mathcal{U}}

\def\E{\mathcal{E}}
\def\Ker{\mathbb{K}{\rm er}\,}
\def\Rge{\mathbb{R}{\rm ge}\,}
\def\s{\mathfrak{s}}
\def\p{\mathfrak{p}}

\def\F0{\mathlarger{\mathlarger{\mathbf{\Lambda}}}}
\def\Fb{\mathlarger{\mathlarger{\mathbf{\Lambda}}}_{\text{\tt bd}}}
\def\Fp{\mathlarger{\mathlarger{\mathbf{\Lambda}}}_{\text{\tt pol}}}

\def\z{\mathfrak{z}}

\def\Rd{\mathbb{R}^d}
\def\Zd{\mathbb{Z}^d}
\def\Sd{\mathbb{S}^d}
\def\Z{\mathbb{Z}}
\def\T{\mathbb{T}}
\def\bb1{{\rm{1}\hspace{-3pt}\mathbf{l}}}

\def\Id{\text{I}\hspace*{-2pt}\text{I{\sf d}}}
\def\dist{{\rm dist}}

\def\Int{\mathfrak{I}\mathit{nt}}
\def\supp{\mathop{\rm supp} \nolimits} 

\def\repi{\rightY\hspace*{-4pt}\rightarrow}

\def\BB{\mathbf{B}}

\def\UL1{\mathcal{U}\hspace*{-3pt}L^1}

\def\bz{\boldsymbol{\z}}

\def\lnu{\mathlarger{\nu}}

\def\MmN{\mathscr{M}_{n_\B}}

\def\zz{\mathcal{z}}

\def\tp{\tilde{\mathfrak{p}}}
\def\blPsi{\text{\large{$\boldsymbol{\Psi}$}}}
\def\ec{^{\epsilon,\cc}}

\def\tpsi{\widetilde{\psi}}

\def\mU{\mathfrak{U}}
\def\mW{\mathfrak{W}}
\def\mm{\mathfrak{m}}
\def\mM{\mathfrak{M}}
\def\mS{\mathfrak{S}}
\def\tLambda{\widetilde{\Lambda}}
\def\tOmega{\widetilde{\Omega}}
\def\tOp{\widetilde{\Op}}
\def\nB{n_{\B}}
\def\Nb{\mathbb{N}_\bullet}

\def\spr{{s^\prime}}

\def\Lsim{\text{\LARGE $\sim$}}

\def\beq{\begin{equation}}
	\def\eeq{\end{equation}}
    
	\def\Nb{\mathbb{N}_\bullet}

\numberwithin{equation}{section}

\definecolor{RawSienna}{cmyk}{0,0.72,1,0.45}

\newcommand{\clr}{\color{red}}
\newcommand{\clb} {\color{blue}}

\newcommand{\clbr} {\color{RawSienna}}

\title{A fresh look at the Peierls-Onsager substitution.}
\author{Horia D. Cornean\footnote{Department of Mathematical Sciences, Aalborg University, Thomas Manns Vej 23, 9220 Aalborg, Denmark; cornean@math.aau.dk}, Bernard Helffer\footnote{Laboratoire de Math{\'e}matiques Jean Leray,  Nantes Universit{\'e}  and CNRS, Nantes, France;
Bernard.Helffer@univ-nantes.fr}, Radu Purice\footnote{\enquote{Simion Stoilow} Institute of Mathematics of the Romanian Academy, P.O. Box 1-764, 014700 Bucharest, Romania; Radu.Purice@imar.ro}}

\begin{document}
	\maketitle
\thispagestyle{empty}

\setcounter{page}{1}

\begin{abstract}
We formulate a general version of the Peierls-Onsager substitution for a finite family of Bloch eigenvalues under a local spectral gap hypothesis, via strongly localized tight-frames and magnetic matrices. This extends the existing results to long-range magnetic fields  without any slow-variation hypothesis and without any triviality assumption for the associated Bloch sub-bundle. Moreover, our results cover a large class of periodic, elliptic pseudo-differential operators. We also prove the existence of an approximate time evolution for initial states supported inside the range of the isolated Bloch family, with a precise error control.
\end{abstract}

\tableofcontents

\section{Introduction}

The phenomenology of solids in the presence of external electro-magnetic fields is an extremely rich field for both theoretical and applied research. The basic model for these phenomena considers charged quantum particles moving in a potential that is periodic with respect to a regular lattice $\Gamma\cong\Zd$, describing the crystalline structure of the solid, under the action of some external electric and magnetic fields. An important feature implied by the periodicity of the  potential is \textit{the Bloch-Floquet decomposition} of the self-adjoint operator $H_0$ that describes a quantum particle moving in such a potential in a $d$-dimensional affine space (see \cite{CHP-5} and Appendix \ref{A-BF-Theory}). This means that $H_0$ may be decomposed as a smooth family of self-adjoint operators $\{\widehat{H}(\xi)\}_{\xi\in\T_*}$ (which we call fibre operators) acting on square integrable functions on the $d$-dimensional torus ($ L^2(\T)$ with $\T:=\Rd/\Zd$) and indexed by $\xi\in\T_*$, \textit{the dual torus} (see \eqref{D-torus} and the following text in Subsection  \ref{ss1.3}). Moreover, as described in \cite{CHP-5}, each fibre operator on $L^2(\T)$ has a compact resolvent and thus an unbounded discrete spectrum $\big\{\lambda_k(\xi)\,,\,k\in\mathbb{N}\setminus\{0\},\,\xi\in\T_*\big\}$. The functions $\lambda_k$, called the \textit{Bloch levels}, are continuous functions on the dual torus whose graphs might cross among themselves. 
We identify functions on a $d$-dimensional torus with periodic functions on $\Rd$.

\subsection{The Peierls-Onsager approximation procedure}
\label{SS-POapp}

In many situations there exists a group of Bloch levels which, although {possibly} crossing among themselves, stay well separated from the remaining infinite family of Bloch levels. Using the terminology introduced in \cite{FT}, we say that we have in this case an \textit{isolated Bloch family} and a \textit{local spectral gap condition}. We call it  a \textit{strictly isolated Bloch family} when {its range} is separated from the rest of the spectrum by some spectral gaps of the full operator $H_0$; in this case we say that the {\it global spectral gap condition} holds. In general, when such an isolated Bloch family $\B$ exists, one can define an infinite dimensional orthogonal projection $P_{\B}$, invariant under translations with vectors from $\Gamma$ and commuting with the Hamiltonian $H_0$ such that under certain topological conditions on the structure of $P_\B$, the dynamics of the states in $P_{\B}L^2(\Rd)$ is exactly given by the 'quantization' of a matrix-valued function $\widehat{\mm}_{\B}\big(\widehat{H}(\theta)\big)$ on the dual torus (see \eqref{F-PN-3}) considered as a periodic function on the dual space $\widehat{\mm}_\B(\xi)$. If the Bloch family is strictly isolated, then the projection $P_{\B}$ is a spectral projection of the operator $H_0$. 

Let us mention here that some interesting physical consequences may be deduced when the so-called Fermi level of the given material is situated in the domain of such an isolated Bloch family, but we shall not consider these aspects in the present paper.

If the external magnetic field  $B=dA$ is weak, one of the widely used approximations in the physics literature is the celebrated Peierls-Onsager substitution. It states that an isolated Bloch family will continue to manifest itself in some given \textit{spectral windows} of the operator $H^A$ (perturbation of $H_0$ by the magnetic field), through the existence of an 'approximatly' invariant subspace on which the dynamics may be approximated by "the minimal coupling procedure" applied to the matrix-valued function $\widehat{\mm}_{\B}(\xi)$, i.e. the quantization of the matrix-valued function $\widehat{\mm}_{\B}\big(\xi-A(x)\big)$. Let us notice that that when the isolated Bloch family has only one element $\lambda_{k_0}(\xi)$ and the topological condition is satisfied, then the matrix $\widehat{\mm_{\B}}$ has dimension 1 with the unique element $\lambda_{k_0}(\xi)$, so that the approximate dynamics is given by the quantization of the periodic function $\lambda_{k_0}\big(\xi-A(x)\big)$.

It turns out that  
transforming the above heuristic arguments into mathematical theorems is a rather challenging problem, the mathematics behind this approximating procedure being rather involved and thus a very large amount of literature has been devoted to it (let us cite just a few papers more closely related to our work \cite{Be1, BC, CIP, dNL,GMSj, FT, HS, IP, HS1, Ne-LMP,Ne-RMP, Pa, PST,Sj}).  We shall very briefly recall the main results of the existing literature in order to put into perspective our new results. 

\subsection{Very brief report on the existing results}

We emphasize from the begining that the quasi-totality of the existing references on the subject considers the case with 
\beq\label{F-f-1}
H_0=-(2m)^{-1}\Delta+\mu{W}(x)\,,
\eeq 
with $\Delta$ the Laplace operator on $\Rd$, $W$ some real potential function, $m>0$ the mass of the particles and $\mu\in\R$ some coupling constant, with the magnetic perturbation given by: 
\beq\label{F-f-2}
H^A=-(2m)^{-1}\Delta^A+\mu{W}(x)\,,
\eeq 
with \textit{the magnetic Laplacian}:
\beq\label{F-f-3}
\Delta^A\,:=\,\underset{1\leq j\leq d}{\sum}\big(\partial_j-iA_j(x)\big)^2.
\eeq

The main technical difficulty in dealing with magnetic perturbations comes when we consider magnetic fields that do not vanish fast enough towards infinity, having vector potentials whose derivatives might grow at infinity. In this case, the functions $\lambda\big(\xi-A(x)\big)$ might no longer be H\"ormander symbols. 

A second, more subtle technical difficulty has a topological nature and is related with the vector bundle structure of the Bloch-Floquet decomposition that we briefly review in Appendix \ref{A-BF-Theory}. More precisely, the Bloch-Floquet decomposition of the orthogonal projection associated with the isolated Bloch family (that we shall precisely define in Subsection \ref{SSS-iBb}) may allow for a smooth family of orthonormal basis, or not. In physical terms this smooth basis are known as localized composite Wannier functions (see also our Appendix \ref{A-BF-Theory}).

The first approaches to obtain a rigorous Peierls-Onsager approximation theory (see \cite{Be1,HS, HS1,Sj,GMSj}) have considered \textit{isolated Bloch families with only one Bloch level and constant magnetic fields} under the following two main assumptions:
\begin{hypothesis}\label{H-I}~
\begin{enumerate}
	\item \textit{the global gap condition};
	\item \textit{existence of a system of localized composite Wannier functions}.
\end{enumerate}
\end{hypothesis}
The constancy of the magnetic field has a very important technical consequence, namely the existence of \textit{a projective representation of the translations with elements from} $\Gamma$ that commutes with the perturbed operator. They are the Zak translations, that we recall in Paragraph \ref{SS-Z-trsl}. Extending these type of arguments to non-constant magnetic fields has unfortunately not been possible. The effort to relax the two assumptions in Hypothesis~\ref{H-I} and the constancy hypothesis on the magnetic field has generated a large number of improvements of these first results.

Starting with the paper \cite{PST} a new point of view has become dominant, namely to replace constancy of the magnetic field with \textit{a slow variation hypothesis}, i.e. to consider regular magnetic fields (i.e. with components of class $BC^\infty(\Rd)$) controlled by a small parameter $\epsilon\in[0,\epsilon_0]$ in the form: $B_\epsilon:=dA_\epsilon$ with $A_\epsilon(x):=A(\epsilon{x})$ for some $A$ with smooth components and to use a variant of adiabatic procedure called \textit{space-adiabatic perturbation theory} (SAPT) \cite{PST-0,T03}. This procedure allowed for the elimination of the global gap hypothesis for an isolated Bloch family with any finite number of Bloch levels. Nevertheless, let us clearly emphasize here that the slow variation  hypothesis is a  strong limitation, defining a rather small and very special subclass of long-range regular magnetic fields: namely those fields having weak components of some order $\epsilon>0$ with $k$-derivatives of order $\epsilon^k$ for any $k\in\mathbb{N}$; thus, on any finite region, for $\epsilon>0$ small, the magnetic field has a very simple structure.

The second assumption in Hypothesis \ref{H-I} remained necessary for practically all the following developments, except for the results in \cite{FT} where working with some non-trivial connections on the Bloch bundle allows the authors to treat some special situations when the localized composite Wannier functions do not exist. Nevertheless, one has to impose that the magnetic field is a slowly varying perturbation of a constant magnetic field, the non-constant part coming from a bounded magnetic potential. 

Finally, let us mention the paper \cite{CIP} where the authors propose the use of the magnetic pseudo-differential calculus (see \cite{MP-1},\cite{IMP-1} - \cite{IMP-3}) in order to obtain a natural and efficient framework to deal with the Peierls-Onsager approximation and under Hypothesis \ref{H-I} obtain an extension of the rigorous Peierls-Onsager approximation for strictly isolated Bloch families of a large class of magnetic pseudo-differential operators with elliptic symbols of strictly positive order and regular magnetic fields (i.e. with smooth components, bounded together with all their derivatives).

Let us also recall the approach using a kind of Grushin problem (\cite{GMSj, IP}) that allowed for important extensions concerning the location of the perturbed spectrum close to the isolated Bloch family, without any information related to the approximate dynamics.

Thus, to our knowledge, in the current state of the art, the Peierls-Onsager substitution without a global gap considiton has been rigorously defined and controlled only for slowly varying long range perturbing magnetic fields and, left aside some very special cases (treated in \cite{FT}), only in the presence of localized composite Wannier functions.

\subsection{Significance of our results}\label{SS-f-1}

In this paper we use an abstract general theorem concerning finite rank vector bundles over finite dimensional compact manifolds in order to construct a Parseval frame (see Appendix~\ref{A-frames}) associated with  any finite isolated Bloch family, and use this frame and the magnetic pseudo-differential calculus in order to define and control a Peierls-Onsager approximation for regular long-range perturbing magnetic fields, without the  global gap condition and without the hypothesis of existence of localized composite Wannier functions. Let us clearly state here what we mean by regular magnetic field;
\begin{definition}
We say that a magnetic field on a $d$-dimensional configuration space is \emph{regular}, when its components are smooth and bounded together with their derivatives of all orders.
\end{definition}
Our arguments apply to a large class of pseudo-differential operators with elliptic periodic symbols of strictly positive order (see Definition \ref{D-Hcirc}), the case in \eqref{F-f-1} being viewed as associated with  the elliptic symbol of second order: $(x,\xi)\mapsto (2m)^{-1}|\xi|^2+\mu{W}(x)$. In fact, as also stated in Definition \ref{D-Hcirc}, our quantum Hamiltonians may also incorporate a $\Gamma$-periodic regular magnetic field satisfying Hypothesis \ref{H-BGamma} and thus having a $\Gamma$-periodic vector potential.

Finally, let us mention that we can make a more refined analysis for the case when the magnetic field perturbation has a non-zero constant part, i.e. with the following structure:
$$ B^{\epsilon,\cc}(x)= \epsilon \, \big (B_0 +\cc B_1(x)\big ),$$ 
where $\cc\in[0,1)$, $B_0$ has constant components and $B_1(x)$ is a long-range (non vanishing at infinity), regular magnetic field, \textit{that is not supposed to be either periodic or slowly variable}. To this magnetic field one can always construct (see Appendix \ref{A-m-PsiDO}) a magnetic vector potential $A^{\epsilon,\cc}(x)$ which is an $1$-form whose coefficients are smooth and grow at most linearly at infinity. This linear growth at infinity and the lack of slow variation are the main reasons for which the spatial adiabatic methods do not work here. 

Moreover, an argument based on a Schur-Feshbach decomposition allows to compare the real dynamics and the approximate one with an error of order $\epsilon^2$ for the spectrum, and of order $\epsilon$ for the time evolution of states with energies in an associated spectral window.
		
Compared with the results of \cite{FT}, our approach allows non-constant magnetic field perturbations which do not come from bounded and slowly variable magnetic vector potentials and our effective operator is identified with a magnetic matrix instead of a pseudodifferential operator on a twisted two dimensional torus, and we may work in any dimension $d\geq 2$.

Regarding the construction of our Parseval frame, compared with \cite{CMM}, what we do here is valid in any dimension $d\geq 2$. Also, the way in which we construct the magnetic frame from the non-magnetic one is completely different and it may be of independent interest. Finally, let us emphasize the elaboration in Subsection  \ref{SS-psGS} of an abstract procedure replacing the Gram-Schmidt method and allowing to transform a frame into a Parseval frame.
	
\subsection{Brief overview of the paper}

The notations and conventions we use, as well as a number of basic facts from the literature are gathered in Subsection \ref{ss1.3}.

Let us emphasize from the beginning that the entire formulation of our arguments is done in the framework of the magnetic pseudo-differential calculus (see \cite{IMP-1, IMP-2, IMP-3, MPR1,CHP-3, CHP-4-1}), which to our knowledge is the most natural one allowing for a precise mathematical formulation of the Peierls-Onsager substitution, as quantization of the matrix-valued $\Gamma_*$-periodic function $\widehat{\mm_{\B}}\big(\xi-A(x)\big)$ as mentioned in Subsection \ref{SS-POapp}, for long-range regular magnetic fields.

Our approach to the Peierls-Onsager approximation is rather different from the one using the Space-Adiabatic Perturbation Theory, being a quite natural extension of the first approaches in \cite{Ne-LMP, Ne-RMP, HS}, based on three important technical new aspects:
\begin{itemize}
	\item the Parseval frames that one can build in the spaces of sections of finite rank vector bundles over finite dimensional compact manifolds, allowing to replace the lack of localized smooth Wannier functions,
	\item the magnetic pseudo-differential calculus, allowing to deal with non-constant long-range regular magnetic fields,
	\item a procedure based on the Schur complement used to obtain good spectral results for operators ``reduced" to some ``quasi-invariant" subspaces.
\end{itemize}

In Section \ref{S-mres} we present the main framework of our work, giving a precise formulation of the problem we consider and the main results we obtain.

Section \ref{S-Pfr-ibBf} introduces the main technical construction needed for our strategy starting from a general abstract result in the theory of smooth Euclidean vector bundles and defining a Parseval frame $\blPsi_{\B}$ \eqref{DF-freePfr} for the closed subspace defined by the isolated Bloch family $\B$ in Hypothesis \ref{H-isBf}.

Section \ref{S-3} starts from some ideas in \cite{Z,Lu} of combining $\Gamma$-translations with a local change of gauge, generalizing a procedure used in \cite{Ne-RMP, HS} based on the Zak translations in constant magnetic field (see Definition \ref{D-Ztr-cB}) in order to build a ``perturbed Parseval frame" $\blPsi^\epsilon_{\B}$ in \eqref{DF-Pfr-B-eps}. The closed subspace generated by this frame is the image of the perturbed orthogonal projection $P^\epsilon_{\B}$ announced in point \ref{point1} of  Theorem \ref{T-I}. The different properties of $P^\epsilon_{\B}$ are proved using Remark \ref{R-III-1}, the results in Subsection \ref{SS-B-P-fr},  Proposition~\ref{R-p-symb},  Remark~\ref{R-p-eps-B} and the results of Paragraph ~\ref{SS-comm-H-eps}.

Section \ref{S-4} uses the results in \cite{AMP,CP-1} concerning the continuous variation of the spectrum of the Hamiltonian with the intensity of weak regular magnetic fields and a variant of the Feshbach-Schur argument elaborated by the authors in \cite{CHP-2, CHP-4} in order to obtain a relation between the resolvents of the 
complete and the reduced perturbed Hamiltonians for a spectral parameter belonging to a certain ``window". In this way we prove Theorems~\ref{T-I-1} and \ref{C-T-I}, which leads to the results stated  in point \ref{point2} of Theorem \ref{T-I}.

Concerning the results stated as point \ref{point3} of  Theorem \ref{T-I}, they follow from the conclusions and arguments in Subsections \ref{SS-m-P-frame} - \ref{SS-PN-V-2} using the perturbed Parseval frame $\blPsi^\epsilon_{\B}$ in \eqref{DF-Pfr-B-eps}. The proclaimed $C^*$-algebra injective morphism $\mathfrak{M}^\epsilon_{\B}$ is just the composition of the $C^*$-algebra morphism $\mW^\epsilon_{\B}$ defined in Remark \ref{R-PN-V-2} composed to the right with the natural isomorphism $\mathbb{B}\big(\ell^2(\Gamma)\otimes\Co^{\nB}\big)\overset{\sim}{\longrightarrow}\mathscr{M}^\circ_\Gamma[\mathscr{M}(\nB)]$ induced by the canonical basis $\big\{{{\cal{e}}_\alpha\otimes{\cal{e}}_p}\big\}_{(\alpha,p)\in\Gamma\times\{1,\dots,\nB\}}$ of $\ell^2(\Gamma)\otimes\Co^{\nB}$. {Conclusion}  3.\ref{point3a} in Theorem \ref{T-I} is just the abstract property of injective $C^*$-algebra morphisms, while {Conclusion} 3.\ref{point3b} follows from the results in Subsections \ref{SS-PN-V-2} and \ref{SS-B-P-fr}. 

Section \ref{S-5} is dedicated to the proof of Theorem \ref{T-II} concerning the special case of a regular perturbing magnetic field having a non-zero constant part which allows for a more refined description of the associated effective Hamiltonian.

We have attached three appendices recalling a few basic facts about the Bloch-Floquet theory, the magnetic quantization and the theory of frames in Hilbert spaces. Let us notice that in the second appendix dedicated to the magnetic quantization we have also included some computational details of some arguments from Section \ref{S-3}.

\section{Precise formulation of the problem and the main result}\label{S-mres}
\subsection{Notations}
We denote by $\mathbb{N}_\bullet:=\mathbb{N}\setminus\{0\}$ and for $n\in\Nb$ let  $\underline{n}:=\{1,\ldots,n\}$. We use the convention $\R_+:=[0,+\infty)$. {Given a normed linear space $\mathscr{V}$ with the norm $|\cdot|$ we use the notation $<v>:=\sqrt{1+|v|^2}$ for $v\in\mathscr{V}$ and respectively $v(\epsilon)=\mathscr{O}(\epsilon)$ for a map $[0,\epsilon_0]\ni\epsilon\mapsto\,v(\epsilon)\in\mathscr{V}$, for some $\epsilon_0>0$, such that $v(0)=0$ and $\epsilon^{-1}|v(\epsilon)|\leq C<\infty$ for any $\epsilon\in(0,\epsilon_0]$.}
We denote by $\bb1$ the identity operator on any vector space, that we may sometimes indicate by an index; $\Id$ will denote the identity map on any set, that we may also sometimes indicate by an index. For a subset $M$ of a topological space we denote by $\mathring{M}$ its interior subset (i.e. the largest open subset contained in $M$). Given two topological vector spaces $\mathscr{V}_1$ and $\mathscr{V}_2$ we denote by $\mathcal{L}(\mathscr{V}_1;\mathscr{V}_2)$ the linear space of linear continuous maps $\mathscr{V}_1\rightarrow\mathscr{V}_2$ with the topology of uniform convergence on bounded sets. Given any complex Hilbert space $\big(\mathcal{H},(\cdot,\cdot)_\mathcal{H}\big)$, the scalar product will be considered anti-linear in the first factor; we shall use the notations $\mathbb{B}(\mathcal{H})$ for the bounded linear operators on $\mathcal{H}$, respectively $\mathbb{F}(\mathcal{H})$ for the finite-rank operators, $\mathbb{U}(\mathcal{H})$ for the unitary operators, $\mathbb{L}(\mathcal{H})$ for its orthogonal projections and $\mathbb{P}(\mathcal{H})$ for the family of 1-dimensional orthogonal projections. 

Given a densely defined closable operator $T$ acting in $\mathcal{H}$ we denote by $\overline{T}$ its closure. Given a self-adjoint operator $T$ acting in a Hilbert space $\H$ we denote by $E_J(T)$ the spectral projection of $T$  associated with  the Borel subset $J\subset\R$.

We recall that a Fr\'{e}chet space is a linear space endowed with a topology defined by a countable family of semi-norms $\{\lnu_k\}_{k\in\Nb}$, for which it is complete. A bounded subset of a Fr\'{e}chet space $\mathscr{V}$ is a subset $M\subset\mathscr{V}$ such that for any $k\in\Nb$ there exists $C_k(M)>0$ for which:
$	
\lnu_k(v)\leq\,C_k(M),\ \forall v\in M.
$

We shall freely use the notations introduced in Appendix \ref{A-m-PsiDO} in connection with the magnetic pseudo-differential calculus with various magnetic fileds.

\subsection{The framework}\label{ss1.3}

We shall work in dimension $d\geq2$ and denote by $\X\cong\Rd$ the \textit{'configuration space'}. In order to easily distinguish it from the \textit{'momentum variables'} living in the dual of the configuration space, we shall use the notation $\X\cong\R^d$ for the first one and $\X^*\cong\Rd$ for the second one. 
Denoting by $\big\{\e_1,\ldots,\e_d\big\}$ the canonical orthonormal basis of $\X\cong\Rd$ we are given a discrete, regular lattice $\Gamma:=\underset{1\leq j\leq d}{\bigoplus}\Z\e_j\subset\X$.

Our operators are defined as \textit{magnetic quantizations} using the extension of the Weyl quantization procedure introduced in \cite{MP-1, IMP-1} and very briefly recalled in Appendix \ref{A-m-PsiDO}.

\paragraph{The phase space.}
We will denote by $\Xi:=\X\times\X^*$ the phase space and by $<\cdot,\cdot>:\X^*\times\X\rightarrow\R$ the canonical bilinear duality. We shall use systematically notations of the form $X:=(x,\xi)\in\Xi$, $Y:=(y,\eta)\in\Xi$, and so on. 
We shall also consider the unique real scalar product $\X\times\X\ni(x,y)\mapsto x\cdot y\in\R$ making $\{\e_1,\ldots,\e_d\}$ an orthonormal basis. On the dual $\X^*$ we shall introduce the \textit{dual basis} $\big\{\e^*_j,\,j\in\underline{d}\big\}$ defined by the relations $<\e^*_j\,,\,\e_k>=2\pi\delta_{jk}$ for any pair $(j,k)\in\underline{d}\times\underline{d}$. We denote by $dx$, resp. $d\xi$ the usual Lebesque measures on $\X$, resp. $\X^*$, associated with the above defined orthogonal basis. We recall the natural action by translations of $\R^d$ on functions or distributions defined on $\X$, denoting by $\tau_x$ the translation with $x\in\X$, resp. by translations on functions or distributions defined on $\X^*$, denoting by $\tau_{\xi}$ for translations with $\xi\in\X^*$. 

We have chosen to simplify some formulas by inserting the factor $2\pi$ in the duality map $<\cdot,\cdot>:\X^*\times\X\rightarrow\Co$ and thus for a decomposition $\xi=\underset{1\leq j \leq d}{\sum}\xi_j\e^*_j$ we have the equality $<\xi,x>=2\pi\underset{1\leq j \leq d}{\sum}\xi_j\,x_j$. This leads to some modifications in the formula of the Fourier unitary operators and the fact that the elementary cells in $\Gamma$ and $\Gamma_*$ have both volume 1. 

Using the decomposition $\X\ni{x}\mapsto\iota(x)+\hat{x}\in\Gamma\times[-1/2,1/2)^d$ defined by $\iota(x)_j:=\lfloor x_j\rfloor :=\max\{k\in\mathbb{Z},\ k\leq x_j\}\in\Z$ and the similar one $\X^*\ni\xi\mapsto\iota^*(\xi)+\hat{\xi}\in\Gamma_*\times[-1/2,1/2)^d$ we shall consider \textit{the unit cells:}
\beq\begin{split}
	\E:=\{\hat{x}=\underset{1\leq j\leq d}{\sum}\,\hat{x}_j\,\e_j,\ -1/2\leq\hat{x}_j<1/2\}\subset\X,\\
	\mathcal{B}:=\{\hat{\xi}=\underset{1\leq j\leq d}{\sum}\,\hat{\xi}_j\,\e^*_j,\ -1/2\leq\hat{\xi}_j<1/2\}\subset\X^*.
\end{split}\eeq

We  use the H\"{o}rmander multi-index notation $\partial^\alpha_x:=\partial_{x_1}^{\alpha_1}\cdot\ldots\cdot\partial_{x_d}^{\alpha_d}$ and $|\alpha|:=\alpha_1+\ldots+\alpha_d$ for any $\alpha\in\mathbb{N}^d$. For any $j\in\underline{d}$ we denote by $\varepsilon_j\in\mathbb{N}^d$ the multi-index with components $(\varepsilon_j)_k=\delta_{jk}$.

\paragraph{Some classes of functions.}
Given a finite dimensional real Euclidean space $\mathcal{V}$ with Euclidean norm denoted by $|\cdot|$ let us consider the function spaces $BC^\infty(\mathcal{V})$ of smooth complex functions on $\mathcal{V}$ bounded together with all their derivatives, $C^\infty_c(\mathcal{V})$ of smooth and compactly supported complex functions on $\mathcal{V}$ and $C^\infty_{\text{\tt pol}}(\mathcal{V})$ of smooth complex functions on $\mathcal{V}$, with polynomial growth at infinity together with all their derivatives.
We define the weight functions (having the properties of a norm but allowed to take also the value $+\infty$) on $C^\infty(\mathcal{V})$:
\[
\forall(p,n)\in\R_+\times\mathbb{N},\quad\lnu_{p,n}(\phi):=\underset{y\in\mathcal{V}}{\sup}\,<y>^p\,\underset{|a|\leq n}{\max}\,\big|\big(\partial^a\phi\big)(y)\big|,\ \forall\phi\in C^\infty(\mathcal{V}).
\]
Then $BC^\infty(\mathcal{V})$ is a Fr\'{e}chet space for the familly of norms $\big\{\lnu_{0,n}\big\}_{n\in\mathbb{N}}$ and we also define the space of Schwartz test functions:
\[
\mathscr{S}(\mathcal{V})\,:=\,\big\{\phi\in BC^\infty(\mathcal{V})\,,\,\lnu_{p,n}(\phi)<\infty,\,\forall(p,n)\in\R_+\times\mathbb{N}\big\}.
\]
We denote by $\mathscr{S}^\prime(\mathcal{V})$ the space of tempered distributions, defined as the topological dual of $\mathscr{S}(\mathcal{V})$ on which we can consider either the weak topology with respect to $\mathscr{S}(\mathcal{V})$ or the strong dual topology defined as the topology of uniform convergence on bounded sets of $\mathscr{S}(\mathcal{V})$. We  denote by $\langle\cdot,\cdot\rangle_{\mathcal{V}}:\mathscr{S}^\prime(\mathcal{V})\times\mathscr{S}(\mathcal{V})\rightarrow\mathbb{C}$ the bilinear canonical duality map.
\begin{notation}\label{N-per-distr}
	We shall denote by:
	\begin{itemize} 
		\item $\mathscr{S}^\prime(\Xi)_{\Gamma}$ the tempered distributions on $\Xi$ that are $\Gamma$-periodic with respect to the variable $x\in\X$. Similarly $C^\infty_{\Gamma}(\X)$ is the space of smooth, $\Gamma$-periodic functions on $\X\cong\Rd$.
		\item $\mathring{\mathscr{S}}_{\Delta}(\X\times\X)\subset{C}^\infty(\X\times\X)$ with off-diagonal rapid decay together with all their derivatives. It has a natural Fr\'{e}chet topology.
	\end{itemize}
\end{notation}
For a distribution $\mathfrak{K}\in\mathscr{S}^\prime(\X\times\X)$ we  denote by $\Int\,\mathfrak{K}\in\mathcal{L}\big(\mathscr{S}(\X);\mathscr{S}^\prime(\X)\big)$ the linear operator defined as: 
\beq\label{DF-Int}
\langle\big(\Int\,\mathfrak{K}\big)\phi,\psi\rangle_{\X}:=\langle\mathfrak{K},\phi\otimes\psi\rangle_{\X\times\X} \mbox{ for any }(\phi,\psi)\in\mathscr{S}(\X)\times\mathscr{S}(\X)\,.
\eeq 
We  use the notations $\Int\,\mathfrak{K}^\dagger:=\big(\Int\,\mathfrak{K}\big)^*$, so that $\langle\mathfrak{K}^\dagger,\phi\otimes\psi\rangle_{\X\times\X}=\overline{\langle\mathfrak{K},\psi\otimes\phi\rangle}_{\X\times\X}$ and $\Int\big(\mathfrak{K}\diamond\mathfrak{K}^\prime\big):=\big(\Int\,\mathfrak{K}\big)\cdot\big(\Int\,\mathfrak{K}^\prime\big)$.

For the $d$-dimensional Fourier transform we use the definitions:
\begin{equation}\begin{split}\label{DF-FourierTrsf}
		&\big(\mathcal{F}_{\X}f\big)(\xi):=\int_{\X}\hspace*{-5pt}dx\,e^{-i<\xi,x>}\,f(x)=\int_{\X}\hspace*{-5pt}dx\,\exp\Big(-2\pi i\underset{j\in\underline{d}}{\sum}\xi_j\,x_j\Big)\,f(x),\ \forall f\in L^1(\X)\\
		&\big(\mathcal{F}_{\X^*}g\big)(x):=\int_{\X^*}\hspace*{-7pt}d\xi\,e^{i<\xi,x>}\,g(\xi)=\int_{\X^*}\hspace*{-5pt}d\xi\,\exp\Big(2\pi i\underset{j\in\underline{d}}{\sum}\xi_j\,x_j\Big)\,g(\xi),\ \forall g\in L^1(\X^*).
\end{split}\end{equation}

\paragraph{The H\"{o}rmander classes of symbols.} Given $p\in\mathbb{R}$ and $\rho\in\{0,1\}$ let us define:
\[ \begin{split}
	\qquad \qquad S^p_\rho(\Xi)\,:&=\,\big\{F\in C^\infty_{\text{\tt pol}}(\Xi)\,,\,\nu^{p,\rho}_{n,m}(F)<\infty,\,\forall(n,m)\in\mathbb{N}\times\mathbb{N}\big\},\\
	\text{where:}\qquad\qquad \qquad&\\& \nu^{p,\rho}_{n,m}(F):=\underset{|\alpha|\leq n}{\max}\,\underset{|\beta|\leq m}{\max}\,\underset{(x,\xi)\in\Xi}{\sup}<\xi>^{-p+\rho m}\big|\big(\partial_x^\alpha\partial_\xi^\beta F\big)(x,\xi)\big|\,.\qquad \qquad\qquad
\end{split}\]
We also define:
\[
S^\infty_\rho(\Xi)\,:=\,\underset{p\in\R}{\bigcup}S^p_\rho(\Xi),\quad{S}^-_\rho(\Xi)\,:=\,\underset{p<0}{\bigcup}S^p_\rho(\Xi),\quad{S}^{-\infty}(\Xi)\,:=\,\underset{p\in\R}{\bigcap}S^p_\rho(\Xi).
\]

Given $\Gamma\subset\X$ as above, we shall denote by $S^p_\rho(\Xi)_\Gamma$ the symbols of class $S^p_\rho(\Xi)$ that are periodic with respect to $\Gamma$ in the $x$-variable.
We shall use the following family of $\X$ independent symbols $\mathfrak{m}_s(x,\xi):=<\xi>^s$, with $s\in\R$. 

	\begin{definition}
	For $p>0$, a symbol $F\in S^p_\rho(\Xi)$ is called elliptic, if there exists positive constants $c$ and $R$ such that, for $(x,\xi) \in \X$ s.t.  $|\xi| \geq R$, 
	\[\nonumber 
	|F(x,\xi)\big|\,\geq\,c<\xi>^p.
	\]
\end{definition}

For any $s\geq0$ we consider \textit{the Sobolev space of order} $s$ defined by:
\[
\mathscr{H}^s(\X)\,:=\,\big\{f\in{L}^2(\X)\big|\ \mathcal{F}_{\X}\mm_sf\in{L}^2(\X^*)\big\}.
\]

\paragraph{The magnetic field.}
Let us denote by $\F0^p(\X)$ the real space of smooth $p$-forms on $\X$, by $\Fb^p(\X)$ those having components of class $BC^\infty(\X)$ and by $\Fp^p(\X)$ those with polynomial growth together with all their derivatives; let $d:\F0^p(\X)\rightarrow\F0^{p+1}(\X)$ be the exterior derivation.

A smooth magnetic field is an element $B\in\F0^2(\X)$ that is closed, i.e. satisfies $dB=0$. Due to the contractibility of our affine space $\X$ any magnetic field $B\in\Fp^2(\X)$ allows for a vector potential $A\in\Fp^1(\X)$ satisfying the equality $B=dA$ and we shall always use such a choice.
We shall work with fields $B\in\Fb^2(\X)$ verifying the closure condition: $dB=0$, that we call \textit{regular magnetic fields}. Given a magnetic field $B=dA$ we use the following notations for the imaginary exponentials of the invariant integrals of $p$-forms on $p$-simplices:
\[
\Lambda^A(x,y):=\exp\Big(-i\int_{[x,y]}A\Big),\qquad\Omega^B(x,y,z):=\exp\Big(-i\int_{<x,y,z>}B\Big).
\]

Stokes' Theorem clearly implies the formula:
\[
\Omega^B(x,y,z)\,=\,\Lambda^A(x,y)\,\Lambda^A(y,z)\,\Lambda^A(z,x).
\]

\begin{definition}
	The magnetic Sobolev space of order $ s\geq0$ and vector potential $A\in\Fp^1(\X)$ is   the weighted Hilbert space:
	\[
	\mathscr{H}^s_A(\X)\,:=\,\big\{f\in{L}^2(\X)\big|\  \Op^A(\mm_s)f\in{L}^2(\X) \big\}
	\]
	with $\Op^A(\mm_s)$ considered as an operator in $\mathcal{L}\big(\mathscr{S}^\prime(\X)\big)$.
\end{definition}

\paragraph{The $d$-dimensional torus.}
An essential role in dealing with periodic problems is played by the $d$-dimensional torus 
\beq \label{D-torus}
\R^d/\Z^d\cong\mathbb{S}^d:=\big\{\z\in\mathbb{C},\ |\z|=1\big\}^d\,.  \eeq
When dealing with the identification $\X\cong\Rd$ and $\Gamma\cong\Zd$ we  denote the quotient torus $\X/\Gamma$ by $\T\cong\mathbb{S}^d$, while for the dual space $\X^*$ with its dual lattice $\Gamma_*$ we use the notation $\X^*/\Gamma_*=:\T_*\cong\mathbb{S}^d$ and denote by $\theta\equiv(\z_1,\ldots,\z_d)$ its points (with $\z_j\in\mathbb{S}$ for $1\leq j\leq d$). We mainly work with $\T_*$ and consider on it the measure $d\theta$ that is equal to the restriction of the Lebesgue measure on $\R^{2d}$ with a normalization giving the total measure $1$ to the manifold. With our definitions we get an  explicit form for the canonical quotient projection:
\beq\label{DF-p}
\p_*:\X^*\repi\T,\quad\p_*\big(\underset{1\leq j\leq d}{\sum}\xi_j\e^*_j\big)\,=\,\big(e^{2\pi i\xi_1},\ldots,e^{2\pi i\xi_d}\big),\,\quad\forall(\xi_1,\ldots,\xi_d)\in\Rd\,,
\eeq
and we choose the following discontinuous injective section for it:
\[\nonumber 
\s_*:\mathbb{T}_*\rightarrow{\cal{B}}\subset\X^*,\quad\s_*(\z_1,\ldots\z_d):=(2\pi )^{-1}(\arg\,\z_1,\ldots,\arg\,\z_d),\quad\arg\z\in[-\pi,\pi).
\] 

We recall that the irreducible unitary representations of the group $\Gamma\cong\Zd$, that one calls its \textit{characters}, form an abelian topological group that is isomorphic to $\T_*\cong\Sd$; given any $\theta=(\z_1,\ldots,\z_d)\in\T_*$ its associated character is:
\beq
\chi_\theta:\Gamma\rightarrow\mathbb{S},\quad\chi_\theta(\gamma):=e^{-i<\s_*(\theta),\gamma>}.
\eeq

\subsection{The unperturbed Hamiltonian.}
\begin{hypothesis}\label{H-h1}
$h\in{S}^p_1(\Xi)_\Gamma$ is a lower semi-bounded elliptic and $\Gamma$-periodic symbol for some $p>0$. 
\end{hypothesis}
\begin{hypothesis}\label{H-BGamma} $B^\circ\in\Fb^2(\X)$ is a closed,  $\Gamma$-periodic 2-form verifying the following {\it zero-flux property}:
	\beq\label{H-Bper}
\forall(j,k)\in\underline{d}\times\underline{d}:\quad\int\limits_{\text{\rm R}_{jk}}B^\circ=0,\quad\text{where}\quad\text{\rm R}_{jk}:=\big\{s\mathfrak{e}_j+t\mathfrak{e}_k,(s,t)\in[0,1]^2\big\}.
	\eeq
\end{hypothesis}
\begin{remark}\label{R-HH}
In \cite{HH} the authors present an elegant cohomological argument proving that the zero-flux property \eqref{H-Bper} is a necessary and sufficient condition for the existence of a $\Gamma$-periodic vector potential $A^\circ\in\Fb^1(\X)$ such that $B^\circ=dA^\circ$. We shall always use such a choice of vector potential for $B^\circ$ satisfying the zero-flux property.
\end{remark}

\begin{definition} \label{D-Hcirc}
For  $h\in{S}^p_1(\Xi)_{\Gamma}$ satisfying Hypothesis \ref{H-h1}, $B^\circ\in\Fb^2(\X)$ satisfying Hypothesis \ref{H-BGamma} and $A^\circ$ chosen as in Remark \ref{R-HH}, 
we call  "unperturbed Hamiltonian" the operator   $$ H^\circ\,:=\,\overline{\Op^{A^\circ}(h)}\,.$$ 
\end{definition}

We shall work under the following hypothesis, that in fact brings no real loss of generality, as one can easily notice.

In fact, taking into account Proposition 2.2 in \cite{CHP-5} one may notice that the inclusion of the regular magnetic field with periodic vector potential in Definition \ref{D-Hcirc} does not necessitate to go outside the usual periodic Weyl calculus on $\Rd$ and there exists some symbol $h^\circ\in{S}^p_1(\Xi)_\Gamma$ such that:
\beq
\Op^{A^\circ}(h)\,=\,\Op(h^\circ).
\eeq 

\begin{remark}
In \cite{CHP-5} we also prove that $H^\circ$ from Definition \ref{D-Hcirc} is a lower semi-bounded, self-adjoint operator in $L^2(\X)$, with domain $\mathscr{H}^p(\X)$ and commuting with all the translations $\tau_\gamma$ with $\gamma\in\Gamma$ that clearly leave invariant its domain.
\end{remark}
\begin{hypothesis}\label{H-E0}
  $H^\circ\geq{E}_0\bb1$ for some given $E_0>0$.
\end{hypothesis}

The Bloch-Floquet theory, that we recall very briefly in Appendix \ref{A-BF-Theory}, provides a unitary operator $\U_{BF}:L^2(\X)\overset{\sim}{\rightarrow}\mathscr{F}$, with $\mathscr{F}$ the direct integral Hilbert space defined in \eqref{DF-UBF} and \eqref{DF-dir-prod}, such that the decomposition: 
\beq
\U_{BF}\,H^\circ\,\U_{BF}^{-1}=\int^\oplus_{{\T_*}}d\theta\Big(\underset{k\in\B}{\sum}\lambda_k(\theta)\,\hat{\pi}_k(\theta)\Big)
\eeq
has the properties described in Theorem \ref{T-FBdec} in the Appendix \ref{A-BF-Theory} (see details in \cite{CHP-5}).

\subsection{The isolated Bloch family.}\label{SSS-iBb} 

Making use of the results in \cite{CHP-5} and briefly recalled in Appendix \ref{A-BF-Theory}, for our periodic pseudo-differential operator $H^\circ$ we shall suppose that:
\begin{hypothesis}\label{H-isBf}
There exist $k_0\in\mathbb{N}_\bullet$ and $N\in\mathbb{N}$ such that: (see figure \ref{picture1})
\begin{enumerate}
	\item $\lambda_{k_0-1}(\theta)<\lambda_{k_0}(\theta),\quad\lambda_{k_0+N}(\theta)<\lambda_{k_0+N+1}(\theta),\quad\forall\theta\in\mathbb{T}_*\ $, (by convention $\lambda_0:=-\infty$).
	\item $E_-:=\underset{\theta\in\mathbb{T}_*}{\sup}\lambda_{k_0-1}(\theta)\, <\, E_+:=\underset{\theta\in\mathbb{T}_*}{\inf}\lambda_{k_0+N+1}(\theta),\quad\text{\tt d}_0:=E_+-E_->0\,.$ 
\end{enumerate}
\end{hypothesis}

\begin{figure}[H]
	\centering
	\begin{tikzpicture}
	\draw [->] (-4, 0) -- (4, 0) node[right] {Brillouin zone};
	\draw [->] (0, -0.1) -- (0, 5) node[right] {Energy};
	\draw[->, domain= 0: 0.5, black] plot(\x, 3) node[right]{$E_+$};
	\draw[->, domain= 0: 0.5, red] plot(\x, 3.7) node[right]{$E_+'$};
	\draw[->, domain= 0: 1.5, red] plot(\x, 1) node[right]{$E_-'$};
	\draw[->, domain= 0: 0.7, blue] plot(\x, 1.5) node[right]{$E_-$};
	\draw[->, domain= 0: 1, blue] plot(\x, 0.5) node[right]{$E_0$};
	\draw[red, thick] (-4,2) cos (-3, 3) sin (-2.1,3.7) cos (-1,2.3) sin (0, 1)  cos (1,2.3) sin (2.1,3.7) cos (3, 3) sin (4, 2) node[right] {$\lambda_{k_0}$};
	\draw[green, thick] (-4,2.4) cos (-3, 2.8) sin (-2.1,3.2) cos (-1,2.3) sin (0, 1.9)  cos (1,2.3) sin (2.1,3.2) cos (3, 2.8) sin (4, 2.4) node[right] {$\lambda_{k_0+N}$};
	\draw[blue, thick] (-4,0.5) cos (-3, 1) sin (-2,1.5 )  cos (-1, 1) sin (0, 0.5) cos (1, 1) sin (2,1.5) cos (3,1) sin (4, 0.5) node[right] {$\lambda_{k_0-1}$};
	\draw[black, thick] (-4,3) cos (-2.3, 4) sin (-1.2,4.5) cos (-0.5,4.2)  sin (0, 4) cos (0.5, 4.2) sin (1.2 ,4.5) cos (2.3,4) sin (4,3) node[right] {$\lambda_{k_0+N+1}$};
\end{tikzpicture}
\caption{\label{picture1} Here $k_0=2$ and $N=1$.}
\end{figure}
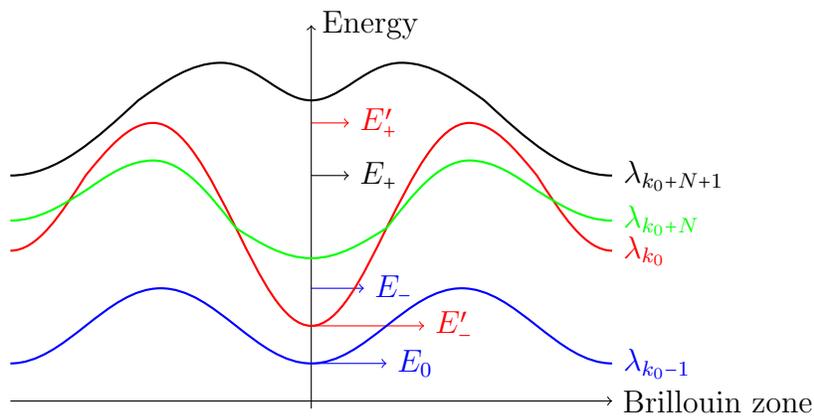

\begin{definition} The set $\B:=\big\{\lambda_k:\T_*\rightarrow\R\,,\,k_0\leq k\leq k_0+N\big\}$ satisfying Hypothesis \ref{H-isBf} is called an isolated Bloch family of the Hamiltonian $H^\circ$. We 
	call $\B$-operators the elements in $\mathbb{B}\big(P_{\B}L^2(\X)\big)$.
\end{definition}
Note that under the above Hypothesis \ref{H-isBf},  the interval $
J_\B:=(E_-,E_+)$ is not empty.
\begin{definition} \label{D-B-subsp}
	Under  Hypotheses \ref{H-h1}, \ref{H-BGamma} and \ref{H-isBf}, using  the results and notations in Appendix \ref{A-BF-Theory} we define:
\begin{itemize}
	\item \textbf{the $\B$-projection}: $P_{\B}:=\mU_{BF}^{-1}\Big(\int^\oplus_{{\T_*}}d\theta\Big(\underset{k\in\B}{\sum}\hat{\pi}_k(\theta)\Big)\mU_{BF}$,
	\item \label{dabsub} \textbf{the $\B$-subspace}: $\mathscr{L}_{\B}:=P_{\B}L^2(\X)$,
	\item \textbf{The $\B$-Hamiltonian}: $H_\B\,:=\,\mU_{BF}^{-1}\Big(\int^\oplus_{{\T_*}}d\theta\Big(\underset{k\in\B}{\sum}\lambda_k(\theta)\,\hat{\pi}_k(\theta)\Big)\mU_{BF}\,\in\,\mathbb{B}\big(\mathscr{L}_{\B}\big)$\,.
\end{itemize}
\end{definition}

An important technical fact is Proposition \ref{R-p-symb} below in which we prove that there exists $p_{\B}\in{S}^{-\infty}(\Xi)_{\Gamma}$ and $h_{\B}\in{S}^{-\infty}(\Xi)_{\Gamma}$ such that $P_{\B}=\Op^\circ(p_{\B})$ and $H_{\B}=\Op^\circ(h_{\B})$.

\subsection{The perturbation.} 

Let us consider the self-adjoint operator $H^\circ$ as defined in Definition \ref{D-Hcirc} with the isolated Bloch family $\B$ as in Hypothesis \ref{H-isBf} and let us add a perturbing magnetic field:
\beq\label{HF-Beps}
B(x)\,:=\,\epsilon{B}^{\epsilon}(x)\,=\,\epsilon\,{dA}^\epsilon(x)\,\equiv\,dA(x)
\eeq
where  $\epsilon\in[0,\epsilon_0]$ for some $\epsilon_0>0$ and the family $\BB:=\{B^\epsilon\}_{\epsilon\in[0,\epsilon_0]}$ is a bounded subset in $\Fb^2(\X)$. Thus we have a total magnetic field $$B^\circ(x)+B(x)=B^\circ(x)+\epsilon\,B^\epsilon(x)$$ with a vector potential $A^\circ(x)+A(x)=A^\circ(x)+\epsilon\,A^\epsilon(x)$ in $\Fp^1(\X)$.
\begin{notation}\label{N-B}
In the sequel we shall frequenly use the notation $C(\BB)$ for a strictly positive constant that depends on the bounded subset $\BB$ as defined after formula \eqref{HF-Beps}, more precisely on the supremum on the bounded subset $\BB$ of a finite set of continuous semi-norms on $\Fb^2(\X)$.
\end{notation}

\begin{definition}\label{D-Heps}
The perturbed Hamiltonian is:
$
H^\epsilon\,:=\,\overline{\Op^{A^\circ+A}(h)}$.
\end{definition}

The following proposition, proven in \cite{CHP-5},  gives  the technical ingredient to deal with this 2-step quantization obtained by superposing the perturbing magnetic field $B$ in \eqref{HF-Beps} on the periodic magnetic field $B^\circ $ from Hypothesis \ref{H-BGamma}.

\begin{proposition}
	Suppose given a symbol $F\in{S}^p_1(\Xi)$ for some $p\in\R$ and two magnetic fields $B=dA$ and $B^\prime=dA^\prime$ in $\Fb^2(\X)$ with vector potentials $A\in\Fp^1(\X)$ and $A^\prime\in\Fb^1(\X)$. Then:
	\[
	\Op^{A+A^\prime}(F)\,=\,\Op^A\big(\mS^{A^\prime}[F]\big)\,,
	\]
	with: $$\mS^{A^\prime}[F]\,=\,
	(\bb1\otimes\mathcal{F}_{\X})\big(\mathfrak{K}^{A^\prime}[F]\circ\Upsilon^{-1}\big)\,=\,
	(\bb1\otimes\mathcal{F}_{\X})\big((\Lambda^{A^\prime}\mathfrak{K}[F])\circ\Upsilon^{-1}\big)\in{S}^p_1(\Xi)\,.$$
\end{proposition}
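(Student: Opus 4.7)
The plan is to compare the distributional integral kernels of the two operators $\Op^{A+A'}(F)$ and $\Op^{A}(G)$, use the multiplicativity of the magnetic phase factor $\Lambda$ along line segments to reduce the first to the second for an appropriate $G$, and then invert the symbol-to-kernel map to read off $G = \mS^{A'}[F]$.

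First I would recall from the magnetic Weyl calculus that for any symbol $F \in S^p_1(\Xi)$ and any vector potential $\widetilde{A} \in \Fp^1(\X)$, the operator $\Op^{\widetilde{A}}(F)$ has a distributional integral kernel of the form $\Lambda^{\widetilde{A}}(x,y) \cdot \mathfrak{K}[F](x,y)$, where $\mathfrak{K}[F]$ is the $\widetilde{A}$-independent factor obtained from $F$ by a partial inverse Fourier transform in the momentum variable composed with the canonical change of variables $\Upsilon$. The crucial elementary observation is that the line integral of a 1-form is additive in the form, so $\int_{[x,y]}(A+A') = \int_{[x,y]} A + \int_{[x,y]} A'$, which yields the pointwise factorization $\Lambda^{A+A'}(x,y) = \Lambda^{A}(x,y)\,\Lambda^{A'}(x,y)$.

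Substituting this into the kernel of $\Op^{A+A'}(F)$ gives
\[
\Lambda^{A+A'}(x,y)\,\mathfrak{K}[F](x,y) \;=\; \Lambda^{A}(x,y)\cdot\bigl(\Lambda^{A'}(x,y)\,\mathfrak{K}[F](x,y)\bigr),
\]
so the right-hand side is a kernel of the form $\Lambda^{A}$ times the $A$-independent piece of some operator $\Op^{A}(G)$ precisely when $\mathfrak{K}[G] = \Lambda^{A'}\mathfrak{K}[F]$. Inverting the kernel-to-symbol map — that is, applying $\bb1\otimes\mathcal{F}_{\X}$ after pulling back through $\Upsilon^{-1}$ — yields the claimed formula $\mS^{A'}[F] = (\bb1\otimes\mathcal{F}_{\X})\bigl((\Lambda^{A'}\mathfrak{K}[F])\circ\Upsilon^{-1}\bigr)$, and the operator identity $\Op^{A+A'}(F) = \Op^{A}(\mS^{A'}[F])$ follows on the test-function level in $\mathscr{S}(\X)$ and then extends by closure.

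The main obstacle is the final assertion that $\mS^{A'}[F]\in S^p_1(\Xi)$ with continuous dependence. Here the hypothesis $A'\in\Fb^1(\X)$ enters decisively: parameterizing the line integral as $\int_{[x,y]} A' = \int_0^1 \langle A'((1-t)x+ty),\,y-x\rangle\,dt$ and differentiating under the integral shows that every mixed derivative $\partial_x^{\alpha}\partial_y^{\beta}\Lambda^{A'}(x,y)$ is bounded by a polynomial in $|y-x|$ whose coefficients involve only finitely many sup-norms of derivatives of $A'$, hence are globally finite. After transport through $\Upsilon^{-1}$ this becomes polynomial growth in the Fourier-conjugate variable to $\xi$; when combined with the Schwartz decay of $\mathfrak{K}[F]$ in the off-diagonal direction inherited from the symbolic estimates of $F$, integration by parts in the $\mathcal{F}_{\X}$ step absorbs this polynomial growth and preserves the $<\xi>^{p-m}$ behavior under $m$ derivatives in $\xi$. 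I would close by verifying each seminorm $\nu^{p,1}_{n,m}(\mS^{A'}[F])$ is controlled by finitely many seminorms of $F$ and finitely many sup-norms of derivatives of the components of $A'$, which simultaneously yields continuity of $\mS^{A'}: S^p_1(\Xi) \to S^p_1(\Xi)$.
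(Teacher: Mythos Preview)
The paper does not actually prove this proposition: it is stated with the attribution ``proven in \cite{CHP-5}'' and is used as a black box. There is therefore no paper-internal proof to compare against.

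That said, your argument is the natural one and is essentially how such results are established. The factorization $\Lambda^{A+A'}=\Lambda^{A}\Lambda^{A'}$ together with the kernel formula \eqref{F-KerOpA} immediately gives the operator identity and the explicit expression for $\mS^{A'}[F]$. Your treatment of the symbol-class membership is correct in outline: after $\Upsilon^{-1}$ the off-diagonal variable becomes the variable $v$ dual to $\zeta$, and the hypothesis $A'\in\Fb^1(\X)$ guarantees that all mixed $(z,v)$-derivatives of $\Lambda^{A'}(z+v/2,z-v/2)$ are bounded by powers of $\langle v\rangle$ with constants uniform in $z$. One small refinement: each additional $v$-derivative of the phase can contribute a factor growing like $\langle v\rangle$ (not merely bounded), so the precise mechanism is that these polynomial losses in $v$ are traded, via integration by parts in the $\mathcal{F}_{\X}$ integral, against the rapid off-diagonal decay of $\mathfrak{K}[F]$ in $v$ coming from the $S^p_1$ estimates on $F$. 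With that caveat your sketch is complete and yields the continuity of $\mS^{A'}$ on $S^p_1(\Xi)$ as claimed.
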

We shall use the following short notations:
\begin{equation}\begin{split}\label{N-A-eps}
\Lambda^\epsilon\,\equiv\,\Lambda^{A^\circ+\epsilon{A^\epsilon}},\ \tLambda^\epsilon\,\equiv\,\Lambda^{\epsilon{A^\epsilon}},\ \Op^\epsilon\,\equiv\,\Op^{\epsilon{A^\epsilon}},\ \Omega^\epsilon\,\equiv\,\Omega^{B^\circ+\epsilon{B^\epsilon}},\ \tOmega^\epsilon\,\equiv\,\Omega^{\epsilon{B^\epsilon}},\ \sharp^\epsilon\,\equiv\,\sharp^{\epsilon{B^\epsilon}}.
\end{split}\end{equation}

\begin{remark}\label{R-main}
The above arguments imply that given any symbol $F\in{S}^m_1(\Xi)_\Gamma$, for any $m\in\R$, there exists a unique symbol $F^\circ\in{S}^m_1(\Xi)_\Gamma$ such that $\Op(F^\circ)=\Op^{A_\circ}(F)$ and we also have the equalities:
\beq
\Op^{A^\circ+\epsilon{A^\epsilon}}(h)=\Op^\epsilon(h^\circ),\qquad\,H^\epsilon\,:=\,\overline{\Op^\epsilon(h^\circ)}.
\eeq
\end{remark}

Using the notations and results recalled in Appendix \ref{A-m-PsiDO} we may prove the following statement.
\begin{proposition}
The operator $H^\epsilon$ in Definition \ref{D-Heps} has domain $\mathscr{H}^p_A(\X)$ and is a lower bounded self-adjoint operator in $L^2(\X)$.
\end{proposition}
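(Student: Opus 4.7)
The plan is to reduce to the magnetic pseudo-differential calculus with a single regular magnetic field and then invoke the standard self-adjointness and elliptic regularity theorems from \cite{IMP-1,IMP-2,IMP-3} compiled in Appendix \ref{A-m-PsiDO}. By Remark \ref{R-main}, we can write $H^\epsilon = \overline{\Op^\epsilon(h^\circ)}$ for some $h^\circ \in S^p_1(\Xi)_\Gamma$ which inherits both the ellipticity and the lower semi-boundedness of $h$: the Moyal twist induced by the $\Gamma$-periodic bounded potential $A^\circ$ preserves the principal symbol and perturbs only lower-order terms, so $h^\circ$ satisfies the same estimates as $h$ modulo a finite shift of the lower bound. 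From this point on, only the bounded regular field $\epsilon B^\epsilon$ (with a bound $C(\BB)$ uniform in $\epsilon \in [0,\epsilon_0]$) enters the magnetic structure.

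First I would establish essential self-adjointness on $\mathscr{S}(\X)$ together with the identification of the domain as $\mathscr{H}^p_A(\X)$. Since $h^\circ$ is real-valued, $\Op^\epsilon(h^\circ)$ is symmetric on $\mathscr{S}(\X)$. Ellipticity of $h^\circ$ of strictly positive order $p$ combined with the regularity of $\epsilon B^\epsilon \in \Fb^2(\X)$ permits the construction of a parametrix $q \in S^{-p}_1(\Xi)$ with $q \sharp^\epsilon h^\circ - 1 \in S^{-\infty}(\Xi)$ inside the magnetic Moyal algebra. Applying the magnetic Calder\'on--Vaillancourt theorem to the parametrix and to its error term yields the a priori estimate
\[
\|u\|_{\mathscr{H}^p_A(\X)} \leq C\,\big(\|\Op^\epsilon(h^\circ)u\|_{L^2(\X)} + \|u\|_{L^2(\X)}\big),\qquad u \in \mathscr{S}(\X),
\]
which implies essential self-adjointness on $\mathscr{S}(\X)$ and identifies the domain of its closure with the magnetic Sobolev space $\mathscr{H}^p_A(\X)$, in accordance with the Beals-type characterization recalled in Appendix \ref{A-m-PsiDO}.

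For lower semi-boundedness, since Hypothesis \ref{H-h1} implies $h \geq -M$ for some $M > 0$, the shifted symbol $h^\circ + M + 1$ is elliptic and bounded below by $c\langle\xi\rangle^p$ outside a compact set in $\xi$. I would then apply the sharp G{\aa}rding inequality in the magnetic Moyal calculus from \cite{IMP-2,IMP-3}, which produces a decomposition $\Op^\epsilon(h^\circ) + (M+1)\bb1 = T^*T + R$ with $R \in \mathbb{B}(L^2(\X))$ of norm bounded by $C(\BB)$. This yields the desired inequality $H^\epsilon \geq -C(\BB)\,\bb1$ on $\mathscr{S}(\X)$, which extends to $\mathscr{H}^p_A(\X)$ by density.

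The main obstacle is bookkeeping the dependence on $\epsilon$: the vector potential $\epsilon A^\epsilon$ may grow linearly at infinity, so only the field $\epsilon B^\epsilon$, not the potential, may enter the seminorms controlling the Moyal product and the G{\aa}rding remainder. This is exactly the point where the magnetic calculus shows its strength over the standard Weyl calculus, and a careful tracking of the finitely many seminorms of $B^\epsilon$ that appear in the parametrix construction and in the G{\aa}rding inequality is required in order to absorb them into the constant $C(\BB)$ uniformly in $\epsilon$.
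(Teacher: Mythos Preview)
Your proposal is correct and matches the paper's approach: the paper does not give a detailed proof but simply says the statement follows from the magnetic pseudo-differential results recalled in Appendix~\ref{A-m-PsiDO} (essentially Theorem~2.7 and Proposition~2.4 of \cite{IMP-3}), which encode precisely the parametrix/elliptic-regularity argument and the shift-to-positive-symbol reduction you outline. Your write-up thus amounts to unpacking those cited theorems in the present setting, with the extra care of tracking the $\epsilon$-uniformity through the $B^\epsilon$-seminorms rather than the unbounded $A^\epsilon$.
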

    
\subsection{The main results.} 

\paragraph{Some spaces of matrices.} Given $M\in\mathbb{N}_\bullet$, we denote by $\mathscr{M}_M$ the $C^*$-algebra of $M\times M$ complex matrices. We shall use the notation $\mathscr{M}_\Gamma[\mathfrak{A}]$ for the complex linear space of infinite matrices indexed by $\Gamma\times\Gamma$, having entries in a $C^*$-algebra $\mathfrak{A}$. Our main interest will be in the complex subspace $\mathscr{M}^\circ_\Gamma[\mathfrak{A}]$ of matrices having rapid decay outside the diagonal; given any faithful representation $\rho:\mathfrak{A}\rightarrow\mathbb{B}(\mathcal{H})$ we may view $\mathscr{M}^\circ_\Gamma[\mathfrak{A}]$ as a sub-algebra of $\mathbb{B}\big(\ell^2(\Gamma;\mathcal{H})\big)$ endowed with the operator norm that we denote by $\|\cdot\|_{\mathbb{B}(\ell^2(\Gamma;\mathcal{H}))}$. We shall also deal with families of Toeplitz infinite matrices that we shall view as infinite sequences and use the notation:
$$
{\cal{s}}\big(\Gamma;\mathfrak{A}\big):=\big\{\mathring{V}:\Gamma\rightarrow\,\mathfrak{A},\ \forall n\in\mathbb{N},\ \underset{\gamma\in\Gamma}{\sup}<\gamma>^n\big\|\mathring{V}(\gamma)\big\|_{\mathfrak{A}}\hspace*{-0,2cm}<\infty\ \big\}.
$$

\begin{theorem}\label{T-I}
	Consider the self-adjoint operator $H^\circ$ from Definition \ref{D-Hcirc} with the isolated Bloch family $\B$ as in Hypothesis \ref{H-isBf} and its perturbed version $H^\epsilon:\mathscr{H}^p_A(\X)\rightarrow{L}^2(\X)$, as in Definition \ref{D-Heps} with a perturbing magnetic field $B=dA$ as in \eqref{HF-Beps}. Then, there exists some $\epsilon_0>0$ such that the following statements are true.
\begin{enumerate}
	\item \label{point1} There exist an orthogonal projection $P^\epsilon_{\B}=\Op^\epsilon(p^\epsilon_{\B})\in\mathbb{B}\big(L^2(\X)\big)$, a constant $C>0$ and for any continuous semi-norm $\lnu$ on ${S}^{-\infty}(\Xi)$ some constant $C_{\nu}>0$ such that  for any $\epsilon\in[0,\epsilon_0]$: 
    $$
    p^\epsilon_{\B}\in{S}^{-\infty}(\Xi),\ \Vert [\,H^\epsilon\,,\,P^\epsilon_{\B}\,]\Vert \,\leq\,C{\epsilon},\ \text{and }\lnu(p^\epsilon_{\B}-p_{\B})\leq{C}_{\nu}\epsilon\,,
   $$
   with $\Op(p_{\B})=P_{\B}$  given in Definition \ref{D-B-subsp}.
	\item \label{point2} If we define the \emph{reduced perturbed Hamiltonian} associated with $\B$ by $$\ham^\epsilon_{\B}:=P^\epsilon_{\B}H^\epsilon{P}^\epsilon_{\B}\,,$$ then $\ham^0_{\B}=H_{\B}$ given in Definition \ref{D-B-subsp} and
	there exists $\delta_0>0$ such that for any $\delta\in[0,\delta_0]$ the interval $J^\delta_\B:=\big(E_-+{2}\delta\,,\,E_+-{2}\delta\big)$ is non empty and there exists $\epsilon_\delta\in(0,\epsilon_0]$ and $C>0$ such that for any $\epsilon\in[0,\epsilon_\delta]$ we have that:
	\begin{enumerate}[i.]
		\item  $\max\Big\{\underset{\lambda\in\sigma(\ham^\epsilon_{\B})\cap{J^\delta_\B}}{\sup}\dist\Big(\lambda,\sigma(H^{\epsilon})\Big )\,,\,\underset{\lambda\in\sigma(H^{\epsilon})\cap\,J^\delta_\B}{\sup}\dist\Big(\lambda,\sigma(\ham^\epsilon_{\B})\Big ) \Big\}\,\leq\,C\epsilon^2\,.$
		\item  For any $v\in E_J( H^{\epsilon})\,L^2(\X)$:
		\[\nonumber
		\big\|\exp\big(-itH^{\epsilon}\big)v -\exp\big(-it\ham^\epsilon_{\B}\big)v\big\|_{L^2(\X)}\leq\,C\,\big[\epsilon\,+\,(1+|t|) ^3\, \epsilon^2\big]\|v\|_{L^2(\X)}.
		\]
	\end{enumerate}
	\item \label{point3} There exists  $\nB\in\{N+1,\dots,N+1+\lfloor d/2\rfloor\}$ \footnotemark\label{fnm:1}\footnotetext{we recall that $\lfloor d/2\rfloor :=\max\{k\in\mathbb{N},\ k\leq (d/2)\}$\label{fnt:1}} and an injective $C^*$-algebra morphism $\mM^\epsilon_{\B}:\mathbb{B}\big(P^\epsilon_{\B}L^2(\X)\big)\rightarrow\mathscr{M}^\circ_\Gamma[\mathscr{M}(n_{\B})]$ such that:
	\begin{enumerate}[\rm a.]
		\item \label{point3a} $\sigma(\ham^\epsilon_{\B})\,=\,\sigma(\mM^\epsilon_{\B}[\ham^\epsilon_{\B}]) \setminus\{0\}$ and $\mM^\epsilon_{\B}\big[\exp\big(-it\ham^\epsilon_{\B}\big)\big]=\exp\big(-it\mM^\epsilon_{\B}[\ham^\epsilon_{\B}]\big)$.
		\item \label{point3b} There exists $\mathring{\mm}_{\B}[\widehat{H}_{\B}]\in\mathcal{s}\big(\Gamma;\mathscr{M}(\nB)\big)$ independent of $\epsilon\in[0,\epsilon_0]$ such that:
		 \beq\label{point3c} \big[\mM^\epsilon_{\B}[\ham^\epsilon_{\B}]\big]_{\alpha,\beta}=\tLambda^\epsilon(\alpha,\beta)[\mathring{\mm}_{\B}[\widehat{H}_{\B}]]_{\alpha-\beta}+\mathcal{O}(\epsilon),\quad {\forall\epsilon\in[0,\epsilon_0]}.
		\eeq 
	\end{enumerate}
\end{enumerate}
\end{theorem}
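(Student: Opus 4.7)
The plan is to treat the three parts of Theorem \ref{T-I} in sequence, leaning on the three technical pillars announced in the overview: Parseval frames for the Bloch bundle, the long-range magnetic pseudo-differential calculus, and a Schur--Feshbach reduction. As a preliminary step I would take the Parseval frame $\blPsi_\B=\{\tau_\gamma\psi^{(p)}_\B\}_{(\gamma,p)\in\Gamma\times\underline{\nB}}$ of Section \ref{S-Pfr-ibBf}, obtained from the abstract vector-bundle embedding theorem with $\nB\in\{N+1,\ldots,N+1+\lfloor d/2\rfloor\}$. Proposition \ref{R-p-symb} will give $p_\B\in S^{-\infty}(\Xi)_\Gamma$, so the generators $\psi^{(p)}_\B$ are strongly localized in the usual off-diagonal sense.

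For point \ref{point1}, I would construct the perturbed frame $\blPsi^\epsilon_\B$ of \eqref{DF-Pfr-B-eps} by replacing the ordinary translations $\tau_\gamma$ with magnetic Zak-type translations twisted by the local gauge factor $\Lambda^\epsilon$, and take $P^\epsilon_\B$ to be the orthogonal projection onto its closed span. To recognise $P^\epsilon_\B=\Op^\epsilon(p^\epsilon_\B)$ with $p^\epsilon_\B\in S^{-\infty}(\Xi)$, I would write its integral kernel as a magnetic-bilinear combination of the frame elements, convert it into magnetic Weyl form, and combine the off-diagonal decay with the standard magnetic symbol expansion to secure $\lnu(p^\epsilon_\B-p_\B)=\mathcal{O}(\epsilon)$ in every seminorm $\lnu$. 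The commutator bound $\|[H^\epsilon,P^\epsilon_\B]\|=\mathcal{O}(\epsilon)$ would then follow by writing the commutator as $\Op^\epsilon$ of the magnetic Moyal bracket $h^\circ\sharp^\epsilon p^\epsilon_\B-p^\epsilon_\B\sharp^\epsilon h^\circ$, which vanishes at $\epsilon=0$ since $P_\B$ commutes with $H^\circ$, and by estimating the first-order correction via the standard Moyal expansion.

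Point \ref{point2} is the main obstacle; here I would invoke the Schur--Feshbach machinery of \cite{CHP-2,CHP-4}. Writing $H^\epsilon$ in block form with respect to $P^\epsilon_\B$ and $Q^\epsilon_\B:=\bb1-P^\epsilon_\B$ yields off-diagonal blocks of norm $\mathcal{O}(\epsilon)$ by point \ref{point1}. Continuity of spectra under weak magnetic perturbations from \cite{AMP,CP-1} implies that $\sigma(Q^\epsilon_\B H^\epsilon Q^\epsilon_\B)$ stays at distance at least $\delta$ from $J^\delta_\B$ for $\epsilon\leq\epsilon_\delta$, so the resolvent of that block is uniformly bounded on the window. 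Inverting the Schur complement then identifies, up to an $\mathcal{O}(\epsilon^2)$ perturbation, the restriction of $(H^\epsilon-z)^{-1}$ to $P^\epsilon_\B L^2(\X)$ with $(\ham^\epsilon_\B-z)^{-1}$, which delivers the Hausdorff spectral bound. The dynamical estimate would come from a Duhamel identity applied to $\exp(-itH^\epsilon)v-\exp(-it\ham^\epsilon_\B)v$ for $v\in E_J(H^\epsilon)L^2(\X)$: the $\mathcal{O}(\epsilon)$ piece measures the distance from $v$ to $P^\epsilon_\B L^2(\X)$, while the $(1+|t|)^3\epsilon^2$ piece arises from iterating the commutator bound together with a Helffer--Sj\"ostrand representation of the spectral projector. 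I expect this step to be the most delicate in the whole argument because the long-range nature of $A$ precludes the usual slow-variation machinery.

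For point \ref{point3}, the perturbed Parseval frame supplies analysis and synthesis maps between $P^\epsilon_\B L^2(\X)$ and $\ell^2(\Gamma)\otimes\Co^{\nB}$. Sandwiching operators in $\mathbb{B}(P^\epsilon_\B L^2(\X))$ between them defines the injective $C^*$-morphism $\mW^\epsilon_\B$, whose composition with the canonical isomorphism $\mathbb{B}(\ell^2(\Gamma)\otimes\Co^{\nB})\simeq\mathscr{M}^\circ_\Gamma[\mathscr{M}(\nB)]$ gives $\mathfrak{M}^\epsilon_\B$. The spectral transport and functional-calculus preservation in 3.\ref{point3a} are automatic for any injective unital $\ast$-morphism. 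For 3.\ref{point3b}, I would compute the matrix element $\langle\psi^{(p),\epsilon}_\alpha,\ham^\epsilon_\B\psi^{(q),\epsilon}_\beta\rangle$ by exploiting the magnetic covariance of the Zak-type translations: pulling both arguments back to the origin produces the phase $\tLambda^\epsilon(\alpha,\beta)$ multiplying the unperturbed Toeplitz entry $[\mathring{\mm}_\B[\widehat{H}_\B]]_{\alpha-\beta}$, up to an $\mathcal{O}(\epsilon)$ remainder generated by Moyal commutators between magnetic translations and smooth symbols, thereby yielding \eqref{point3c}.
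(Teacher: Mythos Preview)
Your overall strategy tracks the paper closely, but there is a genuine gap in point~\ref{point1} that you have papered over by invoking \eqref{DF-Pfr-B-eps} without explaining how one actually arrives there. The raw magnetic-translated family $\widetilde{\blPsi}^\epsilon_\B=\{\Lambda^\epsilon_\gamma\,\tau_{-\gamma}\psi_p\}$ is \emph{not} a Parseval frame for the subspace it spans; its frame operator $\widetilde{Q}^\epsilon_\B$ satisfies only $(\widetilde{Q}^\epsilon_\B)^2=\widetilde{Q}^\epsilon_\B+\mathcal{O}(\epsilon)$, so it is a quasi-projection, not a projection. Your sentence ``take $P^\epsilon_\B$ to be the orthogonal projection onto its closed span'' hides exactly the step the paper regards as new: one must first show $\sigma(\widetilde{Q}^\epsilon_\B)\subset(-C\epsilon,C\epsilon)\cup(1-C\epsilon,1+C\epsilon)$, then define $P^\epsilon_\B$ by a Riesz integral around $1$, and finally introduce $\Theta^\epsilon_\B:=(i/2\pi)\oint_{\mathscr{C}}\zz^{-1/2}(\widetilde{Q}^\epsilon_\B-\zz)^{-1}d\zz$ so that the corrected family $\psi^\epsilon_{\alpha,p}=\Theta^\epsilon_\B\mathfrak{T}^\epsilon_\alpha\psi_p$ in \eqref{DF-Pfr-B-eps} becomes a genuine Parseval frame for $P^\epsilon_\B L^2(\X)$. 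This ``pseudo Gram--Schmidt'' step (Subsection~\ref{SS-psGS}) is what replaces the non-existent Wannier orthonormalisation, and without it you cannot cleanly identify $p^\epsilon_\B\in S^{-\infty}(\Xi)$ or derive the seminorm estimates; the Beals criterion and the identity \eqref{F-PN-V-1} are what make the contour integral tractable at the symbol level.

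A second, smaller gap concerns the spectral window in point~\ref{point2}. The continuity results of \cite{AMP,CP-1} do not apply directly to the block $Q^\epsilon_\B H^\epsilon Q^\epsilon_\B$, because $Q^\epsilon_\B$ itself depends on $\epsilon$. The paper instead quantizes the $\epsilon$-independent symbol $h_\bot=h^\circ-h_\B$ to get $H^\epsilon_\bot=\Op^\epsilon(h_\bot)$, applies spectral continuity to that (yielding \eqref{F-m-sp-est}), and then shows $P^\epsilon_\bot H^\epsilon P^\epsilon_\bot-H^\epsilon_\bot=\mathcal{O}(\epsilon)$ at the symbol level via \eqref{F-dif-h-bot}. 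Your Helffer--Sj\"ostrand argument for the dynamics and your treatment of point~\ref{point3} are essentially what the paper does.
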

\begin{remark}
	A more explicit form of the remainder in   Formula \eqref{point3c}  is obtained in Subsection  \ref{SS-PN-V-2} and Appendix \ref{SSS-PN-VI-2}  (see formula \eqref{F-PN-VI-1}).
\end{remark}

Let us also formulate a version of Theorem \ref{T-I} that gives a more detailed description of the \textit{effective Hamiltonian} $\mM^\epsilon_{\B}[\ham^\epsilon_{\B}]\in\mathscr{M}^\circ_\Gamma[\mathscr{M}(n_{\B})]$ when the perturbing magnetic field in \eqref{HF-Beps} is a weak fluctuation around a non-zero constant field.  

\begin{hypothesis}\label{H-Bepsc}
	Given a constant magnetic field $B^\bullet=dA^\bullet$, the perturbing magnetic field is  of the form:
	\beq \label{HF-Bepsc}
	B(x)\,:=\,\epsilon\big(B^\bullet\,+\,\cc\,B^{\epsilon,\cc}(x)\big)
	\eeq
	with $\big\{B^{\epsilon,\cc},\ (\epsilon,\cc)\in[0,\epsilon_0]\times[0,1]\big\}$ a bounded set in $\Fb^2(\X)$, $\epsilon\in[0,\epsilon_0]$ as above and $\cc\in[0,1]$ a new 'small parameter' controlling the weak fluctuations $\cc\epsilon\,B^\epsilon\in\Fb^2(\X)$.
\end{hypothesis}
\begin{remark}
As any magnetic field satisfying Hypothesis \ref{H-Bepsc} is of the form \eqref{HF-Beps} we deduce that Theorem \ref{T-I} is true for the perturbed Hamiltonian defined with such a magnetic field, that we shall denote by $H^{\epsilon,\cc}$. Suppose that it verifies Hypothesis \ref{H-isBf} having the isolated Bloch family $\B$. In order to keep in mind the special structure of the perturbing magnetic field we shall use the modified notations: $P^{\epsilon,\cc}_{\B}$ and $\ham^{\epsilon,\cc}_{\B}$ instead of $P^\epsilon_{\B}$ and $\ham^\epsilon_{\B}$ for the operators appearing in Theorem \ref{T-I} with $H^\epsilon$ replaced by $H^{\epsilon,\cc}$.
\end{remark}

Under Hypothesis \ref{H-Bepsc}, the third statement in Theorem \ref{T-I} may be replaced by the following more detailed statement:

\begin{theorem}\label{T-II}
Under the same hypothesis as in Theorem \ref{T-I}, let us suppose that the perturbing magnetic field in \eqref{HF-Beps} verifies Hypothesis \ref{H-Bepsc}. Let us denote by $H\ec$ the perturbed Hamiltonian defined as in Definition \ref{D-Heps} with the magnetic field as above. Then for any $(\epsilon,\cc)\in[0,\epsilon_0]\times[0,1]$ there exists an injective $C^*$-algebra morphism $\mM\ec_{\B}:\mathbb{B}\big(P\ec_{\B}L^2(\X)\big)\rightarrow\mathscr{M}^\circ_\Gamma[\mathscr{M}(n_{\B})]$ and a sequence $\mathring{\mm}^{\epsilon}_{\B}\in\mathcal{s}\big(\Gamma;\mathscr{M}(\nB)\big)$ such that points 3.\ref{point3a} and 3.\ref{point3b} in Theorem~\ref{T-I} remain true for $\ham^\epsilon_{\B}$ replaced by $\ham\ec_{\B}$ and  \eqref{point3c}   is replaced by the following equality:
\beq
\begin{aligned}\label{F-PN-V-11-c-T}
	&\big[\mathfrak{M}\ec_{\B}[\ham\ec_{\B}]_{\alpha,\beta}\big]_{p,q}\hspace*{-6pt}=[\tLambda\ec\tLambda^{\epsilon,0}](\alpha,\beta)\Big[\big[\mathring{\mathfrak{m}}^\epsilon_{\B}[\widehat{H}_{\B}]_{\alpha-\beta}\big ]_{\clb p,q}\hspace*{-6pt}+\cc\epsilon\big[\widetilde{\mathfrak{M}}\ec_{\B}[\ham\ec_{\B}]_{\alpha,\beta}\big]_{p,q}\Big]
\end{aligned}
\eeq
where: $$\mathring{\mathfrak{m}}^\epsilon_{\B}[\widehat{H}_{\B}]_{\gamma}=\mathring{\mathfrak{m}}_{\B}[\widehat{H}_{\B}]_{\gamma}+\epsilon\,\widetilde{\mathfrak{M}}^{\epsilon}_{\B}[\mathfrak{K}(\ham^\epsilon_\B)]_{\beta+\gamma,\beta}$$
 and $$\epsilon\widetilde{\mathfrak{M}}^{\epsilon}_{\B}[\mathfrak{K}(\ham^\epsilon_\B)]_{\alpha,\beta}$$ is  the remainder in  Formula \eqref{point3c}  for the case $B=\epsilon{B}^\bullet$.
\end{theorem}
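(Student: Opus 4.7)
The plan is to apply Theorem~\ref{T-I} twice: first to the pure constant-field perturbation $B = \epsilon B^\bullet$ (corresponding to $\cc=0$), then to the full perturbation $B = \epsilon(B^\bullet + \cc B^{\epsilon,\cc})$, and finally to compare the two resulting effective Hamiltonians via the magnetic pseudo-differential calculus. The first application yields, by \eqref{point3c}, matrix elements of the form $\tLambda^{\epsilon,0}(\alpha,\beta)[\mathring{\mathfrak{m}}_\B[\widehat H_\B]]_{\alpha-\beta} + \mathcal{O}(\epsilon)$; absorbing the $\mathcal O(\epsilon)$ remainder into the definition of the \emph{core matrix} $\mathring{\mathfrak{m}}^\epsilon_\B[\widehat H_\B]$ as stated in the theorem gives precisely the coefficient that appears in \eqref{F-PN-V-11-c-T}. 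The second application produces the injective morphism $\mathfrak{M}\ec_\B$, the effective Hamiltonian $\ham\ec_\B$, and the baseline estimate of Theorem~\ref{T-I}; what remains is to refine this into the advertised factorization as a product of two Peierls phases times a core-plus-$(\cc\epsilon)$-correction.

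The algebraic engine is the multiplicativity of the line-integral exponentials $\Lambda^{A_1+A_2} = \Lambda^{A_1}\Lambda^{A_2}$ and of the three-point flux exponentials $\Omega^{B_1+B_2} = \Omega^{B_1}\Omega^{B_2}$, applied to the additive decomposition $\epsilon A^\bullet + \cc\epsilon A^{\epsilon,\cc}$ of the perturbing vector potential and the corresponding splitting of fluxes. I would revisit the explicit formula for $[\mathfrak{M}\ec_\B[\ham\ec_\B]_{\alpha,\beta}]_{p,q}$ derived in Subsection~\ref{SS-PN-V-2} and Appendix~\ref{SSS-PN-VI-2} (cf.\ \eqref{F-PN-VI-1}), which expresses this quantity as an integral over $\Xi$ of the symbol of $\ham\ec_\B$ tested against the perturbed Parseval-frame sections $\blPsi\ec_\B$, weighted by the magnetic phases $\Lambda^{A^\circ + \epsilon A^\bullet + \cc\epsilon A^{\epsilon,\cc}}$ and $\Omega^{B^\circ + \epsilon B^\bullet + \cc\epsilon B^{\epsilon,\cc}}$ inherited from the magnetic Moyal product $\sharp\ec$. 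The multiplicativity then allows one to pull the lattice-site factor $\tLambda\ec(\alpha,\beta)\,\tLambda^{\epsilon,0}(\alpha,\beta)$ out of the integrand.

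What remains under the integral sign has exactly the form of the integrand for the $\cc=0$ problem, augmented by terms depending on $B^{\epsilon,\cc}$. Expanding the residual fluctuation phases and fluxes via $\exp(-i\cc\epsilon \int A^{\epsilon,\cc}) = 1 + \mathcal{O}(\cc\epsilon)$ separates the integrand into a principal part that reproduces $[\mathring{\mathfrak{m}}^\epsilon_\B[\widehat H_\B]]_{\alpha-\beta,p,q}$ and a remainder that is linear to leading order in $B^{\epsilon,\cc}$. The uniform boundedness of $\{B^{\epsilon,\cc}\}_{\epsilon,\cc}$ in $\Fb^2(\X)$ furnished by Hypothesis~\ref{H-Bepsc}, combined with the rapid off-diagonal decay of the Parseval frame established in Section~\ref{S-Pfr-ibBf}, guarantees that this remainder is bounded by $C(\BB)\,\cc\epsilon$ uniformly in $(\epsilon,\cc)\in[0,\epsilon_0]\times[0,1]$, yielding the term $\cc\epsilon[\widetilde{\mathfrak{M}}\ec_\B[\ham\ec_\B]_{\alpha,\beta}]_{p,q}$ of \eqref{F-PN-V-11-c-T}. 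Specializing the whole identity to $\cc=0$ recovers, consistently, the $\epsilon$-remainder of \eqref{point3c} for the pure constant-field case.

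The main obstacle is the careful bookkeeping rather than a fundamentally new idea: the kernel of $\ham\ec_\B = P\ec_\B H\ec P\ec_\B$ is obtained from two consecutive magnetic Moyal products, each of which carries its own three-point flux $\Omega^{B^\circ + \epsilon B^\bullet + \cc\epsilon B^{\epsilon,\cc}}$, while $P\ec_\B$ itself depends on $(\epsilon,\cc)$ through the perturbed Parseval-frame construction in \eqref{DF-Pfr-B-eps}. One must follow the fluctuation $B^{\epsilon,\cc}$ through every layer of this construction and extract the extra factor of $\cc$ wherever it appears, rather than settling for the weaker $\mathcal O(\epsilon)$ bound supplied directly by Theorem~\ref{T-I}. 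Points 3.\ref{point3a} and 3.\ref{point3b} of Theorem~\ref{T-I} for $\ham\ec_\B$ require no new argument, as they follow from the same $C^*$-algebra-morphism structure and the same rapid-decay class $\mathcal{s}(\Gamma;\mathscr{M}(\nB))$ used there.
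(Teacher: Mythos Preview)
Your strategy—factor the product of Peierls phases and isolate the $\cc$-dependence of the remainder—is the right idea, but there is one structural ingredient you have not accounted for, and the paper's execution differs from your sketch in a way that makes the $\cc\epsilon$ bound immediate rather than a bookkeeping exercise.

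The missing piece concerns your first paragraph: you want to absorb the $\mathcal O(\epsilon)$ remainder from \eqref{point3c} (applied at $\cc=0$) into the core $\mathring{\mathfrak{m}}^\epsilon_\B$, but for this to produce an element of $\mathcal{s}(\Gamma;\mathscr{M}(\nB))$ indexed by $\gamma=\alpha-\beta$ alone, the constant-field remainder $\widetilde{\mathfrak{M}}^{\epsilon}_\B[\cdot]_{\alpha,\beta}$ must itself depend only on $\alpha-\beta$. Phase multiplicativity does not give this. What does is the fact that for a \emph{constant} perturbing field the Zak translations $\mathfrak{T}^\epsilon_\gamma$ form a projective representation commuting with every $\Op^\epsilon(F)$ with $\Gamma$-periodic symbol (Proposition~\ref{Zak-transl}); hence $\Theta^{\epsilon,0}_\B$ commutes with all $\mathfrak{T}^\epsilon_\gamma$, the constant-field Parseval frame is exactly $\{\mathfrak{T}^\epsilon_\alpha\psi^\epsilon_p\}$ with $\psi^\epsilon_p:=\Theta^{\epsilon,0}_\B\psi_p$ independent of $\alpha$ (Proposition~\ref{P-P-eps-B-const}), and the resulting matrix has the exact Toeplitz form $\tLambda^{\epsilon,0}(\alpha,\beta)\,\mathring{\mathfrak{m}}^\epsilon_\B[\widehat H_\B]_{\alpha-\beta}$ (Proposition~\ref{P-B2}). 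Your claim that ``points 3.\ref{point3a} and 3.\ref{point3b} require no new argument'' overlooks that here the core sequence is $\epsilon$-dependent, and its well-definedness is precisely this constant-field commutation property.

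The paper then exploits this by constructing a \emph{two-step} frame for the full field: start from the already-Parseval constant-field frame $\{\mathfrak{T}^\epsilon_\alpha\psi^\epsilon_p\}$ and twist only by the fluctuation phase, setting $\widetilde{\blPsi}\ec_\B:=\{\widetilde\Lambda^{\epsilon,\cc}_\alpha\,\mathfrak{T}^\epsilon_\alpha\psi^\epsilon_p\}$ as in \eqref{DF-m-W-f-c}. Because the base frame is already Parseval, the new quasi-projection satisfies $[\widetilde Q\ec_\B]^2=\widetilde Q\ec_\B+\mathcal O(\cc\epsilon)$ directly, and every estimate of Subsection~\ref{SS-m-P-frame} carries over with $\epsilon$ replaced by $\cc\epsilon$; re-running the computation of Proposition~\ref{P-f-2} with $\tau_{-\alpha}\psi_p$ replaced by $\mathfrak{T}^\epsilon_\alpha\psi^\epsilon_p$ then yields \eqref{F-PN-V-11-c-T} in one step. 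Your plan to stick with the generic one-step frame \eqref{DF-Pfr-B-eps} and track $\cc$ through $\Theta\ec_\B$ could be completed, but would require proving $\Theta\ec_\B-\Theta^{\epsilon,0}_\B=\mathcal O(\cc\epsilon)$ and controlling $[\Theta^{\epsilon,0}_\B,\widetilde\Lambda^{\epsilon,\cc}_\alpha]$ separately—real steps, not mere bookkeeping, and avoided entirely by the paper's two-step construction.
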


\section{The Parseval frame of the isolated Bloch family}\label{S-Pfr-ibBf}

In order to study the properties of the isolated Bloch family projector $P_{\B}$ and define its ``magnetic" version, we shall make use of Parseval frames (the reader may find some basic facts about frames in Hilbert spaces in Appendix \ref{A-frames} and more details in \cite{Chris} or references therein). Starting with this section we shall constantly use the notations and definitions in the Appendices. 

\subsection{The decomposition of the Hamiltonian induced by the isolated Bloch family}\label{SS-free-dyn}

Let us begin by emphasizing some consequences of Hypothesis \ref{H-isBf} on the structure of the unperturbed operator $H^\circ$ in Definition \ref{D-Hcirc}, that play a crucial role in our arguments. 
Theorem \ref{T-FBdec} in Appendix \ref{A-BF-Theory} and  Hypothesis \ref{H-isBf} allow us to define the following orthogonal decomposition: 
\begin{equation}\label{dec-id}
	\begin{aligned}
		&\bb1_{\mathcal{H}}=P_0\,\oplus\,P_\B\,\oplus\,P_\infty,\quad\text{given by:}\\
		&P_0:=\U_{BF}^{-1}\Big[\int_{\T_*}^{\oplus}d \theta\,\widehat{P_0}(\theta)\Big] \U_{BF},\quad\widehat{P_0}(\theta):=\underset{1\leq k\leq k_0-1}{\sum}\hat{\pi}_k( \theta)\\
		&P_\B:=\U_{BF}^{-1}\Big[\int_{\T_*}^{\oplus}d \theta\,\widehat{P_\B}(\theta)\Big] \U_{BF},\quad\widehat{P_\B}(\theta):=\underset{k_0\leq k\leq k_0+N}{\sum}\hat{\pi}_k( \theta)\\
		&P_\infty:=\U_{BF}^{-1}\Big[\int_{\T_*}^{\oplus}d \theta\,\widehat{P_\infty}(\theta)\Big]\U_{BF},\quad\widehat{P_\infty}(\theta):=\underset{k_0+N+1\leq k}{\sum}\hat{\pi}_k(\theta)
	\end{aligned}
\end{equation}
and an associated decomposition of the Hamiltonian: $ H^\circ\,=\,H_0\,\oplus\,H_\B\,\oplus\,H_\infty$ given explicitely by:
\begin{align}\label{eq:3.1}
	&H_0:=\U_{BF}^{-1}\Big[\int_{\T_*}^{\oplus}d \theta\,\widehat{H_0}(\theta)\Big] \U_{BF},\quad\widehat{H_0}(\theta):=\underset{1\leq k\leq k_0-1}{\sum}\lambda_k(\theta)\,\hat{\pi}_k( \theta),\\
	\label{eq:3.2}
	&H_\B:=\U_{BF}^{-1}\Big[\int_{\T_*}^{\oplus}d \theta\,\widehat{H_\B}(\theta)\Big] \U_{BF},\quad\widehat{H_\B}(\theta):=\underset{k_0\leq k\leq k_0+N}{\sum}\lambda_k(\theta)\,\hat{\pi}_k( \theta),\\
	&H_\infty:=\U_{BF}^{-1}\Big[\int_{\T_*}^{\oplus}d \theta\,\widehat{H_\infty}(\theta)\Big]\U_{BF}\quad\widehat{H_\infty}(\theta):=\overline{\underset{k_0+N+1\leq k}{\sum}\lambda_k(\theta)\,\hat{\pi}_k( \theta)}.\label{H-infinit} 
\end{align}

Using Hypotheses \ref{H-h1}, \ref{H-E0} and \ref{H-isBf} we notice that (for $E_0\leq E_-<E_+$ as defined in this last Hypothesis):
\[
\sigma(H_0)\subset\{0\}\,\bigsqcup\, [E_0,E_-],\quad\sigma(H_\B)\subset\{0\}\bigsqcup\, [E_-^\prime,E_+^\prime],\quad\sigma(H_\infty)\subset\{0\}\bigsqcup\, [E_+,\infty),
\]
for some $E'_-$ and $E'_+$ such that $0<E_0\leq E'_-<E'_+<\infty$. If we have the strict inequalities $E_-<E_-^\prime$ and $E_+^\prime<E_+$ we are in a situation as in Figure \ref{picture2} where the isolated band consists of just one eigenvalue and it forms a spectral island for $H^\circ$, separated from the rest of the spectrum by two gaps. Then the three orthogonal projections in the above decomposition are in fact spectral projections of $H^\circ$ associated with disjoint components of the spectrum $\sigma(H^\circ)$ so that the ``{global} gap condition" is fulfilled and we are in the situation studied in \cite{CIP}. Thus, we shall be mainly interested in the case:
\beq\label{F-sp-points}
0<E_0\leq E_-^\prime\leq E_-<E_+\leq E_+^\prime<\infty,
\eeq
illustrated in Figure \ref{picture1}, where we have an isolated band with two crossing Bloch levels in red and green. (Formally, the green colour should always be on top of the red colour because $\lambda_{k_0}\leq \lambda_{k_0+N}$, but we will never treat them individually, only as a well-defined isolated family.) The energy interval we are interested in is $(E_-,E_+)$ where $E_-$ is the maximum of the blue eigenvalue $\lambda_{k_0-1}$ and $E_+$ is the minimum of the the black one $\lambda_{k_0+N+1}$ and the Hamiltonian $H^\circ$ does not have a spectral gap and the three orthogonal projections in \eqref{dec-id}  are not necessarily spectral projections of $H^\circ$.
\begin{center}
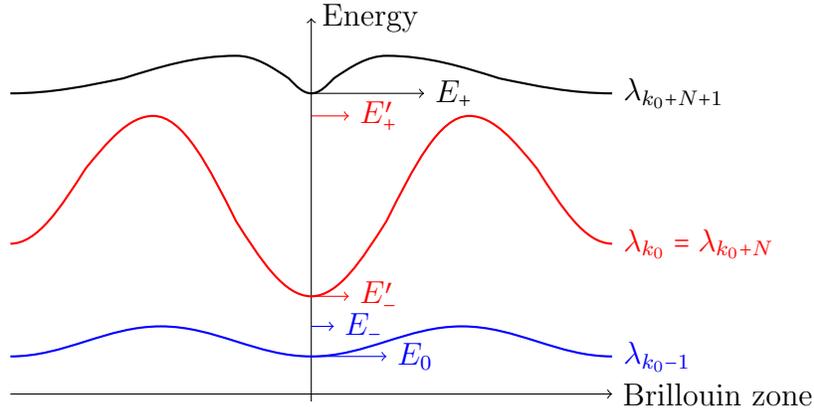

	\begin{tikzpicture}
		\draw [->] (-4, 0) -- (4, 0) node[right] {Brillouin zone};
		\draw [->] (0, -0.1) -- (0, 5) node[right] {Energy};
		\draw[->, domain= 0: 1.5, black] plot(\x, 4) node[right]{$E_+$};
		\draw[->, domain= 0: 0.5, red] plot(\x, 3.7) node[right]{$E_+'$};
		\draw[->, domain= 0: 0.5, red] plot(\x, 1.3) node[right]{$E_-'$};
		\draw[->, domain= 0: 0.3, blue] plot(\x, 0.9) node[right]{$E_-$};
		\draw [->, domain= 0: 1, blue] plot(\x, 0.5) node[right]{$E_0$};
		\draw[red, thick] (-4,2) cos (-3, 3) sin (-2.1,3.7) cos (-1,2.3) sin (0, 1.3)  cos (1,2.3) sin (2.1,3.7) cos (3, 3) sin (4, 2) node[right] {$\lambda_{k_0}=\lambda_{k_0+N}$};
		\draw[blue, thick] (-4,0.5) cos (-3, 0.7) sin (-2,0.9) cos (-1,0.7) sin (0, 0.5) cos (1, 0.7) sin (2,0.9) cos (3,0.7) sin (4, 0.5) node[right] {$\lambda_{k_0-1}$};
		\draw[black, thick] (-4,4) cos (-2.5, 4.2) sin (-1,4.5) cos (-0.3,4.2)  sin (0, 4) cos (0.3, 4.2) sin (1,4.5) cos (2.5,4.2) sin (4,4) node[right] {$\lambda_{k_0+N+1}$};
	\end{tikzpicture}
	\captionof{figure}{\label{picture2} Here $k_0=2$ and $N=0$.}
\end{center}
 We emphasize that the non-zero spectrum of the fibre of $H_\B$ is always well isolated from the other Bloch bands which build up $H_0$ and $H_\infty$.
 
 \subsection{The vector bundle associated with  the Bloch-Floquet-Zak representation}\label{SS-BFZ-pbd}
 
 We shall recall here the construction elaborated in \cite{P-07} in order to have the necessary background to formulate our main technical ideas and go further with our analysis. The point is to build a vector bundle whose sections will be in bijective corespondence with the functions in the BFZ representation space $\mathscr{G}$ in Appendix \ref{A-BF-Theory}. In fact we shall need to use this construction for several families of functions with specific regularity properties. More precisely we shall need to work with all the spaces in \eqref{F-GF}. Let us denote generically by $\mathscr{K}$ any of the spaces $C^\infty(\T)\subset\mathscr{H}^p(\T)\subset{L}^2(\T)$ and recalling the definition of $\mathfrak{f}$ before \eqref{F-GF}, let us consider the unitary representations:
 \beq\label{F-a}
 \mathring{U}_*:\Gamma_*\rightarrow\mathbb{U}(\mathscr{K}),\qquad[\mathring{U}_*(\gamma^*)\mathring{f}](\omega):=\mathfrak{f}_{-\gamma^*}\big(\s(\omega)\big)\mathring{f}(\omega),\quad\forall\omega\in\T.
 \eeq
 
 We define $\mathfrak{X}[\mathscr{K}]:=\X^*\times{L}^2(\T)$, considering it as infinite dimensional smooth real manifold and consider the following equivalence relation on $\mathfrak{X}[\mathscr{K}]$ induced by the unitary representation \eqref{F-a}:
 \beq
 (\xi,\mathring{f})\,\Lsim\,(\zeta,\mathring{g})\quad\Leftrightarrow			\quad\exists\gamma^*\in\Gamma^*,\ \zeta=\xi+\gamma^*,\ \mathring{g}=	\mathring{U}_*(\gamma^*)\mathring{f}.			
 \eeq
 Let us denote the family of equivalence classes by $\widehat{\mathfrak{X}}[\mathscr{K}]:=\mathfrak{X}[\mathscr{K}]/\Lsim$ and consider the surjection $\mathfrak{t}:\widehat{\mathfrak{X}}[\mathscr{K}]\ni[\xi,\mathring{f}]_\sim\mapsto\p_*(\xi)\in\T_*$, that is well defined due to the fact that $(\xi,\mathring{f})\,\Lsim\,(\zeta,\mathring{g})$ implies that $\p_*(\xi)=\p_*(\zeta)$. We have the homeomorphisms: 
 \beq
 \mathfrak{t}^{-1}(\theta)=\big\{[\s_*(\theta),\mathring{f}],\ \mathring{f}\in\mathscr{K}\big\}\simeq\mathscr{K}
 \eeq 
 and $\mathfrak{t}^{-1}\mathcal{O}\simeq\mathcal{O}\times\mathscr{K}$ for any small enough neighborhood $\mathcal{O}$ of any $\theta\in\T_*$.
 
 \begin{notation}
 Let $\mathfrak{S}\big(\mathcal{O},\widehat{\mathfrak{X}}[\mathscr{K}]\big)$ be the complex space of smooth sections over the open set $\mathcal{O}\subset\T_*$ and $\mathfrak{S}^2\big(\mathcal{O},\widehat{\mathfrak{X}}[\mathscr{K}]\big)$ the complex space of classes of square integrable sections over the open set $\mathcal{O}\subset\T_*$.
 \end{notation}
 
 \begin{proposition}
 We have an unitary map 
 \beq
 \mathfrak{S}^2\big(\T_*,\widehat{\mathfrak{X}}[\mathscr{K}]\big)\overset{\sim}{\longrightarrow}\big\{F\in{L}^2_{\text{\tt loc}}\big(\X^*;\mathscr{K}\big),\ \tau_{\gamma^*}F=\mathfrak{f}_{-\gamma^*}F\big\}\,=:\,L^2_{l,\mathfrak{f}}[\mathscr{K}]\,,
 \eeq 
 that restricts to a bijection 
 \beq
 \mathfrak{S}\big(\T_*,\widehat{\mathfrak{X}}[\mathscr{K}]\big)\overset{\sim}{\longrightarrow}\big\{F\in{C}^\infty\big(\X^*;\mathscr{K}\big),\ \tau_{\gamma^*}F=\mathfrak{f}_{-\gamma^*}F\big\}\,=:\,C^\infty_{\mathfrak{f}}[\mathscr{K}]\,.
 \eeq
 \end{proposition}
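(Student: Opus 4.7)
The plan is to establish the correspondence explicitly on the level of representatives and then separately verify well-definedness, quasi-periodicity, regularity and unitarity. First, given a section $s\in\mathfrak{S}\big(\T_*,\widehat{\mathfrak{X}}[\mathscr{K}]\big)$, I would define the associated function $F:\X^*\to\mathscr{K}$ by the rule that $F(\xi)\in\mathscr{K}$ is the unique element such that $[\xi,F(\xi)]_\sim=s(\p_*(\xi))$. Uniqueness is immediate because the equivalence relation acts freely in the first coordinate: inside a fixed class $[\xi_0,\mathring f_0]_\sim$ every element with first coordinate equal to $\xi_0$ must coincide with $(\xi_0,\mathring f_0)$. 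Existence follows from the description of the fiber $\mathfrak{t}^{-1}(\theta)\simeq\mathscr{K}$: starting from the preferred representative $(\s_*(\theta),\mathring f)$ of $s(\theta)$ and writing any $\xi\in\p_*^{-1}(\theta)$ as $\xi=\s_*(\theta)+\gamma^*$, the class $s(\theta)$ also contains $(\xi,\mathring{U}_*(\gamma^*)\mathring f)$, which forces $F(\xi)=\mathring{U}_*(\gamma^*)\mathring f$.

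Next, I would derive the equivariance property. By construction $[\xi+\gamma^*,F(\xi+\gamma^*)]_\sim=s(\p_*(\xi+\gamma^*))=s(\p_*(\xi))=[\xi,F(\xi)]_\sim=[\xi+\gamma^*,\mathring{U}_*(\gamma^*)F(\xi)]_\sim$, so freeness gives $F(\xi+\gamma^*)=\mathring{U}_*(\gamma^*)F(\xi)$. Reading this through the definition \eqref{F-a} of $\mathring{U}_*$ and the translation action $\tau_{\gamma^*}$, and using that $\mathring{U}_*$ is a representation of $\Gamma_*$, translates directly into the stated identity $\tau_{\gamma^*}F=\mathfrak{f}_{-\gamma^*}F$. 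Conversely, any $F$ satisfying this identity defines a section by $s(\theta):=[\xi,F(\xi)]_\sim$ for any (hence every) $\xi\in\p_*^{-1}(\theta)$; the equivariance is precisely what guarantees independence of the choice. These two assignments are mutually inverse, giving the set-theoretic bijection.

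For the regularity statements I would use the local trivializations $\mathfrak{t}^{-1}(\mathcal O)\simeq\mathcal O\times\mathscr{K}$. On a small enough open $\mathcal O\subset\T_*$ one can take $\xi=\s_*(\theta)$ and the section is smooth iff the local representative $\theta\mapsto F(\s_*(\theta))$ is smooth into $\mathscr{K}$; since $\mathring{U}_*(\gamma^*)$ depends smoothly on its argument (multiplication by a smooth function) and the transition between two such trivializations is given by multiplication by $\mathfrak{f}_{-\gamma^*}\circ\s$, smoothness on $\T_*$ transfers to smoothness of $F$ on each $\xi_0+\mathcal O$ and hence on all of $\X^*$, and vice versa. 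The same reasoning applied locally on a finite cover of the compact $\T_*$ by trivializing neighborhoods yields the claim for the smooth subspace.

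Finally, to promote the bijection to a unitary isomorphism on the $L^2$ level, I would equip each fiber $\mathfrak{t}^{-1}(\theta)$ with the Hilbert structure transported from $\mathscr{K}=L^2(\T)$, which is consistent with the gluing because each $\mathring{U}_*(\gamma^*)$ is unitary; then an integrable section $s$ carries the norm $\|s\|^2=\int_{\T_*}\|F(\s_*(\theta))\|_{\mathscr{K}}^2\,d\theta$. On the other hand, since $\xi\mapsto\|F(\xi)\|_{\mathscr{K}}$ is $\Gamma_*$-periodic (unitarity of $\mathring{U}_*(\gamma^*)$ eliminates the twist), integrating over a fundamental domain for $\Gamma_*$ yields exactly the $L^2$-norm of the class of $F$ in $L^2_{l,\mathfrak{f}}[\mathscr{K}]$, proving unitarity. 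The expected main obstacle is purely bookkeeping: tracking carefully the sign conventions in the representation $\mathring{U}_*$ and in the translation action $\tau_{\gamma^*}$ so that the equivariance produced by the quotient construction matches exactly the sign in $\tau_{\gamma^*}F=\mathfrak{f}_{-\gamma^*}F$, and checking that the chosen fundamental domain matches the normalization of $d\theta$ on $\T_*$.
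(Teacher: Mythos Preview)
Your proposal is correct and follows essentially the same route as the paper: define the correspondence by picking, for each $\xi\in\X^*$, the unique $\mathscr{K}$-representative of $s(\p_*(\xi))$ with first coordinate $\xi$, and conversely send an equivariant $F$ to the section $\theta\mapsto[\s_*(\theta),F(\s_*(\theta))]_\sim$. Your treatment is in fact more complete than the paper's, which only sketches the smooth bijection and does not spell out the equivariance check, the local-trivialization argument for regularity, or the unitarity on the $L^2$ level; everything you add is straightforward and correct.
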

 \begin{proof}
 Suppose given $F\in C^\infty_{\mathfrak{f}}[\mathscr{K}]$ and for $\theta\in\T_*$ let us define 
 \beq
 \Phi_F(\theta):=\big[\s_*(\theta),(F\circ\s_*)(\theta)\big]_\sim\in\widehat{\mathfrak{X}}[\mathscr{K}]_{\theta}\subset\widehat{\mathfrak{X}}[\mathscr{K}]\,.
 \eeq
 One easily verifies the smoothness properties using the smoothness of $F$ and concludes  that $\Phi_F\in\mathfrak{S}(\T_*,\widehat{\mathfrak{X}}[\mathscr{K}])$. 
 
 Conversely, let $\Phi\in\mathfrak{S}(\T_*,\widehat{\mathfrak{X}}[\mathscr{K}])$ and given $\xi\in\X^*$ let us notice that there exists a unique $\widetilde{\Phi}(\xi)\in\mathscr{K}$ such that:
 \beq
 (\Phi\circ\p_*)(\xi)\,=\,\big[\xi,\widetilde{\Phi}(\xi)\big]_\sim.
 \eeq
 In fact, using the decomposition $\X^*=\Gamma_*\times{\cal{B}}$ with the notations discussed in Appendix~\ref{ss1.3}, there exists a unique $\varphi\in\mathscr{K}$ such that we can write:
 \beq
 (\Phi\circ\p_*)(\xi)=\big\{\big(\hat{\xi}+\gamma^*\,,\,\mathfrak{f}_{-\gamma^*}\varphi\big)\big\}_{\gamma^*\in\Gamma_*}
 \eeq
 and we obtain that
 \beq
 \widetilde{\Phi}(\xi):=\mathfrak{f}_{-\iota^*(\xi)}\varphi\in\mathscr{K}.
 \eeq
 \end{proof}
 
 We shall use the notations:
 \beq
 \mathfrak{F}:=\widehat{\mathfrak{X}}[L^2(\T)],\quad\mathfrak{F}^p:=\widehat{\mathfrak{X}}[\mathscr{H}^p(\T)],\quad\mathfrak{F}^\infty:=\widehat{\mathfrak{X}}[C^\infty(\T)].
 \eeq
 
 \subsection{The vector sub-bundle associated with  the isolated Bloch family $\mathbf{\B}$.}
 
 \begin{proposition}\label{P-p-symb} There exists a vector sub-bundle $\mathfrak{F}_\B\subset\mathfrak{F}^\infty\repi\T_*$ with an unitary map $\mathfrak{S}^2(\T_*,\mathfrak{F}_\B)\overset{\sim}{\longrightarrow}\U_{BFZ}\big[P_\B{L}^2(\X)\big]$ putting into bijective correspondence the smooth sections in $\mathfrak{S}(\T_*,\mathfrak{F}_\B)$ with smooth functions in $\U_{BFZ}\big[P_\B{L}^2(\X)\big]\subset\mathscr{G}^\infty$.
 \end{proposition}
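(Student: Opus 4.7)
The plan is to construct $\mathfrak{F}_\B$ fiberwise as the range of the smooth family of spectral projections $\widehat{P_\B}(\theta)$ associated with the isolated Bloch family, to verify that this family is compatible with the equivalence relation $\Lsim$ defining $\mathfrak{F}^\infty$, and finally to restrict the unitary map from the preceding proposition.

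First I would establish that $\{\widehat{P_\B}(\theta)\}_{\theta\in\T_*}$ is a smooth family of orthogonal projections of constant rank $N+1$ with range contained in $C^\infty(\T)$. By Hypothesis \ref{H-isBf}, for each $\theta_0\in\T_*$ there exists a neighborhood $\mathcal{O}\ni\theta_0$ and real numbers $a<b$ isolating $\{\lambda_k(\theta):k_0\le k\le k_0+N\}$ from the rest of $\sigma(\widehat{H}(\theta))$ for $\theta\in\mathcal{O}$. Choosing a small contour $C\subset\mathbb{C}$ enclosing exactly this portion of the spectrum, the Riesz formula
\[
\widehat{P_\B}(\theta)=\frac{1}{2\pi i}\oint_{C}\bigl(\lambda-\widehat{H}(\theta)\bigr)^{-1}d\lambda
\]
coincides with the spectral projection $\sum_{k\in\B}\hat\pi_k(\theta)$ in Definition \ref{D-B-subsp}. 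The smoothness of $\theta\mapsto\widehat{H}(\theta)$ recalled in Appendix \ref{A-BF-Theory}, together with the fact that $\widehat{H}(\theta)$ has compact resolvent with smooth eigenfunctions, then yields smoothness of $\widehat{P_\B}$ as a map into $\mathbb{B}(L^2(\T))$ with range in $C^\infty(\T)$.

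Next I would exploit the quasi-periodicity $\widehat{H}(\xi+\gamma^*)=\mathring{U}_*(\gamma^*)\,\widehat{H}(\xi)\,\mathring{U}_*(\gamma^*)^{-1}$ intrinsic to the BFZ decomposition, which through the Riesz formula propagates to $\widehat{P_\B}(\xi+\gamma^*)=\mathring{U}_*(\gamma^*)\,\widehat{P_\B}(\xi)\,\mathring{U}_*(\gamma^*)^{-1}$. This ensures that the prescription
\[
\mathfrak{F}_\B:=\bigl\{[\xi,\mathring{f}]_\sim\in\mathfrak{F}^\infty:\widehat{P_\B}(\xi)\mathring{f}=\mathring{f}\bigr\}
\]
is well defined on $\Lsim$-classes. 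Combined with the smoothness and constant rank of $\widehat{P_\B}$, it produces local trivializations $\mathfrak{t}^{-1}(\mathcal{O})\cap\mathfrak{F}_\B\simeq\mathcal{O}\times\mathbb{C}^{N+1}$ endowing $\mathfrak{F}_\B\repi\T_*$ with a smooth Hermitian sub-bundle structure of rank $N+1$.

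Finally, I would restrict the unitary of the preceding proposition. A section $\Phi\in\mathfrak{S}^2(\T_*,\mathfrak{F}^\infty)$ corresponds to an $F\in L^2_{l,\mathfrak{f}}[L^2(\T)]$, and $\Phi$ takes values in $\mathfrak{F}_\B$ if and only if $\widehat{P_\B}(\cdot)F(\cdot)=F(\cdot)$ almost everywhere, which by Definition \ref{D-B-subsp} characterizes membership in $\U_{BFZ}[P_\B L^2(\X)]$. Restricting gives the unitary $\mathfrak{S}^2(\T_*,\mathfrak{F}_\B)\overset{\sim}{\to}\U_{BFZ}[P_\B L^2(\X)]$, and since the restriction preserves smoothness of representatives it maps $\mathfrak{S}(\T_*,\mathfrak{F}_\B)$ bijectively onto the smooth elements of $\U_{BFZ}[P_\B L^2(\X)]\subset\mathscr{G}^\infty$. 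The only genuine obstacle is confirming the smoothness and constant rank of $\widehat{P_\B}$, which is essentially a direct consequence of the local gap hypothesis together with the holomorphic functional calculus; once this point is granted, the rest of the argument is formal.
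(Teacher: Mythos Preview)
Your proposal is correct and follows essentially the same route as the paper: establish smoothness of $\theta\mapsto\widehat{P_\B}(\theta)$ via holomorphic functional calculus on a locally chosen contour (the paper uses smooth cut-offs $\chi_\theta$ and a partition of unity over $\T_*$ instead, which is equivalent), verify compatibility with the equivalence relation $\Lsim$ through the quasi-periodicity of $\widehat{H}$, and then restrict the unitary of the preceding proposition. The only cosmetic difference is that the paper works in the BFZ picture via $\widetilde{P}_\B(\xi)=\mathfrak{f}_\xi^{-1}\widehat{P_\B}(\p_*(\xi))\mathfrak{f}_\xi$ and invokes Theorem~\ref{T-FBdec}(2) directly for smoothness of the conjugated resolvent, whereas you phrase everything in terms of $\widehat{P_\B}(\theta)$; both viewpoints lead to the same conclusion.
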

 \begin{proof}
We begin by noticing that the formulas \eqref{dec-id} imply that any vector in the range of $\hat{P}_\B(\theta)$ is a finite linear combination of elements in the ranges of the projections $\hat{\pi}_k(\theta)$ with $k_0\leq k\leq k_0+N$ and thus, Theorem \ref{T-FBdec-proj} implies that they belong to $\mathscr{F}_{\theta}^\infty$. It follows that 
$\U_{BFZ}\big[P_\B{L}^2(\X)\big]\subset\mathscr{G}^\infty$.

Due to Hypothesis \ref{H-I}, for any $\theta\in\T_*$ we can find a cut-off function $\chi_{\theta}\in\,C^\infty_0(\R)$ with support in $\widetilde{I}_{\theta}:=(\lambda_{k_0-1}(\theta),\lambda_{k_0+N+1}(\theta)\ne\emptyset$ and equal to 1 on $J_{\theta}:=[\lambda_{k_0}(\theta),\lambda_{k_0+N}(\theta)]$ so that:
 	\[
 	\widehat{P}_{\B}(\theta)\,=\,\chi_{\theta}\big(\widehat{H}^\circ(\theta)\big).
 	\]
 Let us notice further that for any $\theta\in\T^d_*$, we can find an open neighbourhood $\mathscr{O}\subset\T^d_*$ and a local smooth section $\s_{\mathscr{O}}:\mathscr{O}\rightarrow\X^*$ such that we may choose the same $\chi_{\theta}\equiv\chi_{\mathscr{O}}$ for all $\theta\in\mathscr{O}$. Using the second statement of Theorem \ref{T-FBdec} we deduce the smoothness of the application:
 	\begin{equation}
 		\begin{aligned}\label{F-201}
 			&\s_{\mathscr{O}}\mathscr{O}\ni\xi\mapsto\mathfrak{f}_{\xi}^{-1}\chi_{\mathscr{O}}\big(\widehat{H}^\circ\big(\p_*(\xi)\big)\mathfrak{f}_{\xi}\in\mathbb{B}\big(L^2(\T)\big).
 		\end{aligned}
 	\end{equation}
 	Due to the compactness of $\T_*$ it follows that we can find an open cover $\big\{\mathscr{O}_m,\ m\in\underline{N}\big\}$ for some $N\in\Nb$ and an associated finite partition of unity $\big\{\rho_m\in\,C^\infty_0(\T_*),\ m\in\underline{N}\big\}$, as well as a family of functions $\chi_m\in\,C^\infty_0(\R),\ m\in\underline{N}\big\}$ such that $\chi_m=1$ on $J_{\theta}$ and $\supp\chi_m\subset\widetilde{I}_{\theta}$ for any $\theta\in\mathscr{O}_m$. Finally, we notice that:
 	\beq
 	\widehat{P}_{\B}\,=\,\underset{1\leq m\leq N}{\sum}\rho_m\,\chi_m\big(\widehat{H}^\circ\big)
 	\eeq
 	and the following map is smooth:
 	\beq
 	\X^*\ni\xi\mapsto\widetilde{P}_\B(\xi):=\mathfrak{f}_{\xi}^{-1}\widehat{P}_{\B}\big(\p_*(\xi)\big)\mathfrak{f}_{\xi}\in\mathbb{B}\big(L^2(\T)\big).
 	\eeq
 	
 Finally, our previous arguments show that $\widetilde{P}_{\B}\mathscr{G}\subset\mathscr{G}^\infty$ and for any $\tilde{f}\in\mathscr{G}$:
 \beq\begin{split}
 \big[\tau_{\gamma^*}\widetilde{P}_{\B}\tilde{f}\big](\xi)=\mathfrak{f}_{\xi+\gamma*}^{-1}\widehat{P}_{\B}\big(\p_*(\xi+\gamma^*)\big)\mathfrak{f}_{\xi+\gamma^*}\,\tilde{f}(\xi+\gamma^*)=\mathfrak{f}_{\gamma*}^{-1}\widetilde{P}_{\B}(\xi)\,\tilde{f}(\xi)=\big[\mathfrak{f}_{\gamma^*}^{-1}\widetilde{P}_{\B}\tilde{f}\big](\xi).
 \end{split}\eeq
 We conclude that we may take $\mathscr{K}:=\widetilde{P}_{\B}(1)\big[C^\infty(\T)\big]$ in the abstract construction done in the previous section and obtain the vector bundle $\mathfrak{F}_{\B}$ of rank $N+1$ as sub-bundle of $\mathfrak{F}^\infty$.
 
 \end{proof}

\subsection{Construction of the ``unperturbed" Parseval frame}\label{SS-B-P-fr}
The main property of fibre bundles that we shall need in our construction is the following Embedding Theorem (see  Theorem 7.2 in Chapter 8 of \cite{Hu},   we use the notation introduced in footnote \ref{fnt:1}):
\begin{theorem}
Given a smooth Euclidean vector bundle $\mathfrak{B}\repi M$ of rank $n$ over a smooth real manifold $M$ of dimension $d$, there exists $m\in\mathbb{N}$ with 
	$0\leq m\leq\lfloor d/2\rfloor$ 
	and a smooth isometric bundle homomorphism $\mathfrak{I}:\mathfrak{B}\rightarrow M\times\mathbb{C}^{n+m}$.
	\end{theorem}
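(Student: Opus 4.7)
The plan is to first embed $\mathfrak{B}$ isometrically as a sub-bundle of some trivial bundle $M\times\mathbb{C}^N$ with $N$ possibly much larger than $n+\lfloor d/2\rfloor$, and then iteratively reduce $N$ by splitting off trivial rank-one summands from the orthogonal complement, until no further reduction is possible. The bound $m\le\lfloor d/2\rfloor$ on the excess dimension will emerge naturally from a transversality dimension count in the reduction step.

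For the initial embedding, cover the compact (or paracompact) $M$ by finitely many open sets $\{U_j\}_{j=1}^K$ over which $\mathfrak{B}$ admits fibrewise isometric trivializations $\varphi_j:\mathfrak{B}|_{U_j}\to U_j\times\mathbb{C}^n$, and choose a subordinate smooth partition of unity $\{\rho_j\}$. Define $\mathfrak{J}:\mathfrak{B}\to M\times\mathbb{C}^{Kn}$ by
$$
\mathfrak{J}(v)=\Bigl(\pi(v),\,\rho_1(\pi(v))^{1/2}\varphi_1(v),\ldots,\rho_K(\pi(v))^{1/2}\varphi_K(v)\Bigr),
$$
which is fibrewise isometric because $\sum_j\rho_j\equiv 1$. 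For the reduction step, assume $\mathfrak{B}$ is already isometrically embedded in $M\times\mathbb{C}^N$, and denote by $\mathfrak{B}^\perp\to M$ the orthogonal complement sub-bundle, of rank $N-n$. Suppose $N-n>\lfloor d/2\rfloor$. The total space of $\mathfrak{B}^\perp$ has real dimension $d+2(N-n)$, while its zero section has real dimension $d$; by Thom's transversality theorem a generic smooth section $s:M\to\mathfrak{B}^\perp$ is transverse to the zero section, and since $2(N-n)>d$ the expected (real) dimension of the zero locus is negative, so the locus is empty. Normalising the resulting nowhere-vanishing section $s$ produces a trivial rank-one sub-bundle of $\mathfrak{B}^\perp$; taking its orthogonal complement inside $M\times\mathbb{C}^N$ yields an isometric embedding of $\mathfrak{B}$ into a trivial bundle of rank $N-1$. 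Iterating terminates exactly when $N-n\le\lfloor d/2\rfloor$, which gives the required bound on $m$.

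The main technical obstacle is producing a \emph{smooth} nowhere-vanishing section of $\mathfrak{B}^\perp$ in one clean stroke. The cleanest route is to start from any smooth section (for instance the zero section itself) and perturb it in the strong Whitney $C^\infty$-topology using a locally finite refinement of an atlas and Sard's theorem in each chart, so that the perturbed section is transverse to the zero section of $\mathfrak{B}^\perp$; equivalently one invokes Thom's jet-transversality theorem directly. The rank-plus-codimension count then forces emptiness of the zero locus precisely under the assumption $N-n>\lfloor d/2\rfloor$, and compactness of $M$ ensures that finitely many iterations suffice. This is the content of Theorem 7.2 in Chapter 8 of \cite{Hu} invoked here.
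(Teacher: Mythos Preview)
The paper does not prove this theorem at all; it is quoted as a known result and attributed to Theorem 7.2 in Chapter 8 of \cite{Hu}. Your proposal is a correct reconstruction of the standard textbook argument (initial isometric embedding via a partition of unity, followed by iterated rank-one splitting of the orthogonal complement using a transversality/dimension count), which is indeed the strategy in Husemoller.

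Two minor comments on your write-up. First, the statement does not assume $M$ compact, so the existence of a \emph{finite} trivialising cover needs a word of justification: for a $d$-dimensional manifold one invokes covering-dimension theory to produce a cover by $d+1$ open sets over each of which $\mathfrak{B}$ is trivial, giving an initial embedding into $M\times\mathbb{C}^{(d+1)n}$. Second, your closing remark that ``compactness of $M$ ensures that finitely many iterations suffice'' is misplaced: the iteration terminates simply because $N$ is finite and decreases by one at each step; compactness (or the dimension argument just mentioned) is what guarantees a finite $N$ to begin with.
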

	
Applying the Embedding Theorem to our case where $n=N+1$ and $M=\T_*$ we have:
\begin{corollary}\label{C-bd-triv}
	There exist some $n_\B\in \Nb$ with $N+1\leq n_\B\leq\,{N+1+\lfloor d/2\rfloor}$ and a smooth bundle homomorphism  $\mathfrak{I}_\B$ from $\mathfrak{F}_\B$ into $\T_*\times\mathbb{C}^{n_\B}$ that is isometric on each fibre.
\end{corollary}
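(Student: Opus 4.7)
The corollary is essentially a direct specialization of the preceding Embedding Theorem to the bundle $\mathfrak{F}_\B \rightarrow \T_*$ constructed in Proposition \ref{P-p-symb}, so my plan is to check that the hypotheses of that theorem apply and then track what the parameters become in our setting.

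First I would verify that $\mathfrak{F}_\B$ is a smooth Euclidean vector bundle of rank $N+1$ over a smooth compact manifold of dimension $d$. The base manifold is $\T_* \cong \Sd$, which is smooth of dimension $d$. The rank assertion follows from Proposition \ref{P-p-symb}: the fibre at $\theta \in \T_*$ is identified with the range of $\widehat{P}_\B(\theta) = \sum_{k_0 \leq k \leq k_0+N} \hat{\pi}_k(\theta)$, and under Hypothesis \ref{H-isBf} this range has constant dimension $N+1$ across $\T_*$. The Hermitian structure on each fibre is inherited from the ambient Hilbert space $L^2(\T)$ through the identification of the fibres of $\mathfrak{F}_\B$ with subspaces of $L^2(\T)$ (via the local trivializations $\widetilde{P}_\B(\xi)$ built in the proof of Proposition \ref{P-p-symb}), and the smoothness of the transition maps follows from the smoothness of the family $\xi \mapsto \widetilde{P}_\B(\xi)$ established there.

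Next I would apply the Embedding Theorem with $n = N+1$ and $M = \T_*$ of dimension $d$. The theorem produces some $m \in \mathbb{N}$ with $0 \leq m \leq \lfloor d/2 \rfloor$ and a smooth isometric bundle homomorphism $\mathfrak{I}:\mathfrak{F}_\B \rightarrow \T_* \times \mathbb{C}^{(N+1)+m}$. Setting $n_\B := (N+1) + m$, we immediately obtain the bounds $N+1 \leq n_\B \leq N+1+\lfloor d/2 \rfloor$ claimed in the corollary, and $\mathfrak{I}_\B := \mathfrak{I}$ is the required map.

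There is no real obstacle beyond bookkeeping: the non-trivial content sits in the Embedding Theorem itself and in the vector bundle structure of $\mathfrak{F}_\B$, both of which are already available. The only point deserving a brief remark is that the Embedding Theorem as cited is formulated for real Euclidean bundles while $\mathfrak{F}_\B$ is complex Hermitian; this is harmless because a rank $n$ complex Hermitian bundle may be regarded as a rank $2n$ real Euclidean bundle, and conversely an isometric complex-linear trivialization into $\T_* \times \mathbb{C}^{n_\B}$ may be recovered from the real embedding by complexifying and projecting onto the $+i$-eigenspace of the almost-complex structure — or, more directly, one may invoke the Hermitian version of the embedding theorem (which has the same proof as the real one, using partitions of unity and Sard's theorem).
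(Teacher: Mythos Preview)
Your proposal is correct and follows exactly the route the paper takes: the paper simply states that Corollary~\ref{C-bd-triv} is obtained by applying the Embedding Theorem with $n=N+1$ and $M=\T_*$, giving no further argument. Your verification of the hypotheses and your remark on the real-versus-complex formulation are reasonable expansions of what is left implicit there.
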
 
\begin{corollary}\label{P-bd-triv}
	The application $\sigma\mapsto\mathfrak{I}_\B\circ\sigma$ defines an isometric map from $\mathfrak{S}^2(\T_*,\mathfrak{F}_\B)$ into $ L^2(\T_*;\mathbb{C}^{n_\B})$ and a continuous linear embedding $\mathfrak{S}(\T_*,\mathfrak{F}_\B)\hookrightarrow\big[C^\infty(\T_*)\big]^{n_\B}$.
\end{corollary}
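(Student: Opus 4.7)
The plan is to deduce both assertions almost directly from Corollary \ref{C-bd-triv}, interpreting $\mathfrak{I}_\B\circ\sigma$ as a $\mathbb{C}^{n_\B}$-valued function on $\T_*$ via the natural identification of the fibre $\{\theta\}\times\mathbb{C}^{n_\B}$ with $\mathbb{C}^{n_\B}$. The key tool is that $\mathfrak{I}_\B$ is smooth as a bundle homomorphism and fibrewise isometric onto its image.

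First I would set up the map: for $\sigma\in\mathfrak{S}^2(\T_*,\mathfrak{F}_\B)$ and $\theta\in\T_*$, let $\widetilde{\sigma}(\theta)\in\mathbb{C}^{n_\B}$ be the unique vector such that $\mathfrak{I}_\B(\sigma(\theta))=(\theta,\widetilde{\sigma}(\theta))$. Since $\sigma$ is measurable (respectively smooth) and $\mathfrak{I}_\B$ is a smooth fibre-preserving map, $\widetilde{\sigma}$ inherits the same regularity. The isometry of $\mathfrak{I}_\B$ on each fibre gives the pointwise identity
\[
|\widetilde{\sigma}(\theta)|_{\mathbb{C}^{n_\B}}\,=\,\|\sigma(\theta)\|_{\mathfrak{F}_\B|_\theta}\,,\qquad\forall\theta\in\T_*\,,
\]
and integrating over $\T_*$ with the measure $d\theta$ gives $\|\widetilde{\sigma}\|_{L^2(\T_*;\mathbb{C}^{n_\B})}=\|\sigma\|_{\mathfrak{S}^2(\T_*,\mathfrak{F}_\B)}$. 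This establishes the $L^2$ isometry and, in particular, its injectivity: if $\widetilde{\sigma}=0$ almost everywhere then $\sigma=0$ in $\mathfrak{S}^2(\T_*,\mathfrak{F}_\B)$.

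Next, for the smooth part, I would observe that the embedding $\sigma\mapsto\widetilde{\sigma}$ sends $\mathfrak{S}(\T_*,\mathfrak{F}_\B)$ into $[C^\infty(\T_*)]^{n_\B}$ simply because the composition of a smooth section with a smooth bundle homomorphism is smooth. Continuity in the Fr\'{e}chet topologies follows from the fact that on a compact manifold $\T_*$ (covered by finitely many local trivializations of $\mathfrak{F}_\B$) the seminorms controlling the derivatives of $\widetilde{\sigma}$ can be bounded by the seminorms of $\sigma$ through the bounded derivatives of $\mathfrak{I}_\B$ expressed in local trivializations; this is a standard Leibniz-type estimate applied chart by chart, using a finite partition of unity subordinated to the open cover already at hand. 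Injectivity on smooth sections follows from that on $L^2$-sections.

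There is no serious obstacle here: the whole content of the statement is a repackaging of the fibrewise isometry and smoothness provided by Corollary \ref{C-bd-triv}, combined with the compactness of $\T_*$ which permits uniform control in finitely many charts. The only point that requires a little care is the well-definedness of $\widetilde{\sigma}$ as a $\mathbb{C}^{n_\B}$-valued (rather than merely section-valued) map, which is handled by the product structure $\T_*\times\mathbb{C}^{n_\B}$ of the target bundle of $\mathfrak{I}_\B$.
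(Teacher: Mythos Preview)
Your proposal is correct and matches the paper's intent: the corollary is stated without proof in the paper, as an immediate consequence of Corollary~\ref{C-bd-triv}, and your argument spells out precisely the fibrewise-isometry-plus-smoothness reasoning that makes it immediate.
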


\begin{proposition}\label{P-Pfr-Tstar-sect}
	There exists a family of $n_\B$ smooth global sections $\{\hat{\psi}_{p}\}_{1\leq p\leq n_\B}$ in the Bloch bundle $\mathfrak{F}_\B\repi\T_*$ such that the closed complex linear space they generate is equal to $\mathfrak{S}(\T_*,\mathfrak{F}_\B)$ and for any section $\sigma\in\mathfrak{S}^2(\T_*,\mathfrak{F}_\B)$ the following equalities hold:
	$$
	\sigma(\theta)\,=\,\underset{1\leq p\leq n_\B}{\sum}\,\big(\hat{\psi}_{p}(\theta)\,,\,\sigma(\theta)\big)_{\tilde{\p}_*^{-1}(\theta)}\hat{\psi}_{p}(\theta),\quad\forall\theta\in\T_*\,,
	$$
	with the identity
	$$ 
	\big\|\sigma(\theta)\big\|_{\tilde{\p}_*^{-1}(\theta)}^2\,=\,\underset{1\leq p\leq n_\B}{\sum}\,\big|\big(\hat{\psi}_{p}(\theta)\,,\,\sigma(\theta)\big)_{\tilde{\p}_*^{-1}(\theta)}\big|^2,\quad\forall\theta\in\T_*\,.
	$$
\end{proposition}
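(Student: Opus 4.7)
The construction proceeds by pulling back the canonical basis of $\Co^{\nB}$ through the fibrewise adjoint of the isometric bundle map $\mathfrak{I}_{\B}$ provided by Corollary~\ref{C-bd-triv}. For each $\theta\in\T_*$ the map $\mathfrak{I}_{\B}(\theta):\tp_*^{-1}(\theta)\to\Co^{\nB}$ is a linear isometric embedding of Hilbert spaces, hence it admits a unique fibrewise Hilbert adjoint $\mathfrak{I}_{\B}^{\,*}(\theta):\Co^{\nB}\to\tp_*^{-1}(\theta)$ satisfying $\mathfrak{I}_{\B}^{\,*}(\theta)\mathfrak{I}_{\B}(\theta)=\bb1$ on $\tp_*^{-1}(\theta)$, while $\mathfrak{I}_{\B}(\theta)\mathfrak{I}_{\B}^{\,*}(\theta)$ equals the orthogonal projection of $\Co^{\nB}$ onto the range of $\mathfrak{I}_{\B}(\theta)$. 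A short local argument, outlined at the end, shows that $\mathfrak{I}_{\B}^{\,*}$ is itself a smooth bundle homomorphism from $\T_*\times\Co^{\nB}$ to $\mathfrak{F}_{\B}$.

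Denoting by $\{\e_p\}_{1\leq p\leq\nB}$ the canonical basis of $\Co^{\nB}$, define the candidate frame by
\[
\hat{\psi}_p(\theta)\,:=\,\mathfrak{I}_{\B}^{\,*}(\theta)\,\e_p,\qquad\theta\in\T_*,\quad 1\leq p\leq\nB.
\]
Each $\hat{\psi}_p$ is a smooth global section of $\mathfrak{F}_{\B}$ by smoothness of $\mathfrak{I}_{\B}^{\,*}$. For any $\sigma(\theta)\in\tp_*^{-1}(\theta)$, combining $\sigma(\theta)=\mathfrak{I}_{\B}^{\,*}(\theta)\mathfrak{I}_{\B}(\theta)\sigma(\theta)$ with the Euclidean expansion of $\mathfrak{I}_{\B}(\theta)\sigma(\theta)$ in the basis $\{\e_p\}$ and the defining adjunction $(\e_p,\mathfrak{I}_{\B}(\theta)\sigma(\theta))_{\Co^{\nB}}=(\mathfrak{I}_{\B}^{\,*}(\theta)\e_p,\sigma(\theta))_{\tp_*^{-1}(\theta)}$ yields the claimed reconstruction identity. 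The Parseval equality for norms is then immediate from the fibrewise isometry of $\mathfrak{I}_{\B}(\theta)$:
\[
\big\|\sigma(\theta)\big\|_{\tp_*^{-1}(\theta)}^2\,=\,\big\|\mathfrak{I}_{\B}(\theta)\sigma(\theta)\big\|_{\Co^{\nB}}^2\,=\,\underset{1\leq p\leq\nB}{\sum}\,\big|\big(\hat{\psi}_p(\theta),\sigma(\theta)\big)_{\tp_*^{-1}(\theta)}\big|^2.
\]

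The identification of the space generated by $\{\hat{\psi}_p\}_{p}$ with $\mathfrak{S}(\T_*,\mathfrak{F}_{\B})$ (as a $C^\infty(\T_*)$-module, which is the natural reading of a closed complex linear span in this finite-rank context) is immediate from the pointwise reconstruction: for any $\sigma\in\mathfrak{S}(\T_*,\mathfrak{F}_{\B})$ the coefficient functions $c_p(\theta):=(\hat{\psi}_p(\theta),\sigma(\theta))_{\tp_*^{-1}(\theta)}$ are smooth on $\T_*$, being pointwise Hermitian inner products of smooth sections of a smooth Euclidean bundle, and $\sigma=\underset{p}{\sum}\,c_p\,\hat{\psi}_p$. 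The main (but mild) technical obstacle is the smoothness of the fibrewise adjoint $\mathfrak{I}_{\B}^{\,*}$: choosing a smooth local orthonormal frame of $\mathfrak{F}_{\B}$ over an open chart $\mathcal{O}\subset\T_*$ (which exists by a local Gram-Schmidt procedure applied to any smooth local frame), $\mathfrak{I}_{\B}$ is represented by a smooth matrix-valued function $M:\mathcal{O}\to\mathrm{Mat}_{\nB\times(N+1)}(\Co)$ with $M(\theta)^{\,*}M(\theta)=\bb1$, and $\mathfrak{I}_{\B}^{\,*}$ is then represented by the conjugate transpose $M(\theta)^{\,*}$, which depends smoothly on $\theta$; uniqueness of the adjoint guarantees that the locally defined maps glue into a globally smooth bundle homomorphism.
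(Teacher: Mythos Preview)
Your proof is correct and is essentially the same as the paper's. The paper defines $\hat\psi_p(\theta):=\mathfrak{I}_{\B,\theta}^{-1}\big(Q_\B(\theta)\e_p\big)$, where $Q_\B(\theta)$ is the orthogonal projection of $\Co^{\nB}$ onto the range of $\mathfrak{I}_{\B,\theta}$; since for a linear isometry the Hilbert adjoint satisfies $\mathfrak{I}_{\B,\theta}^{\,*}=\mathfrak{I}_{\B,\theta}^{-1}\circ Q_\B(\theta)$, your definition $\hat\psi_p(\theta)=\mathfrak{I}_\B^{\,*}(\theta)\e_p$ is literally the same object, and the ensuing reconstruction and Parseval computations are identical up to this notational repackaging. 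Your explicit local-frame argument for the smoothness of $\mathfrak{I}_\B^{\,*}$ is a welcome addition that the paper leaves implicit.
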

\begin{proof}
	We notice that the bundle homomorphism $\mathfrak{I}_\B:\mathfrak{F}_\B\rightarrow\T_*\times\mathbb{C}^{n_\B}$ is in fact a smooth family of linear isometries $\T_*\ni\theta\mapsto\mathfrak{I}_{\B,\theta}\in\mathcal{L}\big(\tilde{\p}_*^{-1}(\theta),\mathbb{C}^{n_\B}\big)$.
	For any ${\theta}\in\T_*$ let: 
	\beq\label{DF-LB-theta}
	\mathfrak{L}_\B({\theta}):=\big\{\mathfrak{I}_{\B,\theta}\mathfrak{v}\,,\, \mathfrak{v}\in\tilde{\p}_*^{-1}(\theta)\big\}\,\subset\,\mathbb{C}^{n_\B},
	\eeq
	so that $\mathfrak{I}_{\B,\theta}:\tilde{\p}_*^{-1}(\theta)\rightarrow\mathfrak{L}_\B({\theta})$ is invertible and let $Q_\B({\theta}):\mathbb{C}^{n_\B}\rightarrow\mathfrak{L}_\B({\theta})$ be the canonical orthogonal projection defined by this subspace.
	If $\big\{\mathcal{e}_1,\ldots,\mathcal{e}_{n_\B}\big\}$ is the canonical orthonormal basis of $\mathbb{C}^{n_\B}$, we can define the following global sections:
	\beq\label{DF-triv-sect}
	\hat{\psi}_{p}({\theta}):=\mathfrak{I}_{\B,\theta}^{-1}\big(Q_\B(\theta)\mathcal{e}_p\big),\quad\forall (p,{\theta})\in\underline{n_\B}\times\T_*.
	\eeq
	Given any $\sigma\in\mathfrak{S}(\mathfrak{F}_\B,\T_*)$ we can write that:
	\begin{align*}
		\sigma(\theta)&=\mathfrak{I}_{\B,\theta}^{-1}\big(Q_\B(\theta)\mathfrak{I}_{\B,\theta}\sigma(\theta)\big)=\mathfrak{I}_{\B,\theta}^{-1}\Big(Q_\B(\theta)\Big(\underset{1\leq p\leq n_\B}{\sum}\big(\mathfrak{I}_{\B,\theta}\sigma(\theta)\,,\,\mathcal{e}_p\big)_{\mathbb{C}^{n_\B}}\,\mathcal{e}_p\Big)\Big)\\
		&=\underset{1\leq p\leq n_\B}{\sum}\big(\mathfrak{I}_{\B,\theta}\sigma(\theta)\,,\,\mathcal{e}_p\big)_{\mathbb{C}^{n_\B}}\,\hat{\psi}_p(\theta)=\underset{1\leq p\leq n_\B}{\sum}\big(Q_\B(\theta)\mathfrak{I}_{\B,\theta}\sigma(\theta)\,,\,Q_\B(\theta)\mathcal{e}_p\big)_{\mathbb{C}^{n_\B}}\,\hat{\psi}_p(\theta)\\
		&=\underset{1\leq p\leq n_\B}{\sum}\big(\sigma(\theta)\,,\,\hat{\psi}_p(\theta)\big)_{\tilde{\p}_*^{-1}(\theta)}\,\hat{\psi}_p(\theta).
	\end{align*}
	In a similar way we notice that:
	\begin{align*}
		\big\|\sigma(\theta)\big\|_{\tilde{\p}_*^{-1}(\theta)}^2&=\big\|\mathfrak{I}_{\B,\theta}\sigma(\theta)\big\|_{\mathbb{C}^{n_\B}}^2=\underset{1\leq p\leq n_\B}{\sum}\big|\big(Q_\B(\theta)\mathfrak{I}_{\B,\theta}\sigma(\theta),\mathcal{e}_p\big)_{\mathbb{C}^{n_\B}}\big|^2\\
		&=\underset{1\leq p\leq n_\B}{\sum}\big|\big(\sigma(\theta),\hat{\psi}_p\big)_{\tilde{\p}_*^{-1}(\theta)}\big|^2.
	\end{align*}
\end{proof}

Let us {map} our $n_\B$ global smooth sections obtained in Proposition \ref{P-Pfr-Tstar-sect} {into $n_\B$} rapidly decaying {smooth} functions in $L^2(\X)$ by the inverse Bloch-Floquet transform:
\beq\label{F-PN-1}
\psi_{p}(\hat{x}+\gamma)\,:=\,\U_{BF}^{-1}\hat{\psi}_p\,=\,\int_{\T_*}d{\theta}\,e^{i<{\theta},\gamma>}\,\hat{\psi}_{p}(\hat{x};{\theta}).
\eeq
They define a well-localized finite-dimensional complex subspace of the isolated Bloch family subspace $P_\B\,L^2(\X)$. Moreover, for any $\alpha\in\Gamma$, the translated functions $\tau_\alpha\psi_{p}$ verify the equalities:
$$ 
\big(\U_{BF}\tau_\alpha\psi_{p}\big)(\hat{x};{\theta})=\underset{\gamma\in\Gamma}{\sum}e^{-i<{\theta},\gamma>}\,\psi_{p}(\hat{x}+\gamma+\alpha)=e^{i<{\theta},\alpha>}\hat{\psi}_{p}(\hat{x};{\theta})\in\big[\widehat{P}_\B({\theta})\,\mathscr{F}_{\theta}\big]
$$
and we conclude that:
\beq\label{DF-freePfr}
\blPsi_{\B}:=\Big\{\tau_\alpha\,\psi_{p},\ \alpha\in\Gamma,\ 1\leq p\leq n_\B\Big\}\,\subset\,P_\B\,L^2(\X).
\eeq

\begin{proposition}\label{P-band-frame}
	The family $\blPsi_{\B}$ defines a Parseval frame for the subspace $P_\B\,L^2(\X)$ associated with  the isolated Bloch family $\B$.
\end{proposition}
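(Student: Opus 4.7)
The plan is to prove the Parseval frame identity $\|f\|_{L^2(\X)}^2 = \sum_{\alpha\in\Gamma}\sum_{p=1}^{n_\B}|(\tau_\alpha\psi_p,f)_{L^2(\X)}|^2$ for every $f\in P_\B L^2(\X)$, by translating everything through the (unitary) Bloch-Floquet isomorphism and then combining two Parseval-type identities: ordinary Fourier series on $\T_*$ in the $\alpha$-variable, and the fiberwise Parseval identity from Proposition \ref{P-Pfr-Tstar-sect} in the $p$-variable. The fact that the family lies in $P_\B L^2(\X)$ has already been observed just before \eqref{DF-freePfr}, so only the frame identity remains to be established.

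First I would compute the individual scalar products in the BFZ representation. Given $f\in P_\B L^2(\X)$, let $\hat{f}:=\U_{BF}f$, which by Proposition \ref{P-p-symb} corresponds to a section in $\mathfrak{S}^2(\T_*,\mathfrak{F}_\B)$ with $\hat{f}(\theta)$ living in the fiber $\tilde{\p}_*^{-1}(\theta)$. Using the computation displayed just before \eqref{DF-freePfr}, the BF transform of $\tau_\alpha\psi_p$ equals $e^{i\langle\theta,\alpha\rangle}\hat{\psi}_p(\theta)$. Unitarity of $\U_{BF}$ then yields
\begin{equation*}
(\tau_\alpha\psi_p,f)_{L^2(\X)}=\int_{\T_*}d\theta\,e^{-i\langle\theta,\alpha\rangle}\bigl(\hat{\psi}_p(\theta),\hat{f}(\theta)\bigr)_{\tilde{\p}_*^{-1}(\theta)}.
\end{equation*}
Setting $c_p(\theta):=(\hat{\psi}_p(\theta),\hat{f}(\theta))_{\tilde{\p}_*^{-1}(\theta)}$, I recognize the scalar product above as the $\alpha$-th Fourier coefficient on $\T_*$ of the function $c_p\in L^2(\T_*)$; this $L^2$-property follows from the Cauchy-Schwarz inequality at each fiber and the fact that $\hat{\psi}_p$ is smooth on the compact torus while $\hat{f}$ is square integrable.

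Next I would apply Parseval's identity for Fourier series on $\T_*$ in the variable $\alpha\in\Gamma$, to obtain $\sum_{\alpha\in\Gamma}|(\tau_\alpha\psi_p,f)_{L^2(\X)}|^2=\int_{\T_*}|c_p(\theta)|^2\,d\theta$. Summing over $p\in\{1,\ldots,n_\B\}$ and invoking the fiberwise Parseval identity from Proposition \ref{P-Pfr-Tstar-sect} applied to the section $\hat{f}$, I get
\begin{equation*}
\sum_{p=1}^{n_\B}\sum_{\alpha\in\Gamma}\bigl|(\tau_\alpha\psi_p,f)_{L^2(\X)}\bigr|^2=\int_{\T_*}\sum_{p=1}^{n_\B}|c_p(\theta)|^2\,d\theta=\int_{\T_*}\|\hat{f}(\theta)\|^2_{\tilde{\p}_*^{-1}(\theta)}\,d\theta=\|f\|^2_{L^2(\X)},
\end{equation*}
the last equality being unitarity of $\U_{BF}$ once more. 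This establishes the Parseval frame identity.

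There is no serious obstacle here: the argument is essentially a double Plancherel, once along the discrete direction $\Gamma$ and once fiberwise over $\T_*$. The only small technical point worth spelling out is the justification of Fubini/interchange of sums and integrals in the double-sum step, which is immediate from positivity (Tonelli). One should also note that, unlike an orthonormal basis, the family $\blPsi_\B$ is in general overcomplete whenever $n_\B>N+1$, which is precisely why we are content with the weaker Parseval frame property instead of asking for an honest basis — this mirrors the impossibility, in the topologically non-trivial case, of finding a smooth global orthonormal frame of $\mathfrak{F}_\B$, see Corollary \ref{C-bd-triv}.
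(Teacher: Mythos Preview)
Your proof is correct and follows essentially the same route as the paper: both arguments pass through the Bloch--Floquet transform, use the fiberwise Parseval identity of Proposition~\ref{P-Pfr-Tstar-sect}, and combine it with Fourier--Parseval on $\T_*$. The only cosmetic difference is that the paper establishes the reconstruction formula $f=\sum_{\alpha,p}(\tau_{-\alpha}\psi_p,f)\,\tau_{-\alpha}\psi_p$ and reads off the Parseval identity from it, whereas you go straight to the norm identity; the underlying computation is the same.
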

\begin{proof}
	We prove first that $P_\B\,L^2(\X)$ is the linear span of the family $\blPsi_{\B}$. Thus let us choose any $f\in P_\B\,L^2(\X)$ and consider its Bloch-Floquet transform:
	\[  
	\hat{f}_{\theta}(\hat{x}):=\big(\U_{BF}f\big)(\hat{x},{\theta})=\big(\U_{BF}P_\B f\big)(\hat{x},{\theta})=\hat{P}_\B({\theta})\big[\hat{f}_{\theta}(\hat{x})\big]
	\]
	that defines a bounded measurable global section $\T_*\ni\theta\mapsto\hat{f}_{\theta}\in L^2(\mathcal{E})\simeq\tilde{\p}_*^{-1}(\theta)$ and thus an element in $\mathfrak{S}^2(\mathfrak{F}_\B;\T_*)$.
	Using Proposition \ref{P-Pfr-Tstar-sect} we can write that:
	$$
	\hat{f}_{\theta}=\underset{1\leq p\leq n_\B}{\sum}\,\big(\hat{\psi}_{p}({\theta})\,,\,\hat{f}_{\theta}\big)_{\mathscr{F}_{\theta}}\hat{\psi}_{p}({\theta})
	$$
	and consequently:
	\begin{align*}
		f(\hat{x}+\gamma)\,&=\,\big(\U_{BF}^{-1}\hat{f}\big)(\hat{x}+\gamma)=\int_{\T_*}d{\theta}\,e^{i<{\theta},\gamma>}\,\hat{f}_{\theta}(\hat{x})\\
		&=\underset{1\leq p\leq n_\B}{\sum}\,\int_{\T_*}d{\theta}\,e^{i<{\theta},\gamma>}\,\big(\hat{\psi}_{p}({\theta})\,,\,\hat{f}_{\theta}\big)_{\tp_*^{-1}(\theta)}\hat{\psi}_{p}(\hat{x},{\theta})\\
		&=\,\underset{1\leq p\leq n_\B}{\sum}\,\underset{\alpha\in\Gamma}{\sum}\,\big[\mathcal{F}_{\T_*}\big(\hat{\psi}_{p}(\cdot)\,,\,\hat{f}\big)_{\tp_*^{-1}(\cdot)}\big](\alpha)\,\psi_{p}(\hat{x}+\gamma-\alpha).
	\end{align*}
	For $(\alpha,p)\in\Gamma\times\underline{n_\B}$ we introduce the notation:
	$$ 
\mathfrak{C}_\B(f)_{\alpha,p}:=\big[\mathcal{F}_{\T_*}\big(\hat{\psi}_{p}(\cdot)\,,\,\hat{f}\big)_{\tp_*^{-1}(\cdot)}\big](\alpha)\,,
	$$
	and conclude that:
	\begin{equation}\label{pars1}
	f=\underset{1\leq p\leq n_\B}{\sum}\,\underset{\alpha\in\Gamma}{\sum}\,\mathfrak{C}_\B(f)_{\alpha,p}\,[\tau_{-\alpha}\psi_p].
	\end{equation}
	
	In order to prove that it is a Parseval frame let us compute the scalar products $\big(\tau_{-\alpha}\psi_{p}\,,\,f\big)_{L^2(\X)}$ for some $f\in L^2(\X)$ and some indices $(\alpha,p)\in\Gamma\times\underline{n_\B}$:
	\begin{align*}
		\big(\tau_{-\alpha}\psi_{p},f\big)_{L^2(\X)}&=\underset{\gamma\in\Gamma}{\sum}\int_{\mathcal{E}}d\hat{x}\,f(\hat{x}+\gamma)\,\overline{\psi_{p}(\hat{x}+\gamma-\alpha)}\\& =\int_{\T_*}d{\theta}\,e^{i<{\theta},\alpha>}\int_{\mathcal{E}}d\hat{x}\,\hat{f}_{\theta}(\hat{x})\,\overline{\hat{\psi}_{p}(\hat{x},{\theta})}\\
		&=\int_{\T_*}d{\theta}\,e^{i<{\theta},\alpha>}\,\big(\hat{\psi}_{p}({\theta})\,,\,\hat{f}_{\theta}\big)_{\tp_*^{-1}(\theta)}=\mathfrak{C}_\B(f)_{\alpha,p}\,.
	\end{align*}
    Using this in \eqref{pars1} we finally obtain
\begin{equation*}
	f=\underset{1\leq p\leq n_\B}{\sum}\,\underset{\alpha\in\Gamma}{\sum}\,\big(\tau_{-\alpha}\psi_{p},f\big)_{L^2(\X)}\,[\tau_{-\alpha}\psi_p]\,.
	\end{equation*}
\end{proof}

Using the rapid decay of the functions in the Parseval frame $\blPsi_{\B}$, the inclusion $\mathfrak{F}_{\B}\subset\mathfrak{F}^\infty$ and the Cotlar-Stein procedure one concludes from the Proposition \ref{P-band-frame} that:
\begin{corollary}\label{C-PN-2} $P_{\B}$ has an integral kernel $\mathfrak{K}(P_{\B})\in\mathring{\mathscr{S}}(\X\times\X)$ (see Notation \ref{N-per-distr}) given by the series:
\[
\mathfrak{K}(P_{\B})(x,y):=\underset{\gamma\in\Gamma}{\sum}\,\underset{p\in\underline{\nB}}{\sum}(\tau_{-\gamma}\psi_p)(x)\,\overline{(\tau_{-\gamma}\psi_p)(y)}\,,
\]
with uniform convergence on compact sets in $\X\times\X$. Moreover, the limit in Proposition~\ref{C-wlim-id} exists in the strong operator topology on $L^2(\X)$.
\end{corollary}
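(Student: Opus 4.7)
The plan is to deduce everything from the Parseval frame identity of Proposition \ref{P-band-frame}, combined with the rapid spatial decay of the generators $\psi_p$. First I would show that $\psi_p\in\mathscr{S}(\X)$ for every $p\in\underline{\nB}$. Since $\hat{\psi}_p$ is a smooth section of $\mathfrak{F}_\B\subset\mathfrak{F}^\infty$, the map $(\hat{x},\theta)\mapsto\hat{\psi}_p(\hat{x};\theta)$ is jointly smooth on $\mathcal{E}\times\T_*$ and extends $\Gamma$-periodically in $\hat{x}$. Formula \eqref{F-PN-1} then presents $\psi_p(\hat{x}+\gamma)$ as the $\gamma$-th Fourier coefficient on $\T_*$ of this smooth function; repeated integration by parts in $\theta$ yields bounds $|\partial^\alpha_{\hat{x}}\psi_p(\hat{x}+\gamma)|\leq C_{N,\alpha}\langle\gamma\rangle^{-N}$ uniform in $\hat{x}\in\mathcal{E}$ and arbitrary in $N$, establishing the Schwartz property.

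Next I would turn to the analytic properties of the kernel series. For a compact $K\subset\X\times\X$, the Schwartz bounds give for all multi-indices $\alpha,\beta$ and all $N$,
\[
|(\partial^\alpha_x\tau_{-\gamma}\psi_p)(x)|\cdot|(\partial^\beta_y\tau_{-\gamma}\psi_p)(y)|\leq C_{N,\alpha,\beta}\,\langle x+\gamma\rangle^{-N}\langle y+\gamma\rangle^{-N},
\]
uniform on $K$, and standard lattice estimates imply absolute convergence of the termwise-differentiated series uniformly on $K$, so $\mathfrak{K}(P_\B)\in C^\infty(\X\times\X)$. For off-diagonal decay, use the triangle inequality $\max(\langle x+\gamma\rangle,\langle y+\gamma\rangle)\geq\tfrac{1}{2}\langle x-y\rangle$, valid for every $\gamma$, to obtain for any $M$
\[
\langle x+\gamma\rangle^{-(N+M)}\langle y+\gamma\rangle^{-(N+M)}\leq 2^{N}\langle x-y\rangle^{-N}\langle x+\gamma\rangle^{-M}\langle y+\gamma\rangle^{-M}\bigl(\langle x+\gamma\rangle^{-N}+\langle y+\gamma\rangle^{-N}\bigr),
\]
which after summation over $\gamma\in\Gamma$ produces an estimate $|\partial^\alpha_x\partial^\beta_y\mathfrak{K}(P_\B)(x,y)|\leq C_{N,\alpha,\beta}\langle x-y\rangle^{-N}$ for all $N$, placing $\mathfrak{K}(P_\B)$ in $\mathring{\mathscr{S}}(\X\times\X)$.

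Finally, to identify $\mathfrak{K}(P_\B)$ as the integral kernel of $P_\B$ and to obtain the strong-operator convergence, I would use the expansion $P_\B f=\sum_{\gamma,p}\langle\tau_{-\gamma}\psi_p,f\rangle_{L^2(\X)}\,\tau_{-\gamma}\psi_p$ established in Proposition \ref{P-band-frame}. The Parseval identity $\|P_\B f\|^2=\sum_{\gamma,p}|\langle\tau_{-\gamma}\psi_p,f\rangle_{L^2(\X)}|^2$ shows that the partial sums $\sum_{|\gamma|\leq R,\,p}(\cdot)$ converge to $P_\B f$ in $L^2$-norm for every $f\in L^2(\X)$, which is precisely the SOT convergence improvement stated in the corollary. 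For $f\in C_c^\infty(\X)$ the decay bounds of the previous step justify interchanging the $\gamma$-sum with the integration in $y$, and density of $C_c^\infty(\X)$ in $L^2(\X)$ extends this to all $f$, yielding $(P_\B f)(x)=\int_\X\mathfrak{K}(P_\B)(x,y)f(y)\,dy$. The main obstacle is the off-diagonal decay step — transferring Schwartz decay of the frame generators into uniform polynomial decay of every derivative of the lattice sum — while the remaining identifications are formal consequences of the frame formalism.
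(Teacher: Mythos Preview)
Your proof is correct and follows essentially the same strategy as the paper, which treats this corollary as a direct consequence of the rapid decay of the frame generators $\psi_p$ (from the inclusion $\mathfrak{F}_\B\subset\mathfrak{F}^\infty$) together with the Parseval frame property of Proposition~\ref{P-band-frame}. The only notable difference: for the strong operator convergence of the partial sums, the paper invokes the Cotlar--Stein procedure (which it sets up more fully for the magnetic case in Section~\ref{S-3}), whereas you extract it directly from the $\ell^2$-summability $\sum_{\gamma,p}|\langle\tau_{-\gamma}\psi_p,f\rangle|^2=\|P_\B f\|^2$ guaranteed by the Parseval identity. Your route is the more elementary one here precisely because the frame is Parseval; Cotlar--Stein becomes genuinely needed only when one must establish boundedness and convergence for a family of rank-one kernels without an a~priori frame structure, as in the magnetic quasi-projection $\widetilde{Q}^\epsilon_\B$ constructed later.
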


We notice that the map $\mathfrak{C}_\B$ appearing in the above proof is in fact the coordinate map associated with the Parseval frame $\blPsi_{\B}$, as in Remark \ref{R-A-01}:
\[\nonumber 
\mathfrak{C}_\B:P_\B\,L^2(\X)\ni\,f\,\mapsto\,\big((\tau_{\alpha}\psi_{p}\,,\,f)_{L^2(\X)}\big)_{(\alpha,p)\in\Gamma\times\underline{n_\B}}\,\in\,[\ell^2(\Gamma)]^{n_\B}\cong\ell^2(\Gamma)\otimes\Co^{n_\B}.
\]

\begin{corollary}
The map $\mathfrak{W}_{\B}:\mathbb{B}\big(P_{\B}L^2(\X)\big)\ni\,T\mapsto\mathfrak{C}_{\B}\,T\,\mathfrak{C}_{\B}^*\in\mathbb{B}\big(\ell^2(\Gamma)\otimes\Co^{\nB}\big)$ is a $*$-homomorphism of $C^*$-algebras (in general not surjective).
\end{corollary}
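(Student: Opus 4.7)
The plan is to verify the three defining properties of a $*$-homomorphism — linearity, multiplicativity, and compatibility with the involution — by leveraging the single operator identity
\[
\mathfrak{C}_{\B}^*\,\mathfrak{C}_{\B}\,=\,\bb1_{P_{\B}L^2(\X)},
\]
which is nothing but the reconstruction formula of Proposition \ref{P-band-frame} rewritten in operator form. Computing the adjoint directly shows that $\mathfrak{C}_{\B}^*c = \sum_{(\alpha,p)}c_{\alpha,p}\,\tau_{\alpha}\psi_{p}$ for $c\in\ell^2(\Gamma)\otimes\Co^{\nB}$, and then the Parseval identity established in Proposition \ref{P-band-frame} gives $\mathfrak{C}_{\B}^*\,\mathfrak{C}_{\B}\,f = \sum_{(\alpha,p)}(\tau_{\alpha}\psi_{p},f)_{L^2(\X)}\,\tau_{\alpha}\psi_{p} = f$ for every $f\in P_{\B}L^2(\X)$.

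First I would record that $\mathfrak{C}_{\B}$ is a linear isometry, since the same Parseval identity yields $\|\mathfrak{C}_{\B}f\|^2_{\ell^2(\Gamma)\otimes\Co^{\nB}} = \sum_{(\alpha,p)}|(\tau_{\alpha}\psi_{p},f)|^2 = \|f\|^2_{L^2(\X)}$. In particular $\mathfrak{W}_{\B}(T)\in\mathbb{B}\big(\ell^2(\Gamma)\otimes\Co^{\nB}\big)$ with $\|\mathfrak{W}_{\B}(T)\|\leq\|T\|$, so the map is well defined and continuous; linearity in $T$ is obvious.

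Multiplicativity and the $*$-property then reduce to two short computations. For $T,S\in\mathbb{B}\big(P_{\B}L^2(\X)\big)$, the key identity above yields
\[
\mathfrak{W}_{\B}(T)\,\mathfrak{W}_{\B}(S)\,=\,\mathfrak{C}_{\B}\,T\,(\mathfrak{C}_{\B}^*\,\mathfrak{C}_{\B})\,S\,\mathfrak{C}_{\B}^*\,=\,\mathfrak{C}_{\B}\,TS\,\mathfrak{C}_{\B}^*\,=\,\mathfrak{W}_{\B}(TS),
\]
while the standard adjoint rule $(AXB)^*=B^*X^*A^*$ gives
\[
\mathfrak{W}_{\B}(T^*)\,=\,\mathfrak{C}_{\B}\,T^*\,\mathfrak{C}_{\B}^*\,=\,\big(\mathfrak{C}_{\B}\,T\,\mathfrak{C}_{\B}^*\big)^*\,=\,\mathfrak{W}_{\B}(T)^*.
\]

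For the parenthetical remark on non-surjectivity, the point is that $\Pi:=\mathfrak{C}_{\B}\,\mathfrak{C}_{\B}^*$ is the orthogonal projection onto the range of $\mathfrak{C}_{\B}$ — concretely, the Gram operator of the frame, with entries $(\tau_{\alpha}\psi_{p},\tau_{\beta}\psi_{q})_{L^2(\X)}$ — so every $\mathfrak{W}_{\B}(T)$ satisfies $\Pi\,\mathfrak{W}_{\B}(T)\,\Pi = \mathfrak{W}_{\B}(T)$, and the image of $\mathfrak{W}_{\B}$ is contained in the corner $\Pi\,\mathbb{B}\big(\ell^2(\Gamma)\otimes\Co^{\nB}\big)\,\Pi$. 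Whenever the frame $\blPsi_{\B}$ is strictly redundant — which is the generic situation produced by the embedding of Corollary \ref{C-bd-triv} when $\mathfrak{F}_{\B}$ is not globally trivializable and hence $n_{\B}>N+1$ — one has $\Pi\neq\bb1$ and $\mathfrak{W}_{\B}$ cannot be surjective. No step is a serious obstacle here; the whole argument hinges on the Parseval reconstruction identity, whose proof is already the substantive content of Proposition \ref{P-band-frame}.
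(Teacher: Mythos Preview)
Your proof is correct and follows exactly the route the paper intends: the corollary is stated in the paper without proof, being an immediate consequence of the Parseval reconstruction identity from Proposition~\ref{P-band-frame} (restated abstractly in Remark~\ref{R-A-01} and the final unproved Proposition of Appendix~\ref{A-frames}). Your verification of multiplicativity via $\mathfrak{C}_{\B}^*\mathfrak{C}_{\B}=\bb1_{P_{\B}L^2(\X)}$, together with the observation that the image lies in the corner $\Pi\,\mathbb{B}(\ell^2(\Gamma)\otimes\Co^{\nB})\,\Pi$ with $\Pi=\mathfrak{C}_{\B}\mathfrak{C}_{\B}^*$, is precisely the argument the paper leaves implicit.
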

\begin{notation}
Given any $T\in\mathbb{B}\big(P_{\B}L^2(\X)\big)$ and the canonical basis $\big\{{{\cal{e}}_\alpha\otimes{\cal{e}}_p}\big\}_{(\alpha,p)\in\Gamma\times\underline{n_\B}}$ of $\ell^2(\Gamma)\otimes\Co^{\nB}$ we shall use the notation $\mathfrak{M}_{\B}[T]\in\mathscr{M}^\circ_\Gamma[\MmN]$ for the infinite matrix:
\beq\label{C-PN-VI-3}
\big[\mathfrak{M}_{\B}[T]_{\alpha,\beta}\big]_{p,q}\,:=\,\big({\cal{e}}_\alpha\otimes{\cal{e}}_p\,,\,\mathfrak{W}_{\B}[T]\,{\cal{e}}_\beta\otimes{\cal{e}}_q\big)_{\ell^2(\Gamma)\otimes\Co^{n_\B}}\,=\,\big((\tau_{-\alpha}\psi_p)\,,\,T\,(\tau_{-\beta}\psi_q)\big)_{L^2(\X)}.
\eeq
\end{notation}

When the operator $T\in\mathbb{B}\big(P_{\B}L^2(\X)\big)$ has a distribution kernel $\mathfrak{K}(T)\in\mathscr{S}^\prime(\X\times\X)$, we can write:
\[\nonumber
\big[\mathfrak{M}_{\B}[T]_{\alpha,\beta}\big]_{p,q}=\big((\tau_{-\alpha}\psi_p)\,,\,(\Int\,\mathfrak{K}(T))\,(\tau_{-\beta}\psi_q)\big)_{L^2(\X)}=\big\langle\,\mathfrak{K}(T)\,,\,\overline{(\tau_{-\alpha}\psi_p)}\otimes(\tau_{-\beta}\psi_q)\,\big\rangle_{\X\times\X}.
\]
When $T$ commutes with the translations with vectors from $\Gamma$, its distribution kernel satisfies the identities: $$(\tau_{\gamma}\otimes\tau_{\gamma})\mathfrak{K}(T)=\mathfrak{K}(T) \mbox{ for any } \gamma\in\Gamma\,,$$ and thus the following equalities prove that there exists $\mathring{\mathfrak{m}}_{\B}[T]\in{\cal{s}}(\Gamma;\MmN)$ such that:

\begin{align*}
	\big[\mathfrak{M}_{\B}[T]_{\alpha,\beta}\big]_{p,q}&=\big((\tau_{-\alpha}\psi_p)\,,\,(\Int\,\mathfrak{K}(T))\,(\tau_{-\beta}\psi_q)\big)_{L^2(\X)}\\
	&=\big(\psi_p\,,\,\Int\big[(\tau_{\alpha-\beta}\otimes\bb1)\mathfrak{K}(T)\big]\psi_q\big)_{L^2(\X)}=:\big [\mathring{\mathfrak{m}}_{\B}[T]_{\alpha-\beta}\big ]_{p,q}.
\end{align*}

Taking now $T=H_{\B}\in\mathbb{B}\big(P_{\B}L^2(\X)\big)$ we obtain the following infinite matrix with rapid off-diagonal decay:
\begin{align}\label{F-PN-V-10}
	&\big[\mathfrak{M}_{\B}[H_{\B}]_{\alpha,\beta}\big]_{p,q}=\big((\tau_{-\alpha}\psi_p)\,,\,\big(\Int\,\mathfrak{K}(\widehat{H}_{\B})\big)\,(\tau_{-\beta}\psi_q)\big)_{L^2(\X)}\\ \nonumber
	&=\int_{\T_*}d\theta\,e^{i<\theta,\alpha-\beta>}\,\Big(\underset{k_0\leq k\leq k_0+N}{\sum}\lambda_k(\theta)\big(\hat{\psi}_q(\theta),\hat{\pi}_k(\theta)\hat{\psi}_p(\theta)\big)_{\mathscr{F}_{\theta}}\Big)\,,
\end{align}
and the following rapidly decaying sequence (as stated in point 3.\ref{point3b} in Theorem \ref{T-I}):
\begin{align}\label{F-PN-2}
	\big[[\mathring{\mathfrak{m}}_{\B}[\widehat{H}_{\B}]_{\gamma}\big]_{p,q}\,&=\,\int_{\T_*}d\theta\,e^{i<\theta,\gamma>}\,\Big(\underset{k_0\leq k\leq k_0+N}{\sum}\lambda_k(\theta)\big(\hat{\psi}_q(\theta),\hat{\pi}_k(\theta)\hat{\psi}_p(\theta)\big)_{\mathscr{F}_{\theta}}\Big)\\ \nonumber
	&=\,\int_{\T_*}d\theta\,e^{i<\theta,\gamma>}\,\big(\hat{\psi}_q(\theta),\widehat{H}_{\B}(\theta)\hat{\psi}_p(\theta)\big)_{\mathscr{F}_{\theta}}.
\end{align}
If we denote by $\mathring{U}(\gamma)$ the restriction to $\ell^2(\Gamma)$ of the translation with $-\gamma\in\Gamma$ and by:
\beq\label{F-PN-3}
\widehat{\mm}_{\B}[\widehat{H}_{\B}](\theta):=\big(\hat{\psi}_q(\theta),\widehat{H}_{\B}(\theta)\hat{\psi}_p(\theta)\big)_{\mathscr{F}_{\theta}}
\eeq
the last equality in \eqref{F-PN-2} allows us to view the infinite matrix \eqref{F-PN-V-10} as associated with  the following operator acting in $\ell^2(\Gamma)\otimes\Co^{\nB}$:
\beq\label{F-f-4}
\tOp\big(\widehat{\mm}_{\B}[\widehat{H}_{\B}]\big)\,:=\,\underset{\gamma\in\Gamma}{\sum}\mathring{U}(-\gamma)\otimes\Big[\mathcal{F}_{\T_*}\big(\widehat{\mm}_{\B}[\widehat{H}_{\B}](\theta)\big)\Big](\gamma)\,\in\,\mathbb{B}\big(\ell^2(\Gamma)\big)\otimes_H\mathscr{M}(\nB)\,,
\end{equation}
where $\otimes_H$ is the so-called \textit{spatial tensor product} of the two $C^*$-algebras associated with  their canonical representations on the Hilbert spaces $\ell^2(\Gamma)$ and $\Co^{\nB}$. \\
One may compare the formula \eqref{F-f-4} with the Weyl quantization of a $\Gamma_*$-periodic symbol that does not depend on the configuration variables $x\in\X$:
\beq
\Op(F)=\int_{\X}dz\big(\mathcal{F}_{\X^*}F\big)(z)U(-z)=\underset{\gamma\in\Gamma}{\sum}\big(\mathcal{F}_{\T_*}F\big)(\gamma)U(-\gamma).
\eeq
In conclusion, we may view $\mM_{\B}[H_{\B}]$ as a quantization of $\widehat{\mm}_{\B}[\widehat{H}_{\B}]\circ\p_*:\X^*\rightarrow\mathscr{M}(\nB)$ through a Weyl calculus on the lattice $\Gamma$ with matrix valued symbols.

\subsection{Weyl symbols for the non perturbed decomposition}\label{SS-3.5}

We close this section with an analysis of the symbols of some $\B$-operators considered as pseudo-differential operators. We shall start from their  distribution kernels defined via the Parseval frame \eqref{DF-freePfr}.

\begin{proposition}\label{R-p-symb} There exists $p_\B$ in $S^{-\infty}(\Xi)_\Gamma$ such that $P_\B\equiv\Op(p_{\B})$.
\end{proposition}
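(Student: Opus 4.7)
The plan is to recover $p_\B$ from the integral kernel of $P_\B$ via the standard Weyl inversion formula and then to read off its symbol class from the properties of the kernel listed in Corollary~\ref{C-PN-2}.

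First, I would use Corollary~\ref{C-PN-2} to express the kernel as
\[
\mathfrak{K}(P_\B)(x,y)\,=\,\underset{\gamma\in\Gamma}{\sum}\,\underset{p\in\underline{\nB}}{\sum}(\tau_{-\gamma}\psi_p)(x)\,\overline{(\tau_{-\gamma}\psi_p)(y)}\,\in\,\mathring{\mathscr{S}}(\X\times\X).
\]
Changing variables to $(x,z)$ with $y=x-z$ and $x\mapsto x+z/2$, the rapid off-diagonal decay of $\mathfrak{K}(P_\B)$ ensures that the map
\[
\widetilde{K}(x,z)\,:=\,\mathfrak{K}(P_\B)(x+z/2,x-z/2)
\]
is a smooth function on $\X\times\X$ which, together with all derivatives in $x$ and $z$, decays faster than any polynomial in $|z|$, uniformly in $x$. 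Then I would define
\[
p_\B(x,\xi)\,:=\,\int_{\X}dz\,e^{-i<\xi,z>}\widetilde{K}(x,z),
\]
which is the Weyl symbol associated with $\mathfrak{K}(P_\B)$ by the standard inversion formulas recalled in Appendix~\ref{A-m-PsiDO}, so that $\Op(p_\B)=\Int\,\mathfrak{K}(P_\B)=P_\B$.

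Next I would verify the three properties defining $S^{-\infty}(\Xi)_\Gamma$. Smoothness of $p_\B$ in both variables follows from the smoothness of $\widetilde{K}$ and from differentiating under the integral sign, permissible by the rapid decay in $z$. $\Gamma$-periodicity in $x$ is a consequence of $[P_\B,\tau_\gamma]=0$ for every $\gamma\in\Gamma$: this identity translates at the kernel level into $\mathfrak{K}(P_\B)(x+\gamma,y+\gamma)=\mathfrak{K}(P_\B)(x,y)$, hence $\widetilde{K}(x+\gamma,z)=\widetilde{K}(x,z)$, and the periodicity transfers to $p_\B$ after integration in $z$. Finally, to obtain the decay of $p_\B$ in $\xi$, I would integrate by parts: for any multi-index $\beta\in\mathbb{N}^d$ and any integer $M\geq0$,
\[
\xi^\beta\,\big(\partial_\xi^\alpha p_\B\big)(x,\xi)\,=\,C_{\alpha,\beta}\int_{\X}\hspace*{-3pt}dz\,e^{-i<\xi,z>}\,\partial_z^\beta\big(z^\alpha\widetilde{K}(x,z)\big)
\]
which is bounded uniformly in $(x,\xi)$ thanks to the Schwartz-type estimates on $\widetilde{K}$; combining this with arbitrary $x$-derivatives, which commute with the $z$-integral, gives $\nu^{p,\rho}_{n,m}(p_\B)<\infty$ for every $(p,n,m)\in\R\times\mathbb{N}\times\mathbb{N}$ and $\rho\in\{0,1\}$. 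This is precisely the condition $p_\B\in S^{-\infty}(\Xi)_\Gamma$.

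The main technical issue is purely one of bookkeeping: one must justify that the Schwartz-type bounds expressed in the off-diagonal sense of $\mathring{\mathscr{S}}(\X\times\X)$ transfer to the Weyl variables $(x,z)$ with full control on the $x$-derivatives in the presence of $\Gamma$-periodicity (so without decay in $x$). This reduces to showing that the semi-norms used to define $\mathring{\mathscr{S}}(\X\times\X)$ dominate, for every $(\alpha,\beta,N)$, the quantity $\sup_{x\in\mathcal{E},\,z\in\X}<z>^N|\partial_x^\alpha\partial_z^\beta\widetilde{K}(x,z)|$, which is straightforward since $\mathcal{E}$ is compact and the series defining $\mathfrak{K}(P_\B)$ converges in the Fr\'echet topology of $\mathring{\mathscr{S}}(\X\times\X)$.
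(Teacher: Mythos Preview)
Your proof is correct and follows essentially the same route as the paper: both invert the Weyl quantization formula~\eqref{F-KerOpA} (with $A=0$) applied to the kernel from Corollary~\ref{C-PN-2} to obtain~\eqref{F-p-symb}, and then rely on the rapid off-diagonal decay of $\mathfrak{K}(P_\B)$ together with $\Gamma$-invariance to read off membership in $S^{-\infty}(\Xi)_\Gamma$. Your write-up is more explicit than the paper's (which just records the formula and leaves the symbol estimates to the reader), but the underlying argument is the same.
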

\begin{proof}
	In order to obtain the explicit relation between the $\Gamma$-periodic tempered distribution $p_{\B}$ (such that $P_\B\equiv\Op(p_{\B})$) and the integral kernel $\mathfrak{K}[P_\B]$ (in the above Corollary \ref{C-PN-2}) we notice that:
	\begin{align*}
P_{\B}=\Op(p_{\B})&=\Int\,\mathfrak{K}[P_{\B}]
	\end{align*}
	and for $A=0$, Equation \eqref{F-KerOpA} and the formula of $\Upsilon\in\mathbb{B}(\R^{2d})$ in Remark \ref{R-KerOpA} imply  that: 
	\begin{align}\label{F-p-symb}
		&p_{\B}(z,\zeta)=\big(\bb1\otimes\mathcal{F}_{\X}\big)\big(\mathfrak{K}[P_{\B}])\circ\Upsilon^{-1}\big)(z,\zeta)=\\
		\nonumber
		&=\hspace*{-0.2cm}\int_{\X}\hspace*{-0.2cm}dv\,e^{i<\zeta,v>}\,\Big(\underset{\gamma\in\Gamma}{\sum}\,\underset{p\in\underline{\nB}}{\sum}\big(\tau_{-\gamma}\psi_p(z+v/2)\big)\,\overline{\big(\tau_{-\gamma}\psi_p(z-v/2)\big)}\Big).
	\end{align}
\end{proof}

The definitions given in Subsection \ref{SS-free-dyn} imply the equality $H_\B=P_\B H^\circ P_\B$, and the fact that $H_{\B}$ has a symbol $h_\B$ in $S^{-\infty}(\Xi)_\Gamma$, $H_\bot:=H^\circ-H_{\B}$ has a symbol $h_\bot\in{S}^p_1(\Xi)_\Gamma$ and we have the formulas:
\beq\label{F-f-5}
H_{\B}=\Op(h_{\B}),\quad\,h_\B=p_\B\sharp{h}^\circ\sharp{p}_\B,\quad\,H_\bot=\Op(h_\bot),\quad\,h_\bot:=h^\circ-h_{\B}\,.
\eeq
Due to  Hypothesis \ref{H-E0} (that amounts to a simple shift of the energy origin) the interval $[0,E_0]\subset\R$ is always non-void and it is a spectral gap for 
$H_\bot$. In what follows, we will need the notation introduced in \eqref{dec-id}, \eqref{eq:3.1}, \eqref{eq:3.2} and \eqref{H-infinit}. 

The operator $P_\B$ coincides with the orthogonal projection $P_{\ker H_\bot}$ on the kernel of $H_\bot$. 
Since the interval $(E_-,E_+)$ is a spectral gap for $H_\bot$, we consider a contour $\mathscr{C}_0$ which does not contain $0$ in its interior but contains the interval $[E_0,(E_-+E_+)/2]$, and stays at a positive distance from the spectrum of $H_\bot$. Then we have the usual formulas:
\beq \label{DF-p0}
\big(\Op(h_\bot)-\zz\bb1\big)^{-1}=\Op\big((h_\bot-\zz)^-\big),\quad{p}_0=-(2\pi i)^{-1}\int_{\mathscr{C}_0}d\zz\,\big(h_\bot-\zz\big)^-,\quad\,h_0=p_0\sharp{h^\circ}\sharp{p}_0\,,
\eeq
implying that $P_0$ and $H_0$ have symbols of class $S^{-\infty}(\Xi)_{\Gamma}$.\\
Finally we also have the equalities 
\beq \label{DF-pa} p_\infty=1-p_0-p_\B \mbox{ and  } h_\infty=p_\infty\sharp{h}^\circ\sharp{p}_\infty=1-h_0-h_\B\,,
\eeq
implying that $H_\infty=H^\circ-H_{\B}-H_0$ and $P_{\infty}$ have symbols $h_\infty$ in $S^p_1(\Xi)_\Gamma$  and $p_\infty$ in $S^0_1(\Xi)_\Gamma$.

\section{The perturbed decomposition induced by $\B$.}
\label{S-3}

Our main purpose now is to use some kind of local gauge choices inspired from \cite{Lu} and define a family of ``modified magnetic translations" (see also \cite{Z}) generalizing the \textit{Zak translations} (from Definition \ref{D-Ztr-cB}), in order to obtain a ``magnetic" version of the family $\blPsi_{\B}$ in \eqref{DF-freePfr} and through it the desired projection $P^\epsilon_{\B}$ in Theorem \ref{T-I} as a kind of $\epsilon$-dependent ``magnetic version" of the orthogonal projection $P_{\B}$. An important technical difficulty is to elaborate a procedure replacing the usual Gram-Schmidt orthonormalization method in the case of frames. We begin with some preliminary technical results concerning the decomposition of the Hamiltonian $H^\epsilon$ extending that of $H^\circ$.

\subsection{The perturbed Hamiltonian decomposition}
\label{SS-4.1}

The symbol decomposition $h^\circ=h_0+h_{\B}+h_\infty$ discussed above and the properties of magnetic quantization allow us to consider a similar operator decomposition when the perturbing magnetic field given in \eqref{HF-Beps} is added (we use the notations in \eqref{N-A-eps}):
\beq \label{F-descOepscch}
\Op^{\epsilon}(h^\circ)\,=\,\Op^{\epsilon}(h_0)+\Op^{\epsilon}(h_\B)+\Op^{\epsilon}(h_\infty).
\eeq

We shall use the notations $H^{\epsilon}_0\equiv\Op^{\epsilon}(h_0)$,  $H^{\epsilon}_{\B}\equiv\Op^{\epsilon}(h_\B)$ and $H^{\epsilon}_\infty$ for the self-adjoint closure of $\Op^{\epsilon}(h_\infty)$.
We also consider  the lower-semibounded self-adjoint operator $$ H^{\epsilon}_\bot:=H_0^{\epsilon}+H^{\epsilon}_\infty$$ equal to the closure of $\Op^{\epsilon}(h_\bot)$ with symbol $h_\bot\in S^p_1(\Xi)_\Gamma$ given in \eqref{F-f-5}.

An important aspect is that these {perturbed} Hamiltonians no longer commute among themselves, as it was the case in Subsection \ref{SS-free-dyn}. Another important observation, based on the results in \cite{AMP,CP-1} is that their spectra, considered as subsets of $\R$, do not change much as subsets of $\R$ and we have the following result. 

\begin{remark}
	Under the hypothesis of Theorem \ref{T-I}, there exists $\delta_0 >0$ and for any $\delta \in (0,\delta_0]$ there exists $\epsilon_0>0$ such that for any $\epsilon\in [0,\epsilon_0]$ we have the inclusions:
	\begin{equation}\begin{split}\label{F-m-sp-est}
	&\sigma\big(H_0^{\epsilon}\big)\subset[-\delta\,,\,\delta]\cup[E_0-\delta,E_-+\delta],\quad 
	\sigma\big(H_\B^{\epsilon}\big)\subset[-\delta\,,\,\delta]\cup[E'_--\delta,E'_++\delta],\\
		&\sigma\big(H^{\epsilon}_\infty\big)\subset[-\delta\,,\,\delta]\cup[E_+-\delta,+\infty),\\ 
		&\sigma\big(H^{\epsilon}_\bot\big)\subset[-\delta\,,\,\delta]\cup[E_0-\delta,E_-+\delta]\cup[E_+-\delta,+\infty).
	\end{split}
    \end{equation}
\end{remark}

Actually, much more precise regularity estimates {for the spectral edges seen as functions of $\epsilon$} {were} obtained {in} \cite{CP-2}{, but they are not needed here.}

\subsection{Magnetic perturbation of the Parseval frame.}\label{SS-m-P-frame}

Recalling \eqref{N-A-eps}, we introduce the following notations:
\beq \label{dhc5}
\Lambda^{\epsilon}_{\gamma}(x):=\tLambda^{\epsilon}(x,\gamma),\quad\forall\gamma\in\Gamma,\qquad\tOmega^\epsilon(x,y,z):=\tLambda^{\epsilon}(x,y)\tLambda^{\epsilon}(y,z)\tLambda^{\epsilon}(z,x).
\eeq

\subsubsection{The inhomogeneous Zak translations and the perturbed Parseval frame}\label{SSS-iHom-Z-trsl}

Guided by the same intuition as for the construction of the modified Wannier functions in {\cite{Ne-RMP,CHP-1} } and by the analysis in \cite{CMM} and using the notations \eqref{dhc5}, let us  introduce the following system of functions:
\beq \label{DF-m-W-f}
\widetilde{\blPsi}^{\epsilon}_{\B}:=\big\{\tpsi^\epsilon_{\gamma,p}:=\Lambda^{\epsilon}_\gamma\big(\tau_{-\gamma}\psi_p\big)\equiv \mathfrak{T}^{\epsilon}_\gamma (\psi_p),\,\gamma\in\Gamma,\,p\in\underline{n_\B}\big\}\in L^2(\X)
\eeq 
and let $\widetilde{\mathfrak{L}}^{\epsilon}_{\B}$ be the closed subspace that they generate. 

Having in mind the arguments and constructions in Subsection 3.4 in \cite{CIP}, let us consider for any $N\in\mathbb{N}\setminus\{0\}$ the following integral kernels of finite-rank integral operators  (with the notations defined after \eqref{DF-Int}):
\begin{align}\label{F-I-0}
	\mathfrak{K}^{\epsilon}_{\B,N}(x,y)\,:=\,\hspace*{-0.4cm}\underset{|\gamma|\leq\,N,\,p\in\underline{n_\B}}{\sum}\hspace*{-0.3cm}\mathfrak{T}^{\epsilon}_\gamma\psi_p(x)\,\overline{\mathfrak{T}^{\epsilon}_\gamma\psi_p(y)}\,\equiv\,\underset{|\gamma|\leq\,N}{\sum}\mathfrak{K}^{\epsilon}_{\B,\gamma}(x,y)
\end{align}
and notice that $\big[\widetilde{\mathfrak{L}}^{\epsilon}_{\B}\big]^{\bot}\subset\Ker\big[\Int\,\mathfrak{K}^{\epsilon}_{\B,N}\big]$ and $\Rge\big[\Int\,\mathfrak{K}^{\epsilon}_{\B,N}\big]\subset\widetilde{\mathfrak{L}}^{\epsilon}_{\B}$ for any $N\in\mathbb{N}\setminus\{0\}$. 

\begin{lemma}
With the notations in \eqref{dhc5}, given any $n\in\mathbb{N}$ there exists $C_n>0$ such that for any $\epsilon\in[0,\epsilon_0]$ and any $(\gamma,p)\in\Gamma\times\underline{n_{\B}}$, we have:
	\[
	\forall(\hat{x},\hat{y})\in\mathcal{E}\times\mathcal{E}:\quad\big|\mathfrak{T}^{\epsilon}_\gamma\psi_p(\hat{x}+\alpha)\,\overline{\mathfrak{T}^{\epsilon}_\gamma\psi_p(\hat{y}+\beta)}\big|\,\leq\,C_n<\alpha-\beta>^{-n},\qquad\forall(\alpha,\beta)\in\Gamma\times\Gamma.
	\]
\end{lemma}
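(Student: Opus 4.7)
The plan is to reduce the estimate to the rapid decay of the unperturbed functions $\psi_p$, exploiting the fact that the phase factor $\Lambda^\epsilon_\gamma$ is unimodular. First I would observe that, since $\Lambda^\epsilon_\gamma(x)=\tLambda^\epsilon(x,\gamma)=\exp\bigl(-i\int_{[x,\gamma]}\epsilon A^\epsilon\bigr)$ is an imaginary exponential, one has $|\Lambda^\epsilon_\gamma(x)|=1$ uniformly in $(\epsilon,\gamma,x)\in[0,\epsilon_0]\times\Gamma\times\X$. Consequently $|\mathfrak{T}^\epsilon_\gamma\psi_p(x)|=|\psi_p(x-\gamma)|$, and the entire $\epsilon$-dependence drops out when passing to moduli; there is nothing magnetic left to control.

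Next I would invoke the fact that each $\psi_p$ is a Schwartz function on $\X$. This follows from the construction of the Parseval frame in Subsection \ref{SS-B-P-fr}: by Proposition \ref{P-Pfr-Tstar-sect}, $\hat\psi_p$ is a smooth global section of the Bloch bundle $\mathfrak{F}_\B\subset\mathfrak{F}^\infty$ over the compact torus $\T_*$, hence is represented by a map that is jointly smooth on $\T\times\X^*$ with the appropriate quasi-periodicity. Iterated integration by parts in the inverse BF transform \eqref{F-PN-1}, combined with the compactness of $\mathcal{E}$ and finiteness of $\underline{n_\B}$, then produces, for every $n\in\mathbb{N}$, a constant $C'_n>0$ such that
\[
\sup_{\hat x\in\mathcal{E}}\,\bigl|\psi_p(\hat x+\mu)\bigr|\ \leq\ C'_n\,<\mu>^{-n},\qquad\forall(\mu,p)\in\Gamma\times\underline{n_\B}.
\]

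The lemma will then follow from a Peetre-type splitting. Applying the above decay twice, once to $\mu=\alpha-\gamma$ and once to $\mu=\beta-\gamma$, gives
\[
\bigl|\mathfrak{T}^\epsilon_\gamma\psi_p(\hat x+\alpha)\,\overline{\mathfrak{T}^\epsilon_\gamma\psi_p(\hat y+\beta)}\bigr|\ \leq\ (C'_n)^2\,<\alpha-\gamma>^{-n}<\gamma-\beta>^{-n}.
\]
The triangle inequality $|\alpha-\beta|\leq|\alpha-\gamma|+|\gamma-\beta|$ together with $(a+b)^2\leq 2(a^2+b^2)$ yields $<\alpha-\beta>^2\leq 2<\alpha-\gamma>^2<\gamma-\beta>^2$, equivalently
\[
<\alpha-\gamma>^{-n}<\gamma-\beta>^{-n}\ \leq\ 2^{n/2}\,<\alpha-\beta>^{-n},
\]
and choosing $C_n:=2^{n/2}(C'_n)^2$ delivers the stated bound uniformly in $\gamma\in\Gamma$ and $\epsilon\in[0,\epsilon_0]$. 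The only non-routine step is the uniform-in-$\hat x$ Schwartz decay of $\psi_p$, which is where the smoothness of the sections $\hat\psi_p$ of the Bloch bundle is genuinely used; everything else reduces to $|\Lambda^\epsilon_\gamma|=1$ and a one-line Peetre estimate.
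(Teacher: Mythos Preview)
Your proposal is correct and follows essentially the same route as the paper: you observe that $|\Lambda^\epsilon_\gamma|=1$ eliminates the magnetic dependence, invoke the rapid decay of $\psi_p$ (obtained by integration by parts in the inverse Bloch--Floquet transform \eqref{F-PN-1} using the smoothness of $\hat\psi_p$), and then combine the two factors $<\alpha-\gamma>^{-n}<\gamma-\beta>^{-n}$ via Peetre's inequality. The paper does precisely this, writing out the two Fourier integrals explicitly and citing the sup over $\partial_\theta^n\hat\psi_p$ to justify the decay, then concluding with the same Peetre step.
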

\begin{proof}
Using \eqref{DF-triv-sect} and \eqref{F-PN-1} we can write:
\begin{align*}
	&\big|\mathfrak{T}^{\epsilon}_\gamma\psi_p(\hat{x}+\alpha)\,\overline{\mathfrak{T}^{\epsilon}_\gamma\psi_p(\hat{y}+\beta)}\big|=\\ 
	&\ =(2\pi)^{-2d}\Big|\Lambda^\epsilon(\hat{x}+\alpha,\gamma)\overline{\Lambda^\epsilon(\hat{y}+\beta,\gamma)}\Big(\int_{\T_*}d{\theta}\,e^{i<{\theta},\alpha-\gamma>}\,\hat{\psi}_{p}(\hat{x};{\theta})\Big)\Big(\int_{\T_*}d{\tilde{\bz}^*}\,e^{-i<{\tilde{\bz}^*},\beta-\gamma>}\,\overline{\hat{\psi}_{p}(\hat{y};{\tilde{\bz}^*})}\Big)\Big|\\
	&\leq\,C^\prime_n<\alpha-\gamma>^{-n}<\beta-\gamma>^{-n}\big(\underset{\theta\in\T_*}{\sup}\,\underset{\hat{x}\in\E}{\sup}\,\big|\big(\partial_{\theta}^n\hat{\psi}_{p}\big)(\hat{x};\theta)\big|\big)^2\,\leq\,{C}_n<\alpha-\beta>^{-n}.
\end{align*}
\end{proof}
\begin{proposition}\label{C-off-d-decay}
For any $N\in\mathbb{N}$, the integral kernel $\mathfrak{K}^\epsilon_{\B,N}$ has rapid off-diagonal decay and defines a bounded self-adjoint integral operator.
\end{proposition}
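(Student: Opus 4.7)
For each fixed $N$ the kernel $\mathfrak{K}^\epsilon_{\B,N}$ is a finite sum of pure tensor products of the $\tpsi^\epsilon_{\gamma,p}$, so the associated integral operator admits the explicit representation
$$\Int\,\mathfrak{K}^\epsilon_{\B,N}\,=\,\underset{|\gamma|\leq N}{\sum}\,\underset{p\in\underline{n_\B}}{\sum}\big(\tpsi^\epsilon_{\gamma,p}\,,\,\cdot\,\big)_{L^2(\X)}\,\tpsi^\epsilon_{\gamma,p}\,,$$
which is a finite sum of rank-one operators. Since $|\Lambda^\epsilon_\gamma(x)|=1$ pointwise we have $\|\tpsi^\epsilon_{\gamma,p}\|_{L^2(\X)}=\|\psi_p\|_{L^2(\X)}$, so each summand has operator norm $\|\psi_p\|^2_{L^2(\X)}$ and the full operator is bounded, uniformly in $\epsilon\in[0,\epsilon_0]$ (with a constant depending on $N$). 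Self-adjointness is manifest from the hermitian symmetry $\mathfrak{K}^\epsilon_{\B,N}(x,y)=\overline{\mathfrak{K}^\epsilon_{\B,N}(y,x)}$ of the kernel.

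For the rapid off-diagonal decay of the pointwise values, I would decompose $(x,y)=(\hat x+\alpha,\hat y+\beta)$ with $(\hat x,\hat y)\in\mathcal{E}\times\mathcal{E}$ and $(\alpha,\beta)\in\Gamma\times\Gamma$, then apply the triangle inequality to the finite sum over $|\gamma|\leq N$ and $p\in\underline{n_\B}$. Combined with the uniform-in-$\gamma$ bound from the preceding lemma, this yields, for every $n\in\mathbb{N}$,
$$\big|\mathfrak{K}^\epsilon_{\B,N}(\hat x+\alpha,\hat y+\beta)\big|\,\leq\,(2N+1)^d\,n_\B\,C_n\,\langle\alpha-\beta\rangle^{-n}\,,\qquad\forall(\hat x,\hat y,\alpha,\beta)\in\mathcal{E}\times\mathcal{E}\times\Gamma\times\Gamma\,,$$
which is precisely the required rapid off-diagonal decay at the level of function values.

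To upgrade this to membership of $\mathfrak{K}^\epsilon_{\B,N}$ in $\mathring{\mathscr{S}}_\Delta(\X\times\X)$ (decay with all derivatives), I would repeat the Fourier-integration argument used in the lemma with derivatives inserted. Derivatives falling on $\tau_{-\gamma}\psi_p$ remain Schwartz, since the sections $\hat\psi_p(\hat x;\theta)$ are smooth in both variables and integration by parts in $\theta$ produces arbitrary powers of $\langle\alpha-\gamma\rangle^{-1}$. Derivatives falling on the magnetic phase $\Lambda^\epsilon_\gamma$ are controlled by rewriting, via Stokes and integration by parts along the segment $[x,\gamma]$, the line-integral derivative as $-\epsilon A^\epsilon_j(x)+\epsilon\int_0^1(1-t)B^\epsilon_{jk}\big((1-t)x+t\gamma\big)(\gamma-x)_k\,dt$; the first term is gauge-dependent but cancels between the two factors $\tpsi^\epsilon_{\gamma,p}(x)$ and $\overline{\tpsi^\epsilon_{\gamma,p}(y)}$ when paired with the phase $\Lambda^\epsilon_\gamma$ of modulus one, while the second term contains only the bounded field $B^\epsilon$ together with polynomial factors in $(\gamma-x)$, which are absorbed by the rapid decay of $\psi_p(\cdot-\gamma)$.

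The main technical point is precisely this derivative-of-phase estimate: the vector potential $\epsilon A^\epsilon\in\Fp^1(\X)$ may grow at infinity, but the gauge-invariant quantity surviving after differentiation is the bounded magnetic field, and once this separation is carried out carefully, the same $\langle\alpha-\beta\rangle^{-n}$ decay propagates uniformly through every mixed derivative, completing the proof.
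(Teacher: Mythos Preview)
Your argument for boundedness, self-adjointness, and pointwise off-diagonal decay is correct and is exactly what the paper has in mind: the proposition is stated without proof because it follows directly from the preceding lemma together with the observation that the kernel is a finite sum of rank-one projections built from $L^2$ functions.

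There is, however, a flaw in your derivative argument. You claim the gauge-dependent term $-\epsilon A^\epsilon_j(x)$ ``cancels between the two factors'' $\tpsi^\epsilon_{\gamma,p}(x)$ and $\overline{\tpsi^\epsilon_{\gamma,p}(y)}$. It does not: when you apply $\partial_{x_j}$, only the $x$-dependent phase $\Lambda^\epsilon_\gamma(x)$ gets differentiated, while $\overline{\Lambda^\epsilon_\gamma(y)}$ is untouched, so there is nothing to cancel against. Fortunately no cancellation is needed for finite $N$: since $A^\epsilon\in\Fp^1(\X)$ has polynomially bounded derivatives and $\tau_{-\gamma}\psi_p\in\mathscr{S}(\X)$, each $\tpsi^\epsilon_{\gamma,p}=\Lambda^\epsilon_\gamma\,\tau_{-\gamma}\psi_p$ is itself Schwartz, and a finite sum of tensor products of Schwartz functions lies in $\mathscr{S}(\X\times\X)\subset\mathring{\mathscr{S}}_\Delta(\X\times\X)$. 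The delicate control of the $A^\epsilon$-contribution only matters for the uniform-in-$N$ estimates needed later when passing to the limit via Cotlar--Stein, and there the paper indeed works only with the gauge-invariant quantity $\tOmega^\epsilon$ rather than trying to cancel bare vector-potential terms.
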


In order to control the limit for $N\nearrow\infty$ we shall use the Cotlar-Stein Lemma (see for example Lemma 18.6.5 in \cite{H-3}) and thus we need to estimate the products:
\[\begin{split}
\big[\Int\big(\mathfrak{K}^{\epsilon}_{\B,\alpha}\big)\big]^*\cdot\Int\big(\mathfrak{K}^{\epsilon}_{\B,\beta}\big)\,&=\,\Int\big[\big(\mathfrak{K}^{\epsilon}_{\B,\alpha}\big)^\dagger\diamond\big(\mathfrak{K}^{\epsilon}_{\B,\beta}\big)\big]\,,\\
\Int\big(\mathfrak{K}^{\epsilon}_{\B,\alpha}\big)\cdot\Int\big[\big(\mathfrak{K}^{\epsilon}_{\B,\beta}\big)\big]^*\,&=\,\Int\big[\big(\mathfrak{K}^{\epsilon}_{\B,\alpha}\big)\diamond\big(\mathfrak{K}^{\epsilon}_{\B,\beta}\big)^\dagger\big]\,.
\end{split}\]
\begin{align*}
&\big[\big(\mathfrak{K}^{\epsilon}_{\B,\alpha}\big)^\dagger\diamond\big(\mathfrak{K}^{\epsilon}_{\B,\beta}\big)\big]=\big[\big(\mathfrak{K}^{\epsilon}_{\B,\alpha}\big)\diamond\big(\mathfrak{K}^{\epsilon}_{\B,\beta}\big)\big]=\big[\big(\mathfrak{K}^{\epsilon}_{\B,\alpha}\big)\diamond\big(\mathfrak{K}^{\epsilon}_{\B,\beta}\big)^\dagger\big]\,;\\
&\big[\big(\mathfrak{K}^{\epsilon}_{\B,\alpha}\big)\diamond\big(\mathfrak{K}^{\epsilon}_{\B,\beta}\big)^\dagger\big](x,y)=\\
&\quad=\underset{p\in\underline{n_\B}}{\sum}\ \underset{q\in\underline{n_\B}}{\sum}\int_{\X}dz\,\Big(\mathfrak{T}^{\epsilon}_\alpha\psi_p(x)\,\overline{(\mathfrak{T}^{\epsilon}_\alpha\psi_p)(z)}\Big((\mathfrak{T}^{\epsilon}_\beta\psi)_q(z)\Big)\,\overline{\mathfrak{T}^{\epsilon}_\beta\psi_q(y)}\Big)\,.
\end{align*}

The following computations prove that for any $n\in\mathbb{N}$ there exists a constant $C_n>0$ such that, for any $\epsilon\in[0,\epsilon_0]$, we have:
\begin{align*}
&\int_{\X}dz\,\overline{(\mathfrak{T}^{\epsilon}_\alpha\psi_p)(z)}\,(\mathfrak{T}^{\epsilon}_\beta\psi)_q(z)=\int_{\X}dz\,\overline{\Lambda^{\epsilon}_\alpha\psi_p(z-\alpha)}\,\Lambda^{\epsilon}_\beta\psi_q(z-\beta)\\\nonumber
&\quad=\int_{\X}dz\,\Big(<z-\alpha>^{n}\overline{\Lambda^{\epsilon}_\alpha\psi_p(z-\alpha)}\Big)\,\Big(<z-\beta>^n\Lambda^{\epsilon}_\beta\psi_q(z-\beta)\Big)<z-\alpha>^{-n}<z-\beta>^{-n}\\
&\quad\leq\,C_n<\alpha-\beta>^{-n}\underset{p\in\underline{\nB}}{\max}[\lnu_{n+(d+1)/2,0}(\psi_p)]^2.
\end{align*}
Via the Schur test and Proposition \ref{C-off-d-decay}, the sums over $(p,q)$ being finite, the above estimation gives the necessary condition for applying the Cotlar-Stein Lemma and obtaining the following result.
\begin{proposition}\label{P-II-1}
The limit $\underset{N\nearrow\infty}{\lim}\Int\,\mathfrak{K}^{\epsilon}_{\B,N}\,=:\,\widetilde{Q}^{\epsilon}_\B\in\mathbb{B}\big(L^2(\X)\big)$ exists in the strong operator topology on $\mathbb{B}\big(L^2(\X)\big)$. Moreover, the limit $\underset{N\nearrow\infty}{\lim}\mathfrak{K}^{\epsilon}_{\B,N}=:\mathfrak{K}^{\epsilon}_{\B}$ also exists uniformly on compact sets in $BC^\infty(\X\times\X)$, it belongs to $\mathring{\mathscr{S}}_{\Delta}(\X\times\X)$ (see Notation \ref{N-per-distr}) and  $\mathfrak{K}^\epsilon_\B$ equals the distribution kernel of $\widetilde{Q}^{\epsilon}_\B$.
\end{proposition}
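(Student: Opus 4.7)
The plan is to prove the proposition in three stages: first the strong operator convergence by Cotlar--Stein, then the pointwise/smooth convergence of the kernel series by dominated summation, and finally to identify the two limits.

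For the first stage, I would combine the two estimates stated just before the proposition with the trivial fact that $\|\mathfrak{T}^{\epsilon}_\gamma\psi_p\|_{L^2(\X)}=\|\psi_p\|_{L^2(\X)}$ (since $\Lambda^{\epsilon}_\gamma$ is unimodular). Using Schur's test on the factorised kernels, each product $\Int(\mathfrak{K}^{\epsilon}_{\B,\alpha})^*\Int(\mathfrak{K}^{\epsilon}_{\B,\beta})$ and $\Int(\mathfrak{K}^{\epsilon}_{\B,\alpha})\Int(\mathfrak{K}^{\epsilon}_{\B,\beta})^*$ is bounded in operator norm by $C_n\langle\alpha-\beta\rangle^{-n}$, uniformly in $\epsilon\in[0,\epsilon_0]$ and for any $n\in\mathbb{N}$. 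Choosing $n$ large enough that $\sum_{\gamma\in\Gamma}\langle\gamma\rangle^{-n/2}<\infty$ (i.e.\ $n>2d$), the Cotlar--Stein lemma applies to $\{\Int\mathfrak{K}^{\epsilon}_{\B,\gamma}\}_{\gamma\in\Gamma}$ and produces the strong operator limit $\widetilde{Q}^{\epsilon}_\B\in\mathbb{B}(L^2(\X))$ together with a uniform operator-norm bound on the partial sums $\Int\mathfrak{K}^{\epsilon}_{\B,N}$.

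For the second stage, the pointwise lemma just before the proposition gives, for $x=\hat x+\alpha$ and $y=\hat y+\beta$ in $\mathcal{E}+\Gamma$, the refined bound
\[
\big|\mathfrak{T}^{\epsilon}_\gamma\psi_p(x)\,\overline{\mathfrak{T}^{\epsilon}_\gamma\psi_p(y)}\big|\,\leq\,C_n\langle\alpha-\gamma\rangle^{-n}\langle\beta-\gamma\rangle^{-n}\,.
\]
Summing over $\gamma\in\Gamma$ and using the standard Peetre-type estimate $\sum_{\gamma}\langle\alpha-\gamma\rangle^{-n}\langle\beta-\gamma\rangle^{-n}\leq C_{n,d}\langle\alpha-\beta\rangle^{-n+d+1}$ (valid for $n>d+1$) shows absolute uniform convergence of the series on compacts and rapid off-diagonal decay of the limit $\mathfrak{K}^{\epsilon}_\B$. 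To upgrade this to convergence in $BC^\infty(\X\times\X)$ on compacts and to membership in $\mathring{\mathscr{S}}_\Delta(\X\times\X)$, I would differentiate termwise: the phase $\Lambda^{\epsilon}(x,\gamma)=\exp\bigl(-i\int_{[x,\gamma]}(A^\circ+\epsilon A^\epsilon)\bigr)$ and all its derivatives in $x$ grow at most polynomially in $\langle x\rangle$ and $\langle\gamma\rangle$ (this follows from $A^\circ\in\Fb^1(\X)$ and $A^\epsilon\in\Fp^1(\X)$), and this polynomial growth is absorbed by the Schwartz decay of $\psi_p(x-\gamma)$ and its derivatives. The resulting bound has the same product structure $\langle\alpha-\gamma\rangle^{-n}\langle\beta-\gamma\rangle^{-n}$ up to a finite power of $\langle\alpha\rangle\langle\beta\rangle\langle\gamma\rangle$, which is still summable for arbitrarily large $n$, yielding $\mathfrak{K}^{\epsilon}_\B\in\mathring{\mathscr{S}}_\Delta(\X\times\X)$ with bounds uniform in $\epsilon\in[0,\epsilon_0]$.

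Finally, to identify the two limits, pick any $\phi,\psi\in C^\infty_c(\X)$. Uniform convergence of $\mathfrak{K}^{\epsilon}_{\B,N}$ on the compact set $\supp\phi\times\supp\psi$ gives
\[
\big\langle\mathfrak{K}^{\epsilon}_{\B,N},\overline\phi\otimes\psi\big\rangle_{\X\times\X}\,\xrightarrow[N\to\infty]{}\,\big\langle\mathfrak{K}^{\epsilon}_{\B},\overline\phi\otimes\psi\big\rangle_{\X\times\X}\,,
\]
while the strong convergence $\Int\mathfrak{K}^{\epsilon}_{\B,N}\to\widetilde{Q}^{\epsilon}_\B$ established in stage one gives $\bigl(\phi,(\Int\mathfrak{K}^{\epsilon}_{\B,N})\psi\bigr)_{L^2(\X)}\to\bigl(\phi,\widetilde{Q}^{\epsilon}_\B\psi\bigr)_{L^2(\X)}$. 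By definition \eqref{DF-Int} the two sides coincide for each $N$, so $\mathfrak{K}^{\epsilon}_\B$ is the distribution kernel of $\widetilde{Q}^{\epsilon}_\B$. The main technical obstacle is the second stage: keeping the polynomial growth in $\gamma$ of the derivatives of $\Lambda^{\epsilon}(\cdot,\gamma)$ under control simultaneously with the off-diagonal decay in $|x-y|$, so that the whole family of Schwartz-type semi-norms on $\mathring{\mathscr{S}}_\Delta(\X\times\X)$ stays uniformly bounded in $\epsilon\in[0,\epsilon_0]$.
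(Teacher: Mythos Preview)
Your proof is correct and follows essentially the same strategy as the paper: the paper sets up exactly the Cotlar--Stein estimate you describe (Schur test on the products $\Int\mathfrak{K}^{\epsilon}_{\B,\alpha}\cdot(\Int\mathfrak{K}^{\epsilon}_{\B,\beta})^*$, yielding $C_n\langle\alpha-\beta\rangle^{-n}$) immediately before stating the proposition, and your stages~2 and~3 just make explicit what the paper leaves implicit. One small slip: in the paper's notation $\Lambda^{\epsilon}_\gamma(x)=\tLambda^{\epsilon}(x,\gamma)=\Lambda^{\epsilon A^\epsilon}(x,\gamma)$ involves only the perturbing potential $\epsilon A^\epsilon$, not $A^\circ$ (see \eqref{N-A-eps} and \eqref{dhc5}); this does not affect your argument, since the relevant growth is still at most polynomial, but it slightly simplifies the derivative bounds you worry about at the end.
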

An important technical difficulty is that in general the operator $\widetilde{Q}^\epsilon_{\B}$ is not the orthogonal projection on $\widetilde{\mathfrak{L}}^\epsilon_{\B}$. In fact we have the following result.

\begin{proposition}\label{P-II-2}
The operator $\widetilde{Q}^{\epsilon}_\B:=\Int\,\mathfrak{K}^{\epsilon}_{\B}$ is self-adjoint and satisfies the following identity (in $\mathbb{B}\big(L^2(\X)\big)$): $$[\widetilde{Q}^{\epsilon}_\B]^2=\widetilde{Q}^{\epsilon}_\B\,+\,\mathscr{O}(\epsilon).$$
\end{proposition}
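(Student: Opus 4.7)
The plan is to treat the two claims separately. Self-adjointness is immediate: each kernel $\mathfrak{K}^{\epsilon}_{\B,N}(x,y)=\sum_{|\gamma|\leq N,\,p}\mathfrak{T}^{\epsilon}_\gamma\psi_p(x)\overline{\mathfrak{T}^{\epsilon}_\gamma\psi_p(y)}$ is manifestly Hermitian, so each $\Int\,\mathfrak{K}^{\epsilon}_{\B,N}$ is self-adjoint; Proposition~\ref{P-II-1} produces their SOT-limit as a uniformly bounded operator, which is therefore self-adjoint as well.

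For the quadratic identity, the strategy is to express both kernels $\mathfrak{K}(\widetilde{Q}^{\epsilon}_\B)$ and $\mathfrak{K}([\widetilde{Q}^{\epsilon}_\B]^2)$ as $\tLambda^{\epsilon}(x,y)$ times an explicit Parseval-type sum, and to verify that the corresponding ``flux-free'' leading parts coincide exactly thanks to $P_\B^2=P_\B$. A direct computation gives
\begin{align*}
\mathfrak{K}([\widetilde{Q}^{\epsilon}_\B]^2)(x,y)=\sum_{\alpha,\beta,p,q}\int dz\,&\tLambda^{\epsilon}(x,\alpha)\tLambda^{\epsilon}(\alpha,z)\tLambda^{\epsilon}(z,\beta)\tLambda^{\epsilon}(\beta,y)\\
&\times\psi_p(x-\alpha)\overline{\psi_p(z-\alpha)}\psi_q(z-\beta)\overline{\psi_q(y-\beta)}.
\end{align*}
Two applications of the cocycle relation $\tLambda^{\epsilon}(a,b)\tLambda^{\epsilon}(b,c)=\tOmega^{\epsilon}(a,b,c)\,\tLambda^{\epsilon}(a,c)$ factor the four-factor phase as
$$\tOmega^{\epsilon}(x,\alpha,z)\,\tOmega^{\epsilon}(z,\beta,y)\,\tOmega^{\epsilon}(x,z,y)\,\tLambda^{\epsilon}(x,y).$$
Replacing each $\tOmega^{\epsilon}$ by $1$, recognizing the sums $\sum_{\alpha,p}\psi_p(x-\alpha)\overline{\psi_p(z-\alpha)}=\mathfrak{K}(P_\B)(x,z)$ via Corollary~\ref{C-PN-2}, and using $P_\B^2=P_\B$ collapses the integrand to $\tLambda^{\epsilon}(x,y)\mathfrak{K}(P_\B)(x,y)$. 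Exactly the same manipulation applied to $\mathfrak{K}(\widetilde{Q}^{\epsilon}_\B)$ — where only the single flux $\tOmega^{\epsilon}(x,\alpha,y)$ appears — also yields $\tLambda^{\epsilon}(x,y)\mathfrak{K}(P_\B)(x,y)$ as the flux-free part. Hence the leading terms cancel in the difference $(\widetilde{Q}^{\epsilon}_\B)^2-\widetilde{Q}^{\epsilon}_\B$.

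It remains to show that the three flux defects give an operator of norm $\mathscr{O}(\epsilon)$. Stokes' theorem yields $|\tOmega^{\epsilon}(a,b,c)-1|\leq\epsilon\,C(\BB)\,\mathrm{area}(a,b,c)\leq\epsilon\,C(\BB)\,|b-a||c-a|$, so each defect carries an extra factor $\epsilon$ together with products of distances of the form $|x-\alpha|$, $|z-\alpha|$, $|z-\beta|$, $|y-\beta|$, $|x-z|$, $|y-z|$. The rapid decay of the Parseval-frame elements $\psi_p$ absorbs every such factor into a polynomially weighted Schwartz function of the same type. The resulting remainder kernels have precisely the structural form of $\mathfrak{K}^{\epsilon}_{\B}$ in Proposition~\ref{P-II-1} with $\psi_p$ replaced by these weighted variants, so the Cotlar–Stein argument used there applies verbatim and delivers the operator norm bound of order $\epsilon$, uniformly in $\epsilon\in[0,\epsilon_0]$.

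The main obstacle is this last bookkeeping step: one must verify that the Cotlar–Stein estimates survive the insertion of the polynomial weights and that the constants from Stokes' theorem stay controlled by $C(\BB)$ uniformly over the bounded family $\BB$. No new idea is needed beyond what was used for Proposition~\ref{P-II-1}.
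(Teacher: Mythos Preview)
Your proposal is correct and follows essentially the same route as the paper: factor out $\tLambda^{\epsilon}(x,y)$ from both kernels, replace the residual $\tOmega^{\epsilon}$-factors by $1$ to recover $\tLambda^{\epsilon}\mathfrak{K}(P_\B)$ as the common leading part (via Corollary~\ref{C-PN-2} and $P_\B^2=P_\B$), and control the flux defects by Stokes plus Cotlar--Stein. The only cosmetic difference is the triangulation of the phase: the paper first isolates the Gram element $\int dz\,\overline{\mathfrak{T}^{\epsilon}_\alpha\psi_p}\,\mathfrak{T}^{\epsilon}_\beta\psi_q$ and obtains fluxes through $\langle\alpha,z,\beta\rangle$, $\langle\beta,x,\alpha\rangle$, $\langle x,\beta,y\rangle$, whereas you factor all four $\tLambda^{\epsilon}$'s at once and get $\langle x,\alpha,z\rangle$, $\langle z,\beta,y\rangle$, $\langle x,z,y\rangle$; both are valid decompositions of the same quadrilateral flux. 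One small point worth making explicit in your final ``bookkeeping'' step: the Cotlar--Stein sum for the remainder of $(\widetilde{Q}^{\epsilon}_\B)^2$ runs over the double index $(\alpha,\beta)\in\Gamma\times\Gamma$, not a single $\gamma$ as in Proposition~\ref{P-II-1}, and the paper carries this out explicitly.
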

\begin{proof}
Let us fix some $N\in\mathbb{N}$ and compute:
\beq \begin{aligned}\label{F-II-0}
&\big[(\mathfrak{K}^{\epsilon}_{\B,N})\diamond(\mathfrak{K}^{\epsilon}_{\B,N})\big](x,y)=\\
&\qquad=\underset{|\alpha|\leq\,N,\,p\in\underline{n_\B}}{\sum}\ \underset{|\beta|\leq\,N,\,q\in\underline{n_\B}}{\sum}\int_{\X}dz\,\mathfrak{T}^{\epsilon}_\alpha\psi_p(x)\,\overline{\mathfrak{T}^{\epsilon}_\alpha\psi_p(z)}\,\mathfrak{T}^{\epsilon}_\beta\psi_q(z)\,\overline{\mathfrak{T}^{\epsilon}_\beta\psi_q(y)}.
\end{aligned}
\eeq

Let us make a deeper study of the $z$-integral appearing in the above sums, following the analysis in Appendix \ref{SSS-PN-VI-2}:

\begin{align}\nonumber
\int_{\X}dz\,\overline{\mathfrak{T}^{\epsilon}_\alpha\psi_p(z)}\,\mathfrak{T}^{\epsilon}_\beta\psi_q(z)&=\int_{\X}dz\,\overline{\Lambda^{\epsilon}_\alpha(z)\psi_p(z-\alpha)}\,\Lambda^{\epsilon}_\beta(z)\psi_q(z-\beta)\\ \nonumber
&=\tLambda^{\epsilon}(\alpha,\beta)\int_{\X}dz\,\tOmega^{\epsilon}(\alpha,z,\beta)\overline{\psi_p(z-\alpha)}\,\psi_q(z-\beta)\\ \nonumber
&=\tLambda^{\epsilon}(\alpha,\beta)\big((\tau_{-\alpha}\psi_p)\,,\,\bb1\,(\tau_{-\beta}\psi_q)\big)_{L^2(\X)}\,+\,\epsilon\,\big[[\tilde{\mathfrak{m}}^{\epsilon}_{\B,1}]_{\alpha,\beta}\big]_{p,q}\\ \label{F-II-1}
&=\tLambda^{\epsilon}(\alpha,\beta)\big[\mathfrak{M}_{\B}[\bb1]_{\alpha,\beta}\big]_{p,q}\,+\,\epsilon\,\big[[\tilde{\mathfrak{m}}^{\epsilon}_{\B,1}]_{\alpha,\beta}\big]_{p,q}\,,
\end{align}
where we used \eqref{C-PN-VI-3} for the main term and put into evidence the remainder:
\begin{align}\label{F-PN-VII-2}
&\big[[\tilde{\mathfrak{m}}^{\epsilon}_{\B,1}]_{\alpha,\beta}\big]_{p,q}:=\\ \nonumber
&\ =-i\tLambda^{\epsilon}(\alpha,\beta)\int_{\X}dz\,\Big(\int_{<\alpha,z,\beta>}\,B^\epsilon\Big)\,\int_0^1ds\exp\Big(-is\epsilon\int_{<\alpha,z,\beta>}\,B^\epsilon\Big)\overline{\psi_p(z-\alpha)}\,\psi_q(z-\beta), 
\end{align}
with the following estimation implied by the arguments in Appendix \ref{SSS-PN-VI-2} (making use of Notation \ref{N-B}):
\[
\begin{split}
&\forall N\in\mathbb{N},\ \exists\,C_N(\BB)>0:\quad\big|\big[[\tilde{\mathfrak{m}}^{\epsilon}_{\B,1}]_{\alpha,\beta}\big]_{p,q}\big|\leq\\
&\ \leq\,{C_N(\BB)} \,<\alpha-\beta>^{-N}\,\big[\epsilon\,\big(\underset{p\in\underline{\nB}}{\max}\lnu_{N+d+2,0}(\psi_p)^2\big)\,+\,\mathscr{O}(\epsilon^2)\big],\ \forall(p,q)\in\underline{n_{\B}}\times\underline{n_{\B}},\ \forall\epsilon\in[0,\epsilon_0].
\end{split}
\]
When inserting into \eqref{F-II-0} the first term in \eqref{F-II-1}, we notice that:
\begin{align*}
&\underset{|\alpha|\leq\,N,\,p\in\underline{n_\B}}{\sum}\ \underset{|\beta|\leq\,N,\,q\in\underline{n_\B}}{\sum}\hspace*{-0.3cm}\tLambda^{\epsilon}(\alpha,\beta)\big[\mathfrak{M}_{\B}[\bb1]_{\alpha,\beta}\big]_{p,q}\mathfrak{T}^{\epsilon}_\alpha\psi_p\otimes\overline{\mathfrak{T}^{\epsilon}_\beta\psi_q}=\\\nonumber
&\quad=\underset{|\alpha|\leq\,N,\,p\in\underline{n_\B}}{\sum}\ \underset{|\beta|\leq\,N,\,q\in\underline{n_\B}}{\sum}\hspace*{-0.3cm}\tLambda^{\epsilon}(\alpha,\beta)\,\tLambda^{\epsilon}_\alpha(x)\,[\tLambda^{\epsilon}_\beta(y)^{-1}]\big[\mathfrak{M}_{\B}[\bb1]_{\alpha,\beta}\big]_{p,q}\psi_p(x-\alpha)\,\overline{\psi_q(y-\beta)}
\end{align*}
\begin{align} \nonumber
&=\underset{|\alpha|\leq\,N,\,p\in\underline{n_\B}}{\sum}\ \underset{|\beta|\leq\,N,\,q\in\underline{n_\B}}{\sum}\hspace*{-0.3cm}\tLambda^{\epsilon}(x,y))\tOmega^\epsilon(\beta,x,\alpha)\,\tOmega^\epsilon(x,\beta,y)\Big[\big[\mathfrak{M}_{\B}[\bb1]_{\alpha,\beta}\big]_{p,q}[\tau_\alpha\psi_p]\otimes\overline{[\tau_{\beta}\psi_q]}\Big](x,y)\\ \nonumber
&\quad=\underset{|\alpha|\leq\,N,\,p\in\underline{n_\B}}{\sum}\ \underset{|\beta|\leq\,N,\,q\in\underline{n_\B}}{\sum}\hspace*{-0.3cm}\tLambda^{\epsilon}(x,y))\Big\{\Big[\big[\mathfrak{M}_{\B}[\bb1]_{\alpha,\beta}\big]_{p,q}[\tau_\alpha\psi_p]\otimes\overline{[\tau_{\beta}\psi_q]}\,+\,\epsilon\,\big[[\tilde{\mathfrak{m}}^{\epsilon}_{\B,2}]_{\alpha,\beta}\big]_{p,q}\Big](x,y)\Big\}
\end{align}
with $\tilde{\mathfrak{m}}^{\epsilon}_{\B,2}$ a remainder similar with $\tilde{\mathfrak{m}}^{\epsilon}_{\B,1}$.\\
For the main term above, Corollary \ref{C-PN-2} implies the formulas
\begin{align*}
&\tLambda^{\epsilon}\underset{|\alpha|\leq\,N_1}{\sum}\,\underset{p\in\underline{n_\B}}{\sum}\Big[\underset{|\beta|\leq\,N_2}{\sum}\,\underset{q\in\underline{n_\B}}{\sum}\Big(\big[\mathfrak{M}_{\B}[\bb1]_{\alpha,\beta}\big]_{p,q}[\tau_\alpha\psi_p]\otimes\overline{[\tau_{\beta}\psi_q]}\Big)\Big]
\\ \nonumber
&\quad=\tLambda^{\epsilon}\underset{|\alpha|\leq\,N_1}{\sum}\,\underset{p\in\underline{n_\B}}{\sum}[\tau_\alpha\psi_p]\,\otimes\,\Big[\underset{|\beta|\leq\,N_2}{\sum}\,\underset{q\in\underline{n_\B}}{\sum}\big([\tau_\alpha\psi_p]\,,\,[\tau_{\beta}\psi_q]\big)_{L^2(\X)}\overline{[\tau_{\beta}\psi_q]}\Big]
\end{align*}
\begin{align*}
\ \underset{N_2\nearrow\infty}{\longrightarrow}\ \tLambda^{\epsilon}\underset{|\alpha|\leq\,N_1}{\sum}\,\underset{p\in\underline{n_\B}}{\sum}[\tau_\alpha\psi_p]\otimes\overline{[\tau_\alpha\psi_p]}\ \underset{N_1\nearrow\infty}{\longrightarrow}\ \Int\,\tLambda^{\epsilon}\mathfrak{K}[P_{\B}].
\end{align*}
Finally let us notice that:
\begin{align*}
&[\tLambda^{\epsilon}\mathfrak{K}[P_{\B}]\,-\,\mathfrak{K}^\epsilon_{\B}](x,y)\,=\,\hspace*{-0.4cm}\underset{\alpha
\in\Gamma,\,p\in\underline{n_\B}}{\sum}\hspace*{-0.3cm}\Big[\tLambda^\epsilon(x,y)[\tau_\alpha\psi_p(x)]\,\overline{[\tau_\alpha\psi_p(y)]}\,-\,\mathfrak{T}^{\epsilon}_\alpha\psi_p(x)\,\overline{\mathfrak{T}^{\epsilon}_\alpha\psi_p(y)}\Big]\\
&\quad=\,\tLambda^\epsilon(x,y)\hspace*{-0.4cm}\underset{\alpha
\in\Gamma,\,p\in\underline{n_\B}}{\sum}\hspace*{-0.3cm}\big(1\,-\,\tOmega^\epsilon(x,\alpha,y)\big)[\tau_\alpha\psi_p(x)]\,\overline{[\tau_\alpha\psi_p(y)]}\,,
\end{align*}
in order to conclude that the arguments in Appendix \ref{SSS-PN-VI-2} imply the estimation:
	\[
	\begin{split}
		\forall N\in\mathbb{N},\ \exists\,{C_N(\BB)}>0:\ &\big|[\tLambda^{\epsilon}\mathfrak{K}[P_{\B}]\,-\,\mathfrak{K}^\epsilon_{\B}](x,y)\big| \leq\\
		&\leq \,{C_N(\BB)} \,<x-y>^{-N}\,\big[\epsilon\,\big(\underset{p\in\underline{\nB}}{\max}\lnu_{N+d+2,0}(\psi_p)^2\big)\,+\,\mathscr{O}(\epsilon^2)\big],\ \forall\epsilon\in[0,\epsilon_0].
	\end{split}
	\]

For the second term in \eqref{F-II-1} we have to study the limit:
\begin{align*}
\underset{N\nearrow\infty}{\lim}\underset{|\alpha|\leq\,N,\,|\beta|\leq\,N}{\sum}\ \underset{(p,q)\in\underline{n_\B}^2}{\sum}\hspace*{-0.3cm}\big[[\tilde{\mathfrak{m}}^{\epsilon}_{\B}]_{\alpha,\beta}\big]_{p,q}\,\mathfrak{T}^{\epsilon}_\alpha\psi_p(x)\,\overline{\mathfrak{T}^{\epsilon}_\beta\psi_q(y)}\,\equiv\,\underset{N\nearrow\infty}{\lim}\underset{|\alpha|\leq\,N,\,|\beta|\leq\,N}{\sum}\,\tilde{\mathfrak{K}}^{\epsilon}_{\alpha,\beta}(x,y)\,,
\end{align*}
and we use once again the Cotlar Stein Lemma with the countable family of indices $\Gamma\times\Gamma$. In fact we notice that we have:
\begin{align*}
&\big[\tilde{\mathfrak{K}}^{\epsilon}_{\alpha,\beta}\diamond\tilde{\mathfrak{K}}^{\epsilon}_{\alpha',\beta'}\big](x,y)=\\
&\quad=\hspace*{-0.3cm}\underset{(p,q)\in\underline{n_\B}^2}{\sum}\ \underset{(p',q')\in\underline{n_\B}^2}{\sum}\hspace*{-0.3cm}\big[[\tilde{\mathfrak{m}}^{\epsilon}_{\B}]_{\alpha,\beta}\big]_{p,q}\big[[\tilde{\mathfrak{m}}^{\epsilon}_{\B}]_{\alpha',\beta'}\big]_{p',q'}\Big[\delta_{\alpha',\beta}\delta_{p',q}+\epsilon\,\big[[\tilde{\mathfrak{m}}^{\epsilon}_{\B}]_{\alpha',\beta}\big]_{p',q}\Big]\mathfrak{T}^{\epsilon}_\alpha\psi_p(x)\,\overline{\mathfrak{T}^{\epsilon}_{\beta'}\psi_{q'}(y)}\,,
\end{align*}

and thus, for any $(m_1,m_2,m_3)\in\mathbb{N}^3$ there exists $C>0$ such that:
\begin{align*}
	\Big\|\Int\big[\tilde{\mathfrak{K}}^{\epsilon}_{\alpha,\beta}\diamond\tilde{\mathfrak{K}}^{\epsilon}_{\alpha',\beta'}\big]\Big\|_{\mathbb{B}(L^2(\X))}\,&\leq\,C<\alpha-\beta>^{-m_1}<\alpha'-\beta'>^{-m_2}\big[\delta_{\alpha',\beta}+\epsilon<\alpha'-\beta>^{-m_3}\big]
\end{align*}
and choosing asequately the powers $(m_1,m_2,m_3)$ one may obtain that for any $N\in\mathbb{N}$ there exists $C_N>0$ such that:
\begin{align*}
	\Big\|\Int\big[\tilde{\mathfrak{K}}^{\epsilon}_{\alpha,\beta}\diamond\tilde{\mathfrak{K}}^{\epsilon}_{\alpha',\beta'}\big]\Big\|_{\mathbb{B}(L^2(\X))}\,&\leq\, C_N<\alpha-\alpha'>^{-N}<\beta-\beta'>^{-N}.
\end{align*}

\end{proof}

\begin{corollary}\label{C-PN-1}
 For some $\epsilon_0>0$ as stipulated in \eqref{HF-Beps} there exists some $C_Q>0$ such that for any $\epsilon\in[0,\epsilon_0]$ one has the inclusion:
$$
\sigma(\widetilde{Q}^{\epsilon}_\B)\subset(-C_Q\epsilon,C_Q\epsilon)\bigcup(1-C_Q\epsilon,1+C_Q\epsilon).
$$
\end{corollary}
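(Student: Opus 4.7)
The plan is to deduce the spectral localization directly from the near-idempotency $[\widetilde{Q}^{\epsilon}_\B]^2 = \widetilde{Q}^{\epsilon}_\B + \mathscr{O}(\epsilon)$ established in Proposition \ref{P-II-2} together with the self-adjointness of $\widetilde{Q}^{\epsilon}_\B$. First, unravelling the $\mathscr{O}(\epsilon)$ notation (with $\mathscr{V}=\mathbb{B}(L^2(\X))$) from Section 2.1, Proposition \ref{P-II-2} gives a constant $C>0$ and some $\epsilon_0>0$ such that
$$
\bigl\|[\widetilde{Q}^{\epsilon}_\B]^2 - \widetilde{Q}^{\epsilon}_\B\bigr\|_{\mathbb{B}(L^2(\X))}\,\leq\, C\epsilon,\qquad\forall\epsilon\in[0,\epsilon_0].
$$
Since $\widetilde{Q}^{\epsilon}_\B$ is self-adjoint, its spectrum is real, and continuous functional calculus applied to $f(t):=t^2-t$ yields $\sigma\bigl(f(\widetilde{Q}^{\epsilon}_\B)\bigr) = f\bigl(\sigma(\widetilde{Q}^{\epsilon}_\B)\bigr)$. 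Consequently every $\lambda\in\sigma(\widetilde{Q}^{\epsilon}_\B)$ satisfies $|\lambda^2-\lambda| \leq \|f(\widetilde{Q}^{\epsilon}_\B)\| \leq C\epsilon$.

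The second step is the elementary real-variable lemma: if $\lambda\in\R$ verifies $|\lambda(\lambda-1)|\leq C\epsilon$ with $C\epsilon<1/4$, then $\lambda$ lies within $\mathscr{O}(\epsilon)$ of either $0$ or $1$. Indeed, the case $\lambda\leq -1/2$ would force $|\lambda(\lambda-1)|\geq 3/4$, contradicting the bound for small $\epsilon$; thus $\lambda>-1/2$. If moreover $|\lambda|\leq 1/2$, then $|\lambda-1|\geq 1/2$ and therefore $|\lambda|\leq 2C\epsilon$; while if $\lambda>1/2$, then $|\lambda|\geq 1/2$ gives $|\lambda-1|\leq 2C\epsilon$. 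Setting $C_Q:=2C$ and shrinking $\epsilon_0$ if necessary so that $C_Q\epsilon_0<1/2$, we obtain precisely
$$
\sigma(\widetilde{Q}^{\epsilon}_\B)\,\subset\,(-C_Q\epsilon,C_Q\epsilon)\,\bigcup\,(1-C_Q\epsilon,1+C_Q\epsilon),
$$
which is the claimed inclusion.

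I do not anticipate any serious obstacle here: the entire argument is just spectral mapping plus a one-line real-variable dichotomy, and all the analytic work has been absorbed into Proposition \ref{P-II-2}. The only point meriting care is that the $\mathscr{O}(\epsilon)$ in Proposition \ref{P-II-2} be understood as a genuine operator-norm bound uniform on a common interval $[0,\epsilon_0]$; this is ensured by the Cotlar--Stein/Schur-type estimates in the proof of Proposition \ref{P-II-2}, whose constants are controlled by the bounded set $\BB$ via Notation \ref{N-B}. Once that uniformity is granted, the corollary follows immediately.
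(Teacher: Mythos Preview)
Your proof is correct and follows exactly the same approach as the paper: use self-adjointness and the near-idempotency from Proposition~\ref{P-II-2} to get $|\lambda^2-\lambda|\leq C\epsilon$ for every spectral point, then conclude by elementary real-variable reasoning. The paper's proof is in fact terser than yours---it stops after asserting the inequality $|\lambda^2-\lambda|\leq C_Q\epsilon$ and leaves the dichotomy implicit---so your version, with the explicit derivation of $C_Q=2C$, is a strictly more detailed rendering of the same argument.
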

\begin{proof}
As $\widetilde{Q}^{\epsilon}_\B$ is a bounded self-adjoint operator,  Proposition \ref{P-II-2} above implies that if $\lambda\in\R$ is in its spectrum, then it obeys $|\lambda^2-\lambda|\leq\,C_Q\epsilon$ for some $C_Q>0$, that can be obtained from \eqref{F-PN-VII-2}, and any $\epsilon\in[0,\epsilon_0]$.
\end{proof}
We shall call $\widetilde{Q}^\epsilon_{\B}$ \textit{the 'quasi-projection' associated with the frame} $\widetilde{\blPsi}^\epsilon_{\B}$.

\subsubsection{The magnetic symbol of the ``quasi-projection"}

\begin{proposition}\label{R-PN-IV-1}
There exists $\tilde{q}^\epsilon_{\B}\in{S}^{-\infty}(\Xi)$ such that 
$
\widetilde{Q}_\B^{\epsilon}=\Op^\epsilon(\tilde{q}^\epsilon_{\B}).
$
\end{proposition}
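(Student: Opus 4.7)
My plan is to mimic the argument of Proposition \ref{R-p-symb} by producing $\tilde{q}^\epsilon_\B$ explicitly, as the inverse magnetic-Weyl transform of the kernel $\mathfrak{K}^\epsilon_\B$ constructed in Proposition \ref{P-II-1}. The first step is a phase factorization: combining $\mathfrak{T}^\epsilon_\gamma\psi_p = \tLambda^\epsilon(\cdot,\gamma)(\tau_{-\gamma}\psi_p)$ with the Stokes identity $\tOmega^\epsilon(x,\gamma,y) = \tLambda^\epsilon(x,\gamma)\tLambda^\epsilon(\gamma,y)\tLambda^\epsilon(y,x)$ gives
\[
\tLambda^\epsilon(x,\gamma)\overline{\tLambda^\epsilon(y,\gamma)} \,=\, \tLambda^\epsilon(x,y)\,\tOmega^\epsilon(x,\gamma,y),
\]
so that, inserting this term-by-term in the absolutely convergent series of Proposition \ref{P-II-1},
\[
\mathfrak{K}^\epsilon_\B(x,y)\,=\,\tLambda^\epsilon(x,y)\,\sum_{\gamma\in\Gamma}\sum_{p\in\underline{\nB}}\tOmega^\epsilon(x,\gamma,y)\,(\tau_{-\gamma}\psi_p)(x)\,\overline{(\tau_{-\gamma}\psi_p)(y)}.
\]
This is exactly the form $\tLambda^\epsilon\cdot\mathfrak{K}[F]$ of a magnetic Weyl kernel for the vector potential $\epsilon A^\epsilon$, so, in analogy with \eqref{F-p-symb}, the natural candidate is
\[
\tilde{q}^\epsilon_\B(z,\zeta)\,:=\,\int_{\X}\hspace*{-0.1cm}dv\,e^{i\langle\zeta,v\rangle}\sum_{\gamma,p}\tOmega^\epsilon\big(z+\tfrac{v}{2},\gamma,z-\tfrac{v}{2}\big)(\tau_{-\gamma}\psi_p)(z+\tfrac{v}{2})\overline{(\tau_{-\gamma}\psi_p)(z-\tfrac{v}{2})}.
\]
By construction, $\Op^\epsilon(\tilde{q}^\epsilon_\B)$ has integral kernel equal to $\mathfrak{K}^\epsilon_\B$, hence coincides with $\widetilde{Q}^\epsilon_\B$ once the symbol is shown to belong to $S^{-\infty}(\Xi)$.

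The second step is to verify that $\tilde{q}^\epsilon_\B \in S^{-\infty}(\Xi)$, i.e.\ that all semi-norms $\nu^{p,1}_{n,m}(\tilde{q}^\epsilon_\B)$ are finite for every $p\in\R$ and every $(n,m)\in\mathbb{N}^2$. Smoothness in $(z,\zeta)$ is clear, by differentiating under the integral and the (absolutely convergent) sum. Rapid decay in $\zeta$, uniform in $z$, is produced by the standard integration-by-parts trick: $(1+|\zeta|^2)^N e^{i\langle\zeta,v\rangle} = (1-\Delta_v)^N e^{i\langle\zeta,v\rangle}$, transferring $(1-\Delta_v)^N$ to the rest of the integrand. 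The $z$-derivatives either fall on shifted Schwartz functions $\tau_{-\gamma}\psi_p$ or on the flux factor $\tOmega^\epsilon$, and give the same type of integrand. Boundedness in $z$ (with bounded $z$-derivatives) is obtained by the lattice-uniform translation argument already used in the proof of Proposition \ref{P-II-1}: for each fixed $z$, only $\gamma$ with $|\gamma - z|$ bounded really contribute, and after the substitution $\gamma \mapsto \gamma + \iota(z)$ the bound becomes uniform.

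The main technical obstacle is controlling the $v$- and $z$-derivatives of the flux factor $\tOmega^\epsilon(z+\tfrac{v}{2},\gamma,z-\tfrac{v}{2})$. Parametrizing the oriented triangle $\langle z+\tfrac{v}{2},\gamma,z-\tfrac{v}{2}\rangle$ as in Appendix \ref{SSS-PN-VI-2}, the phase $\epsilon\int_{\text{triangle}}B^\epsilon$ is bilinear in the two edge vectors $\gamma-z-v/2$ and $-v$, integrated against values of $B^\epsilon$ along the triangle. Since $B^\epsilon\in\Fb^2(\X)$ belongs to the bounded set $\BB$, all its derivatives are uniformly bounded (cf.\ Notation \ref{N-B}), and any mixed derivative of this phase in $(z,v)$ is dominated by a polynomial in $|\gamma-z|$ and $|v|$. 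Hence $\partial_z^a\partial_v^b \tOmega^\epsilon$ is bounded, uniformly in $\epsilon\in[0,\epsilon_0]$, by such a polynomial. These polynomial factors are absorbed by the Schwartz decay of $\psi_p(z\pm v/2-\gamma)$ and its derivatives: for any $N\in\mathbb{N}$ and any multi-indices $a,b$ one obtains a pointwise estimate of the form
\[
\Big|\partial_z^a\partial_v^b\big[\tOmega^\epsilon(\cdot)(\tau_{-\gamma}\psi_p)(z+\tfrac{v}{2})\overline{(\tau_{-\gamma}\psi_p)(z-\tfrac{v}{2})}\big]\Big|\,\leq\,C_{N,a,b}\,\langle\gamma-z\rangle^{-N}\langle v\rangle^{-N},
\]
with $C_{N,a,b}$ independent of $\epsilon\in[0,\epsilon_0]$, $\gamma\in\Gamma$ and $z\in\X$. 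Choosing $N$ larger than $d$ makes the $\gamma$-sum and the $v$-integral absolutely convergent and yields, after the integration-by-parts procedure, the bound $\nu^{-2M,1}_{n,m}(\tilde{q}^\epsilon_\B)<\infty$ for any $M,n,m$, which is exactly $\tilde{q}^\epsilon_\B \in S^{-\infty}(\Xi)$.
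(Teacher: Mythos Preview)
Your proof is correct and follows essentially the same route as the paper: both extract the symbol via the inverse magnetic-Weyl transform of the kernel $\mathfrak{K}^\epsilon_\B$, perform the Stokes factorization $\tLambda^\epsilon(x,\gamma)\overline{\tLambda^\epsilon(y,\gamma)}=\tLambda^\epsilon(x,y)\tOmega^\epsilon(x,\gamma,y)$, and arrive at the identical explicit formula for $\tilde{q}^\epsilon_\B$. The paper's proof is terser---it essentially stops at the formula, relying on $\mathfrak{K}^\epsilon_\B\in\mathring{\mathscr{S}}_\Delta(\X\times\X)$ together with the characterization in Proposition~\ref{P-ker-OpA}(3)---whereas you supply the explicit $S^{-\infty}$ verification (polynomial control of $\partial_z^a\partial_v^b\tOmega^\epsilon$ absorbed by the Schwartz decay of the $\psi_p$'s), which is exactly the estimate the paper writes out in more detail only in the \emph{subsequent} Lemma and in Appendix~\ref{SSS-PN-VI-1}.
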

\begin{proof}
Proposition \ref{P-II-1} implies that $\widetilde{Q}^\epsilon_{\B}=\Int\,\mathfrak{K}^\epsilon_{\B}=\Op^{\epsilon}(\tilde{q}^\epsilon_{\B})$ with 
$\mathfrak{K}^\epsilon_{\B}\in\mathring{\mathscr{S}}_{\Delta}(\X\times\X)$ (as in Notation \ref{N-per-distr}). 
Having in view Formula \eqref{F-KerOpA} from the appendices, the equalities $$\widetilde{Q}^\epsilon_{\B}=\Int\,\mathfrak{K}^\epsilon_{\B}=\Op^{\epsilon}(\tilde{q}^\epsilon_{\B})$$ imply that:
\beq\label{F-IV-1}
\tLambda^\epsilon\,\big[\big((\bb1\otimes\mathcal{F}_{\X^*})\tilde{q}^\epsilon_{\B}\big)\circ\Upsilon\big]\,=\,\mathfrak{K}^\epsilon_{\B},\quad\text
{i.e.:}
\quad\tilde{q}^\epsilon_{\B}=\big(\bb1_{\X}\otimes\mathcal{F}_{\X}\big)\big[[(\tLambda^\epsilon)^{-1}\mathfrak{K}^\epsilon_{\B}]\circ\Upsilon^{-1}\big]\,.
\eeq
We know that:
\begin{align*}
\mathfrak{K}^\epsilon_{\B}(x,y)&=\underset{\gamma\in\Gamma}{\sum}\ \underset{p\in\underline{n_\B}}{\sum}\big[\mathfrak{T}^{\epsilon}_\gamma\,\psi_{p}\big](x)\,\overline{\big[\mathfrak{T}^{\epsilon}_\gamma\,\psi_{p}\big](y)}\,,
\end{align*}
and conclude that:
\begin{align}\label{F-PN-IV-3}
&\tilde{q}^\epsilon_{\B}(z,\zeta)=\\ \nonumber
&\quad=\int_{\X}dv\,e^{i<\zeta,v>}\Big[\underset{\gamma\in\Gamma}{\sum}\ \underset{p\in\underline{n_\B}}{\sum}\Omega^\epsilon(z+v/2,\gamma,z-v/2)\,\big[\tau_{-\gamma}\psi_{p}\big](z+v/2)\,\overline{\big[\tau_{-\gamma}\psi_{p}\big](z-v/2)}\Big]\,.
\end{align}
\end{proof}

\begin{lemma}
With the above notations and hypothesis, for any continuous semi-norm $\lnu:S^{-\infty}(\Xi)\rightarrow\R_+$ there exists some constant $C_{\nu}>0$, such that:
\[
\lnu\big(\widetilde{q}^\epsilon_{\B}\,-\,p_{\B}\big)\,\leq\,C_\nu\,\epsilon,\quad\forall\epsilon\in[0,\epsilon_0].
\]
\end{lemma}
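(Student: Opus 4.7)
Comparing formula \eqref{F-p-symb} with formula \eqref{F-PN-IV-3}, the two symbols have identical structure except for the factor $\Omega^\epsilon(z+v/2,\gamma,z-v/2)$ appearing inside $\tilde{q}^\epsilon_\B$ and absent from $p_\B$. Thus I would first write
\begin{equation*}
\tilde{q}^\epsilon_\B(z,\zeta)-p_\B(z,\zeta)=\int_\X dv\,e^{i<\zeta,v>}\hspace*{-0.3cm}\underset{(\gamma,p)\in\Gamma\times\underline{\nB}}{\sum}\hspace*{-0.3cm}\bigl[\Omega^\epsilon(z+v/2,\gamma,z-v/2)-1\bigr](\tau_{-\gamma}\psi_p)(z+v/2)\,\overline{(\tau_{-\gamma}\psi_p)(z-v/2)},
\end{equation*}
and then exploit the standard first-order expansion
\begin{equation*}
\Omega^\epsilon(x,y,z)-1=-i\epsilon\Bigl(\int_{<x,y,z>}B^\epsilon\Bigr)\int_0^1 ds\,\exp\Bigl(-is\epsilon\int_{<x,y,z>}B^\epsilon\Bigr),
\end{equation*}
to extract an explicit factor of $\epsilon$. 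Since $\BB$ is bounded in $\Fb^2(\X)$, the flux of $B^\epsilon$ through the triangle with vertices $z\pm v/2$, $\gamma$ is bounded, uniformly in $\epsilon\in[0,\epsilon_0]$, by $C(\BB)$ times its area, which in turn is dominated by a polynomial in $|v|$ and $|z-\gamma|$.

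The crucial absorption step relies on the Schwartz decay of the $\psi_p$: for every $N\in\mathbb{N}$ there is $C_N$ with $|\psi_p(z\pm v/2-\gamma)|\leq C_N<z-\gamma\pm v/2>^{-N}$, so the product of the two Schwartz factors decays as $<z-\gamma+v/2>^{-N}<z-\gamma-v/2>^{-N}$. Summed over $\gamma\in\Gamma$, this yields a function of $v$ which is Schwartz uniformly in $z$, easily killing the polynomial growth from the triangle-area bound. Hence the pointwise estimate
\begin{equation*}
|\tilde{q}^\epsilon_\B(z,\zeta)-p_\B(z,\zeta)|\leq C\,\epsilon\,\int_\X dv\,\Phi_N(v),\qquad \Phi_N\in\mathscr{S}(\X),
\end{equation*}
follows, giving the $L^\infty$-semi-norm with the desired $\mathscr{O}(\epsilon)$ behaviour.

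To treat the general semi-norms $\nu^{p,\rho}_{n,m}$ with arbitrary $p\in\R$ and arbitrary multi-indices, I would differentiate under the $v$-integral: derivatives $\partial_z^\alpha$ either fall on the Schwartz factors $\psi_p$ (preserving their rapid decay) or on $\Omega^\epsilon$, producing derivatives of $B^\epsilon$ that remain uniformly bounded because $\BB$ is bounded in $\Fb^2(\X)$; derivatives $\partial_\zeta^\beta$ yield multiplication by $(iv)^\beta$, which is still integrable against the Schwartz sum. The required decay in $\zeta$ is then obtained by the classical oscillatory-integral trick, writing $e^{i<\zeta,v>}=<\zeta>^{-2M}(1-\Delta_v)^M e^{i<\zeta,v>}$ and integrating by parts $M$ times in $v$; the $v$-derivatives so generated act on the (manifestly smooth) Schwartz product and on $\Omega^\epsilon$, both uniformly controlled in $\epsilon$. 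Crucially, none of these manipulations disturbs the global factor of $\epsilon$ extracted at the outset, so $\nu^{p,\rho}_{n,m}(\tilde{q}^\epsilon_\B-p_\B)\leq C_{p,n,m}(\BB)\,\epsilon$.

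The only real technical point is book-keeping: one must verify that the polynomial growth $|v|(|z-\gamma|+|v|)$ coming from the area of the triangle is systematically absorbed by the off-diagonal Schwartz decay of the $\psi_p$-product, \emph{even after} all possible combinations of derivatives and integrations by parts. This is precisely the type of estimate developed in Appendix \ref{SSS-PN-VI-2} and already invoked for the remainder $\tilde{\mathfrak{m}}^\epsilon_{\B,1}$ in \eqref{F-PN-VII-2}, so the present statement reduces to applying the same machinery to the kernel $\tLambda^\epsilon\bigl[\Omega^\epsilon-1\bigr]\mathfrak{K}(P_\B)$.
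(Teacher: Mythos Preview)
Your proposal is correct and follows essentially the same route as the paper: both write the difference $\tilde q^\epsilon_\B-p_\B$ via the factor $\Omega^\epsilon(z+v/2,\gamma,z-v/2)-1$, extract $\epsilon$ from its first-order Taylor expansion, obtain $\zeta$-decay by integrating by parts in $v$ (the paper writes $<i\partial_v>^{2n}$, you write $(1-\Delta_v)^M$), and absorb the resulting polynomial growth in $|v|$ and $|z-\gamma|$ by the Schwartz decay of the $\psi_p$. The only cosmetic discrepancy is that the paper cites the derivative estimates of Appendix~\ref{SSS-PN-VI-1} (culminating in \eqref{C-PN-IV-1}) rather than~\ref{SSS-PN-VI-2}, since the relevant triangle here is $<z+v/2,\gamma,z-v/2>$.
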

\begin{proof}
We go back to the formulas \eqref{F-IV-1} - \eqref{F-PN-IV-3} and \eqref{F-p-symb} in order to make explicit the difference of symbols we want to estimate and obtain terms of the form:
\begin{align}\nonumber
	\blacklozenge\hspace*{1cm}&<\zeta>^{2n}\big(\widetilde{q}^\epsilon_{\B}\,-\,p_{\B}\big)\big](z,\zeta)=<\zeta>^{2n}\int_{\X}dv\,e^{i<\zeta,v>}\,\times\\ &\hspace*{2cm}\times\underset{\gamma\in\Gamma}{\sum}\ \underset{p\in\underline{n_\B}}{\sum}\big[\Omega^\epsilon(z+v/2,\gamma,z-v/2)-1\big]\,\big[\tau_{-\gamma}\psi_{p}\big](z+v/2)\,\overline{\big[\tau_{-\gamma}\psi_{p}\big](z-v/2)}\nonumber \\ \label{F-PN-IV-2-0}
&=(-i\epsilon)\int_{\X}dv\,e^{i<\zeta,v>}<i\partial_v>^{2n}\Big[\big(\int_{<z+v/2,\gamma,z-v/2>}B^\epsilon\big)\,\times\\ \nonumber
&\times\,\Big[\int_0^1ds\exp\Big(-is\epsilon\int_{<z+v/2,\gamma,z-v/2>}B^\epsilon\Big)\Big]\underset{\gamma\in\Gamma}{\sum}\ \underset{p\in\underline{n_\B}}{\sum}\big[\tau_{-\gamma}\psi_{p}\big](z+v/2)\,\overline{\big[\tau_{-\gamma}\psi_{p}\big](z-v/2)}\Big],
\end{align}
\beq
\begin{aligned}
& \blacklozenge\hspace*{0.5cm}<\zeta>^{2n}\big[\partial_z^a\partial_{\zeta}^b\big(\widetilde{q}^\epsilon_{\B}\,-\,p_{\B}\big)\big](z,\zeta)\,=\,<\zeta>^n\Big[\partial_z^a\partial_{\zeta}^b\int_{\X}dv\,e^{i<\zeta,v>}\,\times\\ 
&\hspace*{0.5cm}\times\underset{\gamma\in\Gamma}{\sum}\ \underset{p\in\underline{n_\B}}{\sum}\big[\Omega^\epsilon(z+v/2,\gamma,z-v/2)-1\big]\,\big[\tau_{-\gamma}\psi_{p}\big](z+v/2)\,\overline{\big[\tau_{-\gamma}\psi_{p}\big](z-v/2)}\,\Big]
\\ \label{F-PN-IV-2} 
&=\hspace*{-0.2cm}\int_{\X}\hspace*{-0.2cm}dv\,e^{i<\zeta,v>}<i\partial_v>^{2n}\Big[(iv)^b\underset{a^\prime\leq{a}}{\sum}C^a_{a^\prime}\,\big[\big(\partial_z^{a^{\prime}}\tOmega^\epsilon\big)(z+v/2,\gamma,z-v/2)\big]\,\times\\ 
&\hspace*{0.5cm}\times\,\big[\partial_z^{a-a^\prime}\big(\psi_p(z-\gamma+v/2)\,\overline{\psi_p(z-\gamma-v/2)}\big)\big]\Big], 
\end{aligned}
\eeq
and also
\begin{align}\label{F-D-Omega}
	\big(\partial^c_v\partial_z^{a}\tOmega^\epsilon\big)(z+v/2,\gamma,z-v/2)=\partial^c_v\partial_z^{a}\Big[\exp\Big(-i\epsilon\int_{<z+v/2,\gamma,z-v/2>}B^\epsilon\Big)\Big].
\end{align}
 \eqref{C-PN-IV-1} in Appendix \ref{SSS-PN-VI-1} gives us an upper bound that together with the last lines in \eqref{F-PN-IV-2-0} and \eqref{F-PN-IV-2} imply the conclusion of the Lemma.
\end{proof}

\subsubsection{The ``magnetic" modified projection for the isolated Bloch family.}

\begin{definition}\label{D-II-1}
	For $\epsilon\in[0,\epsilon_0]$ with $\epsilon_0>0$ small enough and $C_Q>0$ as in Corollary \ref{C-PN-1}, we may choose a circle $\mathscr{C}_\epsilon\subset\Co$ centred in $1$ with a radius $2C_Q\epsilon<1/2$ and define:
	\[ \label{dhc22}
	P_\B^{\epsilon}:= \frac{i}{2\pi}\oint_{\mathscr{C}_\epsilon}d\zz\,\big(\widetilde{Q}_\B^{\epsilon}-\zz\bb1\big)^{-1}
	\]
	as the spectral projection of $\widetilde{Q}_\B^{\epsilon}$ on the interval $(1-C_Q\epsilon,1+C_Q\epsilon)$. Let us also introduce the notation:
	\[
	\mathfrak{L}^{\epsilon}_{\B}:=\Rge\,P^{\epsilon}_{\B}
	\]
	and call it the ``\emph{magnetic modified space}"  associated with  the isolated Bloch family $\B$.
\end{definition}
\begin{remark}\label{R-PN-0}
Since $\widetilde{Q}_\B^{\epsilon}$ is a self-adjoint operator  as in Corollary \ref{C-PN-1} while $P^\epsilon_{\B}$ is its spectral projection on the spectral island close to $1\in\Co$, it follows that $\mathfrak{L}^{\epsilon}_{\B}\subset\widetilde{\mathfrak{L}}^{\epsilon}_{\B}$ and  $\|\widetilde{Q}_\B^{\epsilon}-P^\epsilon_{\B}\|_{\mathbb{B}(L^2(\X))}\leq{C}\epsilon$ for some constant $C>0$ (that can be estimated using the constant $C_Q$ in  Corollary \ref{C-PN-1}) and any $\epsilon\in[0,\epsilon_0]$. 
\end{remark}

\begin{lemma}\label{R-PN-1}
	With the above notations, for any continuous semi-norm $\lnu:S^{-\infty}(\Xi)\rightarrow\R_+$ there exists some constant $C_{\nu}>0$, depending on a finite number of derivatives of $B^\epsilon\in\Fb^2(\X)$, such that:
	\[
	\lnu\big(\widetilde{q}^\epsilon_{\B}-p^\epsilon_{\B}\big)\,\leq\,C_\nu\,\epsilon,\quad\forall\epsilon\in[0,\epsilon_0].
	\]
\end{lemma}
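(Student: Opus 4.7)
The plan is to analyze the Riesz integral defining $P^\epsilon_{\B}$ entirely at the symbol level in the magnetic Moyal algebra, reducing everything to the approximate idempotency of $\tilde{q}^\epsilon_{\B}$ that is implicit in Proposition \ref{P-II-2}. The first step is to enlarge the contour: replace $\mathscr{C}_\epsilon$ by a fixed (i.e.\ $\epsilon$-independent) contour $\mathscr{C}\subset\mathbb{C}$ encircling $1$ counterclockwise, at positive distance from both $\{0\}$ and the set $\{1\}$ itself (for instance the circle $|\zeta-1|=1/4$). For $\epsilon_0$ small enough, the spectral bound of Corollary \ref{C-PN-1} guarantees that $\mathscr{C}$ isolates the spectral island of $\widetilde{Q}^\epsilon_{\B}$ near $1$, so the Riesz formula gives $P^\epsilon_{\B}=\tfrac{i}{2\pi}\oint_{\mathscr{C}}(\widetilde{Q}^\epsilon_{\B}-\zeta)^{-1}d\zeta$. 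The advantage of this choice is that $\zeta\mapsto\zeta^{-1}(1-\zeta)^{-1}$ is uniformly bounded on $\mathscr{C}$, independently of $\epsilon$.

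The second step is to upgrade Proposition \ref{P-II-2} to symbol level: tracking through the estimates of Appendix \ref{SSS-PN-VI-2} on which its proof relies, one has
$\tilde{q}^\epsilon_{\B}\,\sharp^\epsilon\,\tilde{q}^\epsilon_{\B}\,-\,\tilde{q}^\epsilon_{\B}\,=\,\epsilon\,\tilde{r}^\epsilon$,
with $\tilde{r}^\epsilon\in S^{-\infty}(\Xi)$ bounded uniformly in $\epsilon\in[0,\epsilon_0]$ in every continuous seminorm of $S^{-\infty}(\Xi)$. Guided by the scalar inversion valid when $\tilde{q}^2=\tilde{q}$, I define the approximate-inverse symbol
$s^\epsilon_{\zeta}:=\zeta^{-1}(1-\zeta)^{-1}\bigl(\tilde{q}^\epsilon_{\B}+\zeta-1\bigr)\in S^{-\infty}(\Xi).$
A direct distributive computation using only the Moyal bilinearity and the identity above yields
$(\tilde{q}^\epsilon_{\B}-\zeta)\,\sharp^\epsilon\,s^\epsilon_{\zeta}\,=\,1\,+\,\epsilon\,\zeta^{-1}(1-\zeta)^{-1}\,\tilde{r}^\epsilon,$
where on $\mathscr{C}$ the prefactor $\zeta^{-1}(1-\zeta)^{-1}$ is $O(1)$.

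The third step is to genuinely invert. For $\epsilon$ small enough, the operator $\bb1+\epsilon\zeta^{-1}(1-\zeta)^{-1}\Op^\epsilon(\tilde{r}^\epsilon)$ is invertible in $\mathbb{B}(L^2(\X))$ with $\Vert\cdot\Vert<2$ uniformly in $\zeta\in\mathscr{C}$; by the inverse-closedness of the magnetic $\Psi$DO calculus on symbols of class $S^{-\infty}$ (cited from \cite{IMP-1,IMP-2,IMP-3,MPR1}), its inverse is of the form $\bb1-\Op^\epsilon(\epsilon\,\rho^\epsilon_\zeta)$ with $\rho^\epsilon_\zeta\in S^{-\infty}(\Xi)$ bounded in every seminorm uniformly in $(\zeta,\epsilon)\in\mathscr{C}\times[0,\epsilon_0]$. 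Hence $(\widetilde{Q}^\epsilon_{\B}-\zeta)^{-1}=\Op^\epsilon\bigl(s^\epsilon_{\zeta}-\epsilon\,s^\epsilon_{\zeta}\sharp^\epsilon\rho^\epsilon_\zeta+\epsilon^2(\ldots)\bigr)$, with remainder symbol bounded in $S^{-\infty}$ uniformly on $\mathscr{C}$.

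In the last step I integrate along $\mathscr{C}$. Using the residue computations $\oint_{\mathscr{C}}\zeta^{-1}(1-\zeta)^{-1}d\zeta=-2\pi i$ and $\oint_{\mathscr{C}}\zeta^{-1}d\zeta=0$, the principal contribution evaluates exactly to $\tilde{q}^\epsilon_{\B}$:
\[
\frac{i}{2\pi}\oint_{\mathscr{C}}s^\epsilon_{\zeta}\,d\zeta\,=\,\tilde{q}^\epsilon_{\B}\cdot\frac{i}{2\pi}\oint_{\mathscr{C}}\frac{d\zeta}{\zeta(1-\zeta)}\,-\,\frac{i}{2\pi}\oint_{\mathscr{C}}\frac{d\zeta}{\zeta}\,=\,\tilde{q}^\epsilon_{\B}.
\]
The correction term is $\epsilon$ times a contour integral of a symbol bounded in every seminorm of $S^{-\infty}(\Xi)$ uniformly in $\zeta\in\mathscr{C}$, which is therefore itself $\mathscr{O}(\epsilon)$ in every such seminorm. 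This yields $p^\epsilon_{\B}-\tilde{q}^\epsilon_{\B}=\mathscr{O}(\epsilon)$ in $S^{-\infty}(\Xi)$, proving the lemma. The main obstacle is the first ingredient of step two, namely promoting the operator-norm estimate of Proposition \ref{P-II-2} to a full seminorm-wise $S^{-\infty}$ estimate for $\tilde{r}^\epsilon$; this requires revisiting the decomposition \eqref{F-II-1}--\eqref{F-PN-VII-2} and applying the derivative bounds of Appendix \ref{SSS-PN-VI-2} to each derivative of the kernel rather than only to the kernel itself.
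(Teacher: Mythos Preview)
Your approach is essentially the same as the paper's: both rely on the algebraic identity $(\tilde q^\epsilon_\B-\zeta)\sharp^\epsilon(\tilde q^\epsilon_\B+\zeta-1)=\tilde q^\epsilon_\B\sharp^\epsilon\tilde q^\epsilon_\B-\tilde q^\epsilon_\B+\zeta(1-\zeta)$ to isolate the principal part $\zeta^{-1}(1-\zeta)^{-1}\tilde q^\epsilon_\B$ of the resolvent and then control the remainder through the symbol-level approximate idempotency combined with inverse-closedness of the magnetic calculus (the paper invokes the magnetic Beals criterion directly on the full inverse, you build the parametrix and run the Neumann series, which amounts to the same thing). One small slip: $s^\epsilon_\zeta$ is not in $S^{-\infty}(\Xi)$ but in $S^0(\Xi)$, since it contains the constant $-\zeta^{-1}$; this is harmless because that constant integrates to zero over $\mathscr{C}$ and the correction term $s^\epsilon_\zeta\sharp^\epsilon\rho^\epsilon_\zeta$ still lands in $S^{-\infty}$.
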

\begin{proof}
Let us notice the following identity:
\beq\label{F-PN-V-1}(\widetilde{Q}_\B^{\epsilon}-\zz\bb1)\big (\widetilde{Q}_\B^{\epsilon}+(\zz-1)\bb1 \big )=\big (\widetilde{Q}_\B^{\epsilon}\big )^2-\widetilde{Q}_\B^{\epsilon} +\zz(1-\zz)\bb1
\eeq
implying that
\begin{align*}
(\widetilde{Q}_\B^{\epsilon}-\zz \bb1)^{-1}&=\big (\widetilde{Q}_\B^{\epsilon}+(\zz-1)\bb1 \big ) \Big ( \big (\widetilde{Q}_\B^{\epsilon}\big )^2-\widetilde{Q}_\B^{\epsilon} +\zz(\zz-1)\bb1\Big )^{-1} \\
&=: \zz^{-1}(1-\zz)^{-1}\, \widetilde{Q}_\B^{\epsilon}  +\Big (\big (\widetilde{Q}_\B^{\epsilon}\big )^2-\widetilde{Q}_\B^{\epsilon}\Big ) \, F(\widetilde{Q}_\B^{\epsilon},\zz)
\end{align*}
where the function $F$ can easily be read off by expanding the above inverse. Using the magnetic Beals criterion and the arguments in section 6.2 of \cite{IMP-2}, we have that  $F(\widetilde{Q}_\B^{\epsilon},\zz)$ is a magnetic pseudo-differential operator with a symbol in $S^0$. The operator $\big (\widetilde{Q}_\B^{\epsilon}\big )^2-\widetilde{Q}_\B^{\epsilon}$ has a symbol in $S^{-\infty}$ which is also proportional with $\epsilon$ as can be seen from Proposition \ref{P-II-2}. Thus $\Big (\big (\widetilde{Q}_\B^{\epsilon}\big )^2-\widetilde{Q}_\B^{\epsilon}\Big ) \, F(\widetilde{Q}_\B^{\epsilon},\zz)$ has a symbol in $S^{-\infty}$, uniformly in $\zz$, and proportional with $\epsilon$. 
The final step is to integrate with respect to $\zz$. 
\end{proof}

\begin{remark}\label{R-p-eps-B}
The above Lemma \ref{R-PN-1} and Proposition \ref{R-PN-IV-1} imply that $P^\epsilon_{\B}=\Op^\epsilon(p^\epsilon_{\B})$ with $p^\epsilon_{\B}\in{S}^{-\infty}(\Xi)$ and for any continuous semi-norm $\lnu:S^{-\infty}(\Xi)\rightarrow\R_+$ there exists some constant $C_{\nu}>0$ such that:
\[
\lnu\big(p^\epsilon_{\B}\,-\,p_{\B}\big)\,\leq\,C_\nu\,\epsilon,\quad\forall\epsilon\in[0,\epsilon_0].
\]
\end{remark}

\subsubsection{Commutator with the Hamiltonian.}\label{SS-comm-H-eps}

As remarked before the definition \eqref{F-PN-1}, the functions $\psi_p$ for $p\in\underline{n_{\B}}$ belong to $\mathscr{S}(\X)$ and we conclude that the family $\big\{\mathfrak{T}^{\epsilon}_{\gamma}\psi_p,\ \gamma\in\Gamma,\ p\in\underline{n_{\B}}\big\}$ is contained in $\mathcal{D}(H^{\epsilon})$ for any $\epsilon\in[0,\epsilon_0]$ for some $\epsilon_0>0$ as discussed in subsection \ref{SS-m-P-frame}.
Taking into account that $h^\circ\in{S}^p_1(\Xi)_\Gamma$ and $p^\epsilon_{\B}\in{S}^{-\infty}(\Xi)$ for any $\epsilon\in[0,\epsilon_0]$ with some $\epsilon_0>0$, let us compute:
\begin{align*}
	H^{\epsilon}P^\epsilon_{\B}-P^\epsilon_{\B}\,H^{\epsilon}&=\Op^\epsilon(h^\circ\sharp^\epsilon{p}^\epsilon_{\B}-{p}^\epsilon_{\B}\sharp^\epsilon{h})=\Op^\epsilon(h^\circ\sharp{p}_{\B}-{p}_{\B}\sharp{h^\circ})+\mathcal{O}(\epsilon).
\end{align*}
\[
\Op^\circ\big(h^\circ\sharp{p}_{\B}-p_{\B}\sharp{h}^\circ\big)=H^\circ\,P_{\B}\,-\,p_{\B}\,H^\circ\,=\,0
\]
and conclude that:
\beq \label{P-IV-2}
\big[\,H^\epsilon\,,\,P^\epsilon_{\B}\,\big]\,=\,\mathcal{O}(\epsilon).
\eeq

\begin{proposition}\label{P-est-Jepsilon}
	Let $H^\circ$ in \eqref{D-Hcirc} satisfy Hypothesis \ref{H-isBf} and consider a perturbation by a magnetic field satisfying Hypothesis \ref{HF-Beps}. Then there exists $\delta_0>0$ so that the interval $$J^{\delta}_\B:=\big(E_-+{ 2}\delta\,,\,E_+-{2}\delta\big)$$ is not empty for any $\delta \in (0,\delta_0]$ and for any test function $\varphi\in\,C^\infty_0(\R)$ with $\supp\varphi\subset J^{\delta}_\B$, there exist $\epsilon_0 >0$ and $C>0$ such that for any $\epsilon\in[0,\epsilon_0]$ we have:
	$$\big\|P^{\epsilon}_\B\,\varphi\big(H^{\epsilon}\big)\,-\,\varphi\big(H^{\epsilon}\big)\big\|_{\mathbb{B}(L^2(\X))}\,\leq\,C\,\epsilon.$$
\end{proposition}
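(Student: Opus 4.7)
The plan is to split $H^\epsilon$ into a piece that is block diagonal with respect to the decomposition $\bb1 = P^\epsilon_\B + P^\epsilon_\bot$ (where $P^\epsilon_\bot := \bb1 - P^\epsilon_\B$) plus a small off-diagonal remainder, and then to compare their functional calculi via the Helffer--Sj\"ostrand formula. Set
\[
\ham^\epsilon_\B := P^\epsilon_\B H^\epsilon P^\epsilon_\B, \qquad \ham^\epsilon_\bot := P^\epsilon_\bot H^\epsilon P^\epsilon_\bot, \qquad V^\epsilon := P^\epsilon_\B H^\epsilon P^\epsilon_\bot + P^\epsilon_\bot H^\epsilon P^\epsilon_\B,
\]
so that $H^\epsilon = \ham^\epsilon_\B + \ham^\epsilon_\bot + V^\epsilon$; note that $\ham^\epsilon_\B$ is bounded on $L^2(\X)$ because $p^\epsilon_\B \in S^{-\infty}(\Xi)$, while $\ham^\epsilon_\bot$ is self-adjoint on $\mathcal{D}(H^\epsilon)$. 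The identity $P^\epsilon_\B H^\epsilon P^\epsilon_\bot = [P^\epsilon_\B, H^\epsilon]\,P^\epsilon_\bot$ (valid since $P^\epsilon_\B P^\epsilon_\bot = 0$) together with \eqref{P-IV-2} yields $\|V^\epsilon\|_{\mathbb{B}(L^2(\X))} \leq C\epsilon$.

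Setting $\tilde{H}^\epsilon := \ham^\epsilon_\B + \ham^\epsilon_\bot$, I would next apply the Helffer--Sj\"ostrand representation with a compactly supported almost-analytic extension $\tilde\varphi$ of $\varphi$:
\[
\varphi(H^\epsilon) - \varphi(\tilde{H}^\epsilon) = -\tfrac{1}{\pi}\int_{\mathbb{C}} \bar\partial\tilde\varphi(z)\,(z - H^\epsilon)^{-1}\,V^\epsilon\,(z - \tilde{H}^\epsilon)^{-1}\, dL(z).
\]
The standard bound $\|(z-H^\epsilon)^{-1}\|, \|(z-\tilde H^\epsilon)^{-1}\| \leq |\Im z|^{-1}$, combined with the Cauchy--Riemann decay of $\bar\partial\tilde\varphi$ and $\|V^\epsilon\| = \mathcal{O}(\epsilon)$, gives $\|\varphi(H^\epsilon) - \varphi(\tilde{H}^\epsilon)\|_{\mathbb{B}(L^2(\X))} \leq C\epsilon$. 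Since $\tilde{H}^\epsilon$ commutes with $P^\epsilon_\B$ by construction, and $0 \notin \supp\varphi$ (because $\supp\varphi \subset (E_-+2\delta, E_+-2\delta)$ while $E_- \geq E_0 > 0$), the spectral theorem applied to $\tilde H^\epsilon$ on the orthogonal sum $P^\epsilon_\B L^2(\X) \oplus P^\epsilon_\bot L^2(\X)$ yields
\[
\varphi(\tilde{H}^\epsilon) = \varphi\bigl(\ham^\epsilon_\B\big|_{P^\epsilon_\B L^2(\X)}\bigr)P^\epsilon_\B + \varphi\bigl(\ham^\epsilon_\bot\big|_{P^\epsilon_\bot L^2(\X)}\bigr)P^\epsilon_\bot.
\]

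The main obstacle, determining the choice of $\delta_0$ and $\epsilon_0$, is to show that the second summand vanishes, i.e., that the spectrum of $\ham^\epsilon_\bot|_{P^\epsilon_\bot L^2(\X)}$ does not meet $\supp\varphi$. The symbol of $\ham^\epsilon_\bot$ equals $p^\epsilon_\bot \sharp^\epsilon h^\circ \sharp^\epsilon p^\epsilon_\bot$, which by Remark \ref{R-p-eps-B} and the Moyal expansion of $\sharp^\epsilon$ differs from the unperturbed $p_\bot \sharp h^\circ \sharp p_\bot = h_\bot$ (cf. \eqref{F-f-5}) by an $\mathcal{O}(\epsilon)$ correction; in particular, $\ham^\epsilon_\bot = H^\epsilon_\bot + R^\epsilon$ with $R^\epsilon \in \mathbb{B}(L^2(\X))$ and $\|R^\epsilon\| = \mathcal{O}(\epsilon)$. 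Combining the spectral stability \eqref{F-m-sp-est} for $H^\epsilon_\bot$ (obtained from \cite{AMP,CP-1}) with a norm-resolvent argument to absorb $R^\epsilon$, one concludes that for $\delta_0$ sufficiently small (so that $[E_0-\delta_0, E_-+\delta_0] \cup [E_+-\delta_0, +\infty)$ is disjoint from $(E_-+2\delta_0, E_+-2\delta_0)$) and $\epsilon_0$ correspondingly small,
\[
\sigma\bigl(\ham^\epsilon_\bot\big|_{P^\epsilon_\bot L^2(\X)}\bigr) \subset [E_0-\delta, E_-+\delta] \cup [E_+-\delta, +\infty),
\]
which is disjoint from $\supp\varphi$. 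Hence $\varphi(\tilde{H}^\epsilon) = \varphi\bigl(\ham^\epsilon_\B|_{P^\epsilon_\B L^2(\X)}\bigr)P^\epsilon_\B$, whose range lies in $P^\epsilon_\B L^2(\X)$; multiplying on the left by $P^\epsilon_\B$ and using the previous $\mathcal{O}(\epsilon)$ bound then gives $P^\epsilon_\B \varphi(H^\epsilon) = \varphi(H^\epsilon) + \mathcal{O}(\epsilon)$, as required.
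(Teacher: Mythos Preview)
Your argument is correct, but it follows a genuinely different route from the paper's.

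The paper works entirely at the symbol level: it invokes Proposition~6.33 of \cite{IMP-2} to write $\varphi(H^{\epsilon})=\Op^{\epsilon}(\varphi^{\epsilon}[h])$ with $\varphi^{\epsilon}[h]\in S^{-p}_1$, observes that at $\epsilon=0$ the spectral inclusion $E_J(H^\circ)\leq P_\B$ forces the exact symbol identity $p_\B\sharp\varphi^\circ[h]=\varphi^\circ[h]$, and then concludes directly from the $\mathcal O(\epsilon)$ continuity of $p^\epsilon_\B$, of $\varphi^\epsilon[h]$, and of the magnetic Moyal product $\sharp^\epsilon$. No block decomposition or spectral analysis of $\ham^\epsilon_\bot$ is needed.

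Your approach is instead operator-theoretic: block-decompose $H^\epsilon$ relative to $P^\epsilon_\B$, compare $\varphi(H^\epsilon)$ with $\varphi(\tilde H^\epsilon)$ via Helffer--Sj\"ostrand, and eliminate the $P^\epsilon_\bot$-block by a spectral gap argument for $\ham^\epsilon_\bot$. This is perfectly sound; the key estimate $\ham^\epsilon_\bot-H^\epsilon_\bot=\mathcal O(\epsilon)$ in $\mathbb B(L^2)$ that you need is exactly what the paper later computes in \eqref{F-dif-h-bot} when verifying the Feshbach--Schur hypothesis, and that computation does not rely on the present proposition, so there is no circularity. Two small remarks: (i) in your spectral inclusion for $\ham^\epsilon_\bot\big|_{P^\epsilon_\bot L^2}$ you should retain a neighbourhood of $0$ as well (coming from the $[-\delta,\delta]$ piece in \eqref{F-m-sp-est}), but since $0\notin\supp\varphi$ this is harmless; (ii) the claim that the symbol difference $p^\epsilon_\bot\sharp^\epsilon h^\circ\sharp^\epsilon p^\epsilon_\bot-h_\bot$ yields a bounded remainder is correct because every term involving $p^\epsilon_\B$ lies in $S^{-\infty}$, so the a~priori $S^p$-nature of $h^\circ$ causes no trouble.

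In short: the paper's proof is shorter and stays within the magnetic $\Psi$DO functional-calculus framework, while yours is more hands-on and operator-theoretic, trading the symbol machinery for an explicit spectral-gap step that anticipates the analysis of Section~\ref{S-4}.
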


\begin{proof}
	The hypothesis $J:=\supp\varphi\subset(E_-+2\delta,E_+-2\delta)$ implies that: 
	\beq \label{F-Ecirc-h-J}
	E_J(H^\circ)=E_{J}(H_\bot)\oplus\,E_{J}(H_\B )=E_{J}(H_{\B})\subset\,E_{\{0\}}(H_\bot)=P_\B. 
	\eeq
	If we use Proposition 6.33 in \cite{IMP-2} and the notations:
	\[\begin{split}
		&\varphi\big(H^{\epsilon}\big)\,=\,\Op^{\epsilon,\cc}\big(\varphi^{\epsilon}[h]\big),\quad\varphi^{\epsilon}[h]\in\,S^{-p}_{1}(\X^*\times\X),\\
		&\varphi\big(H^{\circ}\big)\,=\,\Op^\circ\big(\varphi^{\circ}[h]\big),\quad\varphi^{\circ}[h]\in\,S^{-p}_{1}(\X^*\times\X)\,,
	\end{split}\]
	we notice that \eqref{F-Ecirc-h-J} implies the equality:
	\[
	P_\B\,\varphi\big(H^{\circ}\big)\,=\,\varphi\big(H^{\circ}\big)\quad\text{i.e.:}\quad\,p_\B\sharp^{B^\circ}\varphi^{\circ}[h]\,-\,\varphi^{\circ}[h]\,=\,0\,.
	\]
	Thus, the usual estimation of the magnetic perturbation  on symbols implies that:
	\[\begin{split}
		\big\|P^{\epsilon}_\B\,\varphi\big(H^{\epsilon}\big)\,-\,\varphi\big(H^{\epsilon}\big)\big\|_{\mathbb{B}(L^2(\X))}\,=\,\mathscr{O}(\epsilon).
	\end{split}\]
\end{proof}

\subsection{A Parseval frame for the magnetic modified space}\label{SS-psGS}

Now we shall elaborate a procedure of transforming any "quasi-Parseval" frame into a Parseval one, replacing the Gram-Schmidt orthogonalization procedure and aply it for the closed subspace $\mathfrak{L}^\epsilon_{\B}=P^\epsilon_{\B}L^2(\X)$ with the perturbed frame $\widetilde{\blPsi}^\epsilon_{\B}$. One will notice that this procedure is very different from Gram-Schmidt one. 

By usual holomorphic functional calculus with the bounded self-adjoint operator $\widetilde{Q}^{\epsilon}_\B$, if we consider the function $\zz\mapsto \zz^{-1/2}=e^{-2^{-1}{\rm Ln}(z)}$ that is holomorphic on any disk around $1$ not containing $0$, we can define:
\beq \label{dhc21}
\Theta^{\epsilon}_\B\,:=\,\frac{i}{2\pi}\oint_{\mathscr{C}}d\zz\,\zz^{-1/2}\,\big(\widetilde{Q}_\B^{\epsilon}-\zz\bb1\big)^{-1},
\eeq
as a bounded self-adjoint operator commuting with $\widetilde{Q}^{\epsilon}_\B$. It  satisfies the equalities: 
\beq \label{F-QThetaP-eps-c}
P^{\epsilon}_\B\,=\,\Theta^{\epsilon}_\B\,\widetilde{Q}^{\epsilon}_\B\,\Theta^{\epsilon}_\B\,=\, [\Theta^{\epsilon}_\B]^2\,\widetilde{Q}_\B^{\epsilon}\,=\, \widetilde{Q}_\B^{\epsilon}\,[\Theta^{\epsilon}_\B]^2.
\eeq 

\begin{lemma}\label{L-theta-tq}
{The operator $\Theta^{\epsilon}_\B$ is a magnetic pseudo-differential operator with a symbol $\theta^\epsilon_{\B}\in S^{-\infty}(\Xi)$ and such that for $\widetilde{q}^\epsilon_{\B}\in S^{-\infty}(\Xi)$ the symbol of $\widetilde{Q}^\epsilon_{\B}$ given by Proposition \ref{R-PN-IV-1} and for any continuous semi-norm $\lnu:S^{-\infty}(\Xi)_\Gamma\rightarrow\R_+$ there exists some constant $C_{_\nu}>0$ such that:
$$
\lnu(\theta^\epsilon_{\B}-\widetilde{q}^\epsilon_{\B})\,\leq\,C_{_\nu}\epsilon,\quad\forall\epsilon\in[0,\epsilon_0].
$$}
\end{lemma}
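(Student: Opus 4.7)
The strategy is to reuse the resolvent decomposition elaborated in the proof of Lemma~\ref{R-PN-1} and combine it with a scalar residue computation adapted to the extra weight $\zz^{-1/2}$. The outcome will be a representation of $\Theta^\epsilon_\B$ as $\widetilde{Q}^\epsilon_\B$ plus a magnetic pseudo-differential operator whose symbol lies in $S^{-\infty}(\Xi)$ and carries an explicit factor $\epsilon$; this will simultaneously establish both the symbol class claim and the semi-norm estimate.

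First, starting from the factorisation~\eqref{F-PN-V-1}, I would write, for $\zz\in\mathscr{C}$,
$$
(\widetilde{Q}^\epsilon_\B - \zz\bb1)^{-1} \;=\; \zz^{-1}(1-\zz)^{-1}\,\widetilde{Q}^\epsilon_\B \;+\; \big((\widetilde{Q}^\epsilon_\B)^2 - \widetilde{Q}^\epsilon_\B\big)\,F(\widetilde{Q}^\epsilon_\B,\zz),
$$
exactly as in Lemma~\ref{R-PN-1}, where $F(\widetilde{Q}^\epsilon_\B,\zz)$ is a magnetic pseudo-differential operator with a symbol of class $S^0$ bounded uniformly in $\zz\in\mathscr{C}$ (relying on the magnetic Beals criterion and the arguments of Section~6.2 of \cite{IMP-2}). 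Proposition~\ref{P-II-2} then says that $(\widetilde{Q}^\epsilon_\B)^2 - \widetilde{Q}^\epsilon_\B$ has a symbol in $S^{-\infty}(\Xi)$ whose continuous semi-norms are all majorised by $C\epsilon$; by the continuity of the magnetic Moyal product $S^{-\infty}\sharp^\epsilon S^0\subset S^{-\infty}$, the second summand has a symbol in $S^{-\infty}(\Xi)$ with semi-norms uniformly of order $\epsilon$ on $\mathscr{C}$.

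Next, I would multiply by $\zz^{-1/2}$ and integrate along $\mathscr{C}$ termwise. Since $\mathscr{C}$ does not enclose $0$, the function $\zz^{-1/2}$ is holomorphic inside $\mathscr{C}$, so the only singularity of $\zz^{-3/2}(1-\zz)^{-1}$ inside is a simple pole at $\zz=1$ with residue $-1$, whence
$$
\frac{i}{2\pi}\oint_{\mathscr{C}}\zz^{-3/2}(1-\zz)^{-1}\,d\zz \;=\; 1.
$$
This identifies the first term's contribution as exactly $\widetilde{Q}^\epsilon_\B$, whose symbol is $\widetilde{q}^\epsilon_\B$ by Proposition~\ref{R-PN-IV-1}. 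The integral of the second term, weighted by the bounded factor $\zz^{-1/2}$ over the compact contour $\mathscr{C}$, defines a magnetic pseudo-differential operator whose symbol lies in $S^{-\infty}(\Xi)$ and inherits the $\mathscr{O}(\epsilon)$ control of its continuous semi-norms from the uniform estimates above. Calling $\theta^\epsilon_\B$ the resulting Weyl symbol of $\Theta^\epsilon_\B$, we obtain $\theta^\epsilon_\B\in S^{-\infty}(\Xi)$ together with $\lnu(\theta^\epsilon_\B-\widetilde{q}^\epsilon_\B)\le C_\nu\,\epsilon$ for every continuous semi-norm $\lnu$ on $S^{-\infty}(\Xi)$.

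The only delicate point I anticipate is the uniformity in $\zz\in\mathscr{C}$ of the $S^{-\infty}$ semi-norm bounds on the remainder symbol; it is handled as in Lemma~\ref{R-PN-1} via the magnetic Beals criterion and the fact that, by Corollary~\ref{C-PN-1}, $\mathscr{C}$ sits at a distance $\ge 3C_Q\epsilon/2$ (say) from the spectrum of $\widetilde{Q}^\epsilon_\B$ uniformly in $\epsilon\in[0,\epsilon_0]$, so that the operator $F(\widetilde{Q}^\epsilon_\B,\zz)$ admits controllable $S^0$ estimates on the whole contour. Once this uniformity is secured, the termwise integration is harmless and yields the lemma.
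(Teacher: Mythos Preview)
Your argument is correct and rests on the same core ingredients as the paper's proof: the resolvent factorisation~\eqref{F-PN-V-1}, the magnetic Beals criterion giving $S^0$ control on $F(\widetilde{Q}^\epsilon_\B,\zz)$, and the $\mathscr{O}(\epsilon)$ bound on $(\widetilde{Q}^\epsilon_\B)^2-\widetilde{Q}^\epsilon_\B$ from Proposition~\ref{P-II-2}. The residue computation is fine with the paper's orientation convention.

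The organisational difference is that the paper does not integrate the decomposition directly. Instead it first writes $\Theta^\epsilon_\B=\widetilde{Q}^\epsilon_\B\cdot\frac{i}{2\pi}\oint_{\mathscr{C}}\zz^{-3/2}(\widetilde{Q}^\epsilon_\B-\zz\bb1)^{-1}\,d\zz$ to place $\theta^\epsilon_\B$ in $S^{-\infty}$, and then obtains the $\mathscr{O}(\epsilon)$ estimate by comparing $\Theta^\epsilon_\B$ with $P^\epsilon_\B$ via a second resolvent identity (equation~\eqref{ddc1}) and finally invoking Lemma~\ref{R-PN-1} to pass from $P^\epsilon_\B$ back to $\widetilde{Q}^\epsilon_\B$. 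Your route is more economical: the single residue evaluation $\frac{i}{2\pi}\oint_{\mathscr{C}}\zz^{-3/2}(1-\zz)^{-1}\,d\zz=1$ delivers $\widetilde{Q}^\epsilon_\B$ as the leading term outright, so both the symbol class and the semi-norm estimate come out of one integration without the detour through $P^\epsilon_\B$. The paper's two-step route, on the other hand, makes the $S^{-\infty}$ membership of $\theta^\epsilon_\B$ visible independently of the $\epsilon$-estimate, which is perhaps conceptually cleaner but not strictly necessary.
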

\begin{proof}
We have the identity 
\[
\Theta^{\epsilon}_\B = \widetilde{Q}_\B^{\epsilon} \,\frac{i}{2\pi}\oint_{\mathscr{C}}d\zz\,\zz^{-3/2}\,\big(\widetilde{Q}_\B^{\epsilon}-\zz\bb1\big)^{-1}.
\]
Proposition \ref{P-II-1} implies that $\widetilde{Q}^{\epsilon}_{\B}$ is a magnetic pseudo-differential operator with symbol of class $S^{-\infty}$. 
Via the magnetic Beals criterion we have that the above complex integral defines a magnetic pseudo-differential operator with a symbol in $S^0$, which by multiplication with $\widetilde{Q}^{\epsilon}_{\B}$ becomes an operator with a symbol 
$\theta^\epsilon_{\B}$ of class $S^{-\infty}$.

Because $P_\B^\epsilon$ is a projection, the arguments above imply that we may write:
\beq \label{ddc1}
\begin{aligned}
\Theta^{\epsilon}_\B -P_\B^\epsilon&= \,\frac{i}{2\pi}\oint_{\mathscr{C}}d\zz\,\zz^{-1/2}\,\Big (\big(\widetilde{Q}_\B^{\epsilon}-\zz\bb1\big)^{-1} -\big(P_\B^{\epsilon}-\zz\bb1\big)^{-1}\Big )\\&=\,\frac{i}{2\pi}\oint_{\mathscr{C}}d\zz\,\zz^{-1/2}\,\big(\widetilde{Q}_\B^{\epsilon}-\zz\bb1\big)^{-1} \big ( P_\B^{\epsilon}-\widetilde{Q}_\B^{\epsilon}\big )\big(P_\B^{\epsilon}-\zz\bb1\big)^{-1}
\end{aligned}
\eeq
and by the same Beals \& bootstrap argument we notice  that the right hand side of \eqref{ddc1} has a magnetic symbol in $\sigma^\epsilon_{\B}\in{S}^{-\infty}(\Xi)$ with semi-norms of order $\epsilon$, i.e. for any continuous semi-norm $\nu:S^{-\infty}(\Xi)\rightarrow\R_+$ there exists some $C_\nu>0$ such that $\nu(\sigma^\epsilon_{\B})\leq{C}_\nu\,\epsilon$.

Coupling this with Lemma \ref{R-PN-1} we also get that $\Theta^{\epsilon}_\B -\widetilde{Q}_\B^\epsilon$ has the same property.
\end{proof}

\begin{proposition}\label{P-IV-1}
For any $(\alpha,p,\epsilon)\in\Gamma\times\underline{n_{\B}}\times[0,\epsilon_0]$, the function $\psi^\epsilon_{\alpha,p}:=\Theta^{\epsilon}_\B\mathfrak{T}^{\epsilon}_\alpha\,\psi_{p}\in{L^2}(\X)$ belongs to $\mathscr{S}(\X)$ with estimations uniform in $\epsilon\in[0,\epsilon_0]$.
\end{proposition}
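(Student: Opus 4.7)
The plan is to write $\psi^\epsilon_{\alpha,p} = \Theta^\epsilon_\B \big(\mathfrak{T}^\epsilon_\alpha \psi_p\big)$ and handle the two factors separately. By Lemma \ref{L-theta-tq}, $\Theta^\epsilon_\B = \Op^\epsilon(\theta^\epsilon_\B)$ has symbol $\theta^\epsilon_\B \in S^{-\infty}(\Xi)$ with semi-norms bounded uniformly in $\epsilon \in [0,\epsilon_0]$ (combining $\theta^\epsilon_\B - \widetilde{q}^\epsilon_\B = \mathscr{O}(\epsilon)$ from that lemma with the uniform bounds on $\widetilde{q}^\epsilon_\B$ produced by the explicit formula \eqref{F-PN-IV-3}). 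The standard continuity of magnetic Weyl quantization with regularizing symbols (Appendix \ref{A-m-PsiDO}) gives $\Op^\epsilon(\theta^\epsilon_\B) : \mathscr{S}(\X) \to \mathscr{S}(\X)$ continuously, with each Schwartz semi-norm of the image bounded by finitely many semi-norms of $\theta^\epsilon_\B$ and of the input, all constants uniform in $\epsilon$. It therefore suffices to prove that $\mathfrak{T}^\epsilon_\alpha \psi_p \in \mathscr{S}(\X)$ with Schwartz semi-norms bounded uniformly in $\epsilon$.

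By construction $\mathfrak{T}^\epsilon_\alpha \psi_p = \Lambda^\epsilon(\cdot,\alpha)(\tau_{-\alpha}\psi_p)$. Formula \eqref{F-PN-1} together with the smoothness in $\theta$ of the global sections $\hat{\psi}_p$ gives $\psi_p \in \mathscr{S}(\X)$, and translation preserves $\mathscr{S}(\X)$, so $\tau_{-\alpha}\psi_p$ is Schwartz. Since $|\Lambda^\epsilon(x,\alpha)|=1$, what remains is to check that every derivative $\partial^a_x \Lambda^\epsilon(x,\alpha)$ has polynomial growth in $x$ with constants uniform in $\epsilon$. Writing
\[
\int_{[\alpha,x]} A^\epsilon = \int_0^1 A^\epsilon(\alpha+s(x-\alpha))\cdot(x-\alpha)\,ds,
\]
a Stokes-type integration by parts in $s$ combined with the identity $\partial_k A^\epsilon_j - \partial_j A^\epsilon_k = \pm (B^\epsilon)_{jk}$ produces
\[
\partial_{x_k} \Lambda^\epsilon(x,\alpha) = -i\epsilon\,\Lambda^\epsilon(x,\alpha)\Big[A^\epsilon_k(x) \,\pm \int_0^1 s \sum_j (B^\epsilon)_{jk}(\alpha+s(x-\alpha))(x_j-\alpha_j)\,ds\Big].
\]
The bracketed expression has polynomial growth in $x$ with constants depending only on finitely many semi-norms of $A^\epsilon \in \Fp^1(\X)$ and of $B^\epsilon$ in the bounded family $\BB \subset \Fb^2(\X)$, hence uniform in $\epsilon$. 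Iterating this computation (each additional derivative of $\Lambda^\epsilon$ produces further bounded factors involving $B^\epsilon$ and its derivatives, together with a polynomially growing contribution from $A^\epsilon$) yields the analogous uniform polynomial-growth bounds for all $\partial^a_x \Lambda^\epsilon(\cdot,\alpha)$. Consequently multiplication by $\Lambda^\epsilon(\cdot,\alpha)$ is a continuous endomorphism of $\mathscr{S}(\X)$ with bounds uniform in $\epsilon$, which together with the first step closes the proof.

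The main obstacle is the polynomial growth at infinity of the vector potential $A^\epsilon$: naive differentiation of the line-integral formula for $\Lambda^\epsilon(x,\alpha)$ would propagate derivatives of $A^\epsilon$ through the segment $[\alpha,x]$ and produce terms that are not uniformly controlled in $\epsilon$. The Stokes-type integration by parts sketched above is exactly what is needed to trade these derivatives of $A^\epsilon$ for the gauge-invariant, bounded curvature $B^\epsilon$, isolating the single polynomially growing factor $A^\epsilon_k(x)$; once this is done, everything else is soft and reduces to the already-established continuity of order-$(-\infty)$ magnetic Weyl operators on Schwartz space.
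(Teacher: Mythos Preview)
Your proof is correct and follows essentially the same two-step strategy as the paper: first show $\mathfrak{T}^\epsilon_\alpha\psi_p\in\mathscr{S}(\X)$ with uniform bounds, then use the regularizing nature of $\Theta^\epsilon_\B$. The only cosmetic difference is that the paper phrases the second step via the integral-kernel side ($\Theta^\epsilon_\B$ has kernel in $\mathring{\mathscr{S}}_\Delta(\X\times\X)$, which is equivalent to your $S^{-\infty}$ symbol statement by Proposition~\ref{P-ker-OpA}(3)), and it treats the first step more tersely, merely noting that $\Lambda^\epsilon_\alpha$ is smooth of modulus~$1$; your Stokes-type computation making the polynomial growth of $\partial^a_x\Lambda^\epsilon(\cdot,\alpha)$ explicit is a welcome clarification but not a different idea.
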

\begin{proof}
From the construction in Subsection \ref{SS-B-P-fr} we know that $\psi_{p}\in\mathscr{S}(\X)$. For any  $n\in\mathbb{N}$ there exists $C_n>0$ such that $|\psi_p(\hat{x}+\gamma)|\leq{C_n}<\gamma>^{-n}$. Since  \[\X\ni{x}\mapsto\Lambda^\epsilon_\gamma(x)=\exp\Big(-i\epsilon\int_{[x,\gamma]}A^\epsilon\Big)\in\mathbb{U}(1)\] is a smooth function of modulus 1, we have that $\mathfrak{T}^{\epsilon}_\alpha\,\psi_{p}\in\mathscr{S}(\X)$ and given any $n\in\mathbb{N}$ there exists $C_n>0$ such that $|\mathfrak{T}^{\epsilon}_\alpha\,\psi_{p}(\hat{x}+\gamma)|\leq{C_n}<\gamma-\alpha>^{-n}$. Finally our previous arguments implied that $\Theta_\B^{\epsilon}$ have integral kernels of class $\mathring{\mathscr{S}}_{\Delta}(\X\times\X)$ with fast off-diagonal decay.
\end{proof}

\begin{proposition}\label{P-III-1}
If we define for any $N\in\mathbb{N}\setminus\{0\}$ the integral kernel:
\[
\mathfrak{K}_N[P^{\epsilon}_\B]:=\underset{|\alpha|\leq\,N}{\sum}\ \underset{p\in\underline{n_\B}}{\sum}\big[\Theta^{\epsilon}_\B\mathfrak{T}^{\epsilon}_\alpha\,\psi_{p}\big]\otimes\overline{\big[\Theta^{\epsilon}_\B\mathfrak{T}^{\epsilon}_\alpha\,\psi_{p}\big]}\,,
\]
and its associated integral operator $P^{\epsilon}_{\B,N}:=\Int\,\mathfrak{K}_N[P^{\epsilon}_\B]$, then the following limits exist and verify the equalities:
\begin{itemize} 
	\item $\mathfrak{K}=\underset{N\nearrow\infty}{\lim}\mathfrak{K}_N[P^{\epsilon}_\B]\in\mathring{\mathscr{S}}(\X\times\X)\bigcap{C}^\infty(\X\times\X)$ uniformly on compact sets in $\X\times\X$; 
	\item $\underset{N\nearrow\infty}{\lim}P^{\epsilon}_{\B,N}=Int\,\mathfrak{K}=P^{\epsilon}_{\B}\in\mathbb{B}\big(L^2(\X)\big)$ for the strong operator topology.
	\end{itemize}
\end{proposition}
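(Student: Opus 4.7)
The crucial observation is the factorization
\[
P^\epsilon_{\B,N} \;=\; \Theta^\epsilon_\B\,\widetilde Q^\epsilon_{\B,N}\,\Theta^\epsilon_\B,
\]
immediate from $\psi^\epsilon_{\alpha,p}=\Theta^\epsilon_\B\,\mathfrak{T}^\epsilon_\alpha\psi_p$, the bilinearity of the rank-one sum, and the definition of $\widetilde Q^\epsilon_{\B,N} := \Int\,\mathfrak{K}^\epsilon_{\B,N}$ in Proposition \ref{P-II-1}. Coupled with the identity $P^\epsilon_\B=\Theta^\epsilon_\B\,\widetilde Q^\epsilon_\B\,\Theta^\epsilon_\B$ from \eqref{F-QThetaP-eps-c}, this reduces both conclusions to the analogous ones already established in Proposition \ref{P-II-1} for $\widetilde Q^\epsilon_{\B,N}$.

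For the strong operator convergence, I would just estimate, for any $f\in L^2(\X)$,
\[
\|(P^\epsilon_{\B,N}-P^\epsilon_\B)f\|\;\le\;\|\Theta^\epsilon_\B\|\,\|(\widetilde Q^\epsilon_{\B,N}-\widetilde Q^\epsilon_\B)\Theta^\epsilon_\B f\|,
\]
which tends to zero by Proposition \ref{P-II-1} together with the boundedness of $\Theta^\epsilon_\B$ supplied by Lemma \ref{L-theta-tq}. Nothing more is needed for the second bullet, once the kernel is identified with $P^\epsilon_\B$.

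For the kernel convergence I would first upgrade Proposition \ref{P-IV-1} to the quantitative localized Schwartz bound
\[
\big|(\partial^a\psi^\epsilon_{\alpha,p})(x)\big|\;\le\;C_{a,n}\,\langle x-\alpha\rangle^{-n},\qquad \forall(a,n)\in\mathbb{N}^d\times\mathbb{N},
\]
uniformly in $(\alpha,p,\epsilon)\in\Gamma\times\underline{n_\B}\times[0,\epsilon_0]$ and $x\in\X$. This follows from the rapid off-diagonal decay of the Schwartz kernel of $\Theta^\epsilon_\B$, which is standard since $\theta^\epsilon_\B\in S^{-\infty}(\Xi)$ with semi-norms bounded uniformly in $\epsilon$ (Lemma \ref{L-theta-tq}), combined with the Schwartz localization of $\mathfrak{T}^\epsilon_\alpha\psi_p$ around $\alpha$. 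A Schur-type bound of the form $\sum_{\alpha\in\Gamma}\langle x-\alpha\rangle^{-n'}\langle y-\alpha\rangle^{-n'}\le C_{n,n'}\langle x-y\rangle^{-n}$, applied termwise to $\mathfrak{K}_N[P^\epsilon_\B]$ and each of its partial derivatives, then yields absolute convergence uniform on compact subsets of $\X\times\X$ to a smooth limit $\mathfrak{K}$ with rapid off-diagonal decay, i.e.\ $\mathfrak{K}\in\mathring{\mathscr{S}}_\Delta(\X\times\X)\cap C^\infty(\X\times\X)$.

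Finally, the identification $\Int\,\mathfrak{K}=P^\epsilon_\B$ is clinched by pairing against $\varphi\otimes\psi$ for arbitrary $\varphi,\psi\in C^\infty_c(\X)$: the compact-set uniform convergence of the kernel and the strong operator convergence of $P^\epsilon_{\B,N}$ produce the same bilinear form, forcing the two limits to coincide. The only genuine technical obstacle is the localized Schwartz estimate above with decay measured from $\alpha$ rather than from the origin; this requires tracking the magnetic oscillating factors $\Lambda^\epsilon_\alpha$ through the $S^{-\infty}$-kernel of $\Theta^\epsilon_\B$, but the computation is of the same nature as those carried out in Subsection \ref{SSS-iHom-Z-trsl} and in Appendix \ref{SSS-PN-VI-1}, and introduces no new analytical ingredient.
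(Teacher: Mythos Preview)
Your proof is correct and takes a genuinely different route for the operator convergence. The paper applies the Cotlar--Stein Lemma directly to the family $\mathfrak{K}[P^{\epsilon}_\B]_\gamma:=\sum_{p}\psi^\epsilon_{\gamma,p}\otimes\overline{\psi^\epsilon_{\gamma,p}}$, estimating the inner products $\big(\Theta^{\epsilon}_\B\mathfrak{T}^{\epsilon}_\alpha\psi_{p},\Theta^{\epsilon}_\B\mathfrak{T}^{\epsilon}_\beta\psi_{q}\big)_{L^2(\X)}$ by chaining the off-diagonal decay of $\mathfrak{K}[\Theta^\epsilon_\B]$ with the localization of $\mathfrak{T}^\epsilon_\alpha\psi_p$, to obtain $\langle\alpha-\beta\rangle^{-n}$ decay and hence strong convergence. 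Your factorization $P^\epsilon_{\B,N}=\Theta^\epsilon_\B\widetilde Q^\epsilon_{\B,N}\Theta^\epsilon_\B$ together with \eqref{F-QThetaP-eps-c} short-circuits this entirely, reducing the second bullet to Proposition~\ref{P-II-1} and the boundedness of $\Theta^\epsilon_\B$; this is cleaner and avoids re-running Cotlar--Stein. For the first bullet the two arguments essentially coincide: both rest on the localized Schwartz bound $|\partial^a\psi^\epsilon_{\alpha,p}(x)|\le C_{a,n}\langle x-\alpha\rangle^{-n}$, which the paper obtains (implicitly, via Proposition~\ref{P-IV-1}) from the same ingredients you name. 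Your approach buys economy; the paper's buys a self-contained parallel to the earlier $\widetilde Q^\epsilon_\B$ argument.
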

\begin{proof}
Let us denote by $\mathfrak{K}[P^{\epsilon}_\B]_\gamma:=\underset{p\in\underline{n_\B}}{\sum}\big[\Theta^{\epsilon}_\B\mathfrak{T}^{\epsilon}_\gamma\,\psi_{p}\big]\otimes\overline{\big[\Theta^{\epsilon}_\B\mathfrak{T}^{\epsilon}_\gamma\,\psi_{p}\big]}$ and by $\mathfrak{K}[\Theta^{\epsilon}_{\B}]\in\mathring{\mathscr{S}}(\X\times\X)$ the integral kernel of $\Theta^{\epsilon}_{\B}$. Then:
\begin{align}
\mathfrak{K}[P^{\epsilon}_\B]_\alpha\diamond\mathfrak{K}[P^{\epsilon}_\B]_\beta=\underset{(p,q)\in\underline{n_\B}^2}{\sum}\big(\Theta^{\epsilon}_\B\mathfrak{T}^{\epsilon}_\alpha\,\psi_{p}\,,\,\Theta^{\epsilon}_\B\mathfrak{T}^{\epsilon}_\beta\,\psi_{q}\big)_{L^2(\X)}\big[\Theta^{\epsilon}_\B\mathfrak{T}^{\epsilon}_\alpha\,\psi_{p}\big]\otimes\overline{\big[\Theta^{\epsilon}_\B\mathfrak{T}^{\epsilon}_\beta\,\psi_{q}\big]}
\end{align}
and considering each 'coefficient':
\begin{align}
\big(\Theta^{\epsilon}_\B\mathfrak{T}^{\epsilon}_\alpha\,\psi_{p}\,,\,\Theta^{\epsilon}_\B\mathfrak{T}^{\epsilon}_\beta\,\psi_{q}\big)_{L^2(\X)}=&\int_{\X}\int_{\X}\int_{\X}dx\,dy\,dz\,\mathfrak{K}[\Theta^{\epsilon}_{\B}](x,y)\mathfrak{K}[\Theta^{\epsilon}_{\B}](x,z)\,\times\\
&\times\,\overline{\Lambda^{\epsilon}_\alpha(y)}\overline{\psi_p(y-\alpha)}\Lambda^{\epsilon}_\beta(z)\psi_q(z-\beta)
\end{align}
and putting into evidence the decay of the functions under the integrals,  we finally obtain the estimation:
\begin{align*}
&<x-y>^{-n_1}<x-z>^{-n_2}<y-\alpha>^{-n_3}<z-\beta>^{-n_4}\leq\\
&\hspace*{3cm}\leq\,C<x-y>^{-n_1+n_2}<y-z>^{-n_2+n_3}<z-\alpha>^{-n_3+n_4}<\alpha-\beta>^{-n_4}\,,
\end{align*}
that allows the use of the Cotlar-Stein Lemma.
\end{proof}

The previous arguments show that we can write:
\[\begin{split}
	\mathfrak{K}[P^{\epsilon}_\B](x,y)&=\hspace*{-0.3cm}\underset{(\alpha,p)\in\Gamma\times\underline{n_\B}}{\sum}\hspace*{-0.3cm}\big[\Theta^{\epsilon}_\B\mathfrak{T}^{\epsilon}_\alpha\,\psi_{p}\big](x)\,\overline{\big[\Theta^{\epsilon}_\B\mathfrak{T}^{\epsilon}_\alpha\,\psi_{p}\big](y)}.
\end{split}\]

\begin{proposition}
The family 
\beq \label{DF-Pfr-B-eps}
\blPsi^{\epsilon}_{\B}:=\big\{\psi^{\epsilon}_{\alpha,p}:=\Theta^{\epsilon}_\B\mathfrak{T}^{\epsilon}_\alpha\,\psi_{p},\quad\forall(\alpha,p)\in\Gamma\times\underline{n_\B}\big\}
\eeq
is a Parseval frame for the subspace $\mathfrak{L}^{\epsilon}_{\B}\subset\,L^2(\X)$.
\end{proposition}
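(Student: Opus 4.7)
The plan is to reduce the Parseval frame claim to the resolution-of-identity formula
\begin{equation*}
P^\epsilon_\B \;=\; \sum_{(\alpha,p)\in\Gamma\times\underline{n_{\B}}}\big|\psi^\epsilon_{\alpha,p}\big\rangle\big\langle\psi^\epsilon_{\alpha,p}\big|,\qquad \psi^\epsilon_{\alpha,p}\in\mathfrak{L}^\epsilon_\B,
\end{equation*}
the first identity being understood in the strong operator topology on $L^2(\X)$ (see Appendix \ref{A-frames}). Indeed, once both are in hand, every $f\in\mathfrak{L}^\epsilon_\B$ satisfies $f=P^\epsilon_\B f=\sum_{\alpha,p}(\psi^\epsilon_{\alpha,p},f)_{L^2(\X)}\,\psi^\epsilon_{\alpha,p}$ and pairing with $f$ yields the Parseval identity $\|f\|^2_{L^2(\X)}=\sum_{\alpha,p}|(\psi^\epsilon_{\alpha,p},f)_{L^2(\X)}|^2$.

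For the membership $\psi^\epsilon_{\alpha,p}\in\mathfrak{L}^\epsilon_\B$ I would argue via the Riesz holomorphic functional calculus applied to the self-adjoint operator $\widetilde{Q}^\epsilon_\B$. By Corollary \ref{C-PN-1} its spectrum splits into two well-separated islands, one near $0$ and one near $1$, and the contour $\mathscr{C}$ used in \eqref{dhc21} encloses only the latter. Hence $\Theta^\epsilon_\B$ vanishes on the $0$-island and coincides with $(\widetilde{Q}^\epsilon_\B)^{-1/2}$ on the $1$-island; in particular $\Theta^\epsilon_\B=P^\epsilon_\B\,\Theta^\epsilon_\B=\Theta^\epsilon_\B\,P^\epsilon_\B$, so $\psi^\epsilon_{\alpha,p}=\Theta^\epsilon_\B(\mathfrak{T}^\epsilon_\alpha\psi_p)$ automatically lies in $\Rge P^\epsilon_\B = \mathfrak{L}^\epsilon_\B$.

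For the resolution of identity the most efficient route is to invoke Proposition \ref{P-III-1} directly: its statement already identifies the strong-operator limit of the partial sums $\sum_{|\alpha|\le N,\,p}|\psi^\epsilon_{\alpha,p}\rangle\langle\psi^\epsilon_{\alpha,p}|$ with $P^\epsilon_\B$. An alternative self-contained derivation combines the series expansion $\widetilde{Q}^\epsilon_\B=\lim_N\sum_{|\gamma|\le N,\,p}|\mathfrak{T}^\epsilon_\gamma\psi_p\rangle\langle\mathfrak{T}^\epsilon_\gamma\psi_p|$ from Proposition \ref{P-II-1} with identity \eqref{F-QThetaP-eps-c} in the form $P^\epsilon_\B=\Theta^\epsilon_\B\,\widetilde{Q}^\epsilon_\B\,\Theta^\epsilon_\B$, exploiting the self-adjointness of $\Theta^\epsilon_\B$ and the strong continuity of left/right multiplication by a fixed bounded operator to move $\Theta^\epsilon_\B$ inside the series. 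Since all the delicate analytic work---rapid off-diagonal decay, Cotlar--Stein convergence, and functional calculus for $\widetilde{Q}^\epsilon_\B$---has already been carried out in Propositions \ref{P-II-1}, \ref{P-II-2}, \ref{P-III-1} and identity \eqref{F-QThetaP-eps-c}, I do not foresee a real obstacle; the only subtle point is the presence of the $0$-spectral island of $\widetilde{Q}^\epsilon_\B$, which would obstruct a global square root but is bypassed precisely by the contour choice in Definition \ref{D-II-1}.
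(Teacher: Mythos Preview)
Your proposal is correct and follows essentially the same route as the paper: both rely on Proposition \ref{P-III-1} for the resolution of identity $P^\epsilon_\B=\sum_{\alpha,p}|\psi^\epsilon_{\alpha,p}\rangle\langle\psi^\epsilon_{\alpha,p}|$ and then apply $f=P^\epsilon_\B f$ for $f\in\mathfrak{L}^\epsilon_\B$. Your justification of the membership $\psi^\epsilon_{\alpha,p}\in\mathfrak{L}^\epsilon_\B$ via $\Theta^\epsilon_\B=P^\epsilon_\B\Theta^\epsilon_\B$ is more explicit than the paper (which simply asserts it), and your alternative route through \eqref{F-QThetaP-eps-c} is a valid equivalent formulation.
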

\begin{proof}
Each $\Theta^{\epsilon}_\B\mathfrak{T}^{\epsilon}_\alpha\,\psi_{p}$ belongs to $\mathfrak{L}^{\epsilon}_{\B}=\Rge\,P^{\epsilon}_{\B}$.
If $f\in\mathfrak{L}^{\epsilon}_{\B}$, we have the equalities:
\begin{align*}
f=P^{\epsilon}_{\B}f=\hspace*{-0.5cm}\underset{(\alpha,p)\in\Gamma\times\underline{n_\B}}{\sum}\hspace*{-0.3cm}\big(\Theta^{\epsilon}_\B\mathfrak{T}^{\epsilon}_\alpha\,\psi_{p}\,,\,f\big)_{L^2(\X)}\,\Theta^{\epsilon}_\B\mathfrak{T}^{\epsilon}_\alpha\,\psi_{p}.
\end{align*}
\end{proof}
\begin{remark}\label{R-PN-V-2}
The abstract construction presented in the end of Appendix \ref{A-frames} implies the existence of an injective $C^*$-algebra morphism $\mW^\epsilon_{\B}:\mathbb{B}\big(P^\epsilon_{\B}L^2(\X)\big)\rightarrow\mathbb{B}\big(\ell^2(\Gamma)\otimes\Co^{\nB}\big)$  associated with  the Parseval frame $\blPsi^\epsilon_{\B}$ and given by the explicit formula:
\[
\mW^\epsilon_{\B}T\,=\,\mathfrak{C}_{\blPsi^\epsilon_{\B}}T\mathfrak{C}_{\blPsi^\epsilon_{\B}}^*.
\]
Moreover, using the Parseval frame $\blPsi^\epsilon_{\B}$ in \eqref{DF-Pfr-B-eps} we have the following ``perturbed" version of the definition \eqref{C-PN-VI-3}:
\beq \label{C-PN-VI-3-1}
\big[\mathfrak{M}^\epsilon_{\B}[T]_{\alpha,\beta}\big]_{p,q}\,:=\,\big({\cal{e}}_\alpha\otimes{\cal{e}}_p\,,\,\mathfrak{W}^\epsilon_{\B}[T]\,{\cal{e}}_\beta\otimes{\cal{e}}_q\big)_{\ell^2(\Gamma)^{\nB}}\,=\,\big(\psi^\epsilon_{\alpha,p}\,,\,T\,\psi^\epsilon_{\beta,q}\big)_{L^2(\X)}.
\eeq
\end{remark}
\begin{remark}\label{R-III-1}
We have the formula $P^\epsilon_{\B}=\Id_{\mathfrak{L}^\epsilon_{\B}}=\mathfrak{C}_{\blPsi^\epsilon_{\B}}\mathfrak{C}_{\blPsi^\epsilon_{\B}}^*$ and taking $\epsilon=0$ we obtain the formulas:
\[
P_{\B}\,=\,\Int\,\mathfrak{K}_{\B},\quad\mathfrak{K}_{\B}:=\hspace*{-0.4cm}\underset{(\gamma,p)\in\Gamma\times\underline{n_{\B}}}{\sum}(\tau_{-\gamma}\psi_p)\otimes(\overline{\tau_{-\gamma}\psi_p)}\,.
\]
\end{remark}

\subsection{The effective Hamiltonian of the isolated Bloch family in a regular magnetic field}\label{SS-PN-V-2}

Once we have put into evidence the orthogonal projection $P^\epsilon_{\B}$ that is almost $H^\epsilon$-invariant, with an error of order $\epsilon$ and have as Parseval frame the system $\blPsi^\epsilon_{\B}$\,, let us consider the \textit{``reduced magnetic Hamiltonian"}  associated with  the closed subspace $\mathfrak{L}^\epsilon_{\B}=P^\epsilon_{\B}\big[L^2(\X)\big]$:
\beq \label{DF-m-band-H}
\ham^\epsilon_{\B}\,:=\,P^\epsilon_{\B}\,H^\epsilon\,P^\epsilon_{\B}
\eeq
and its associated infinite matrix $\mathfrak{M}^\epsilon_{\B}[\ham^\epsilon_{\B}]$ with respect to this Parseval frame as in \eqref{C-PN-VI-3-1}.

\begin{proposition}\label{P-f-2}
	Suppose given a magnetic field as in \eqref{HF-Beps} and a distribution kernel $\mathfrak{K}\in\mathring{\mathscr{S}}_{\Delta}
	(\X\times\X)$ (see the Notation \ref{N-per-distr}) satisfying the periodicity property $$(\tau_{\gamma}\otimes\tau_{\gamma})\mathfrak{K}=\mathfrak{K} \mbox{  for any } \gamma\in\Gamma\,.$$ The infinite matrix  associated with  the operator $T^\epsilon:=\Int\,\tLambda^{\epsilon}\mathfrak{K}$ with respect to the Parseval frame $\blPsi^{\epsilon}_{\B}$ in \eqref{DF-Pfr-B-eps}, is of the form 
	\begin{align*}
	\big[\mathfrak{M}^\epsilon_{\B}[T^\epsilon]_{\alpha,\beta}\big]_{p,q}:&=\big(\psi^\epsilon_{\alpha,p}\,,\,\big[\Int\,\tLambda^{\epsilon}\mathfrak{K}\big]\psi^\epsilon_{\beta,q}\big)_{L^2(\X)}\\ &=\tLambda^{\epsilon}(\alpha,\beta)\Big[\big[[\mathring{\mathfrak{m}}_{\B}(\mathfrak{K})]_{\alpha-\beta}\big ]_{p,q}+\epsilon\big[\widetilde{\mathfrak{M}}^{\epsilon}_{\B}[\mathfrak{K}]_{\alpha,\beta}\big]_{p,q}\Big]\,,
	\end{align*}
	where:
	\begin{itemize} 
	\item $\mathring{\mathfrak{m}}_{\B}(\mathfrak{K})\in{\cal{s}}(\Gamma;\MmN)$ is independent of $\epsilon$ and is defined by
	$$\big[[\mathring{\mathfrak{m}}_{\B}]_{\gamma}(\mathfrak{K})\big ]_{p,q}:=\big(\psi_p\,,\,\Int\big[(\tau_{\gamma}\otimes\bb1)\mathfrak{K}\big]\psi_q\big)_{L^2(\X)},$$ for $\gamma\in\Gamma$ and $1\leq p,q\leq n_{\B}$;
	\item the  family 
	$\big\{\widetilde{\mathfrak{M}}^{\epsilon}_{\B}[\mathfrak{K}],\,\epsilon\in[0,\epsilon_0]\big\}$ belongs to a bounded set of  $\mathscr{M}^\circ_\Gamma[\MmN]$\,.
	\end{itemize}
\end{proposition}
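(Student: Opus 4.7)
The plan is to expand the matrix element on the kernel side and then collect all the magnetic phases into a single factor $\tLambda^\epsilon(\alpha,\beta)$ by Stokes' theorem. By self-adjointness of $\Theta^\epsilon_{\B}$ and the definition \eqref{DF-Pfr-B-eps} of the frame,
\[
\big[\mathfrak{M}^\epsilon_{\B}[T^\epsilon]_{\alpha,\beta}\big]_{p,q}=\big(\Theta^\epsilon_{\B}\mathfrak{T}^\epsilon_\alpha\psi_p\,,\,\big[\Int\,\tLambda^{\epsilon}\mathfrak{K}\big]\,\Theta^\epsilon_{\B}\mathfrak{T}^\epsilon_\beta\psi_q\big)_{L^2(\X)}.
\]
I would first represent each outer $\Theta^\epsilon_{\B}$ through its integral kernel $\mathfrak{K}[\Theta^\epsilon_{\B}]\in\mathring{\mathscr{S}}_\Delta(\X\times\X)$, invoking Lemma \ref{L-theta-tq} and Remark \ref{R-p-eps-B} to split $\mathfrak{K}[\Theta^\epsilon_{\B}]=\tLambda^\epsilon\cdot\mathfrak{K}[P_\B]+\mathcal{O}(\epsilon)$, where the $\mathcal{O}(\epsilon)$ remainder is a kernel in $\mathring{\mathscr{S}}_\Delta(\X\times\X)$ with semi-norms of order $\epsilon$.

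The key step is then to assemble all the $\tLambda^\epsilon$ phases coming from the two $\Theta^\epsilon_{\B}$ kernels, from the factor $\tLambda^\epsilon$ in the middle, and from $\mathfrak{T}^\epsilon_{\alpha}$, $\mathfrak{T}^\epsilon_{\beta}$ (since $\overline{\Lambda^\epsilon_\alpha(u)}=\tLambda^\epsilon(\alpha,u)$). Using the cocycle identity $\tLambda^\epsilon(x_0,x_1)\tLambda^\epsilon(x_1,x_2)\tLambda^\epsilon(x_2,x_0)=\tOmega^\epsilon(x_0,x_1,x_2)$ repeatedly and Stokes' theorem, the product collapses to
\[
\tLambda^\epsilon(\alpha,\beta)\cdot\Phi^\epsilon(\alpha,\beta;\text{integration variables}),
\]
where $\Phi^\epsilon$ is a product of fluxes $\tOmega^\epsilon=1+\mathcal{O}(\epsilon)$, each remainder proportional to $\epsilon\|B^\epsilon\|_\infty$ times the area of a small polygonal region. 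This is exactly the extraction mechanism used in the proof of Proposition \ref{P-II-2} and in Appendix \ref{SSS-PN-VI-2}; the estimates there transfer essentially verbatim.

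For the principal term obtained by setting $\Phi^\epsilon\equiv 1$ and $\Theta^\epsilon_{\B}\rightsquigarrow P_{\B}$, the two outer copies of $\mathfrak{K}[P_{\B}]$ collapse via the Parseval reproduction identity of Corollary \ref{C-PN-2}, and the change of variables $u=x-\alpha$, $v=y-\beta$ yields
\[
\tLambda^\epsilon(\alpha,\beta)\int_{\X\times\X}du\,dv\;\overline{\psi_p(u)}\,\mathfrak{K}(u+\alpha,v+\beta)\,\psi_q(v).
\]
The joint periodicity $(\tau_\beta\otimes\tau_\beta)\mathfrak{K}=\mathfrak{K}$ rewrites $\mathfrak{K}(u+\alpha,v+\beta)=\mathfrak{K}(u+\alpha-\beta,v)$, and this integral is precisely $\big[\mathring{\mathfrak{m}}_{\B}(\mathfrak{K})_{\alpha-\beta}\big]_{p,q}$, visibly independent of $\epsilon$ and, by the rapid decay of $\psi_p$, $\psi_q$ and $\mathfrak{K}$, rapidly decaying in $|\alpha-\beta|$; hence $\mathring{\mathfrak{m}}_{\B}(\mathfrak{K})\in\mathcal{s}(\Gamma;\MmN)$.

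The remainder $\widetilde{\mathfrak{M}}^\epsilon_{\B}[\mathfrak{K}]$ collects (i) the $\mathcal{O}(\epsilon)$ contribution from $\Phi^\epsilon-1$ and (ii) the $\mathcal{O}(\epsilon)$ discrepancy from replacing each $\mathfrak{K}[\Theta^\epsilon_{\B}]$ by $\tLambda^\epsilon\cdot\mathfrak{K}[P_\B]$. The main obstacle is verifying that this remainder lies in a bounded subset of $\mathscr{M}^\circ_\Gamma[\MmN]$ uniformly in $\epsilon\in[0,\epsilon_0]$: one must dominate the polynomial growth of each flux remainder (the area factor grows at least quadratically in the integration variables) by the rapid decay of $\psi_p$, $\psi_q$, $\mathfrak{K}$ and $\mathfrak{K}[\Theta^\epsilon_{\B}]$. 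This is carried out exactly by the type of estimate in Appendix \ref{SSS-PN-VI-2} and the proof of Proposition \ref{R-PN-IV-1}, followed by a Schur-test argument (as used before Proposition \ref{P-II-1}) and Notation \ref{N-B} to absorb the dependence on the bounded family $\BB$, thereby yielding the claimed uniform bound.
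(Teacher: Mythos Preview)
Your approach is correct and lands on the same main term and remainder structure, but the paper's argument is organized more simply. Instead of splitting at the kernel level as $\mathfrak{K}[\Theta^\epsilon_{\B}]=\tLambda^\epsilon\,\mathfrak{K}[P_\B]+\mathcal{O}(\epsilon)$ and then having to collapse the two extra $P_\B$-kernels via the reproduction identity, the paper writes $\Theta^\epsilon_{\B}=\bb1+(\Theta^\epsilon_{\B}-\bb1)$ directly on each frame vector. This works because, although $\Theta^\epsilon_{\B}-\bb1$ is \emph{not} small in operator norm, it is $\mathcal{O}(\epsilon)$ when applied specifically to $\tpsi^\epsilon_{\alpha,p}=\mathfrak{T}^\epsilon_\alpha\psi_p$: one splits $\Theta^\epsilon_{\B}-\bb1=(\Theta^\epsilon_{\B}-\widetilde{Q}^\epsilon_{\B})+(\widetilde{Q}^\epsilon_{\B}-\bb1)$, the first piece being $\mathcal{O}(\epsilon)$ by Lemma~\ref{L-theta-tq}, the second being $\mathcal{O}(\epsilon)$ on $\tpsi^\epsilon_{\alpha,p}$ by the quasi-projection computation \eqref{F-II-1}. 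The main term is then immediately $\big(\mathfrak{T}^\epsilon_\alpha\psi_p,\,[\Int\,\tLambda^\epsilon\mathfrak{K}]\,\mathfrak{T}^\epsilon_\beta\psi_q\big)$ with no extra $P_\B$-layer, and the Stokes collapse of the three phases $\overline{\Lambda^\epsilon_\alpha}\otimes\Lambda^\epsilon_\beta$ and $\tLambda^\epsilon$ into $\tLambda^\epsilon(\alpha,\beta)\cdot\tOmega^\epsilon(\alpha,x,y)\tOmega^\epsilon(\alpha,y,\beta)$ (formula \eqref{F-PN-VII-1}) is shorter than the multi-variable phase bookkeeping your route requires. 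Your version is not wrong, just heavier: the $P_\B$-kernel you insert is immediately removed again, so the detour buys nothing.
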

\begin{proof}
Using \eqref{DF-Pfr-B-eps} we notice that we have the following decomposition:
\beq \label{F-B-3}
\begin{aligned} 
&\big(\Theta^\epsilon_{\B}\,\mathfrak{T}^{\epsilon}_\alpha\psi_p\,,\,[\Int\,\tLambda^{\epsilon}\mathfrak{K}]\Theta^\epsilon_{\B}\,\mathfrak{T}^{\epsilon}_\beta\, \psi_q\big)_{L^2(\X)}=\big\langle\,\mathfrak{K}\,,\,\big(\overline{\Lambda^{\epsilon}_\alpha}\otimes\Lambda^{\epsilon}_\beta\big)\,\tLambda^{\epsilon}\,\big(\overline{\tau_{-\alpha}\psi_p}\otimes\tau_{-\beta}\psi_q\big)\big\rangle_{\mathscr{S}(\X\times\X)}+\,\\ 
	&\ +\,\big([\Theta^\epsilon_{\B}-\bb1]\,\tpsi^\epsilon_{\alpha,p}\,,\,[\Int\,\tLambda^{\epsilon}\mathfrak{K}]\Theta^\epsilon_{\B}\,\tpsi^\epsilon_{\beta,q}\big)_{L^2(\X)}\,+\,\big(\tpsi^\epsilon_{\alpha,p}\,,\,[\Int\,\tLambda^{\epsilon}\mathfrak{K}][\Theta^\epsilon_{\B}-\bb1]\,\tpsi^\epsilon_{\beta,q}\big)_{L^2(\X)}.
\end{aligned}
\eeq

In order to deal with the first term in the r.h.s. of  \eqref{F-B-3}, we use Stokes' formula in order to study the $\epsilon$ dependence:
\begin{align}\nonumber
	\big[\big(\overline{\Lambda^{\epsilon}_\alpha}\otimes\Lambda^{\epsilon}_\beta\big)\,\tLambda^{\epsilon}](x,y)&=\exp\Big[-i\Big(\int_{[\alpha,x]}A^{\epsilon}+\int_{[y,\beta]}A^{\epsilon}+\int_{[x,y]}A^{\epsilon}\Big]\\ \label{F-PN-VII-1}
	&=\tLambda^{\epsilon}(\alpha,\beta)\,\exp\Big[-i\Big(\int_{<\alpha,x,y>}B^{\epsilon}\Big)\Big]\,\exp\Big[-i\Big(\int_{<\alpha,y,\beta>}B^{\epsilon}\Big)\Big].
\end{align}
Using the computation in Subsection  \ref{SSS-PN-VI-1} we may conclude that
\begin{align}\label{F-PN-VII-1-1}
	\Big|\big[\big(\overline{\Lambda^{\epsilon}_\alpha}\otimes\Lambda^{\epsilon}_\beta\big)\,\tLambda^{\epsilon}](x,y)\,-\,\tLambda^{\epsilon}(\alpha,\beta)\Big|\leq\,C\epsilon\, \big[<x-\alpha><x-y>+<y-\beta><x-y>\big]\,,
\end{align}
and by iteration that all the derivatives of $\big(\overline{\Lambda^{\epsilon}_\alpha}\otimes\Lambda^{\epsilon}_\beta\big)\,\tLambda^{\epsilon}$ may be bounded by polynomials of the form:
$$
\epsilon^M<x-\alpha>^{M_1}<y-\beta>^{M_2}<x-y>^{M_3}.
$$
Denoting $\mathfrak{k}_N(x,y):=<x-y>^N$ and $\mathfrak{K}^\prime_N:=\mathfrak{k}_N\mathfrak{K}\in\mathring{\mathscr{S}}_{\Delta}(\X\times\X)$ we notice that:
\begin{align}\nonumber
	&\big\langle\,\mathfrak{K}\,,\,\big(\overline{\Lambda^{\epsilon}_\alpha}\otimes\Lambda^{\epsilon}_\beta\big)\,\tLambda^{\epsilon}\,\big(\overline{\tau_{-\alpha}\psi_p}\otimes\tau_{-\beta}\psi_q\big)\big\rangle_{\mathscr{S}(\X\times\X)}=\big\langle\,\mathfrak{K}\,,\,\tLambda^{\epsilon}(\alpha,\beta)\,\big(\overline{\tau_{-\alpha}\psi_p}\otimes\tau_{-\beta}\psi_q\big)\big\rangle_{\mathscr{S}(\X\times\X)}\,+\\ \label{F-B-I-1}
	&\hspace*{3cm}+\big\langle\,\mathfrak{K}^\prime_N\,,\,\mathfrak{k}_{-N}\Big[\big[\big(\overline{\Lambda^{\epsilon}_\alpha}\otimes\Lambda^{\epsilon}_\beta\big)\,\tLambda^{\epsilon}]\,-\,\tLambda^{\epsilon}(\alpha,\beta)\Big]\big(\overline{\tau_{-\alpha}\psi_p}\otimes\tau_{-\beta}\psi_q\big)\big\rangle_{\mathscr{S}(\X\times\X)}.
\end{align}
Taking now into account the estimation  \eqref{F-PN-VII-1-1} and the following remarks concerning its derivatives and noticing that:
\begin{align*}
	<x-y>^{-N}=<(x-\alpha)-(y-\beta)+(\alpha-\beta)>^{-N}\leq\,C_N<x-\alpha>^N<y-\beta>^N<\alpha-\beta>^{-N}
\end{align*}
we conclude that, for any $N\in\mathbb{N}$,  there exists $\Phi_{N,\alpha,\beta}\in\mathscr{S}(\X\times\X)$ uniformly in $(\alpha,\beta)\in\Gamma\times\Gamma$ such that:
\begin{align*}
	\Big[\mathfrak{k}_{-N}\Big[\big[\big(\overline{\Lambda^{\epsilon}_\alpha}\otimes\Lambda^{\epsilon}_\beta\big)\,\tLambda^{\epsilon}]\,-\,\tLambda^{\epsilon}(\alpha,\beta)\Big]\big(\overline{\tau_{-\alpha}\psi_p}\otimes\tau_{-\beta}\psi_q\big)\Big](x,y)=\epsilon\,<\alpha-\beta>^{-N}\Phi_{N,\alpha,\beta}(x,y).
\end{align*}
Moreover, we may obtain an expansion in powers of $\epsilon$ convergent in matrix norm for $\epsilon\in[0,\epsilon_0]$ for some small enough $\epsilon_0>0$, where the main contribution to \eqref{F-B-I-1} is given by:
\begin{align*}
	\big\langle\,\mathfrak{K}\,,\,\tLambda^{\epsilon}(\alpha,\beta)\,\big(\overline{\tau_{-\alpha}\psi_p}\otimes\tau_{-\beta}\psi_q\big)\big\rangle_{\mathscr{S}(\X\times\X)}&=\tLambda^{\epsilon}(\alpha,\beta)\,\big((\tau_{-\alpha}\psi_p)\,,\,(\Int\,\mathfrak{K})\,(\tau_{-\beta}\psi_q)\big)_{L^2(\X)},
\end{align*}
the remainder being one of the terms in the sum of contributions representing $\big[\widetilde{\mathfrak{M}}^{\epsilon}_{\B}[\mathfrak{K}]_{\alpha,\beta}\big]_{p,q}$.
Now let us deal with the two remainder terms in \eqref{F-B-3} and notice that we have to control the following type of norms:
$
\big\|[\Theta^\epsilon_{\B}-\bb1]\,\tpsi^\epsilon_{\alpha,p}\big\|_{L^2(\X)}$, for $(\alpha,p)\in\Gamma\times\underline{\nB}$\,.  
We shall do that by treating separately the two differences:
\beq \label{F-PN-VII-3}
\big\|[\Theta^\epsilon_{\B}-\widetilde{Q}^\epsilon_{\B}]\,\tpsi^\epsilon_{\alpha,p}\big\|_{L^2(\X)},\qquad\big\|[\widetilde{Q}^\epsilon_{\B}-\bb1]\,\tpsi^\epsilon_{\alpha,p}\big\|_{L^2(\X)}.
\eeq
For the first difference we use Lemma \ref{L-theta-tq}, while
for the second norm in \eqref{F-PN-VII-3} let us start from Proposition \ref{P-II-2} and compute using \eqref{F-II-1}:
\begin{align*}
\big[[\widetilde{Q}^\epsilon_{\B}-\bb1]\,\tpsi^\epsilon_{\alpha,p}\big](x)&=\underset{\gamma\in\Gamma,\,q\in\underline{n_\B}}{\sum}\hspace*{-0.3cm}\mathfrak{T}^{\epsilon}_\gamma\psi_q(x)\,\int_{\Z}dz\,\overline{\mathfrak{T}^{\epsilon}_\gamma\psi_q(z)}\,\big(\mathfrak{T}^{\epsilon}_\alpha\psi_{p}(z)\big)\,-\,\mathfrak{T}^{\epsilon}_\alpha\psi_{p}(x)\\
&=\epsilon\hspace*{-0.3cm}\underset{\gamma\ne\alpha,\,q\in\underline{n_\B}}{\sum}\hspace*{-0.3cm}\big[[\tilde{\mathfrak{m}}^{\epsilon}_{\B}]_{\alpha,\beta}\big]_{p,q}\,\mathfrak{T}^{\epsilon}_\gamma\psi_q(x).
\end{align*}
Appendix \ref{SSS-PN-VI-2} may be used to obtain precise estimations and an $\epsilon$-expansion  for the remainder $\big[[\tilde{\mathfrak{m}}^{\epsilon}_{\B}]_{\alpha,\beta}\big]_{p,q}$.		
\end{proof}

\paragraph{The infinite matrix of the effective Hamiltonian.}
Let us study the matrix of the {\it ``reduced magnetic Hamiltonian"} $\mathfrak{M}^{\epsilon}_{\B}[\ham^\epsilon_{\B}]$, using Proposition \ref{P-f-2}, formula \eqref{F-PN-V-10} and the equality:
\begin{align*}
	&\big((\tau_{-\alpha}\psi_p)\,,\,\big[\Int\,\mathfrak{K}[h^\circ]\big]\,(\tau_{-\beta}\psi_q)\big)_{L^2(\X)}=\big((\tau_{-\alpha}\psi_p)\,,\,\big[\Int\,\mathfrak{K}[h_{\B}]\big]\,(\tau_{-\beta}\psi_q)\big)_{L^2(\X)}.
\end{align*}
We get
\beq\label{F-PN-V-11}
\begin{aligned}
\big[\mathfrak{M}^{\epsilon}_{\B}[\ham^\epsilon_{\B}]_{\alpha,\beta}\big]_{p,q}&=\big(\mathfrak{T}^{\epsilon}_\alpha\psi_p\,,\,H^\epsilon\mathfrak{T}^{\epsilon}_\beta\, \psi_q\big)_{L^2(\X)}=\big(\mathfrak{T}^{\epsilon}_\alpha\psi_p\,,\,\big[\Int\,\tLambda^{\epsilon}\mathfrak{K}[h^\circ]\big]\mathfrak{T}^{\epsilon}_\beta\, \psi_q\big)_{L^2(\X)}\\ 
&\ =\tLambda^{\epsilon}(\alpha,\beta)\Big[\big[\mathring{\mathfrak{m}}_{\B}[H_{\B}]_{\alpha-\beta}\big ]_{p,q}+\epsilon\big[\widetilde{\mathfrak{M}}^{\epsilon}_{\B}\big[\mathfrak{K}[h^\circ]\big]_{\alpha,\beta}\big]_{p,q}\Big]\,,
\end{aligned}
\eeq
with $\quad\widetilde{\mathfrak{M}}^{\epsilon}_{\B}\big[\mathfrak{K}[h^\circ]\big]_{\alpha,\beta}\big]_{p,q}=$:
\begin{align}\nonumber
&=\epsilon^{-1}\Lambda^\epsilon(\alpha,\beta)\Big\langle\,\mathfrak{K}[h^\circ]\,,\,\Big[1-\Omega^\epsilon(\alpha,\cdot,\cdot)\big(1\otimes\Omega^\epsilon(\alpha,\beta,\cdot)\big)\Big]\,\big(\overline{\tau_{-\alpha}\psi_p}\otimes\tau_{-\beta}\psi_q\big)\Big\rangle_{\mathscr{S}(\X\times\X)}\\ \nonumber
&\ =-i\Big\langle\,\mathfrak{K}[h^\circ]\,,\,\Big[\Big(\int_{<\alpha,\cdot,\cdot>}\,B^\epsilon\Big)+\Big(1\otimes\int_{<\alpha,\beta,\cdot>}\,B^\epsilon\Big)\Big]\big(\overline{\tau_{-\alpha}\psi_p}\otimes\tau_{-\beta}\psi_q\big)\Big\rangle_{\mathscr{S}(\X\times\X)}+\,\mathcal{O}(\epsilon)\\ \label{F-PN-V-12}
&\qquad\,=:\,\big[\mathfrak{m}^{\epsilon}_{\B}\big[\mathfrak{K}[h^\circ]\big]_{\alpha,\beta}\big]_{p,q}\,+\,\mathcal{O}(\epsilon).
\end{align}
with $\mathfrak{m}^{\epsilon}_{\B}[\mathfrak{K}[h^\circ]]$ given explicitely in Formula \eqref{F-PN-VI-1}.\\

We have thus obtained the proof of points \ref{point1} and \ref{point2} in Theorem \ref{T-I}.

\subsection{Connection with the Peierls-Onsager substitution procedure.}

Let us very briefly summarize the main results of this section, giving an informal short version of the statement  2 in  Theorem \ref{T-I}.

	Given a $\Gamma$-periodic pseudo-differential operator with an isolated Bloch family $\B$, the dynamics of the states in the subspace  associated with  the family $\B$ is given by an infinite Toeplitz matrix $\mM_{\B}[H_{\B}]$ (see \eqref{F-PN-V-10}) defined by a rapidly decaying $\Gamma$-indexed sequence $\mathring{\mm}_{\B}[H_{\B}]$ (see \eqref{F-PN-2}) (i.e. $\mM_{\B}[H_{\B}]_{\alpha,\beta}=\mathring{\mm}_{\B}[H_{\B}]_{\alpha-\beta}\in\mathscr{M}(\nB)$). Superposing a weak, regular long-range magnetic field, one can still isolate a "quasi-invariant" subspace $\mathfrak{L}^\epsilon_{\B}$ (Definition \ref{D-II-1}), with an error of the order of magnitude of the magnetic field, and an {\it 'effective Hamiltonian"} approximating the real dynamics and given by an infinite $\Gamma$-indexed matrix. Let us emphasize here that the singular part of this approximation, coming from the long-range character of the magnetic field, is treated as a magnetic quantization of the free matrix valued symbol $\mathring{\mm}_{\B}[H_{\B}]$ in the spirit of the Peierls-Onsager procedure, isolating also a 'regular' part that is small in norm of the order of magnitude of the magnetic field (see \eqref{F-PN-V-11} and \eqref{F-PN-V-12}):
	\beq \label{F-PO}
	\mM^\epsilon_{\B}[\ham^\epsilon_{\B}]_{\alpha,\beta}=\tLambda_{\Gamma}^\epsilon(\alpha,\beta)\mathring{\mm}_{\B}[H_{\B}]_{\alpha-\beta}\,+\,\epsilon\tLambda_{\Gamma}^\epsilon(\alpha,\beta)\widetilde{\mm}^\epsilon_{\B}\big[\mathfrak{K}[h^\circ]\big]_{\alpha,\beta}\,+\,\mathcal{O}(\epsilon^2)\,,
	\eeq
	(where we have denoted by $\tLambda_{\Gamma}$ the restriction of $\tLambda:\X\times\X\rightarrow\mathbb{S}^1$ to $\Gamma\times\Gamma$). 	
	In fact, the 'singular part' in \eqref{F-PO} may be written as:	
	\begin{align}
		\tLambda_{\Gamma}^\epsilon(\alpha,\beta)\mathring{\mm}_{\B}[H_{\B}]_{\alpha-\beta}&=\tLambda_{\Gamma}^\epsilon(\alpha,\beta)\underset{\gamma\in\Gamma}{\sum}[\mathring{U}(-\gamma)]_{\alpha,\beta}\otimes\big(\mathcal{F}_{\T_*}\widehat{\mm}_{\B}[\widehat{H}_{\B}]\big)(\gamma)\\
		&=\underset{\gamma\in\Gamma}{\sum}\big[\tLambda_{\Gamma}^\epsilon\,\mathring{U}(-\gamma)\big]_{\alpha,\beta}\otimes\big(\mathcal{F}_{\T_*}\widehat{\mm}_{\B}[\widehat{H}_{\B}](\gamma)\,.
	\end{align}
	This may be considered as the matrix of the following operator in $\ell^2(\Gamma)\otimes\Co^{\B}$:
	\beq
	\tOp^{\epsilon}\big(\widehat{\mm}_{\B}[\widehat{H}_{\B}]\big)\,:=\,\underset{\gamma\in\Gamma}{\sum}\big[\tLambda_{\Gamma}^\epsilon\,\mathring{U}(-\gamma)\big]\otimes\big(\mathcal{F}_{\T_*}\widehat{\mm}_{\B}[\widehat{H}_{\B}]\big)(\gamma)\,,
	\eeq
	that we may compare with the unperturbed one in \eqref{F-f-4}.

\section{Dynamics of the perturbed isolated Bloch family.}
\label{S-4}

Finally we proceed to compare the real perturbed Hamiltonian $H^\epsilon$ (see \eqref{D-Heps}) with the \textit{``magnetic perturbation of the isolated Bloch family Hamiltonian"} \eqref{DF-m-band-H}.

\subsection{Reduction to the subspace ${P^{\epsilon}_\B\,L^2(\X)}$}

\subsubsection{Main statement} This subsection is devoted to the proof of point \ref{point2}{\it i} in Theorem \ref{T-I} that is  contained in the following statement:
\begin{theorem}\label{T-I-1} Given a Hamiltonian $H^\circ$ as in \eqref{D-Hcirc} satisfying Hypotheses \ref{H-isBf}, a magnetic field as in \eqref{HF-Beps} and the perturbed Hamiltonian $H^{\epsilon}$ as in \eqref{D-Heps}, there exists $\delta_0>0$ such that for any $\delta\in(0,\delta_0]$, if we denote by $J^\delta_\B:=\big(E_-+{2}\delta\,,\,E_+-{2}\delta\big)$, there exist constants $\epsilon_0 >0$ and $C>0$ such that, for any $\epsilon\in[0,\epsilon_0]$,
	there exists an orthogonal projection $P^{\epsilon}_\B$ and an {effective} magnetic Hamiltonian $\ham^{\epsilon}_\B:=P^{\epsilon}_\B\,H^{\epsilon}\,P^{\epsilon}_\B$ commuting with $P^{\epsilon}_\B$, satisfying:
	\begin{enumerate}
		\item For any $\lambda\in J^\delta_\B$ the operator $(\bb1-P^{\epsilon}_\B)( H^{\epsilon}-\lambda\bb1)(\bb1-P^{\epsilon}_\B)$ is invertible in the subspace $(\bb1-P^{\epsilon}_\B)L^2(\X)$ and its inverse {denoted by} $[R^\bot_{\epsilon}(\lambda)]$ is norm-bounded  in $\mathbb{B}\big((\bb1-P^{\epsilon}_\B)L^2(\X)\big)$ for $\lambda$ in any compact subinterval $J\subset\,J^\delta_\B$. 
		\item We have the {equality}:
		\[J\cap\sigma(H^{\epsilon}) =  J\cap\sigma\Big(\big \{\ham^{\epsilon}_\B\,-\,P^{\epsilon}_\B H^{\epsilon}[R^\bot_{\epsilon,\cc}(\lambda)]H^{\epsilon}P^{\epsilon}_\B\big \}\big|_{P^{\epsilon}_{\B}\,L^2(\X)}\Big)\,.\]
		\item For any $\lambda\in\,J^\delta_\B\setminus\sigma(H^{\epsilon})$, considering the orthogonal decomposition $$L^2(\X)=P^{\epsilon}_{\B}\,L^2(\X)\,\oplus\,(\bb1-P^{\epsilon}_{\B})L^2(\X)$$ and  denoting by $[R^\sim_{\epsilon}(\lambda)]$ the inverse {in $P^{\epsilon}_\B\,L^2(\X)$} of the operator $$\ham^{\epsilon}_\B\,-\,P^{\epsilon}_\B H^{\epsilon}[R^\bot_{\epsilon}(\lambda)]H^{\epsilon}P^{\epsilon}_\B\,-\,\lambda\,P^{\epsilon}_\B$$ we have the {Feshbach}-Schur block decomposition:
		\begin{align*}
			&\big(H^{\epsilon}-\lambda\bb1\big)^{-1}\\ 
			&\qquad = 
			\left(\begin{array}{cc}
				[R^\sim_{\epsilon}(\lambda)] & -[R^\sim_{\epsilon}(\lambda)]H^{\epsilon}[R^\bot_{\epsilon}(\lambda)] \\
				-[R^\bot_{\epsilon}(\lambda)] H^{\epsilon}[R^\sim_{\epsilon}(\lambda)] & [R^\bot_{\epsilon}(\lambda)]+ [R^\bot_{\epsilon}H^{\epsilon}[R^\sim_{\epsilon}(\lambda)] H^{\epsilon}[R^\bot_{\epsilon}(\lambda)]
			\end{array}\right)
		\end{align*}
		
		and the estimate:
		\beq \label{F-est-FS-epsilon}
		\big\|P^{\epsilon}_\B H^{\epsilon}[R^\bot_{\epsilon}( \lambda)]H^{\epsilon}P^{\epsilon}_\B\big\|_{\mathbb{B}(L^2(\X))}\,\leq \,C  \epsilon^2\,\big\|R^\bot_{\epsilon}(\lambda)\big\|_{\mathbb{B}(L^2(\X))}.
		\eeq
	\end{enumerate}
\end{theorem}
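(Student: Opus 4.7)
The plan is to carry out a standard Feshbach--Schur reduction in the orthogonal splitting $L^2(\X) = P^\epsilon_\B L^2(\X) \oplus Q^\epsilon_\B L^2(\X)$, where $Q^\epsilon_\B := \bb1 - P^\epsilon_\B$. The two key ingredients, both already established, are the operator-norm commutator bound $\big\|[H^\epsilon, P^\epsilon_\B]\big\|_{\mathbb{B}(L^2(\X))} \leq C\epsilon$ from \eqref{P-IV-2} and the spectral localization \eqref{F-m-sp-est} of the auxiliary Hamiltonian $H^\epsilon_\bot = \Op^\epsilon(h_0 + h_\infty)$, whose spectrum avoids $J^\delta_\B$ at distance at least $\delta$. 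From these I would first observe that the off-diagonal blocks are $\mathcal{O}(\epsilon)$ in operator norm: since $p^\epsilon_\B \in S^{-\infty}(\Xi)$ by Remark \ref{R-p-eps-B}, the range of $P^\epsilon_\B$ lies in the domain $\mathscr{H}^p_A(\X)$ of $H^\epsilon$, so $H^\epsilon P^\epsilon_\B$ extends boundedly and the identity $P^\epsilon_\B H^\epsilon Q^\epsilon_\B = [P^\epsilon_\B, H^\epsilon] Q^\epsilon_\B$ combined with \eqref{P-IV-2} delivers the bound (and analogously for $Q^\epsilon_\B H^\epsilon P^\epsilon_\B$).

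For point 1, I would compare $Q^\epsilon_\B H^\epsilon Q^\epsilon_\B$ with $H^\epsilon_\bot$. A symbolic calculation using $p^\epsilon_\B = p_\B + \mathcal{O}_{S^{-\infty}}(\epsilon)$ together with the standard expansion of $\sharp^\epsilon$ around $\sharp^{B^\circ}$ shows that $(1 - p^\epsilon_\B) \sharp^\epsilon h^\circ \sharp^\epsilon (1 - p^\epsilon_\B) - h_\bot$ lies in $S^{-\infty}(\Xi)$ with semi-norms of order $\epsilon$, relying on the $\epsilon = 0$ identity $(1 - p_\B) \sharp h^\circ \sharp (1 - p_\B) = h_\bot$ which follows from $[H^\circ, P_\B] = 0$. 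Thus $Q^\epsilon_\B H^\epsilon Q^\epsilon_\B - H^\epsilon_\bot$ is a bounded $\mathcal{O}(\epsilon)$ perturbation, and for $v \in Q^\epsilon_\B \mathcal{D}(H^\epsilon)$ and $\lambda \in J^\delta_\B$ one gets
\[
\big\|Q^\epsilon_\B (H^\epsilon - \lambda) v\big\| \geq \|(H^\epsilon_\bot - \lambda) v\| - C\epsilon\|v\| \geq (\delta - C\epsilon)\|v\|,
\]
which, after choosing $\epsilon_\delta$ small enough, gives the bounded invertibility of $Q^\epsilon_\B(H^\epsilon - \lambda)Q^\epsilon_\B$ on $Q^\epsilon_\B L^2(\X)$ with inverse of norm at most $2/\delta$ uniformly on compact subintervals of $J^\delta_\B$. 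Points 2 and 3 then follow from the block LU factorization in the $P^\epsilon_\B \oplus Q^\epsilon_\B$ decomposition (in the variant developed in \cite{CHP-2, CHP-4}): the resulting Schur complement identity simultaneously produces the spectral equivalence of point 2 and the explicit matrix representation of $(H^\epsilon - \lambda)^{-1}$ of point 3. Finally, the quadratic estimate \eqref{F-est-FS-epsilon} drops out of the factorization
\[
P^\epsilon_\B H^\epsilon [R^\bot_\epsilon(\lambda)] H^\epsilon P^\epsilon_\B = \big(P^\epsilon_\B H^\epsilon Q^\epsilon_\B\big) [R^\bot_\epsilon(\lambda)] \big(Q^\epsilon_\B H^\epsilon P^\epsilon_\B\big)
\]
by multiplying the two off-diagonal $\mathcal{O}(\epsilon)$ norms established in the preliminary step.

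The main obstacle, and the only non-routine part, is the symbolic comparison between $Q^\epsilon_\B H^\epsilon Q^\epsilon_\B$ and the unbounded $H^\epsilon_\bot$: although these agree at principal-symbol level, certifying that their difference is a genuinely \emph{bounded} operator of norm $\mathcal{O}(\epsilon)$, rather than merely a lower-order perturbation, requires tracking all remainders in the $S^{-\infty}(\Xi)$ topology, which only works because $P^\epsilon_\B$ is genuinely smoothing. Once this is handled via the magnetic pseudo-differential calculus recalled in Appendix \ref{A-m-PsiDO} and the symbolic estimates of Section \ref{S-Pfr-ibBf}, the remainder of the argument is pure algebraic manipulation of $2 \times 2$ operator matrices.
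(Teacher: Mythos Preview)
Your proposal is correct and follows essentially the same route as the paper: verify the two conditions of the abstract Feshbach--Schur Hypothesis \ref{H-II} by (i) using the commutator bound \eqref{P-IV-2} for the off-diagonal smallness and (ii) comparing $Q^\epsilon_\B H^\epsilon Q^\epsilon_\B$ with $H^\epsilon_\bot$ through the magnetic symbolic calculus (the paper's computation culminating in \eqref{F-dif-h-bot}) together with the spectral localization \eqref{F-m-sp-est}, after which points 2, 3 and the estimate \eqref{F-est-FS-epsilon} are read off from Proposition~\ref{P-SchurC}. You have also correctly flagged the one genuinely non-formal step, namely that the difference $Q^\epsilon_\B H^\epsilon Q^\epsilon_\B - H^\epsilon_\bot$ is a \emph{bounded} $\mathcal{O}(\epsilon)$ correction thanks to the smoothing nature of $p^\epsilon_\B$.
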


\begin{remark}\label{R-ext-T-I}
	We can extend the estimate \eqref{F-est-FS-epsilon} and point 3  of the above theorem {to}  any complex number $\lambda$ in $\{\zz\in \mathbb C\,,\, \Re\hspace*{-1pt}{\cal{e}}\zz\in\mathring{J}_\B\}$, with uniform {estimates} with respect to $\Im\hspace*{-1pt}\mathcal{m}\,\lambda$.
\end{remark}

The main {ingredient} in the proof of Theorem \ref{T-I} is the abstract spectral  result presented in Paragraph~\ref{A-SchurC} (that we also {used} in \cite{CHP-2, CHP-3}), together with the magnetic pseudodifferential calculus. We choose $P^\epsilon_{\B}$ as in \eqref{D-II-1} and $\ham^\epsilon_{\B}$ as in \eqref{DF-m-band-H}.

\subsubsection{Schur complement and reduction to a quasi-invariant subspace.}\label{A-SchurC}

We shall consider the following abstract setting already introduced and used in \cite{CHP-2,CHP-4}.

\begin{hypothesis}\label{H-II}
	In a separable complex Hilbert space $\mathcal{H}$ we consider a family of pairs $(H_\kappa,P_\kappa)$ indexed by $\kappa\in[0,\kappa_0]$ for some $\kappa_0>0$, where $H_\kappa:\mathcal{D}(H_\kappa)\rightarrow\mathcal{H}$ is a lower-semibounded self-adjoint operator and $\Pi_\kappa=\Pi_\kappa^*=\Pi_\kappa^2$ is an orthogonal projection, such that, for  $\Pi_\kappa^\bot:=\bb1-\Pi_\kappa$, we have the properties:
	\begin{enumerate}
		\item there exists $C>0$ such that for any $\kappa\in[0,\kappa_0]$  we have that  $\Pi_\kappa\mathcal{H}\subset\mathcal{D}(H_\kappa)$ and $ \|\Pi_\kappa^\bot H_\kappa\Pi_\kappa \|_{\mathbb B(\mathcal{H})}\, \leq\,C\,\kappa$\,; 
		\item there exists an interval $J\subset\mathbb{R}$ with non-void interior, such that for any $\kappa\in[0,\kappa_0]$ and any $t\in J$ the operator $\Pi_\kappa^\bot H_\kappa\Pi_\kappa^\bot\,-\,t\Pi_\kappa^\bot$ is invertible  as operator in $\Pi_\kappa^\bot\mathcal{H}$ with the inverse being uniformly bounded on $J$.
	\end{enumerate}
\end{hypothesis} 

We notice that under Hypothesis \ref{H-II} we have the identity:
\[\nonumber 
\Pi_\kappa H_\kappa R^\bot_\kappa(t)H_\kappa\Pi_\kappa=\Pi_\kappa H_\kappa \Pi_\kappa^\bot R^\bot_\kappa(t)\Pi_\kappa^\bot H\Pi_\kappa\,\in\,\mathbb{B}(\mathcal{H})
\] 
and the estimate (for some $C>0$):
\beq \label{F-est-FS}
\big\|\Pi_\kappa H_\kappa R^\bot_\kappa(t)H_\kappa\Pi_\kappa\big\|_{\mathbb{B}(\mathcal{H})}\,\leq\,C \kappa^2\,\big\|R^\bot_\kappa(t)\big\|_{\mathbb{B}(\mathcal{H})}.
\eeq

A simple algebraic computation allows us to prove the following statement about the Schur complement  (\cite{CH}).

\begin{proposition}\label{P-SchurC} 
	Under Hypothesis \ref{H-II} we have that:
	\begin{itemize}
		\item $t\in J\cap\sigma(H)$ if and only if $t\in J\cap\sigma\big(\Pi_\kappa H_\kappa\Pi_\kappa\,-\,\Pi_\kappa HR^\bot_\kappa(t)H_\kappa\Pi_\kappa\big)$ \,;
		\item if we denote by $R^\sim_\kappa(t):=\big(\Pi_\kappa(H_\kappa-t)\Pi_\kappa\,-\,\Pi_\kappa H_\kappa R^\bot_\kappa(t)H_\kappa\Pi_\kappa\big)^{-1}$ as operator in $\Pi_\kappa\mathcal{H}$, we have the identity
		\vspace*{-1cm}
		
		\[\begin{split}
			\big(H_\kappa-t\bb1_{\mathcal{H}}\big)^{-1}&=\left(\begin{array}{cc}
				\Pi_\kappa(H_\kappa-t)\Pi_\kappa & \Pi_\kappa H_\kappa\Pi_\kappa^\bot \\
				\Pi_\kappa^\bot H_\kappa\Pi_\kappa & \Pi_\kappa^\bot(H_\kappa-t)\Pi_\kappa^\bot
			\end{array}\right)^{-1}\\ &=\  
			\left(\begin{array}{cc}
				R^\sim_\kappa(t) & -R^\sim_\kappa(t)H_\kappa R^\bot_\kappa(t) \\
				-R^\bot_\kappa(t) HR^\sim_\kappa(t) & R^\bot_\kappa(t)+ R^\bot_\kappa(t)H_\kappa R^\sim_\kappa(t) H_\kappa R^\bot_\kappa(t)
			\end{array}\right).
		\end{split}\] 
	\end{itemize}
\end{proposition}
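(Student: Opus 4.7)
The plan is to perform the standard block-matrix factorization of $H_\kappa - t\bb1$ with respect to the orthogonal decomposition $\mathcal{H} = \Pi_\kappa\mathcal{H} \oplus \Pi_\kappa^\bot\mathcal{H}$ and then read off both items from the resulting Schur-complement identity. First I would verify that the upper-left and off-diagonal blocks are bounded: $\Pi_\kappa H_\kappa \Pi_\kappa$ is bounded because $H_\kappa$ is defined on all of $\Pi_\kappa\mathcal{H}$ by Hypothesis~\ref{H-II}(1) (combined with the closed graph theorem), $\Pi_\kappa^\bot H_\kappa \Pi_\kappa$ is bounded directly by the same hypothesis, and $\Pi_\kappa H_\kappa \Pi_\kappa^\bot$ is bounded as the adjoint of the previous one. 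By Hypothesis~\ref{H-II}(2), for every $t \in J$ the lower-right block $D := \Pi_\kappa^\bot(H_\kappa - t)\Pi_\kappa^\bot$ is invertible in $\Pi_\kappa^\bot\mathcal{H}$ with bounded inverse $R^\bot_\kappa(t)$.

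Setting $A := \Pi_\kappa(H_\kappa - t)\Pi_\kappa$, $B := \Pi_\kappa H_\kappa \Pi_\kappa^\bot$ and $C := \Pi_\kappa^\bot H_\kappa \Pi_\kappa$, one has on $\mathcal{D}(H_\kappa)$ the purely algebraic factorization
\[
\begin{pmatrix} A & B \\ C & D \end{pmatrix} = \begin{pmatrix} \bb1 & B D^{-1} \\ 0 & \bb1 \end{pmatrix} \begin{pmatrix} A - B D^{-1} C & 0 \\ 0 & D \end{pmatrix} \begin{pmatrix} \bb1 & 0 \\ D^{-1} C & \bb1 \end{pmatrix},
\]
in which the two triangular factors are boundedly invertible (their inverses are obtained by flipping the signs of the off-diagonal entries). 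Therefore $H_\kappa - t\bb1$ is boundedly invertible on $\mathcal{H}$ if and only if the Schur complement $A - BD^{-1}C = \Pi_\kappa(H_\kappa - t)\Pi_\kappa - \Pi_\kappa H_\kappa R^\bot_\kappa(t) H_\kappa \Pi_\kappa$ is invertible in $\Pi_\kappa\mathcal{H}$. Rewriting this as $S_\kappa(t) - t\Pi_\kappa$ with $S_\kappa(t) := \Pi_\kappa H_\kappa \Pi_\kappa - \Pi_\kappa H_\kappa R^\bot_\kappa(t) H_\kappa \Pi_\kappa$ yields the first item, and inverting the three factors and multiplying them back in the opposite order produces exactly the claimed $2 \times 2$ block formula for $(H_\kappa - t\bb1)^{-1}$ in terms of $R^\sim_\kappa(t)$ and $R^\bot_\kappa(t)$.

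The main (and essentially only) obstacle is bookkeeping of domains for the unbounded operator $H_\kappa$: one must check that $R^\bot_\kappa(t)$ sends $\Pi_\kappa^\bot\mathcal{H}$ into $\Pi_\kappa^\bot\mathcal{H} \cap \mathcal{D}(H_\kappa)$, so that the compositions $\Pi_\kappa H_\kappa R^\bot_\kappa(t) H_\kappa \Pi_\kappa$ and the off-diagonal products $R^\sim_\kappa(t) H_\kappa R^\bot_\kappa(t)$, $R^\bot_\kappa(t) H_\kappa R^\sim_\kappa(t)$ occurring in the block formula are well-defined as bounded operators between the appropriate subspaces. This is immediate from the very definition of $R^\bot_\kappa(t)$ as the inverse of $\Pi_\kappa^\bot H_\kappa \Pi_\kappa^\bot - t\Pi_\kappa^\bot$ in $\Pi_\kappa^\bot\mathcal{H}$, together with Hypothesis~\ref{H-II}(1) which secures $\Pi_\kappa\mathcal{H} \subset \mathcal{D}(H_\kappa)$. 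Once these domain verifications are in place the proof reduces to the displayed algebraic identity and a direct multiplication of the three inverse factors.
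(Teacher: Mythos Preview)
Your proposal is correct and follows precisely the standard Schur-complement block factorization that the paper has in mind; the paper itself does not spell out any details, stating only that ``a simple algebraic computation allows us to prove the following statement about the Schur complement'' with a reference to \cite{CH}. Your write-up is therefore more detailed than the paper's own treatment, but the approach is the same.
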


\begin{corollary}\label{C-SchurC} 
	Under  Hypothesis \ref{H-II}, the operator $\ham_{\kappa}:=\Pi_{\kappa}H_\kappa\Pi_{\kappa}$ defines  a bounded self-adjoint operator acting in $\Pi_{\kappa}\mathcal{H}$ and there exists $C>0$ such that, for all $\kappa\in [0,\kappa_0]$, 
	$$
	\max\left\{\underset{\lambda\in J\cap\sigma(H_\kappa)}{\sup}\dist\Big(\lambda\,,\,\sigma\big(\ham_{\kappa}\big)\Big)\ ,\ \underset{\lambda\in J\cap\sigma(\ham_{\kappa})}{\sup}\dist\Big(\lambda\,,\,\sigma(H_\kappa)\Big)\right\}\,\leq\,C\,\kappa^2\,\big\|R^\bot_\kappa(t)\big\|_{\mathbb{B}(\mathcal{H})}.
	$$
\end{corollary}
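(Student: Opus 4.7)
The plan is to deduce Corollary \ref{C-SchurC} directly from Proposition \ref{P-SchurC} and the Feshbach--Schur estimate \eqref{F-est-FS}, treating the two directions of the Hausdorff bound separately. First I would record the preliminary observations that will be used throughout. Under Hypothesis \ref{H-II}(1), $\Pi_\kappa\mathcal H\subset\mathcal D(H_\kappa)$, so the closed graph theorem applied to $H_\kappa\Pi_\kappa$ forces $\Pi_\kappa H_\kappa\Pi_\kappa$ to be bounded on $\Pi_\kappa\mathcal H$; it is evidently symmetric, hence $\ham_\kappa$ is bounded self-adjoint. For any real $t\in J$, Hypothesis \ref{H-II}(2) ensures that $R^\bot_\kappa(t)$ is bounded self-adjoint on $\Pi^\bot_\kappa\mathcal H$, so the operator $K_\kappa(t):=\ham_\kappa-\Pi_\kappa H_\kappa R^\bot_\kappa(t)H_\kappa\Pi_\kappa$ is also bounded self-adjoint on $\Pi_\kappa\mathcal H$, and by \eqref{F-est-FS} one has $\|K_\kappa(t)-\ham_\kappa\|\le C\kappa^2\|R^\bot_\kappa(t)\|_{\mathbb B(\mathcal H)}$.

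For the first direction, pick $\lambda\in J\cap\sigma(H_\kappa)$. Proposition \ref{P-SchurC} states that $\lambda\in\sigma(K_\kappa(\lambda))$. Since $K_\kappa(\lambda)$ is a self-adjoint perturbation of $\ham_\kappa$ of norm at most $C\kappa^2\|R^\bot_\kappa(\lambda)\|$, the classical estimate $\sigma(A+V)\subset\{z:\dist(z,\sigma(A))\le\|V\|\}$ for bounded self-adjoint operators yields $\dist(\lambda,\sigma(\ham_\kappa))\le C\kappa^2\|R^\bot_\kappa(\lambda)\|$, as required.

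The second direction is the main obstacle, because the spectral equivalence in Proposition \ref{P-SchurC} involves $K_\kappa(t)$ evaluated at the \emph{same} point $t$, so one cannot simply invoke symmetric perturbation bounds. My strategy is a contradiction argument using the block decomposition. Fix $\mu\in J\cap\sigma(\ham_\kappa)$, set $\eta:=C\kappa^2\|R^\bot_\kappa(\mu)\|$, and assume $\dist(\mu,\sigma(H_\kappa))>2\eta$. Then $(H_\kappa-\mu\bb1)^{-1}$ exists with norm at most $1/(2\eta)$. Reading off the $(1,1)$-block in Proposition \ref{P-SchurC} gives $R^\sim_\kappa(\mu)=(K_\kappa(\mu)-\mu\Pi_\kappa)^{-1}|_{\Pi_\kappa\mathcal H}$ with $\|R^\sim_\kappa(\mu)\|\le\|(H_\kappa-\mu)^{-1}\|\le 1/(2\eta)$. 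Writing $\ham_\kappa-\mu\Pi_\kappa=(K_\kappa(\mu)-\mu\Pi_\kappa)+\Pi_\kappa H_\kappa R^\bot_\kappa(\mu)H_\kappa\Pi_\kappa$ and factoring out $K_\kappa(\mu)-\mu\Pi_\kappa$, we obtain
\[
\ham_\kappa-\mu\Pi_\kappa=(K_\kappa(\mu)-\mu\Pi_\kappa)\bigl(\Pi_\kappa+R^\sim_\kappa(\mu)\,\Pi_\kappa H_\kappa R^\bot_\kappa(\mu)H_\kappa\Pi_\kappa\bigr).
\]
By \eqref{F-est-FS} the perturbative term on the right has norm bounded by $(1/(2\eta))\cdot\eta=1/2<1$, so the second factor is invertible by Neumann series, whence $\ham_\kappa-\mu\Pi_\kappa$ is invertible on $\Pi_\kappa\mathcal H$. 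This contradicts $\mu\in\sigma(\ham_\kappa)$ and forces $\dist(\mu,\sigma(H_\kappa))\le 2\eta$. Combining the two directions and absorbing the factor $2$ into the constant gives the desired bound, with the norm $\|R^\bot_\kappa(t)\|_{\mathbb B(\mathcal H)}$ controlled uniformly over $J$ by Hypothesis \ref{H-II}(2).
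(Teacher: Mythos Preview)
Your argument is correct and follows the intended approach: the paper states Corollary \ref{C-SchurC} without a separate proof, treating it as an immediate consequence of Proposition \ref{P-SchurC} together with the estimate \eqref{F-est-FS}, and your write-up makes exactly this deduction explicit. One minor sharpening: in the second direction you can drop the factor $2$ by noting directly that if $\mu\notin\sigma(H_\kappa)$ then $K_\kappa(\mu)-\mu\Pi_\kappa$ is invertible with $\|R^\sim_\kappa(\mu)\|\le 1/d$ where $d=\dist(\mu,\sigma(H_\kappa))$, and since $\ham_\kappa-\mu\Pi_\kappa=(K_\kappa(\mu)-\mu\Pi_\kappa)+V$ fails to be invertible the Neumann criterion forces $\eta\ge\|V\|\ge 1/\|R^\sim_\kappa(\mu)\|\ge d$.
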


\subsubsection{Verification of Hypothesis \ref{H-II} and end of the proof of Theorem \ref{T-I}}

Let us recall the definitions and notations in \ref{SS-free-dyn}, \ref{SS-3.5} and \ref{SS-4.1} and using the context of Paragraph  \ref{A-SchurC}, let us take $\mathcal{H}=L^2(\X)$, $\kappa=\epsilon$, $\Pi_\kappa=P^{\epsilon}_\B$ and $H_\kappa=H^{\epsilon}$ for $P^\epsilon_{\B}$ as in \eqref{D-II-1}, for any $\epsilon\in[0,\epsilon_0]$. 

\paragraph{Verification of condition 1.} 

The previous Propositions \ref{P-IV-1} and \ref{P-III-1} clearly imply that $P^\epsilon_{\B}L^2(\X)\subset\mathcal{D}(H^\epsilon)$ while Formula \eqref{P-IV-2} clearly implies that:
\[
(\bb1-P^\epsilon_{\B})H^\epsilon{P}^\epsilon_{\B}=(\bb1-P^\epsilon_{\B})\big[H^\epsilon\,,\,P^\epsilon_{\B}\big]=\epsilon(\bb1-P^\epsilon_{\B})\,\mathfrak{X}^\epsilon_{\B}+\mathcal{O}(\epsilon^2)\,,
\]
and thus the desired estimation in the first condition of Hypothesis \ref{H-II}.

\paragraph{Verification of condition 2.} 

Let us consider the product $P^{\epsilon}_\bot H^{\epsilon}P^{\epsilon}_\bot$. Using Remark \ref{R-p-symb} we notice that $H_\bot=P_\bot HP_\bot$ has two spectral gaps: $[0,E_0]$ and $[E_-,E_+]$, with $0<E_0\leq E_-<E_+<\infty$. We shall
prove that condition 2 in Hypothesis \ref{H-II} is verified for some interval $J\subset[E_-,E_+]$ with non-void interior and the  pair $\big(H^{\epsilon}\,,\,P^{\epsilon}_\B\big)$ in $\mathcal{H}=L^2(\X)$ where we consider, as explained above, $\kappa=\epsilon$. Then, let us introduce:
\[
P^{\epsilon}_\bot:=\bb1-P^\epsilon_{\B}
\]
and study the difference:
\begin{align*}
P^{\epsilon}_\bot H^{\epsilon}P^{\epsilon}_\bot\,-\,H^\epsilon_\bot\,&=\,H^\epsilon-H^\epsilon_\bot+P^\epsilon_{\B}H^\epsilon{P}^\epsilon_{\B}-\big(P^\epsilon_{\B}H^\epsilon+H^\epsilon{P}^\epsilon_{\B}\big)\\
&=\,H^\epsilon-H^\epsilon_\bot+P^\epsilon_{\B}H^\epsilon{P}^\epsilon_{\B}+\big[H^\epsilon\,,\,P^\epsilon_{\B}\big]-2H^\epsilon{P}^\epsilon_{\B}\\
&=\,H^\epsilon-H^\epsilon_\bot-P^\epsilon_{\B}H^\epsilon{P}^\epsilon_{\B}+\big[H^\epsilon\,,\,P^\epsilon_{\B}\big]-2P^\epsilon_\bot{H}^\epsilon{P}^\epsilon_{\B}\\
&=\,H^\epsilon_{\B}-P^\epsilon_{\B}H^\epsilon{P}^\epsilon_{\B}-\epsilon\big(\bb1+2P^\epsilon_{\B}\big)\mathfrak{X}^\epsilon_{\B}+\mathcal{O}(\epsilon^2).
\end{align*}
Then we notice that:
\begin{align*}
	H^\epsilon_{\B}-P^\epsilon_{\B}H^\epsilon{P}^\epsilon_{\B}&=\Op^\epsilon\big(h_{\B}-p^\epsilon_{\B}\sharp^\epsilon{h}\sharp^\epsilon{p}^\epsilon_{\B}\big)=\Op^\epsilon\big(p_{\B}\sharp^\circ{h}\sharp^\circ{p}_{\B}-p^\epsilon_{\B}\sharp^\epsilon{h}\sharp^\epsilon{p}^\epsilon_{\B}\big)\\
	&=\Op^\epsilon\big(p_{\B}\sharp^\circ{h}\sharp^\circ{p}_{\B}-p_{\B}\sharp^\epsilon{h}\sharp^\epsilon\p_{\B}+p_{\B}\sharp^\epsilon{h}\sharp^\epsilon{p}_{\B}-p^\epsilon_{\B}\sharp^\epsilon{h}\sharp^\epsilon{p}^\epsilon_{\B}\big)\\
	&=\Op^\epsilon\big((p_{\B}\sharp^\circ{h})\sharp^\circ{p}_{\B}-(p_{\B}\sharp^\circ{h})\sharp^\epsilon{p}_{\B}\big)+\Op^\epsilon\big((p_{\B}\sharp^\circ{h}-p_{\B}\sharp^\epsilon{h})\sharp^\epsilon{p}_{\B}\big)\\
	&\quad+\Op^\epsilon\big((p_{\B}-p^\epsilon_{\B})\sharp^\epsilon{h}\sharp^\epsilon{p}^\epsilon_{\B}\big)+\Op^\epsilon\big({p}_{\B}\sharp^\epsilon{h}\sharp^\epsilon(p_{\B}-p^\epsilon_{\B})\big)
\end{align*}
and using also Proposition \ref{P-replP3_5} we conclude that:
\beq\begin{split} \label{F-dif-h-bot}
	&H^\epsilon_{\B}-P^\epsilon_{\B}H^\epsilon{P}^\epsilon_{\B}=\\ &\ =\epsilon\Op^\epsilon\big(\mathcal{r}_{\epsilon}(p_{\B},h_{\B})+\mathcal{r}_{\epsilon}(h,p_{\B})\sharp^\circ{p}_{\B}+(p_{\B}-p^\epsilon_{\B})\sharp^{\circ}p_{\B}+p_{\B}\sharp^{\circ}(p_{\B}-p^\epsilon_{\B})\big)\,+\,\mathcal{O}(\epsilon^2)\,.
\end{split}\eeq

The arguments in Subsection \ref{SS-free-dyn} imply that: $\sigma\big(H_\bot\big)\subset\{0\}\cup[E_0,E_- ]\cup [ E_+,+\infty)$. Thus we can use {the last inclusion listed in} \eqref{F-m-sp-est} in order to conclude that there exists $\delta_0>0$ such that for any $\delta\in(0,\delta_0]$ there exist constants $\epsilon_0(\delta)>0$ and $C>0$ such that for any $\epsilon\in[0,\epsilon_0]$:
$$
J:=\big(E_-+\delta\,,\,E_+-\delta\big)\subset\,\mathbb{R}\setminus\sigma\big(H^{\epsilon}_\bot\big).
$$ 
Thus we have proved the following statement:
\begin{proposition}\label{P-red-magn-Hamilt-0}
	Let $H^\circ$ in \eqref{D-Hcirc} satisfy Hypothesis \ref{H-isBf} and a perturbation by a magnetic field verifying  \eqref{HF-Beps}. With $J_{\B}:=[E_-,E_+]\subset\R$ as in Hypothesis \ref{H-isBf},   there exists $\delta_0>0$  such that for any $\delta\in(0,\delta_0]$ the interval $J^{\delta}_\B:=\big(E_-+\delta\,,\,E_+-\delta\big)$ is not void and there exists a constant $\epsilon_0(\delta)>0$ such that for any $\epsilon\in[0,\epsilon_0]$ and for any $t\in J^{\delta}_\B$ the operator   $(\bb1-P^{\epsilon}_\B)\big(H^{\epsilon}_\bot\,-\,t\bb1\big)(\bb1-P^{\epsilon}_\B)$ is invertible as a self-adjoint operator in  $(\bb1-P^{\epsilon}_\B)L^2(\X)$. This  inverse is denoted by $[R^\bot_{\epsilon}(t)]$ and is uniformly bounded on $J^{\delta}_\B$. 
\end{proposition}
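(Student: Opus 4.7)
The plan is to exploit the approximate commutation of $P^{\epsilon}_\B$ with $H^{\epsilon}_\bot$ combined with the spectral stability inclusion \eqref{F-m-sp-est}. The starting point is the unperturbed identity $H_\bot P_\B = 0$, equivalently $h_\bot \sharp^\circ p_\B = 0$, which follows from the factorization $h_\bot = p_\bot \sharp^\circ h^\circ \sharp^\circ p_\bot$ together with $p_\bot \sharp^\circ p_\B = 0$, both recorded in \eqref{F-f-5} and \eqref{DF-pa}.

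First I would verify that $H^{\epsilon}_\bot P^{\epsilon}_\B$ extends to a bounded operator on $L^2(\X)$ of norm $\mathscr{O}(\epsilon)$. Writing
\[
h_\bot \sharp^\epsilon p^\epsilon_\B \,=\, \big(h_\bot \sharp^\epsilon p^\epsilon_\B - h_\bot \sharp^\epsilon p_\B\big) + \big(h_\bot \sharp^\epsilon p_\B - h_\bot \sharp^\circ p_\B\big),
\]
Remark~\ref{R-p-eps-B} gives $p^\epsilon_\B - p_\B = \mathscr{O}(\epsilon)$ in every $S^{-\infty}(\Xi)$ seminorm, which controls the first bracket, while the standard comparison between $\sharp^\epsilon$ and $\sharp^\circ$ from the magnetic pseudodifferential calculus of Appendix~\ref{A-m-PsiDO} controls the second. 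Both pieces yield an $S^{-\infty}(\Xi)$ symbol of order $\epsilon$, hence a bounded operator of norm $\mathscr{O}(\epsilon)$ via a Calder\'on--Vaillancourt-type estimate. Taking adjoints gives the matching bound for the bounded extension of $P^{\epsilon}_\B H^{\epsilon}_\bot$.

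Next I would expand
\[
(\bb1-P^{\epsilon}_\B)(H^{\epsilon}_\bot - t\bb1)(\bb1 - P^{\epsilon}_\B)\,=\,(H^{\epsilon}_\bot - t\bb1) + tP^{\epsilon}_\B - H^{\epsilon}_\bot P^{\epsilon}_\B - P^{\epsilon}_\B H^{\epsilon}_\bot + P^{\epsilon}_\B H^{\epsilon}_\bot P^{\epsilon}_\B,
\]
whose last three terms are $\mathscr{O}(\epsilon)$ in operator norm by the previous step. By \eqref{F-m-sp-est}, for $\delta_0>0$ small enough and any $\delta\in(0,\delta_0]$, every compact subinterval $J\subset J^{\delta}_\B$ stays separated from $\sigma(H^{\epsilon}_\bot)\subset[-\delta,\delta]\cup[E_0-\delta,E_-+\delta]\cup[E_+-\delta,+\infty)$ by a positive distance uniformly in $\epsilon\in[0,\epsilon_0]$, so $(H^{\epsilon}_\bot - t\bb1)^{-1}$ exists on $L^2(\X)$ with norm uniformly bounded on $J$. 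To promote this to invertibility of the sandwiched operator on $(\bb1-P^{\epsilon}_\B)L^2(\X)$, I would set $u := (\bb1 - P^{\epsilon}_\B)(H^{\epsilon}_\bot - t\bb1)^{-1}(\bb1 + S^{\epsilon}(t))^{-1} v$ for $v\in(\bb1-P^{\epsilon}_\B)L^2(\X)$, with error operator
\[
S^{\epsilon}(t)\,:=\,-(\bb1-P^{\epsilon}_\B) H^{\epsilon}_\bot P^{\epsilon}_\B (H^{\epsilon}_\bot - t\bb1)^{-1}
\]
of norm $\mathscr{O}(\epsilon)$ on $(\bb1-P^{\epsilon}_\B)L^2(\X)$. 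A direct computation then shows $(\bb1-P^{\epsilon}_\B)(H^{\epsilon}_\bot - t\bb1)(\bb1-P^{\epsilon}_\B) u = v$, so the Neumann series for $(\bb1+S^{\epsilon}(t))^{-1}$ converges for $\epsilon$ small enough and yields the inverse $[R^\bot_{\epsilon}(t)]$ with the desired uniform bound.

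The main obstacle is not conceptual but technical: $H^{\epsilon}_\bot$ is unbounded, so all the operator factorizations above must be made sense of on dense subspaces. Since $p^{\epsilon}_\B \in S^{-\infty}(\Xi)$ maps $L^2(\X)$ into the magnetic Sobolev scale, the range of $P^{\epsilon}_\B$ lies in $\mathcal{D}(H^{\epsilon}_\bot)$, hence $H^{\epsilon}_\bot P^{\epsilon}_\B$ is everywhere defined and $P^{\epsilon}_\B H^{\epsilon}_\bot$ admits a bounded closure as its adjoint. Self-adjointness of the reduced operator on $(\bb1-P^{\epsilon}_\B)L^2(\X)$ with natural domain $(\bb1-P^{\epsilon}_\B)\mathscr{H}^p_A(\X)$ then follows from the self-adjointness of $H^{\epsilon}_\bot$ together with the symmetric sandwich by the orthogonal projection $\bb1-P^{\epsilon}_\B$.
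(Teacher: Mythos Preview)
Your proof is correct and follows essentially the same strategy as the paper: combine the identity $h_\bot\sharp\,p_\B=0$ (so that $H^{\epsilon}_\bot P^{\epsilon}_\B=\mathscr{O}(\epsilon)$ via Remark~\ref{R-p-eps-B} and Proposition~\ref{P-replP3_5}) with the spectral inclusion \eqref{F-m-sp-est} giving $J^\delta_\B\subset\R\setminus\sigma(H^{\epsilon}_\bot)$. Your explicit Neumann--series construction of the inverse on $(\bb1-P^{\epsilon}_\B)L^2(\X)$ actually fills in a step the paper leaves implicit; the paper only states the spectral inclusion and then declares the proposition proved, without spelling out the passage from invertibility of $H^{\epsilon}_\bot-t$ on the full space to invertibility of the compressed operator on the subspace. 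Two minor remarks: the factorization $h_\bot=p_\bot\sharp h^\circ\sharp p_\bot$ is not literally ``recorded'' in \eqref{F-f-5} and \eqref{DF-pa} (there $h_\bot:=h^\circ-h_\B$), but it does follow from those identities after a one-line computation using $P_\B H^\circ=H_\B$; and your $\sharp^\circ$ is the paper's undecorated Weyl product~$\sharp$.
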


Putting this result together with \eqref{F-dif-h-bot}, we obtain {that the} Condition 2 in Hypothesis~\ref{H-II} is satisfied, and Theorem \ref{T-I} is then a direct consequence of Proposition \ref{P-SchurC} and Corollary~\ref{C-SchurC}.

\subsection{The effective time evolution}\label{S-ev}

The proof of point \ref{point2}\textit{ii} in Theorem \ref{T-I} will follow from the following theorem.
\begin{theorem}\label{C-T-I} 
	Under the hypotheses of Theorem \ref{T-I}, given any  compact interval $J\subset J^\delta_\B$ (with $J^\delta_\B$ as in the Theorem \ref{T-I}), if we denote by $E_J( H^{\epsilon})$ the spectral projection of $H^\epsilon$ for the interval $J\subset\R$, there exist $C>0$ {and $\epsilon_0 >0$ } such that for any $\epsilon\in [0,\epsilon_0]$ and for all $v$ in the range of $E_J( H^{\epsilon})$, we have the estimation: 
	\beq \label{F-ev-est}
	\big\|e^{-itH^{\epsilon}}v - e^{-it\ham^{\epsilon}_{\B}}v\big\|_{L^2(\X)}\,\leq\,C\,\Big[\epsilon\,+\,{(1+|t|) ^3}\,\epsilon^2\Big]\,\|v\|_{L^2(\X)},\ \forall t\in\R\,.
	\eeq
\end{theorem}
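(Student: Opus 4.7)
The natural strategy combines the Helffer--Sjöstrand functional calculus with the Schur-Feshbach block decomposition provided by Theorem \ref{T-I-1}. First I would pick a cut-off $\varphi\in C_c^\infty(\mathring{J}^\delta_\B)$ with $\varphi\equiv 1$ on $J$, so that $v=\varphi(H^\epsilon)v$. Proposition \ref{P-est-Jepsilon} gives $\|(\bb1-P^\epsilon_\B)v\|\leq C\epsilon\|v\|$; writing $u:=P^\epsilon_\B v$, the triangle inequality reduces the problem to controlling $\|e^{-itH^\epsilon}u-e^{-it\ham^\epsilon_\B}u\|$, at the price of an $\mathcal{O}(\epsilon)\|v\|$ error uniform in $t$.

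Since $\ham^\epsilon_\B$ vanishes on $(\bb1-P^\epsilon_\B)L^2(\X)$, the vector $e^{-it\ham^\epsilon_\B}u$ stays in $P^\epsilon_\B L^2(\X)$, so I would split
\[
e^{-itH^\epsilon}u-e^{-it\ham^\epsilon_\B}u \;=\; P^\epsilon_\B\bigl(e^{-itH^\epsilon}u-e^{-it\ham^\epsilon_\B}u\bigr)\;+\;(\bb1-P^\epsilon_\B)\,e^{-itH^\epsilon}u.
\]
For the second piece, the standard Helffer--Sjöstrand commutator estimate (applied to $\varphi$) gives $[\varphi(H^\epsilon),P^\epsilon_\B]=\mathcal{O}(\epsilon)$, hence $u=\varphi(H^\epsilon)u+\mathcal{O}(\epsilon)\|v\|$, so $(\bb1-P^\epsilon_\B)e^{-itH^\epsilon}u=(\bb1-P^\epsilon_\B)\varphi(H^\epsilon)\,e^{-itH^\epsilon}u+\mathcal{O}(\epsilon)\|v\|$, and Proposition \ref{P-est-Jepsilon} together with $\|e^{-itH^\epsilon}\|=1$ gives $\|(\bb1-P^\epsilon_\B)e^{-itH^\epsilon}u\|\leq C\epsilon\|v\|$ uniformly in $t$. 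This absorbs all the time-independent $\mathcal{O}(\epsilon)$ contributions of the final estimate.

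For the $P^\epsilon_\B$-component I would apply Helffer--Sjöstrand to $f_t(x):=e^{-itx}\varphi(x)\in C_c^\infty(\R)$. After verifying $\|(\bb1-\varphi(\ham^\epsilon_\B))u\|=\mathcal{O}(\epsilon)\|v\|$ (which follows by a second Helffer--Sjöstrand comparison using Theorem \ref{T-I}.2.i and the block decomposition), one writes
\[
f_t(H^\epsilon)-f_t(\ham^\epsilon_\B) \;=\; -\tfrac{1}{\pi}\int_{\mathbb{C}}\overline{\partial}\tilde{f}_t(z)\bigl[(H^\epsilon-z)^{-1}-(\ham^\epsilon_\B-z)^{-1}\bigr]\,dL(z),
\]
with a degree $N=2$ almost-analytic extension satisfying $|\overline{\partial}\tilde{f}_t(z)|\leq C|\mathrm{Im}\,z|^{2}(1+|t|)^{3}$ supported in a thin strip over a neighborhood of $\mathrm{supp}\,\varphi\subset J^\delta_\B$. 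The block decomposition of Theorem \ref{T-I-1}.3 (extended to complex spectral parameter via Remark \ref{R-ext-T-I}) together with the resolvent identity yields
\[
P^\epsilon_\B\bigl[(H^\epsilon-z)^{-1}-(\ham^\epsilon_\B-z)^{-1}\bigr]P^\epsilon_\B \;=\; R^\sim_\epsilon(z)\,P^\epsilon_\B H^\epsilon R^\bot_\epsilon(z) H^\epsilon P^\epsilon_\B\,(\ham^\epsilon_\B|_{P^\epsilon_\B L^2}-z)^{-1}.
\]
The Schur estimate \eqref{F-est-FS-epsilon}, together with $\|R^\sim_\epsilon(z)\|,\|(\ham^\epsilon_\B|_{P^\epsilon_\B L^2}-z)^{-1}\|\leq C/|\mathrm{Im}\,z|$ and the uniform boundedness of $R^\bot_\epsilon(z)$ for $\mathrm{Re}\,z$ in a neighborhood of $\mathrm{supp}\,\varphi\subset J^\delta_\B$, shows the above is bounded in norm by $C\epsilon^{2}/|\mathrm{Im}\,z|^{2}$. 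Inserting this in the Helffer--Sjöstrand integral, the weight $|\mathrm{Im}\,z|^{2}$ from $\overline{\partial}\tilde{f}_t$ cancels the $|\mathrm{Im}\,z|^{-2}$ singularity from the two resolvents, and integration over the compact strip yields the desired bound $C(1+|t|)^{3}\epsilon^{2}\|v\|$.

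The main obstacle is that the off-diagonal blocks of the Feshbach decomposition are only of order $\mathcal{O}(\epsilon)$ rather than $\mathcal{O}(\epsilon^{2})$, so that a naive Helffer--Sjöstrand estimate for the full resolvent difference would contaminate the bound with an $\mathcal{O}(\epsilon(1+|t|)^{3})$ term instead of the desired $\mathcal{O}(\epsilon^2(1+|t|)^3)$. The resolution — and the reason for the preliminary splitting — is that the off-diagonal blocks produce output in $(\bb1-P^\epsilon_\B)L^2(\X)$; but the $(\bb1-P^\epsilon_\B)$-component of $e^{-itH^\epsilon}u$ was already controlled in the second step by the $t$-independent bound $\mathcal{O}(\epsilon)\|v\|$, so these off-diagonal contributions never enter the Helffer--Sjöstrand integral at all, which only sees the $P^\epsilon_\B$-to-$P^\epsilon_\B$ block governed by the improved Schur estimate \eqref{F-est-FS-epsilon}. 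The choice $N=2$ in the almost-analytic extension (hence the cubic power $(1+|t|)^{3}$) is precisely dictated by the need to cancel the $|\mathrm{Im}\,z|^{-2}$ singularity coming from the product of two resolvents in the Schur-Feshbach factorization.
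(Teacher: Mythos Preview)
Your proposal is correct and follows essentially the same route as the paper: Helffer--Sj\"ostrand functional calculus applied to $f_t(x)=e^{-itx}\varphi(x)$ with a degree-$2$ almost-analytic extension, combined with Proposition~\ref{P-est-Jepsilon} to absorb the time-independent $\mathcal{O}(\epsilon)$ errors and the Schur--Feshbach identity $P^\epsilon_\B(H^\epsilon-z)^{-1}P^\epsilon_\B=R^\sim_\epsilon(z)$ together with estimate~\eqref{F-est-FS-epsilon} to get the $\mathcal{O}(\epsilon^2|\mathrm{Im}\,z|^{-2})$ bound on the $P^\epsilon_\B$-block resolvent difference. Your organization (explicitly splitting off the $(\bb1-P^\epsilon_\B)$-component of $e^{-itH^\epsilon}u$ first, and writing the resolvent-identity factorization $R^\sim_\epsilon(z)\cdot P^\epsilon_\B H^\epsilon R^\bot_\epsilon H^\epsilon P^\epsilon_\B\cdot(\ham^\epsilon_\B-z)^{-1}$ to make the origin of the $|\mathrm{Im}\,z|^{-2}$ explicit) is a mild rearrangement of the paper's argument, which instead sandwiches with $P^\epsilon_\B$ on both sides via two cut-offs $\varphi,\widetilde\varphi$; the substance is identical.
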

\begin{proof}
	In order to prove \eqref{F-ev-est}, we consider states $v$ with energies in a compact interval $J\subset J_\B$, and we fix two cut-off functions $\varphi$ and $\widetilde{\varphi}$ in $C^\infty_0(\mathbb{R})$ that are equal to $1$ on $J$, have their support included in $J^\delta_\B$ and verify the equality $\varphi=\widetilde{\varphi}\,\varphi$. 
	For any $v\in E^{\epsilon}_h(J)\mathcal{H}$ we may write:
	$$
	e^{-itH^{\epsilon}}v=E^{\epsilon}_h(J) e^{-itH^{\epsilon}}{\varphi}\big(H^{\epsilon}\big)E^{\epsilon}_h(J)v=\widetilde{\varphi}(H^{\epsilon})e^{-itH^{\epsilon}}{\varphi}\big(H^{\epsilon}\big)\widetilde{\varphi}(H^{\epsilon})v.
	$$
	For any $t\in\mathbb{R}$ let us define $\varphi_t\in C^\infty_0(\R)$ by $\varphi_t(s):=e^{-its}\varphi(s)$, so that the above equality becomes:
	\beq \label{F-t-evol}
	e^{-itH^{\epsilon}}v=\widetilde{\varphi}(H^{\epsilon}) \varphi_t\big(H^{\epsilon}\big)\widetilde{\varphi}(H^{\epsilon})v.
	\eeq
	
	We use now the Helffer-Sj\"{o}strand formula (see \cite{HS,D-95}). 
	For that we fix an auxiliary cut-off function $\chi\in C^\infty_0(\R;[0,1])$ having support in $\{|t|\leq 2\}$ and being equal to 1 on $\{|t|\leq1\}$ and define (notice a slight difference with \cite{D-95} in the choice of such a quasi-analytic extension):
	\[\nonumber 
	\overset{\frown}{\varphi}_{t,N}:\Co\rightarrow\R,\quad\overset{\frown}{\varphi}_{t,N}(x+iy):=\underset{0\leq k\leq N}{\sum}(\partial^k\varphi_t)(x)(iy)^k(k!)^{-1}\chi(y).
	\]
	Then for any $N\in\mathbb{N}$, the support of $\overset{\frown}{\varphi}_{t,N}$ is a compact set contained in $\supp\varphi_t\times[-2,2]$ and $\overset{\frown}{\varphi}_{t,N}$ is smooth. Moreover:
	\begin{equation}\label{dhc26}\begin{split}
		\frac{\partial\overset{\frown}{\varphi}_{t,N}}{\partial\overline{\z}}(\z)&=\frac{1}{2}\left(\frac{\partial\overset{\frown}{\varphi}_{t,N}}{\partial x}+i\frac{\partial\overset{\frown}{\varphi}_{t,N}}{\partial y}\right)\\
		&\hspace*{-2cm}=\frac{1}{2}\Big[i\underset{0\leq k\leq N}{\sum}(\partial^k\varphi_t)(x)(iy)^k(k!)^{-1}[(\partial\chi)(y)]+(\partial^{N+1}\varphi_t)(x)(iy)^N(N!)^{-1}\chi(y)\Big]\,,
	\end{split}\end{equation}
	and we see that for any $x\in\R$:
	$$
	\underset{y\rightarrow0}{\lim}\,\, \, |y|^{-N}\, \big | (\partial_{\overline{\zz}}\overset{\frown}{\varphi}_{t,N})(x+iy)\big | = \big |(\partial^{N+1}\varphi_t)(x)(N!)^{-1}\big |<\infty.
	$$
	
	We notice the important fact that $(\partial^k\varphi_t)(x)$ is a polynomial of degree $k$ in $t\in\R$ whose coefficients are smooth complex functions of $x\in\R$ having all their support contained in $\supp(\varphi)$. As functions of $t$, these terms can grow at most like $<t>^N$.
	
	Let us use \eqref{F-t-evol} and Proposition \ref{P-est-Jepsilon} in order to get:
	\begin{equation*}
		\begin{array}{ll}
			e^{-itH^{\epsilon}}v&=\widetilde{\varphi}(H^{\epsilon}) \varphi_t\big(H^{\epsilon}\big)\widetilde{\varphi}(H^{\epsilon})v=P^{\epsilon}_\B\,\widetilde{\varphi}(H^{\epsilon}) \varphi_t\big(H^{\epsilon}\big)\widetilde{\varphi}(H^{\epsilon})\,P^{\epsilon}_\B\,v\,+\,\epsilon \,X_{\epsilon}v\\
			&=P^{\epsilon}_\B {\varphi}_t\big(H^{\epsilon}\big)P^{\epsilon}_\B v\,+\,\epsilon \,X_{\epsilon}v\\
			&={\pi^{-1}} \iint\,\big(-\frac i2  d\zz d\bar \zz\big)\,\big(\partial_{\overline{\zz}}\overset{\frown}{\varphi}_{t,N}\big)(\zz,\overline{\zz})\,P^{\epsilon}_\B \big(H^{\epsilon}-\zz\bb1\big)^{-1}P^{\epsilon}_\B v\,+\,\epsilon \,X_{\epsilon}v\\
			&={\pi^{-1}} \iint\,\big(-\frac i2  d\zz d\bar \zz\big)\,\big(\partial_{\overline{\zz}}\overset{\frown}{\varphi}_{t,N}\big)(\zz,\overline{\zz})\,P^{\epsilon}_\B R^\sim_{\epsilon}(\zz)P^{\epsilon}_\B v\,+\,\epsilon \,X_{\epsilon}v
		\end{array}
	\end{equation*}
	with $\|X_{\epsilon}\|_{\mathbb{B}(L^2(\X))}\leq 1$ uniformly for $\epsilon\in[0,\epsilon_0]$.
	For $\epsilon$ small enough we also have the identity 
	$$e^{-it\ham^{\epsilon}_{\B}}v=e^{-it\ham^{\epsilon}_{\B}}P^{\epsilon}_\B\,\widetilde{\varphi}(H^{\epsilon})\,v\,+\,\mathscr{O}(\epsilon)\,v= \varphi_t\big(\ham^{\epsilon}_{\B}\big)P^{\epsilon}_\B\,v\,+\,\mathscr{O}(\epsilon)\,v.$$
	Hence we may compute:
	\begin{align*}
		\big\|e^{-itH^{\epsilon}}v\,-\,e^{-it\ham^{\epsilon}_{\B}}v\big\|_{L^2(\X)}\,&{=}\,\big\|P^{\epsilon}_\B {\varphi}_t\big(H^{\epsilon}\big)P^{\epsilon}_\B v\,-\,P^{\epsilon}_\B \varphi_t\big(\ham^{\epsilon}_{\B}\big)P^{\epsilon}_\B v\big\|_{L^2(\X)}\,+\,\mathscr{O}(\epsilon)\|v\|_{L^2(\X)}\\
		&\hspace*{-5cm}\leq \iint\big(\frac{d\zz d\bar \zz}{2\pi}  \big)\,\big|\big(\partial_{\overline{\zz}}\overset{\frown}{\varphi}_{t,N}\big)(\zz,\overline{\zz})\big|\,\Big\|P^{\epsilon}_{\B}\Big[R^{\epsilon}(\zz)\,-\, \big(\ham^{\epsilon}_{\B}-\zz P^{\epsilon}_{\B}\big)^{-1}\Big]\,P^{\epsilon}_\B v\Big\|_{L^2(\X)}+\,\mathscr{O}(\epsilon)\|v\|_{L^2(\X)}.
	\end{align*}
	Using Theorem \ref{T-I} and Remark \ref{R-ext-T-I},  we notice the identity $P^{\epsilon}_{\B}R^{\epsilon}(\zz)P^{\epsilon}_{\B}=P^{\epsilon}_{\B}R^\sim_{\epsilon}(\zz)P^{\epsilon}_{\B}$ and the estimate:
	\begin{align*}
		&\Big\|P^{\epsilon}_{\B}\Big[R^\sim_{\epsilon}(\zz)\,-\, \big(\ham^{\epsilon}_{\B}-\zz P^{\epsilon}_{\B}\big)^{-1}\Big]\,P^{\epsilon}_\B \Big\|_{L^2(\X)}\\
		&\leq\Big\|P^{\epsilon}_\B H^{\epsilon}[R^\bot_{\epsilon}(\zz)]H^{\epsilon}P^{\epsilon}_\B\Big\|_{L^2(\X)}\leq C(\delta)\,(\Im\hspace*{-1pt}{\cal{m}}\zz)^{-2}\, \epsilon^2\,\big\|R^\bot_{\epsilon}(\zz)\big\|_{\mathbb{B}(L^2(\X))}.
	\end{align*}
	
	Finally, by taking $N=2$ in the definition of $\overset{\frown}{\varphi}_{t,N}$ and using \eqref{dhc26}, we have the bound:
	\begin{align*}
		&\iint\,\big(-\frac i2  d\zz d\bar \zz\big)\,\big|\big(\partial_{\overline{\zz}}\overset{\frown}{\varphi}_{t,2}\big)(\zz,\overline{\zz})\big|\,|\Im\hspace*{-1pt}{\cal{m}}\zz|^{-2}=\iint_{\supp(\widetilde{\varphi}_{t,2})}\,dx\,dy\,\big|\big(\partial_{\overline{\zz}}\overset{\frown}{\varphi}_{t,2}\big)(x+iy)\big|\,|y|^{-2}\\
		&\hspace*{0.5cm}\leq\int_{\supp(\varphi)}\,dx\,\left[\underset{0\leq k\leq 2}{\sum}\big[\underset{x\in\R}{\sup}\big|\big(\partial^{k}\varphi_t\big)(x)\big|\big]\int_{1}^2\,dy\,y^{k-2}\right]\\
		&\hspace*{0.5cm}+\int_{\supp(\varphi)}\,dx\,\big[\underset{x\in\R}{\sup}\big|\big(\partial^{3}\varphi_t\big)(x)\big|\big]\int_{0}^2\,dy.
	\end{align*}
	
	One concludes that there exist  {$C>0$ and $\epsilon_0 >0$ } such that for any $t\in\R$ and for any $\epsilon\in[0,\epsilon_0]$ we have the estimate:
	\begin{align*}
		\big\|e^{-itH^{\epsilon}}v\,-\,e^{-it\ham^{\epsilon}_{\B}}v\big\|_{L^2(\X)}\,&\leq\\
		&\hspace*{-4cm}\leq\,\Big[C\,\epsilon^2\, <t>^3 \big(\hspace*{-0.3cm}\underset{ {\scriptsize \begin{array}{c} \Re\hspace*{-1pt}\mathcal{e}\zz\in \supp(\varphi),\\|\Im\hspace*{-1pt}{\cal{m}}\zz|\leq2\end{array}}}{\sup}\hspace*{-0.3cm}\big\|[R^\bot_{\epsilon}(\zz)]\big\|_{\mathbb{B}(L^2(\X))}\big)\,+\,\mathscr{O}(\epsilon)\Big]\,\|v\|_{L^2(\X)}\, ,\quad\forall v\in E^{\epsilon}_h(J)\mathcal{H}.
	\end{align*}
\end{proof}

\section{Magnetic fields having a non-zero constant component.}\label{S-5}

We dedicate this last section to a more refined analysis of Formula \eqref{F-PO} in order to obtain a better understanding of the 'singular term' $\tLambda_{\Gamma}^\epsilon\mM_{\B}[H_{\B}]$ and of the first order correction in \eqref{F-PO}. For that we shall consider that our perturbing magnetic field is a small perturbation of a weak constant magnetic field $B^\bullet:=dA^\bullet$.

We shall consider a magnetic field $B(x)=\epsilon\big(B^\bullet+\cc{B}^\epsilon(x)\big)$ as in Hypothesis \ref{H-Bepsc}. The vector potential defining $B(x)$ will be taken of the form $A(x)\equiv\epsilon\big(A^\bullet(x)+\cc\,A^\epsilon(x)\big)=\epsilon{A}^{\cc}(x)$.

\subsection{The constant perturbing magnetic field.}

In this subsection we work under Hypothesis \ref{H-Bepsc} with the parameter $\cc=0$, i.e. with a constant magnetic field of the form $B=\epsilon{B}^\bullet$. We keep using the notations \eqref{N-A-eps} but the perturbing magnetic field is now constant. Thus the entire analysis and all the results in the previous section are true, but we shall put into evidence some important specific features valid in this particular case. 
 
The main new feature when dealing with a constant magnetic field comes from the Zak translations \eqref{DF-m-W-f}.

\subsubsection{The Zak translations in constant magnetic field}\label{SS-Z-trsl}

\begin{definition}\label{D-Ztr-cB}
	Given a constant magnetic field of the form $\epsilon\,B^\bullet$, we call \emph{ Zak translations} with vectors from $\Gamma$  the following family of twisted unitary translations on $L^2(\X)$ (using the notation in \eqref{N-A-eps})
	\beq \label{DF-Z-trsl}	{\mathfrak{T}^{\epsilon}_\gamma f (x)} :=\tLambda^\epsilon(x,\gamma)\,f(x-\gamma)=\Lambda^\epsilon_\gamma(x)\,f(x-\gamma),\quad\
	\forall\gamma\in\Gamma.
	\eeq
\end{definition}
\noindent For our constant magnetic field situation, we recall the following results in \cite{CHN}:

\begin{proposition}\label{Zak-transl}
	{The}  family  in \eqref{DF-Z-trsl} satisfies the following properties:
	\begin{enumerate}
		\item The map $\mathfrak{T}^{\epsilon}:\Gamma\ni\gamma\rightarrow\mathfrak{T}^{\epsilon}_\gamma\in\mathbb{U}\big(L^2(\X)\big)$ defines a projective representation with 2-cocycle $[\tLambda^{\epsilon}]^{-1}:\Gamma\times\Gamma\rightarrow\mathbb{S}^1$ i.e.
		$$\mathfrak{T}^{\epsilon}_\alpha\mathfrak{T}^{\epsilon}_\beta= \tLambda^\epsilon(\beta,\alpha)\, \mathfrak{T}^{\epsilon}_{\alpha+\beta}\,.$$
		\item The operator $\mathfrak{Op}^{\epsilon}(F)$ commutes with all the $\{\mathfrak{T}^{\epsilon}_\gamma\}_{\gamma\in\Gamma}$ if and only if $F\in\mathscr{S}^\prime(\Xi)$ is $\Gamma$-periodic with respect to the variable in $\X$.
	\end{enumerate}
\end{proposition}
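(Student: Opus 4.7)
The plan is to establish both statements by direct computation, exploiting the key simplification that for a constant magnetic field $\epsilon B^\bullet$ the flux through any triangle is translation invariant.

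For statement~(1), I would apply the definition \eqref{DF-Z-trsl} twice and obtain
$$(\mathfrak{T}^{\epsilon}_\alpha \mathfrak{T}^{\epsilon}_\beta f)(x) = \tLambda^\epsilon(x,\alpha)\,\tLambda^\epsilon(x-\alpha,\beta)\,f(x-\alpha-\beta),$$
while on the other hand
$$\bigl(\tLambda^\epsilon(\beta,\alpha)\,\mathfrak{T}^{\epsilon}_{\alpha+\beta}f\bigr)(x) = \tLambda^\epsilon(\beta,\alpha)\,\tLambda^\epsilon(x,\alpha+\beta)\,f(x-\alpha-\beta).$$
The cocycle identity therefore reduces to the pointwise equality
$$\tLambda^\epsilon(x,\alpha)\,\tLambda^\epsilon(x-\alpha,\beta)\,\tLambda^\epsilon(x,\alpha+\beta)^{-1} = \tLambda^\epsilon(\beta,\alpha).$$
I would verify this by closing the broken path $x \to \alpha \to \alpha+\beta$ and comparing with the straight segment $x \to \alpha+\beta$ via Stokes' identity recalled before Definition~\ref{D-Hcirc}, namely $\tLambda^A(x,y)\tLambda^A(y,z)\tLambda^A(z,x) = \tOmega^B(x,y,z)$. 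Because $B^\bullet$ is constant, $\tOmega^\epsilon$ is translation invariant, which lets me transport the base point to $0$ and reduce to the purely algebraic identity $\sigma(\beta,\alpha) = -\sigma(\alpha,\beta)$ for the antisymmetric bilinear form $\sigma$ associated with $B^\bullet$.

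For statement~(2), the core step is to establish the intertwining relation
$$\mathfrak{T}^{\epsilon}_\gamma\,\mathfrak{Op}^{\epsilon}(F)\,(\mathfrak{T}^{\epsilon}_\gamma)^{-1} = \mathfrak{Op}^{\epsilon}(\tau_\gamma F), \qquad (\tau_\gamma F)(x,\xi) := F(x-\gamma,\xi),$$
for every $\gamma\in\Gamma$ and every symbol $F\in\mathscr{S}'(\Xi)$. Starting from the magnetic Weyl kernel formula in Appendix~\ref{A-m-PsiDO}, I would compute the kernel of the conjugated operator and then change variables $x\mapsto x+\gamma$, $y\mapsto y+\gamma$ in the integration. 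Three magnetic phase factors then collect: the two Zak phases $\tLambda^\epsilon(x,\gamma)$ and $\overline{\tLambda^\epsilon(y,\gamma)}$, and the Peierls phase $\tLambda^\epsilon(x+\gamma,y+\gamma)$ built into the kernel. By Stokes' theorem their product is controlled by the flux through a closed quadrilateral $\langle x, y, y+\gamma, x+\gamma\rangle$, which vanishes for a constant magnetic field. The surviving term is precisely $\tLambda^\epsilon(x,y)$ acting on $F(\tfrac{x+y}{2}-\gamma,\eta)$, which is the kernel of $\mathfrak{Op}^\epsilon(\tau_\gamma F)$. Once the intertwining is in place, the equivalence is immediate: commutativity with $\mathfrak{T}^\epsilon_\gamma$ for every $\gamma\in\Gamma$ is equivalent to $\tau_\gamma F = F$ for every $\gamma\in\Gamma$, which is $\Gamma$-periodicity in the $x$-variable, by the injectivity of the magnetic Weyl correspondence on $\mathscr{S}'(\Xi)$.

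The only real obstacle is bookkeeping in the quadrilateral flux cancellation of~(2); everything else is a straightforward application of Stokes' theorem and the translation invariance of the flux of a constant $2$-form. It is precisely this quadrilateral cancellation that fails for non-constant magnetic fields, which is the structural reason that the naive Zak translations only yield a projective representation commuting with periodic operators in the constant-field case, and why Section~\ref{SSS-iHom-Z-trsl} must introduce the inhomogeneous version of $\mathfrak{T}^\epsilon_\gamma$ to handle the general situation.
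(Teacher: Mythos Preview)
Your approach is sound, and in fact supplies more than the paper does: the paper does not prove Proposition~\ref{Zak-transl} at all but simply recalls it from \cite{CHN}. So there is no comparison to make with a paper-side argument.

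One small point on part~(1): the phrase ``closing the broken path $x\to\alpha\to\alpha+\beta$'' is slightly misleading. The middle phase in your identity is $\tLambda^\epsilon(x-\alpha,\beta)$, which corresponds to the segment $[x-\alpha,\beta]$, not to $[\alpha,\alpha+\beta]$; these differ by a translation by $\alpha$, and $\tLambda^\epsilon$ itself is gauge-dependent and \emph{not} translation invariant even for constant $B^\bullet$ (only $\tOmega^\epsilon$ is). The clean way is the one you already hint at in your last sentence: in the transverse gauge one has $\int_{[a,b]}\epsilon A^\bullet=\tfrac{\epsilon}{2}\sigma(a,b)$ with $\sigma$ the antisymmetric bilinear form of $B^\bullet$, and then both sides of the required identity have phase $\tfrac{\epsilon}{2}\sigma(\beta,\alpha)$ by a one-line expansion. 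Your part~(2) is exactly right; the quadrilateral phase combination $\tLambda^\epsilon(x,\gamma)\,\tLambda^\epsilon(\gamma,y)\,\tLambda^\epsilon(x-\gamma,y-\gamma)\,\tLambda^\epsilon(y,x)$ collapses to $1$ precisely because the constant-field flux through the associated quadrilateral vanishes, and the converse direction uses the injectivity of $\Op^\epsilon$ on $\mathscr{S}'(\Xi)$ (Remark~\ref{R-OpA}).
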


\subsubsection{The magnetic isolated Bloch family in constant magnetic field.}

The essential consequence following from point 1 in Proposition \ref{Zak-transl} above is that we have the commutation relation:
\[
\forall(\alpha,\gamma)\in\Gamma\times\Gamma:\quad\mathfrak{T}^{\epsilon}_\alpha\,\big[\Int\,\mathfrak{K}^{\epsilon}_{\B,\gamma}\big]\,=\,\big[\Int\,\mathfrak{K}^{\epsilon}_{\B,\gamma}\big]\mathfrak{T}^{\epsilon}_\alpha
\]
implying that $\widetilde{Q}^\epsilon_{\B}$ and all its assoociated functional calculus are in the commutant of the family $\{\mathfrak{T}^{\epsilon}_\gamma\}_{\gamma\in\Gamma}$. It follows that $\Theta^\epsilon$ and also $P^\epsilon_{\B}$ commute with the family $\{\mathfrak{T}^{\epsilon}_\gamma\}_{\gamma\in\Gamma}$ and we evidently have the following statement true.

\begin{proposition}\label{P-P-eps-B-const}
In the case of a constant perturbing magnetic field $B:=\epsilon{B}^\bullet$, the space $\mathfrak{L}^\epsilon_{\B}$ is equal to the closure of the linear span over $\Co$ of the family $\{\mathfrak{T}^{\epsilon}_\gamma\psi^\epsilon_p\}_{(\gamma,p)\in\Gamma\times\nB}$ with:
\beq \label{DF-PN-VI-4}
\psi^\epsilon_p\,=\,\Theta^\epsilon\psi_p\,,\quad\forall{p}\in\nB\,,
\eeq
and
\beq \label{F-P-eps-B-const}
P^\epsilon_{\B}\,=\,\Int\Big[\underset{(p,\gamma)\in\nB\times\Gamma}{\sum}\big(\mathfrak{T}^{\epsilon}_\gamma\psi^\epsilon_p\big)\otimes\big(\overline{\mathfrak{T}^{\epsilon}_\gamma\psi^\epsilon_p}\big)\Big].
\eeq
\end{proposition}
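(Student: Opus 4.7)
The plan is to reduce everything to the observation that, in the constant-field case, the Zak translations $\mathfrak{T}^\epsilon_\alpha$ commute with the quasi-projection $\widetilde{Q}^\epsilon_{\B}$, which in turn forces $\Theta^\epsilon$ and $P^\epsilon_{\B}$ (both obtained from $\widetilde{Q}^\epsilon_{\B}$ via the holomorphic functional calculi of \eqref{dhc21} and \eqref{dhc22}) to commute with every $\mathfrak{T}^\epsilon_\alpha$. From this commutation the two claimed identities follow immediately.

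First I would establish the commutation $[\mathfrak{T}^\epsilon_\alpha,\widetilde{Q}^\epsilon_{\B}]=0$ for all $\alpha\in\Gamma$. Since $\widetilde{Q}^\epsilon_{\B}$ is the strong-operator limit $\underset{N\nearrow\infty}{\lim}\Int\,\mathfrak{K}^\epsilon_{\B,N}$ furnished by Proposition \ref{P-II-1}, it is enough to check the commutation with each finite-rank piece $\Int\,\mathfrak{K}^\epsilon_{\B,\gamma}$. For this I would use point 1 of Proposition \ref{Zak-transl}: the cocycle identity $\mathfrak{T}^\epsilon_\alpha\mathfrak{T}^\epsilon_\gamma=\tLambda^\epsilon(\gamma,\alpha)\mathfrak{T}^\epsilon_{\alpha+\gamma}$ implies that for every $(p,\gamma)\in\underline{\nB}\times\Gamma$
\[
\mathfrak{T}^\epsilon_\alpha\,\mathfrak{T}^\epsilon_\gamma\psi_p\,=\,\tLambda^\epsilon(\gamma,\alpha)\,\mathfrak{T}^\epsilon_{\alpha+\gamma}\psi_p,
\]
so that the unitary pre- and post-multiplication by $\mathfrak{T}^\epsilon_\alpha$ (resp.\ $(\mathfrak{T}^\epsilon_\alpha)^*$) on the rank-one kernel $(\mathfrak{T}^\epsilon_\gamma\psi_p)\otimes\overline{(\mathfrak{T}^\epsilon_\gamma\psi_p)}$ produces the rank-one kernel at shifted index $\alpha+\gamma$ (the cocycle factor cancels between the two sides). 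Summing over $p$ and reindexing $\gamma\mapsto\gamma-\alpha$ in $\mathfrak{K}^\epsilon_{\B,N}$ (which is legitimate once one passes to the limit $N\to\infty$) gives the sought commutation on the full kernel $\mathfrak{K}^\epsilon_{\B}$.

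Given $[\mathfrak{T}^\epsilon_\alpha,\widetilde{Q}^\epsilon_{\B}]=0$, the holomorphic functional calculi defining $\Theta^\epsilon_{\B}$ in \eqref{dhc21} and $P^\epsilon_{\B}$ in \eqref{dhc22} via the resolvent of $\widetilde{Q}^\epsilon_{\B}$ immediately yield $[\mathfrak{T}^\epsilon_\alpha,\Theta^\epsilon_{\B}]=0$ and $[\mathfrak{T}^\epsilon_\alpha,P^\epsilon_{\B}]=0$ for all $\alpha\in\Gamma$. In particular, the elements of the Parseval frame \eqref{DF-Pfr-B-eps} satisfy
\[
\psi^\epsilon_{\gamma,p}\,=\,\Theta^\epsilon_{\B}\mathfrak{T}^\epsilon_\gamma\psi_p\,=\,\mathfrak{T}^\epsilon_\gamma\Theta^\epsilon_{\B}\psi_p\,=\,\mathfrak{T}^\epsilon_\gamma\psi^\epsilon_p,
\]
with $\psi^\epsilon_p$ as in \eqref{DF-PN-VI-4}. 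Since by construction $\mathfrak{L}^\epsilon_{\B}=P^\epsilon_{\B}L^2(\X)$ is the closed linear span of $\blPsi^\epsilon_{\B}$, the first assertion of the proposition follows.

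For the explicit formula \eqref{F-P-eps-B-const}, I would invoke Remark \ref{R-III-1}, which identifies $P^\epsilon_{\B}$ with $\mathfrak{C}_{\blPsi^\epsilon_{\B}}\mathfrak{C}^*_{\blPsi^\epsilon_{\B}}$, or equivalently (by Proposition \ref{P-III-1}) with the integral operator whose kernel is the strong-operator-convergent sum $\sum_{(\gamma,p)}\psi^\epsilon_{\gamma,p}\otimes\overline{\psi^\epsilon_{\gamma,p}}$; substituting the identity $\psi^\epsilon_{\gamma,p}=\mathfrak{T}^\epsilon_\gamma\psi^\epsilon_p$ just established gives exactly \eqref{F-P-eps-B-const}. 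The only step that requires genuine care is the first one: the reindexing $\gamma\mapsto\gamma-\alpha$ must be justified at the level of the truncated sums $\mathfrak{K}^\epsilon_{\B,N}$, which I would do by exploiting the rapid off-diagonal decay of Proposition \ref{C-off-d-decay} together with the Cotlar--Stein argument that was already used to construct the limit; no other obstacle is anticipated.
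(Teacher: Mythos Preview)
Your proposal is correct and follows exactly the strategy the paper itself uses: the key input is that, in the constant-field case, the Zak translations $\mathfrak{T}^\epsilon_\alpha$ commute with $\widetilde{Q}^\epsilon_{\B}$, hence with all functions of it (in particular $\Theta^\epsilon_{\B}$ and $P^\epsilon_{\B}$), which immediately gives $\psi^\epsilon_{\gamma,p}=\mathfrak{T}^\epsilon_\gamma\psi^\epsilon_p$ and the claimed description of $\mathfrak{L}^\epsilon_{\B}$ and $P^\epsilon_{\B}$.

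One small remark: the paper records the commutation in the form $\mathfrak{T}^\epsilon_\alpha[\Int\,\mathfrak{K}^\epsilon_{\B,\gamma}]=[\Int\,\mathfrak{K}^\epsilon_{\B,\gamma}]\mathfrak{T}^\epsilon_\alpha$ for each individual $\gamma$, whereas what the cocycle identity actually gives is $\mathfrak{T}^\epsilon_\alpha[\Int\,\mathfrak{K}^\epsilon_{\B,\gamma}](\mathfrak{T}^\epsilon_\alpha)^*=\Int\,\mathfrak{K}^\epsilon_{\B,\gamma+\alpha}$. Your reindexing argument is therefore the accurate version; the commutation holds only after summing over $\gamma\in\Gamma$. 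Your concern about justifying the reindexing at the level of truncated sums is harmless: conjugation by $\mathfrak{T}^\epsilon_\alpha$ sends $\Int\,\mathfrak{K}^\epsilon_{\B,N}$ to $\sum_{|\gamma-\alpha|\leq N}\Int\,\mathfrak{K}^\epsilon_{\B,\gamma}$, and the difference with $\Int\,\mathfrak{K}^\epsilon_{\B,N}$ involves only finitely many $\gamma$ near the boundary, each with uniformly bounded norm, so the strong limit is unaffected.
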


\begin{corollary}\label{C-PN-VII-1}
	The family $\blPsi^\epsilon_{\B}:=\big\{\mathfrak{T}^\epsilon_\alpha\psi^\epsilon_p\ \forall(\alpha,p)\in\Gamma\times\underline{n_\B}\big\}$ is a strongly localized Parseval frame for the closed subspace $\mathcal{L}^\epsilon_\B$ that it generates in $L^2(\X)$.
\end{corollary}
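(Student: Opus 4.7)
The plan is to read off the Parseval property directly from Formula \eqref{F-P-eps-B-const} in Proposition \ref{P-P-eps-B-const}, and then harvest the strong localization from the Schwartz regularity of the single generators $\psi^\epsilon_p$ combined with the unimodular nature of the Zak cocycle $\Lambda^\epsilon_\gamma$.

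First, I would observe that Formula \eqref{F-P-eps-B-const} expresses $P^\epsilon_{\B}$ as the integral operator whose kernel equals $\sum_{(\alpha,p)\in\Gamma\times\underline{n_{\B}}}(\mathfrak{T}^\epsilon_\alpha\psi^\epsilon_p)\otimes\overline{(\mathfrak{T}^\epsilon_\alpha\psi^\epsilon_p)}$, with strong-operator convergence of the partial sums supplied by the Cotlar--Stein argument already used in Proposition \ref{P-III-1}. Since by the same proposition the range of $P^\epsilon_{\B}$ is $\mathcal{L}^\epsilon_\B$, for any $f\in\mathcal{L}^\epsilon_\B$ the identity $f=P^\epsilon_{\B}f$ rewrites as
\[
f=\underset{(\alpha,p)\in\Gamma\times\underline{n_{\B}}}{\sum}\big(\mathfrak{T}^\epsilon_\alpha\psi^\epsilon_p\,,\,f\big)_{L^2(\X)}\,\mathfrak{T}^\epsilon_\alpha\psi^\epsilon_p,
\]
and taking the scalar product of this equality with $f$ yields the Parseval identity
\[
\|f\|^2_{L^2(\X)}\,=\,\underset{(\alpha,p)\in\Gamma\times\underline{n_{\B}}}{\sum}\big|\big(\mathfrak{T}^\epsilon_\alpha\psi^\epsilon_p\,,\,f\big)_{L^2(\X)}\big|^2.
\]
This is precisely the definition of a Parseval frame recalled in Appendix \ref{A-frames}, and it establishes that $\blPsi^\epsilon_{\B}$ is a Parseval frame for the closed subspace $\mathcal{L}^\epsilon_\B$ it spans.

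Second, for the strong localization I would invoke Proposition \ref{P-IV-1} specialized to $\alpha=0$: each $\psi^\epsilon_p=\Theta^\epsilon_\B\psi_p$ belongs to $\mathscr{S}(\X)$ with all Schwartz semi-norms controlled uniformly in $\epsilon\in[0,\epsilon_0]$, as a consequence of the fact (established in that proposition) that $\Theta^\epsilon_\B$ has an integral kernel in $\mathring{\mathscr{S}}_{\Delta}(\X\times\X)$ with rapid off-diagonal decay uniform in $\epsilon$. Because $\mathfrak{T}^\epsilon_\alpha\psi^\epsilon_p(x)=\Lambda^\epsilon_\alpha(x)\psi^\epsilon_p(x-\alpha)$ and $|\Lambda^\epsilon_\alpha(x)|=1$, for every $n\in\mathbb{N}$ there exists $C_n>0$, independent of $(\alpha,p,\epsilon)\in\Gamma\times\underline{n_{\B}}\times[0,\epsilon_0]$, such that
\[
\big|\big(\mathfrak{T}^\epsilon_\alpha\psi^\epsilon_p\big)(\hat x+\gamma)\big|\,\leq\,C_n\,<\gamma-\alpha>^{-n},\quad\forall(\hat x,\gamma)\in\mathcal{E}\times\Gamma,
\]
which is the strong localization property around the lattice point $\alpha\in\Gamma$.

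No genuine obstacle is expected: everything is already contained in the preceding constructions, and the only subtle point to watch is that the Schwartz estimates on $\psi^\epsilon_p$ and the off-diagonal decay constants inherited from $\Theta^\epsilon_\B$ be uniform in $\epsilon\in[0,\epsilon_0]$, which has already been verified in Proposition \ref{P-IV-1} and in the Cotlar--Stein analysis preceding it.
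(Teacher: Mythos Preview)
Your proposal is correct and follows exactly the route implicit in the paper: the corollary is stated without proof precisely because it is an immediate consequence of Proposition~\ref{P-P-eps-B-const} (formula~\eqref{F-P-eps-B-const} gives the Parseval identity as you wrote it) together with the $\mathscr{S}(\X)$-regularity of $\psi^\epsilon_p=\Theta^\epsilon_\B\psi_p$ supplied by Proposition~\ref{P-IV-1} and the unimodularity of $\Lambda^\epsilon_\alpha$. There is nothing to add.
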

 Notice that this is the equivalent of \eqref{DF-Pfr-B-eps} for the situation when the perturbing magnetic field is constant.\\

We end this subsection with the following technical result about the family of functions $\{\psi^\epsilon_p\}_{p\in\nB}$ introduced in \eqref{DF-PN-VI-4}.
	\begin{lemma}\label{L-1}
		There exists $\epsilon_0>0$ such that for any $N>0$ there exists $C_N>0$ such that for all $(\alpha,p)\in\Gamma\times\underline{n_\B}$ and $\epsilon\in[0,\epsilon_0]$, if we denote by $Q$ the operator of multiplication  with the variable $x\in\X$ in $L^2(\X)$, we have the estimations
		$$
		\big\|<Q-\alpha>^N\Big (\mathfrak{T}^\epsilon_\alpha\psi^\epsilon_p-\mathfrak{T}^\epsilon_\alpha\psi_p\Big )\big\|_{L^2(\X)}\leq C_N\,\epsilon \, \Vert <Q-\alpha>^{N+1}\psi_p\Vert_{L^2(\X)}.
		$$
	\end{lemma}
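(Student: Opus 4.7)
The plan is to first reduce the estimate to the case $\alpha=0$, then identify the difference as $(\Theta^\epsilon_\B - P_\B)\psi_p$ where this operator has been shown to be a magnetic pseudodifferential operator with a symbol of order $\epsilon$ in $S^{-\infty}(\Xi)$, and finally translate this symbol-level smallness into the weighted $L^2$-bound via the Schur test on the integral kernel.

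First I would observe that a direct computation gives the intertwining relation $\langle Q-\alpha\rangle^N\mathfrak{T}^\epsilon_\alpha = \mathfrak{T}^\epsilon_\alpha\langle Q\rangle^N$, since $\mathfrak{T}^\epsilon_\alpha$ is a (Zak) translation multiplied by the unimodular factor $\tLambda^\epsilon(\cdot,\alpha)$. Combined with the unitarity of $\mathfrak{T}^\epsilon_\alpha$, the LHS equals $\|\langle Q\rangle^N(\psi^\epsilon_p-\psi_p)\|_{L^2(\X)}$, which does not depend on $\alpha$. Since $\psi_p\in\mathscr{S}(\X)$ is a fixed nonzero Schwartz function, an easy case split ($|\alpha|$ bounded versus $|\alpha|$ large, using $\langle x-\alpha\rangle\geq|\alpha|/2$ on $|x|\leq|\alpha|/2$) yields a constant $c_{N,p}>0$, uniform in $\alpha\in\Gamma$, such that $\|\langle Q-\alpha\rangle^{N+1}\psi_p\|_{L^2}\geq c_{N,p}\|\langle Q\rangle^{N+1}\psi_p\|_{L^2}$. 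It therefore suffices to prove the $\alpha=0$ instance $\|\langle Q\rangle^N(\psi^\epsilon_p-\psi_p)\|_{L^2}\leq C_N\,\epsilon\,\|\langle Q\rangle^{N+1}\psi_p\|_{L^2}$.

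Next, by the construction in Subsection \ref{SS-psGS}, $\psi^\epsilon_p=\Theta^\epsilon_\B\psi_p$; since $\psi_p\in\mathfrak{L}_\B=P_\B L^2(\X)$ satisfies $P_\B\psi_p=\psi_p$, and since the definition \eqref{dhc21} at $\epsilon=0$ (with $\widetilde Q^0_\B=P_\B$ whose spectrum is $\{0,1\}$) gives $\Theta^0_\B=P_\B$, we obtain $\psi^\epsilon_p-\psi_p=(\Theta^\epsilon_\B-P_\B)\psi_p$. Telescoping $\theta^\epsilon_\B-p_\B=(\theta^\epsilon_\B-\widetilde q^\epsilon_\B)+(\widetilde q^\epsilon_\B-p^\epsilon_\B)+(p^\epsilon_\B-p_\B)$ and invoking Lemma \ref{L-theta-tq}, Lemma \ref{R-PN-1} and Remark \ref{R-p-eps-B}, every continuous semi-norm $\lnu$ on $S^{-\infty}(\Xi)$ satisfies $\lnu(\theta^\epsilon_\B-p_\B)\leq C_\lnu\,\epsilon$ for $\epsilon\in[0,\epsilon_0]$. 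Via the kernel-symbol correspondence \eqref{F-KerOpA}, in which the unimodular magnetic phase $\Lambda^\epsilon$ does not affect moduli, this implies that the integral kernel $K^\epsilon(x,y)$ of $\Theta^\epsilon_\B-P_\B$ enjoys the rapid off-diagonal bound
\[
|K^\epsilon(x,y)|\,\leq\,C_M\,\epsilon\,\langle x-y\rangle^{-M},\qquad\forall (x,y)\in\X\times\X,\ \forall M\in\mathbb{N},\ \forall\epsilon\in[0,\epsilon_0].
\]

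Finally I would bootstrap from the kernel bound via a Schur-test argument. Define $T_N:=\langle Q\rangle^N(\Theta^\epsilon_\B-P_\B)\langle Q\rangle^{-(N+1)}$, whose kernel is $K_N(x,y)=\langle x\rangle^N K^\epsilon(x,y)\langle y\rangle^{-(N+1)}$. Using the Peetre inequality $\langle x\rangle^N\leq C_N\langle x-y\rangle^N\langle y\rangle^N$, one gets
\[
|K_N(x,y)|\,\leq\,C\,\epsilon\,\langle x-y\rangle^{N-M}\,\langle y\rangle^{-1}.
\]
For $M$ chosen large enough (say $M>N+d+1$), both Schur integrals $\sup_x\int|K_N(x,y)|\,dy$ and $\sup_y\int|K_N(x,y)|\,dx$ are bounded by $C\epsilon$, hence $\|T_N\|_{\mathbb{B}(L^2(\X))}\leq C\epsilon$, which gives $\|\langle Q\rangle^N(\Theta^\epsilon_\B-P_\B)\psi_p\|_{L^2}=\|T_N\langle Q\rangle^{N+1}\psi_p\|_{L^2}\leq C\epsilon\|\langle Q\rangle^{N+1}\psi_p\|_{L^2}$, as required. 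The main technical point is the pseudodifferential step: one must argue that the $S^{-\infty}$ smallness of the symbol, established through the contour integral \eqref{dhc21} and the magnetic Beals criterion, transfers faithfully to the integral kernel with the claimed $\epsilon$-linear rapid decay. Once that is in place, the weighted estimate follows by elementary harmonic analysis.
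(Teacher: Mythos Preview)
Your reduction to $\alpha=0$ and the identification $\psi^\epsilon_p-\psi_p=(\Theta^\epsilon_\B-P_\B)\psi_p$ are correct and agree with the paper. The gap is the kernel bound $|K^\epsilon(x,y)|\leq C_M\epsilon\,\langle x-y\rangle^{-M}$, which is false as stated. Your telescoping gives $\theta^\epsilon_\B-p_\B=O(\epsilon)$ in $S^{-\infty}$, but this compares the $\Op^\epsilon$-symbol of $\Theta^\epsilon_\B$ with the $\Op$-symbol of $P_\B$; in \eqref{F-KerOpA} these two quantizations carry \emph{different} phase prefactors ($\tLambda^\epsilon$ versus $1$), so the actual kernel of the difference is
\[
K^\epsilon(x,y)=\tLambda^\epsilon(x,y)\big[(\bb1\otimes\mathcal{F}_{\X^*})(\theta^\epsilon_\B-p_\B)\big]\!\circ\Upsilon(x,y)\;+\;\big(\tLambda^\epsilon(x,y)-1\big)\big[(\bb1\otimes\mathcal{F}_{\X^*})p_\B\big]\!\circ\Upsilon(x,y).
\]
Only the first summand obeys your claimed bound. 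For the constant field with linear (transverse) potential one has $|\tLambda^\epsilon(x,y)-1|\leq C\epsilon\,|x|\,|x-y|$, which is not uniformly $O(\epsilon)$; the second summand therefore only satisfies $C_M\epsilon\,\langle x\rangle\,\langle x-y\rangle^{-M}$. This linear growth in $\langle x\rangle$ is precisely what forces the loss from $\langle Q\rangle^N$ to $\langle Q\rangle^{N+1}$ in the statement --- had your uniform bound held, the lemma would be true with the same power $N$ on both sides. Your Schur-test step can be repaired once this extra factor is carried along: Peetre gives $\langle x\rangle\leq C\langle x-y\rangle\langle y\rangle$, and the spare $\langle y\rangle^{-1}$ you already isolated absorbs it.

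The paper sidesteps this quantization mismatch by not passing through symbols at all. Using $P_\B\psi_p=\psi_p$ and the resolvent identity it writes
\[
\Theta^\epsilon_\B\psi_p-\psi_p=(2\pi i)^{-1}\oint_{\mathscr{C}}d\zz\,\zz^{-1/2}\,(\widetilde Q^\epsilon_\B-\zz\bb1)^{-1}(\widetilde Q^\epsilon_\B-P_\B)(P_\B-\zz\bb1)^{-1}\psi_p,
\]
then proves directly from the explicit kernel of $\widetilde Q^\epsilon_\B-P_\B$ that $\|\langle Q\rangle^N(\widetilde Q^\epsilon_\B-P_\B)f\|\leq C_N\epsilon\,\|\langle Q\rangle^{N+1}f\|$ (here the polynomial factor from $|\tLambda^\epsilon-1|$ is transparent), and closes by showing that the weighted resolvents $\langle Q\rangle^N(X-\zz\bb1)^{-1}\langle Q\rangle^{-N}$, with $X\in\{\widetilde Q^\epsilon_\B,P_\B\}$, are uniformly bounded on $\mathscr{C}$ via commutator bounds.
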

	\begin{proof}
		Because $\Theta^\epsilon_\B$ commutes with the Zak translations, it is enough to prove this for $\alpha=0$. Since $P_\B \psi_p=\psi_p$, using \eqref{dhc21} we may write 
		\beq \label{dhc20}
		\begin{aligned}
			\Theta^\epsilon_\B\,\psi_p-\psi_p&=-(2\pi i)^{-1}\oint_{\mathscr{C}}d\zz\,\zz^{-1/2}\,\Big (\big(\widetilde{Q}_\B^\epsilon-\zz\bb1\big)^{-1} -\big(P_\B-\zz\bb1\big)^{-1}\Big )\, \psi_p\\
			\\&=(2\pi i)^{-1}\oint_{\mathscr{C}}d\zz\,\zz^{-1/2}\,\big(\widetilde{Q}_\B^\epsilon-\zz\bb1\big)^{-1} \Big (\widetilde{Q}_\B^\epsilon-P_\B\Big )\, \big( P_\B-\zz\bb1\big)^{-1}\, \psi_p.
		\end{aligned}
		\eeq
		{Using Lemma \ref{R-PN-1}, Remark \ref{R-PN-0} and the fact that $|\Lambda^\epsilon(x,y)-1|\leq C\, \epsilon$, we obtain that,  for any $N\geq 1$, there exists $C_N$ and $\epsilon_0 >0$ such that, for any $f$ with rapid decay and any $\epsilon \in [0,\epsilon_0]$,  we have
		$$\Vert <Q>^N\, \big (\widetilde{Q}_\B^\epsilon-P_\B\big )f\Vert_{L^2(\X)}\leq C_N\, \epsilon\, \Vert <Q>^{N+1} f\Vert_{L^2(\X)}.$$
		}
		Also, denoting by $X$ either $P_\B$ or $\widetilde{Q}_\B^\epsilon$ we have, for any  $N\geq 1$, 
		$$\sup_{\zz\in \mathscr{C}}\Big \Vert <\cdot >^N \big (X-\zz\bb1\big)^{-1}\, <\cdot >^{-N}\Big \Vert =C'_N<\infty,$$
		which may be proved by showing that all possible multiple commutators between the position operator and the resolvent can be extended to bounded operators. Implementing all this in \eqref{dhc20} the proof is finished.
	\end{proof}	
	
\subsubsection{The magnetic infinite matrices in a constant perturbing magnetic field}\label{SSS-const-mf-Pfr}
	
We shall put now into evidence the important feature characterizing the 'effective magnetic Hamiltonian' for the isolated Bloch family in a constant magnetic perturbing field. 

For a linear operator $T\in\,\mathbb{B}\big(P^\epsilon_\B\,L^2(\X)\big)$ we use the notation 
	\beq \label{DF-cmf-Bmatrix}
	\mathfrak{M}^\epsilon_\B[T]\,:=\,\big(\mathfrak{T}^\epsilon_\alpha\,\psi^\epsilon_{p}\,,\,T\,\mathfrak{T}^\epsilon_\beta\,\psi^\epsilon_{q}\big)_{L^2(\X)}
	\eeq
	for its infinite matrix defined by the Parseval frame in Corollary \ref{C-PN-VII-1} and notice that this is simply the matrix in \eqref{C-PN-VI-3-1} for the situation when the perturbing magnetic field is constant. The commutation properties discussed in this section for the case of a constant perturbing magnetic field have the following interesting consequence concerning the structure of the matrices $\mM^\epsilon_{\B}[\Op^\epsilon(F)]$ for a $\Gamma$-periodic symbol $F$.
	
	\begin{proposition}\label{P-B2}
		Given a constant magnetic field $B:=\epsilon{B}^\bullet=\epsilon\,dA^\bullet$ and a magnetic operator $\Int\,\tLambda^\epsilon\mathfrak{K}$ with $\mathfrak{K}\in\mathring{\mathscr{S}}_{\Delta}(\X\times\X)$ such that $(\tau_\gamma\otimes\tau_\gamma)\mathfrak{K}=\mathfrak{K}$ for all $\gamma\in\Gamma$, there exists $\mathring{\mathfrak{m}}^\epsilon_\B[\mathfrak{K}]\in{\cal{s}}(\Gamma;\MmN)$ such that:
		\begin{equation}\nonumber 
		\mathfrak{M}^\epsilon_\B[\Int\,\tLambda^\epsilon\mathfrak{K}]_{\alpha,\beta}\,=\,\tLambda^\epsilon(\alpha,\beta)\,\mathring{\mathfrak{m}}^\epsilon_\B[\mathfrak{K}]_{\alpha-\beta}\,.
		\end{equation}
	\end{proposition}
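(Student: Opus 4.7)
The strategy will be to compute, in its block-$(p,q)$ components, the matrix entry
\begin{equation*}
\big[\mathfrak{M}^\epsilon_{\B}[\Int\,\tLambda^\epsilon\mathfrak{K}]_{\alpha,\beta}\big]_{p,q}=\big(\mathfrak{T}^\epsilon_\alpha\psi^\epsilon_p\,,\,(\Int\,\tLambda^\epsilon\mathfrak{K})\,\mathfrak{T}^\epsilon_\beta\psi^\epsilon_q\big)_{L^2(\X)}
\end{equation*}
directly from the kernel representation, exploiting two features that are specific to a constant magnetic field $B=\epsilon B^\bullet$: the translation invariance of the $B^\bullet$-flux through planar polygons, and the fact that $\alpha,\beta\in\Gamma$, so that the diagonal $\Gamma$-periodicity $(\tau_\gamma\otimes\tau_\gamma)\mathfrak{K}=\mathfrak{K}$ can be applied with shifts $\pm\alpha\in\Gamma$.

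Unfolding the Zak translations in \eqref{DF-Z-trsl} turns this inner product into a double integral over $(x,y)\in\X\times\X$ whose phase factor is $\overline{\tLambda^\epsilon(x,\alpha)}\,\tLambda^\epsilon(x,y)\,\tLambda^\epsilon(y,\beta)=\tLambda^\epsilon(\alpha,x)\,\tLambda^\epsilon(x,y)\,\tLambda^\epsilon(y,\beta)$. The first step will be to apply Stokes' theorem to the oriented quadrilateral $Q(\alpha,x,y,\beta)$, rewriting this product as
\begin{equation*}
\tLambda^\epsilon(\alpha,x)\,\tLambda^\epsilon(x,y)\,\tLambda^\epsilon(y,\beta)=\tLambda^\epsilon(\alpha,\beta)\,\exp\!\Big(-i\epsilon\int_{Q(\alpha,x,y,\beta)}B^\bullet\Big).
\end{equation*}
Because $B^\bullet$ is constant, the flux on the right-hand side is invariant under simultaneous translation of its four vertices. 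Performing next the change of variables $x=u+\alpha$, $y=v+\beta$, the vertices become $\{0,u,v+\beta-\alpha,\beta-\alpha\}$, so the residual phase depends only on $(u,v,\alpha-\beta)$; simultaneously, the diagonal $\Gamma$-invariance of $\mathfrak{K}$ applied with $\gamma=-\alpha\in\Gamma$ gives $\mathfrak{K}(u+\alpha,v+\beta)=\mathfrak{K}(u,v-(\alpha-\beta))$. Pulling $\tLambda^\epsilon(\alpha,\beta)$ out of the integral then leaves an integrand depending on $u$, $v$ and $\alpha-\beta$ only, so the remaining integration in $(u,v)$ defines $[\mathring{\mathfrak{m}}^\epsilon_{\B}[\mathfrak{K}]_{\alpha-\beta}]_{p,q}$ as a function of $\alpha-\beta$ alone.

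What remains is to verify that $\eta\mapsto\mathring{\mathfrak{m}}^\epsilon_{\B}[\mathfrak{K}]_\eta$ belongs to $\mathcal{s}(\Gamma;\MmN)$. This will be routine: the phase is of modulus one, the kernel $\mathfrak{K}\in\mathring{\mathscr{S}}_{\Delta}(\X\times\X)$ has rapid off-diagonal decay providing a factor $\langle u-v+\eta\rangle^{-N}$, and $\psi^\epsilon_p,\psi^\epsilon_q$ are Schwartz with semi-norms bounded uniformly in $\epsilon\in[0,\epsilon_0]$ by Proposition \ref{P-IV-1}; combining these with the elementary inequality $\langle u-v+\eta\rangle^{-N}\leq C_N\langle u\rangle^N\langle v\rangle^N\langle\eta\rangle^{-N}$ and integrating in $(u,v)$ delivers the required decay in $\eta=\alpha-\beta$. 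The one step where constancy of $B^\bullet$ is decisive — and therefore the main conceptual point — is the translation invariance of the flux through $Q(\alpha,x,y,\beta)$: for a non-constant field the flux would pick up explicit $(\alpha,\beta)$-dependence beyond the difference $\alpha-\beta$, and the Toeplitz-like factorization would fail.
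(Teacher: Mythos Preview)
Your argument is correct but takes a different, more computational route than the paper. The paper's proof is a two-line algebraic shortcut: it invokes Proposition~\ref{Zak-transl}, whose point~2 says that an operator of the form $\Int\,\tLambda^\epsilon\mathfrak{K}$ with diagonally $\Gamma$-periodic $\mathfrak{K}$ commutes with all Zak translations $\mathfrak{T}^\epsilon_\gamma$, and whose point~1 gives the projective multiplication law $\mathfrak{T}^\epsilon_\alpha\mathfrak{T}^\epsilon_\beta=\tLambda^\epsilon(\beta,\alpha)\mathfrak{T}^\epsilon_{\alpha+\beta}$. Combining these, one shifts a Zak translation across the operator and obtains directly
\[
\big(\mathfrak{T}^\epsilon_\alpha\psi^\epsilon_p,\,[\Int\,\tLambda^\epsilon\mathfrak{K}]\,\mathfrak{T}^\epsilon_\beta\psi^\epsilon_q\big)_{L^2(\X)}=\tLambda^\epsilon(\alpha,\beta)\,\big(\mathfrak{T}^\epsilon_{\alpha-\beta}\psi^\epsilon_p,\,[\Int\,\tLambda^\epsilon\mathfrak{K}]\,\psi^\epsilon_q\big)_{L^2(\X)},
\]
and reads off $\mathring{\mathfrak{m}}^\epsilon_\B[\mathfrak{K}]_\gamma$ from the right-hand side. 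Your approach instead unfolds the kernels, applies Stokes on the quadrilateral $Q(\alpha,x,y,\beta)$, and uses the translation invariance of the $B^\bullet$-flux explicitly; this is essentially a hands-on re-derivation of the very commutation property that the paper quotes from Proposition~\ref{Zak-transl}. What your approach buys is self-containment and an explicit integral formula for $\mathring{\mathfrak{m}}^\epsilon_\B[\mathfrak{K}]_\eta$, together with an actual verification of the rapid decay in $\eta$ (which the paper leaves implicit). What the paper's approach buys is brevity and a clearer structural explanation: the Toeplitz-like factorization is exactly the statement that the Parseval frame is generated by a projective $\Gamma$-action commuting with the operator.
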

	\begin{proof}
		Taking into account Proposition \ref{Zak-transl},  let us compute:
		\begin{align*}
			\big[\mathfrak{M}^\epsilon_\B[\Int\,\tLambda^\epsilon\mathfrak{K}]_{\alpha,\beta}\big]_{p,q}&=\big(\mathfrak{T}^\epsilon_\alpha\,\psi^\epsilon_p\,,\,[\Int\,\tLambda^\epsilon\,\mathfrak{K}[F]]\,\mathfrak{T}^\epsilon_\beta\,\psi^\epsilon_q\big)_{L^2(\X)}\\
			&=\tLambda^\epsilon(\alpha,\beta)\big(\mathfrak{T}^\epsilon_{\alpha-\beta}\,\psi^\epsilon_p\,,\,[\Int\,\tLambda^\epsilon\,\mathfrak{K}[F]]\,\psi^\epsilon_q\big)_{L^2(\X)}
		\end{align*}
		and identify:
		\[\nonumber 
		\big [\mathring{\mathfrak{m}}^\epsilon_\B[\mathfrak{K}]_\gamma \big ]_{p,q}\,:=\,\big(\mathfrak{T}^\epsilon_{\gamma}\,\psi^\epsilon_p\,,\,[\Int\,\tLambda^\epsilon\,\mathfrak{K}(][F]]\,\psi^\epsilon_q\big)_{L^2(\X)}.
		\]
	\end{proof}
	
	\begin{definition}\label{D-m-eps-B}
	 For any $\epsilon\in[0,1]$ we denote by $\vec{\mathscr{M}}^\epsilon_\Gamma[\mathscr{M}_{n_\B}]$ the complex linear space of the matrices $\mathfrak{M}^\epsilon_\B$ in $\mathscr{M}^\circ_\Gamma[\mathscr{M}_{n_\B}]$ satisfying the following property (as in Proposition \ref{P-B2}):
			$$
			\exists\, \mathring{\mathfrak{m}}^\epsilon_\B\in{\cal{s}}(\Gamma;\MmN)\ \text{such that}\ [\mathfrak{M}^\epsilon_\B]_{\alpha,\beta}\,=\,\tLambda^\epsilon(\alpha,\beta)\,[\mathring{\mathfrak{m}}^\epsilon_\B]_{\alpha-\beta}.
			$$
	\end{definition}
	
	\begin{proposition}\label{P-matr}
		The space $\vec{\mathscr{M}}^\epsilon_{\Gamma}[\MmN]$ is a normed subalgebra of $\mathring{\mathscr{M}}_{\Gamma}[\MmN]$ with involution and unity.
	\end{proposition}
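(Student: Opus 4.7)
The plan is to verify, one by one, the structural axioms: linearity, closure under product, involution, and the presence of a unit. The ambient norm is inherited from $\mathring{\mathscr{M}}_\Gamma[\MmN]$, so the only nontrivial point is algebraic closure under $*$ and $\cdot$; linearity over $\Co$ is immediate since $\mathfrak{M}^1+c\mathfrak{M}^2$ corresponds to the sequence $\mathring{\mathfrak{m}}^1+c\mathring{\mathfrak{m}}^2$, which remains in $\mathcal{s}(\Gamma;\MmN)$.

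The heart of the proof is the matrix product. For $\mathfrak{M}^j\in\vec{\mathscr{M}}^\epsilon_\Gamma[\MmN]$ ($j=1,2$) I will compute
$$
[\mathfrak{M}^1\cdot\mathfrak{M}^2]_{\alpha,\beta}=\underset{\gamma\in\Gamma}{\sum}\tLambda^\epsilon(\alpha,\gamma)\tLambda^\epsilon(\gamma,\beta)\,[\mathring{\mathfrak{m}}^1]_{\alpha-\gamma}[\mathring{\mathfrak{m}}^2]_{\gamma-\beta}
$$
and use the Stokes-type cocycle identity $\tLambda^\epsilon(\alpha,\gamma)\tLambda^\epsilon(\gamma,\beta)=\tLambda^\epsilon(\alpha,\beta)\,\tOmega^\epsilon(\alpha,\gamma,\beta)$ from \eqref{dhc5}. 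For a \emph{constant} magnetic field $B=\epsilon B^\bullet$, the flux $\tOmega^\epsilon(\alpha,\gamma,\beta)$ is translation invariant in $\Gamma$, i.e.\ it depends only on the differences $\alpha-\gamma$ and $\gamma-\beta$; substituting $\tau:=\alpha-\gamma$, $\sigma:=\gamma-\beta$, with $\tau+\sigma=\alpha-\beta=:\delta$, the product becomes
$$
[\mathfrak{M}^1\cdot\mathfrak{M}^2]_{\alpha,\beta}=\tLambda^\epsilon(\alpha,\beta)\underset{\tau+\sigma=\delta}{\sum}\omega^\epsilon(\tau,\sigma)\,[\mathring{\mathfrak{m}}^1]_{\tau}[\mathring{\mathfrak{m}}^2]_{\sigma}
$$
with $\omega^\epsilon(\tau,\sigma):=\tOmega^\epsilon(0,-\tau,-\tau-\sigma)\in\mathbb{S}^1$ depending only on $(\tau,\sigma)$. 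Thus the product is again of the form prescribed by Definition \ref{D-m-eps-B}, with $[\mathring{\mathfrak{m}}^1\diamond^\epsilon\mathring{\mathfrak{m}}^2]_\delta:=\sum_{\tau+\sigma=\delta}\omega^\epsilon(\tau,\sigma)[\mathring{\mathfrak{m}}^1]_\tau[\mathring{\mathfrak{m}}^2]_\sigma$. The rapid decay of this convolution is a standard Young/Schur estimate: since $|\omega^\epsilon|=1$ and both $\mathring{\mathfrak{m}}^j$ lie in $\mathcal{s}(\Gamma;\MmN)$, the bound $<\delta>^n\leq C_n(<\tau>^n+<\sigma>^n)$ plus absolute summability of any negative power of $<\cdot>$ on $\Gamma$ yields $\mathring{\mathfrak{m}}^1\diamond^\epsilon\mathring{\mathfrak{m}}^2\in\mathcal{s}(\Gamma;\MmN)$.

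For the involution, note that $\tLambda^\epsilon(\beta,\alpha)=\overline{\tLambda^\epsilon(\alpha,\beta)}$ because the line integral of $A^\bullet$ reverses sign when the segment is reversed, so
$$
[\mathfrak{M}^*]_{\alpha,\beta}=\overline{[\mathfrak{M}]_{\beta,\alpha}^{\,*_{\MmN}}}{}^{\dagger}=\tLambda^\epsilon(\alpha,\beta)\,\big([\mathring{\mathfrak{m}}]_{\beta-\alpha}\big)^{*_{\MmN}}\,,
$$
showing $\mathfrak{M}^*\in\vec{\mathscr{M}}^\epsilon_\Gamma[\MmN]$ with associated sequence $\gamma\mapsto[\mathring{\mathfrak{m}}]_{-\gamma}^{*_{\MmN}}$, which plainly remains in $\mathcal{s}(\Gamma;\MmN)$. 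For the unity, take $[\mathring{\mathfrak{m}}]_\gamma:=\delta_{\gamma,0}\,\bb1_{\MmN}$; then $[\mathfrak{M}]_{\alpha,\beta}=\tLambda^\epsilon(\alpha,\alpha)\delta_{\alpha,\beta}\bb1=\delta_{\alpha,\beta}\bb1$ gives the identity of $\mathring{\mathscr{M}}_\Gamma[\MmN]$.

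The only real obstacle is the product step, and within it the isolation of a pure $\tLambda^\epsilon(\alpha,\beta)$ prefactor; this works precisely because $B^\bullet$ is constant, so the triangle flux $\tOmega^\epsilon(\alpha,\gamma,\beta)$ is $\Gamma$-translation invariant. For a non-constant $B^{\epsilon,\cc}$ this invariance fails and the subalgebra property is lost, which is why this proposition is stated here in the constant-field subsection and why the non-constant perturbation requires the more delicate remainder analysis in Theorem \ref{T-II}.
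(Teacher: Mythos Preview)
Your argument is correct and follows essentially the same route as the paper: the only nontrivial point is closure under the matrix product, and you handle it by factoring out $\tLambda^\epsilon(\alpha,\beta)$ via the cocycle identity and then using that, for a \emph{constant} field, the remaining triangle flux $\tOmega^\epsilon(\alpha,\gamma,\beta)$ depends only on the differences $\alpha-\gamma$ and $\gamma-\beta$. The paper does the same computation (with a slightly different bookkeeping of the substitution and using directly the bilinear form of $\tLambda^\epsilon$ in the constant-field gauge rather than naming $\tOmega^\epsilon$), and then simply records the resulting twisted-convolution formula for $\mathring{ST}$; your version is in fact more complete, since you also spell out the involution, the unit, and the rapid decay of the twisted convolution, all of which the paper leaves to the reader.
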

	\begin{proof}
		We reproduce here the proof of (3.28) in \cite{CHN}.
		Given $(S,T)\in\big[\mathscr{M}^\circ_{\Gamma}[\MmN]_\epsilon\big]^2$ we know that $$S_{\alpha,\beta}=\tLambda^\epsilon(\alpha,\beta)\mathring{S}_{\alpha-\beta} \mbox{ and } T_{\alpha,\beta}=\tLambda^\epsilon(\alpha,\beta)\mathring{T}_{\alpha-\beta} \mbox{ with }  (\mathring{S},\mathring{T})\in\big[{\cal{s}}\big(\Gamma;\MmN\big)\big]^2\,.$$
		Hence
		\begin{align*}
			(ST)_{\alpha,\beta}:&=\underset{\gamma\in\Gamma}{\sum}S_{\alpha,\gamma}\,T_{\gamma,\beta}=\underset{\gamma\in\Gamma}{\sum}\tLambda^\epsilon(\alpha,\gamma)\mathring{S}_{\gamma-\alpha}\tLambda^\epsilon(\gamma,\beta)\mathring{T}_{\beta-\gamma}=\underset{\gamma^\prime\in\Gamma}{\sum}\tLambda^\epsilon(\alpha,\gamma^\prime)\tLambda^\epsilon(\gamma^\prime+\alpha,\beta)\mathring{S}_{\gamma^\prime}\mathring{T}_{\beta-\alpha-\gamma^\prime}\\
			&=\tLambda^\epsilon(\alpha,\beta)\underset{\gamma\in\Gamma}{\sum}\tLambda^\epsilon(\alpha,\gamma)\Lambda^\epsilon(\gamma,\beta)\mathring{S}_{\gamma}\mathring{T}_{\beta-\alpha-\gamma}=\tLambda^\epsilon(\alpha,\beta)\underset{\gamma\in\Gamma}{\sum}\tLambda^\epsilon(\gamma,\beta-\alpha)\mathring{S}_{\gamma}\mathring{T}_{\beta-\alpha-\gamma}\,,
		\end{align*}
		and we can define  $\mathring{ST}\in{\cal{s}}\big(\Gamma;\MmN\big)$ by the formula:
		\[\nonumber 
		[\mathring{ST}]_\alpha:=\underset{\gamma\in\Gamma}{\sum}\tLambda^\epsilon(\gamma,\alpha)\,\mathring{S}_{\gamma}\cdot \mathring{T}_{\alpha-\gamma}\,.
		\]
	\end{proof}
	
	\begin{proposition}\label{P-4.30}
		There exists some $\epsilon_0>0$ and for any $n\in\mathbb{N}$ there exists some $C_n>0$ such that:
		\beq
		<\gamma>^n\big|\big[\mathring{\mathfrak{m}}^\epsilon_\B[\widehat{H}_\B]\big]_\gamma-\big[\mathring{\mathfrak{m}}_\B[\widehat{H}_\B]_\gamma\big]_{p,q}\big|\,\leq\,C_n\,\epsilon\,,\quad\forall\epsilon\in[0,\epsilon_0]\,.
		\eeq
	\end{proposition}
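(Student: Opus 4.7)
The plan is to exploit the fact that in the constant-magnetic-field setting we have \emph{two} independent descriptions of the matrix $\mathfrak{M}^\epsilon_\B[\ham^\epsilon_\B]_{\alpha,\beta}$: the general expansion obtained in \eqref{F-PN-V-11} using the Parseval frame $\blPsi^\epsilon_\B$ constructed via $\Theta^\epsilon_\B$, and the Toeplitz-after-phase structure of Proposition \ref{P-B2} that is specific to constant fields. Comparing them will exhibit the desired remainder as the diagonal slice of a kernel with rapid off-diagonal decay.

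First I would observe that, when $B=\epsilon B^\bullet$, the operator $\Theta^\epsilon_\B$ commutes with every Zak translation $\mathfrak{T}^\epsilon_\alpha$: indeed, by point 2 of Proposition \ref{Zak-transl} the kernels $\mathfrak{K}^\epsilon_{\B,\gamma}$ (and hence $\widetilde{Q}^\epsilon_\B$, its functional calculus, and in particular $\Theta^\epsilon_\B$) commute with the $\mathfrak{T}^\epsilon_\alpha$. Therefore the frame $\blPsi^\epsilon_\B$ of \eqref{DF-Pfr-B-eps} and the frame $\{\mathfrak{T}^\epsilon_\alpha\psi^\epsilon_p\}$ of Corollary \ref{C-PN-VII-1} coincide: $\Theta^\epsilon_\B\mathfrak{T}^\epsilon_\alpha\psi_p=\mathfrak{T}^\epsilon_\alpha\Theta^\epsilon_\B\psi_p=\mathfrak{T}^\epsilon_\alpha\psi^\epsilon_p$. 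This means that the matrix $\mathfrak{M}^\epsilon_\B[\ham^\epsilon_\B]$ computed in Subsection \ref{SS-PN-V-2} is precisely the matrix of Definition \eqref{DF-cmf-Bmatrix}.

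Next I would invoke \eqref{F-PN-V-11}--\eqref{F-PN-V-12}, which gives
\[
\big[\mathfrak{M}^\epsilon_\B[\ham^\epsilon_\B]_{\alpha,\beta}\big]_{p,q}
=\tLambda^\epsilon(\alpha,\beta)\Big[\big[\mathring{\mathfrak{m}}_\B[\widehat{H}_\B]_{\alpha-\beta}\big]_{p,q}
+\epsilon\big[\widetilde{\mathfrak{M}}^\epsilon_\B[\mathfrak{K}[h^\circ]]_{\alpha,\beta}\big]_{p,q}\Big],
\]
where the family $\{\widetilde{\mathfrak{M}}^\epsilon_\B[\mathfrak{K}[h^\circ]]\}_{\epsilon\in[0,\epsilon_0]}$ is bounded in $\mathscr{M}^\circ_\Gamma[\MmN]$ by Proposition \ref{P-f-2}, i.e.\ for every $n\in\mathbb{N}$ there is $C_n>0$ with
\[
\big|\big[\widetilde{\mathfrak{M}}^\epsilon_\B[\mathfrak{K}[h^\circ]]_{\alpha,\beta}\big]_{p,q}\big|\le C_n\,\langle\alpha-\beta\rangle^{-n},\qquad\forall\epsilon\in[0,\epsilon_0].
\]
On the other hand, Proposition \ref{P-B2} (applied with $\mathfrak{K}=\mathfrak{K}[h^\circ]$, which is legitimate since $\Op^\epsilon(h^\circ)=\Int\,\tLambda^\epsilon\mathfrak{K}[h^\circ]$ after modding out the periodic background and since $\ham^\epsilon_\B=P^\epsilon_\B H^\epsilon P^\epsilon_\B$ inherits the Zak covariance) yields
\[
\big[\mathfrak{M}^\epsilon_\B[\ham^\epsilon_\B]_{\alpha,\beta}\big]_{p,q}
=\tLambda^\epsilon(\alpha,\beta)\big[\mathring{\mathfrak{m}}^\epsilon_\B[\widehat{H}_\B]_{\alpha-\beta}\big]_{p,q}.
\]

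Finally, I would divide both identities by the unimodular factor $\tLambda^\epsilon(\alpha,\beta)$ and equate them. Uniqueness of the Toeplitz representative forces the right-hand side of the first identity to depend on $\alpha,\beta$ only through $\alpha-\beta$, and specializing to $\beta=0$, $\alpha=\gamma$ gives
\[
\big[\mathring{\mathfrak{m}}^\epsilon_\B[\widehat{H}_\B]_{\gamma}\big]_{p,q}
-\big[\mathring{\mathfrak{m}}_\B[\widehat{H}_\B]_{\gamma}\big]_{p,q}
=\epsilon\big[\widetilde{\mathfrak{M}}^\epsilon_\B[\mathfrak{K}[h^\circ]]_{\gamma,0}\big]_{p,q}.
\]
The rapid-decay estimate above then immediately yields $\langle\gamma\rangle^n$-times-this $\le C_n\epsilon$, which is the claim. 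The only technical point to be careful about is that the $(\alpha,\beta)$-dependence of $\widetilde{\mathfrak{M}}^\epsilon_\B[\mathfrak{K}[h^\circ]]_{\alpha,\beta}$ a priori involves both arguments, but the Zak covariance built into the constant-field setup guarantees a posteriori that it is Toeplitz, so the identification with the slice $(\gamma,0)$ is canonical; if needed one can verify this directly by applying Stokes' formula to the phase factors appearing in \eqref{F-PN-VII-1}, which collapse to functions of $\alpha-\beta$ alone when $B^\epsilon$ is constant.
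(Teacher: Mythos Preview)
Your argument is correct and takes a genuinely different route from the paper. You observe that in the constant-field setting the two Parseval frames $\{\Theta^\epsilon_\B\mathfrak{T}^\epsilon_\alpha\psi_p\}$ and $\{\mathfrak{T}^\epsilon_\alpha\psi^\epsilon_p\}$ coincide, so the general expansion \eqref{F-PN-V-11} (already established in Subsection \ref{SS-PN-V-2}) and the Toeplitz-after-phase identity of Proposition \ref{P-B2} are two expressions for the \emph{same} matrix entry; dividing by the unimodular phase $\tLambda^\epsilon(\alpha,\beta)$ and specializing $(\alpha,\beta)=(\gamma,0)$ then identifies the sought difference with $\epsilon\,\widetilde{\mathfrak{M}}^\epsilon_\B[\mathfrak{K}[h^\circ]]_{\gamma,0}$, whose rapid off-diagonal decay was proved in Proposition \ref{P-f-2}. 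One small remark: Propositions \ref{P-f-2} and \ref{P-B2} are literally stated for kernels in $\mathring{\mathscr{S}}_\Delta(\X\times\X)$, whereas $\mathfrak{K}[h^\circ]$ is singular on the diagonal; the paper handles this (just before \eqref{F-PN-V-11}) by noting that on the frame vectors one may replace $h^\circ$ by $h_\B\in S^{-\infty}(\Xi)_\Gamma$, so the citation is still legitimate.

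The paper instead argues directly: it writes out $(\mathfrak{T}^\epsilon_\gamma\psi^\epsilon_p,H^\epsilon\psi^\epsilon_q)-(\tau_{-\gamma}\psi_p,H_\B\psi_q)$, splits it into a symbol-difference term $\Op^\epsilon(p^\epsilon_\B\sharp^\epsilon h\sharp^\epsilon p^\epsilon_\B-h_\B)$ (controlled by Proposition \ref{P-replP3_5} and Remark \ref{R-p-eps-B}) and a phase/frame-difference term, then estimates the latter by hand using Lemma \ref{L-1} and explicit flux bounds for the constant field $B^\bullet$. Your approach is more economical and structural, reusing the machinery of Section \ref{S-3} rather than repeating kernel estimates; the paper's approach is more self-contained within Section \ref{S-5} and makes the role of Lemma \ref{L-1} (the $\epsilon$-closeness of $\psi^\epsilon_p$ to $\psi_p$) explicit, which is useful later when passing to the non-constant perturbation $\cc B^\epsilon$.
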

	\begin{proof}
		Starting from Definition \ref{D-m-eps-B} and  \eqref{F-PN-V-10}--\eqref{F-PN-2} we can write:
		\beq
		\big[\mathring{\mathfrak{m}}^\epsilon_\B[\widehat{H}_\B]_\gamma\big]_{(p,q)}=\big(\mathfrak{T}^\epsilon_{\gamma}\,\psi^\epsilon_p\,,\,\ham^{\epsilon}_\B\,\psi^\epsilon_q\big)_{L^2(\X)}=\big(\mathfrak{T}^\epsilon_{\gamma}\,\psi^\epsilon_p\,,\,H^{\epsilon}\,\psi^\epsilon_q\big)_{L^2(\X)},
		\eeq
		and 
		\beq
		\big[\mathring{\mathfrak{m}}_\B[\widehat{H}_\B]_\gamma\big]_{(p,q)}=\mathfrak{M}_\B[H_\B]_{(\gamma,p),(0,q)}=\big(\tau_{-\gamma}\,\psi_p\,,\,H_\B\,\psi_q\big)_{L^2(X)}\,.
		\eeq
		Then  we have to estimate:
		\begin{align*}
			&<\gamma>^n\big|\big[\mathring{\mathfrak{m}}^\epsilon_\B[\widehat{H}_\B]\big]_\gamma-\big[\mathring{\mathfrak{m}}_\B[\widehat{H}_\B]_\gamma\big]_{p,q}\big|=<\gamma>^n\big|\big(\mathfrak{T}^\epsilon_{\gamma}\,\psi_p^\epsilon\,,\,H^{\epsilon}\,\psi^\epsilon_q\big)_{L^2(\X)}-\big(\tau_{-\gamma}\psi_p\,,\,H_\B\,\psi_q\big)_{L^2(X)}\big|\\ \nonumber
			&\hspace*{12pt}\leq\, <\gamma>^n\big|\big(\Lambda^\epsilon_{\gamma}\,\tau_{-\gamma}\Theta^\epsilon_\B\,\psi_p\,,\,\Op^{\epsilon}(p^\epsilon_\B\sharp^\epsilon\,h\sharp^\epsilon\,p^\epsilon_\B-h_\B)\,\Theta^\epsilon_\B\,\psi_q\big)_{L^2(\X)}\big|+\\
			&\hspace*{24pt}+<\gamma>^n\,\big|\big(\Lambda^\epsilon_{\gamma}\,\tau_{-\gamma}\Theta^\epsilon_\B\,\psi_p\,,\,\Op^{\epsilon}(h_\B)\,\Theta^\epsilon_\B\,\psi_q\big)_{L^2(\X)}-\big(\tau_{-\gamma}\psi_p\,,\,\Op^{A^\circ}(h_\B)\,\psi_q\big)_{L^2(X)}\big|.
		\end{align*}

		First, we notice that Proposition \ref{P-replP3_5}  and Remark \ref{R-p-eps-B} imply that $p^\epsilon_\B\sharp^\epsilon\,h\sharp^\epsilon\,p^\epsilon_\B-h_\B=\mathscr{O}(\epsilon)$ as symbols of bounded operators and the rapid decay of the functions $\psi_p$ for $p\in\underline{n_\B}$ allows to control the factor $<\gamma>^n$ by the decay of $\tau_{-\gamma}\psi_p$. In order to estimate $\Lambda^\epsilon_{\gamma}\,\tau_{-\gamma}\Theta^\epsilon_\B\,\psi_p$ we use Lemma \ref{L-1}. Thus we obtain the estimate:
		\begin{align*}
			<\gamma>^n&\big|\big[\mathring{\mathfrak{m}}^\epsilon_\B[\widehat{H}_\B]\big]_\gamma-\big[\mathring{\mathfrak{m}}_\B[\widehat{H}_\B]_\gamma\big]_{p,q}\big|=<\gamma>^n\big(\tau_{-\gamma}\psi_p\,,\,\Int\big[\big(\Lambda^\epsilon_{\gamma}\,\Lambda^\epsilon-1\big)\mathfrak{K}^\circ(h_\B)\big]\,\psi_q\big)_{L^2(\X)}\hspace*{-0.2cm}+\mathscr{O}(\epsilon).
		\end{align*}
		In order to finish the proof we notice that:
		\begin{align*}
			<\gamma>^n&\big(\tau_{-\gamma}\psi_p\,,\,\Int\big[\big(\Lambda^\epsilon_{\gamma}\,\Lambda^\epsilon-1\big)\mathfrak{K}^\circ(h_\B)\big]\,\psi_q\big)_{L^2(\X)}=\\
			&=<\gamma>^n\int_{\X}dx\int_{\X}dy\,\overline{\psi_p(x-\gamma)}\,\big(i\epsilon\langle\,B^\bullet,(\gamma\wedge x+y\wedge x)\rangle\big)\,\times\\
			&\hspace{3.5cm}\times\,\Big(\int_0^1ds\,\exp\big(-is\epsilon\langle\,B^\bullet,x\wedge\gamma+x\wedge y\rangle\big)\Big)\,\mathfrak{K}^\circ(h_\B)(x,y)\,\psi_q(y).
		\end{align*}
		As $\gamma\wedge x=(\gamma-x)\wedge x$ and $y\wedge x=y\wedge(x-y)$ we obtain that, for any  $M_1, M_2$ and $M_3$ in $\mathbb{N}$, we can write:
		\beq\begin{split}
			\Big|<\gamma>^n&\big(\tau_{-\gamma}\psi_p\,,\,\Int\big[\big(\Lambda^\epsilon_{\gamma}\,\Lambda^\epsilon-1\big)\mathfrak{K}^\circ(h_\B)\big]\,\psi_q\big)_{L^2(\X)}\Big|\leq\,C\,\epsilon\int_{\X}dx\int_{\X}dy\,\times\\
			&\hspace*{12pt}\times\,<x-\gamma>^{M_1}|\psi_p(x-\gamma)|\,<x-y>^{M_2}|\mathfrak{K}^\circ(h_\B)(x,y)|\,<y>^{M_3}|\psi_q(x)|\,\times\\
		&\hspace*{12pt}\times\,<\gamma>^n<x-\gamma>^{-M_1}\big(<x-\gamma><x>+<x-y><y>\big)<x-y>^{-M_2}<y>^{-M_3}.
		\end{split}\eeq
		Finally we notice that $<\gamma>^n\leq\,C_n<\x-\gamma>^n<x>^n$ and $<x>^{n+1}\leq\,C^\prime_{n}<x-y>^{n+1}<y>^{n+1}$ and we only have to choose $M_1\geq n+1$, $M_2\geq n+2$ and $M_3\geq n+2$ in order to obtain a bound of order $\epsilon$ for the above scalar product and finish the proof.
	\end{proof}
	
\subsection{Weak fluctuations on a constant perturbing magnetic field.}\label{SSS-nonconst-mf-Pfr}

It is evident that Hypothesis \ref{H-Bepsc} implies Hypothesis \ref{HF-Beps} and thus all the results obtained in Sections \ref{S-3} and \ref{S-4} in particular Theorem \ref{T-I} remain true, but we shall reformulate them in a way that puts into evidence the presence of the non-vanishing constant part of the perturbing magnetic field. 

Working under Hypothesis \ref{H-Bepsc} on the perturbing magnetic field, the total magnetic field for the magnetic pseudo-differential calculus is $B^\circ+\epsilon\big(B^\bullet+\cc\,B^\epsilon\big)$ and the notations introduced in \eqref{N-A-eps} have to be modified into:
\begin{equation}\begin{split}\label{N-PN-IV-1}
		&\Lambda^{\epsilon,\cc}\,\equiv\,\Lambda^{A^\circ+\epsilon\,A^\cc},\ \Lambda^\epsilon\,\equiv\,\Lambda^{A^\circ+\epsilon{A}^\bullet},\ \tLambda^{\epsilon,\cc}\equiv\Lambda^{\cc\epsilon{A}^\epsilon},\ \tLambda^\epsilon\equiv\Lambda^{\epsilon{A}^\bullet}\\ &\Op^{\epsilon,\cc}\,\equiv\,\Op^{A^\circ+\epsilon{A}^\cc},\ \Op^\epsilon\equiv\Op^{A^\circ+\epsilon{A}^\bullet}\\ 
		&\Omega^{\epsilon,\cc}\,\equiv\,\Omega^{B^\circ+\epsilon\,B^\cc},\ \Omega^\epsilon\,\equiv\,\Omega^{B^\circ+\epsilon{B}^\bullet},\ \tOmega^{\epsilon,\cc}\equiv\Omega^{\cc\epsilon{B}^\epsilon},\\ 
		&\sharp^{\epsilon,\cc}\,\equiv\,\sharp^{B^\circ+\epsilon{B}^\cc},\ \sharp^\epsilon\,\equiv\,\sharp^{B^\circ+\epsilon{B}^\bullet}.
\end{split}\end{equation}

In agreement with our new notations \eqref{N-PN-IV-1} we shall modify the main notations from Sections \ref{S-3} and \ref{S-4} changing the exponents $\epsilon\in[0,\epsilon_0]$ with the pair $(\epsilon,\cc)\in[0,\epsilon_0]\times[0,1]$. 

\begin{remark}
In agreement with the above convention the projection $P^\epsilon_{\B}$ becomes $P^{\epsilon,\cc}_{\B}$ and the reduced Hamiltonian $\ham^\epsilon_{\B}$ becomes $\ham\ec_{\B}$. The important new features appear in the formulation of Statement  2 in Theorem \ref{T-I} and are due to the replacement of the family of functions $\widetilde{\blPsi}^\epsilon_{\B}$ in \eqref{DF-m-W-f} by the family:
 \beq \label{DF-m-W-f-c}
\widetilde{\blPsi}\ec_{\B}\,:=\,\big\{\tpsi\ec_{\alpha,p}:=\widetilde{\Lambda}^{\epsilon,\cc}_\alpha\,\mathfrak{T}^\epsilon_\alpha\,\psi^\epsilon_{p}, \alpha\in\Gamma,\,p\in\underline{n_\B}\big\}.
\eeq
We shall sometimes use the notation $$\psi^\epsilon_{\alpha,p}:=\mathfrak{T}^\epsilon_\alpha\,\psi^\epsilon_{p}\,.$$
An important feature will be the presence of the objects $P^{\epsilon,0}_{\B}$ and $\ham^{\epsilon,0}_{\B}$ referring to the projection and reduced Hamiltonian associated with the constant part of the perturbing magnetic field.
\end{remark}

Related to  \eqref{DF-m-W-f-c} we shall also use the notations:
\[
\mathfrak{T}\ec_\gamma\,:=\,\tLambda\ec_\gamma\, \mathfrak{T}^\epsilon_{\gamma}\,,\qquad\forall\gamma\in\Gamma
\]
where the  $\mathfrak{T}^\epsilon_{\gamma}$ are the Zak translations defined in \eqref{DF-Z-trsl}.
	
	We repeat now the arguments and constructions in Subsection  \ref{SS-m-P-frame}, starting with an operator $\widetilde{Q}^{\epsilon,\cc}_\B$ defined by the regular integral kernel (with rapid off-diagonal decay):
	\[
	\mathfrak{K}[\widetilde{Q}^{\epsilon,\cc}_\B](x,y):=\underset{(\alpha,p)\in\Gamma\times\underline{n_\B}}{\sum}\,\widetilde{\Lambda}^{\epsilon,\cc}_\alpha\,\mathfrak{T}^\epsilon_\alpha\,\psi^\epsilon_{p}(x)\,\overline{\widetilde{\Lambda}^{\epsilon,\cc}_\alpha\,\mathfrak{T}^\epsilon_\alpha\,\psi^\epsilon_{p}(y)}.
	\]
	We notice that:
	\begin{align}\label{F-PN-VII-4}
		\widetilde{\Lambda}^{\epsilon,\cc}_\alpha(x)\,\overline{\widetilde{\Lambda}^{\epsilon,\cc}_\alpha(y)}&=\widetilde{\Lambda}^{\epsilon,\cc}(x,\alpha)\widetilde{\Lambda}^{\epsilon,\cc}(\alpha,y)=\widetilde{\Lambda}^{\epsilon,\cc}(x,y)\big[\exp\big(-i\int_{<x,\alpha,y>}B^\epsilon\big)\big]\\ \nonumber
		&=\widetilde{\Lambda}^{\epsilon,\cc}(x,y)\Big[1\,-\,i\epsilon\cc\,\Big(\int_{<x,\alpha,y>}B^\epsilon\Big)\Big(\int_0^1ds\,\exp\big(-is\epsilon\cc\int_{<x,\alpha,y>}B^\epsilon\big)\Big)\Big],\\ \nonumber
		\Big|\int_{<x,\alpha,y>}B^\epsilon\Big|&\leq\,C<x-\alpha><y-\alpha>,\qquad\forall\epsilon\in[0,\epsilon_0]\,,
	\end{align}
	and conclude that $[\widetilde{Q}^{\epsilon,\cc}_\B]^2=\widetilde{Q}^{\epsilon,\cc}_\B\,+\,\mathscr{O}(\epsilon\cc)$ (as bounded operators on $L^2(\X)$). Thus, for some $\epsilon_0>0$ and any $(\cc,\epsilon)\in[0,1]\times[0,\epsilon_0]$
	$$
	\sigma(\widetilde{Q}^{\epsilon,\cc}_\B)\subset(-\cc\epsilon,\cc\epsilon)\bigcup(1-\cc\epsilon,1+\cc\epsilon)\,,
	$$
	and we may define the following objects similar to those in Subsection  \ref{SS-m-P-frame}:
	\[\begin{split}
		&P^{\epsilon,\cc}_\B\,:=\,-(2\pi i)^{-1}\oint_{\mathscr{C}}d\zz\,\big(\widetilde{Q}^{\epsilon,\cc}_\B-\zz\bb1\big)^{-1},\\
		&\Theta^{\epsilon,\cc}_\B\,:=\,-(2\pi i)^{-1}\oint_{\mathscr{C}}d\zz\,\sqrt{\zz^{-1}}\,\big(\widetilde{Q}^{\epsilon,\cc}_\B-\zz\bb1\big)^{-1},
	\end{split}\]
	so that we have the following formula:
	\[\begin{split}
		\mathfrak{K}[P^{\epsilon,\cc}_\B](x,y)&=\hspace*{-0.3cm}\underset{(\alpha,p)\in\Gamma\times\underline{n_\B}}{\sum}\hspace*{-0.3cm}\Theta^{\epsilon,\cc}_\B\big[\widetilde{\Lambda}^{\epsilon,\cc}_\alpha\,\big(\mathfrak{T}^\epsilon_\alpha\,\psi^\epsilon_{p}\big)\big](x)\,\overline{\Theta^{\epsilon,\cc}_\B\big[\widetilde{\Lambda}^{\epsilon,\cc}_\alpha\,\big(\mathfrak{T}^\epsilon_\alpha\,\psi^\epsilon_{p}\big)\big](y)}\,.
	\end{split}\]
	
Finally we obtain the following Parseval frame of the Hilbert subspace $P^{\epsilon,\cc}_\B\,L^2(\X)$:
	\beq \label{DF-Pfr-B-epsc}
	\blPsi\ec_{\B}:=\big\{\psi^{\epsilon,\cc}_{\alpha,p}:=\Theta^{\epsilon,\cc}_\B\big[\widetilde{\Lambda}^{\epsilon,\cc}_\alpha\,\big(\mathfrak{T}^\epsilon_\alpha\,\psi^\epsilon_{p}\big)\big],\ \forall(\alpha,p)\in\Gamma\times\underline{n_\B}\big\}\,,
	\eeq
	and given any bounded linear operator $T\in\mathbb{B}\big(P^{\epsilon,\cc}_\B\,L^2(\X)\big)$ one may define its \textit{magnetic $\B$-matrix} as the matrix $\mathring{\mathfrak{M}}^{\epsilon,\cc}_\B[T]\in\mathscr{M}_\Gamma[\MmN]$ given by:
	\[
	\big[\mathfrak{M}^{\epsilon,\cc}_\B[T]_{\alpha,\beta}\big]_{p,q}\,:=\,\big(\psi^{\epsilon,\cc}_{\alpha,p}\,,\,T\,\psi^{\epsilon,\cc}_{\beta,q}\big)_{L^2(\X)}.
	\]
	
It is obvious from the above formulas (compared with the similar ones in Subsection ~\ref{SS-m-P-frame}) that all the estimations obtained in Subsection  \ref{SS-m-P-frame} for the operators $\widetilde{Q}^\epsilon_{\B},P^\epsilon_{\B},\Theta^\epsilon_{\B},\ham^\epsilon_{\B}$ remain true for $\widetilde{Q}\ec_{\B},P\ec_{\B},\Theta\ec_{\B},\ham\ec_{\B}$ with the change of $\epsilon$ into $\cc\epsilon$. The important difference comes at the end of Subsection \ref{SS-PN-V-2}, when, due to the presence of the functions $\mathfrak{T}^\epsilon_\alpha\,\psi^\epsilon_{p}$ in \eqref{DF-m-W-f-c} that replace the functions $\tau_{-\gamma}\psi_p$ in \eqref{DF-m-W-f}, Formula \eqref{F-PN-V-11} becomes:
\begin{align}\label{F-PN-V-11-c}
	&\big[\mathfrak{M}\ec_{\B}[\ham\ec_{\B}]_{\alpha,\beta}\big]_{p,q}=\big(\mathfrak{T}^{\epsilon}_\alpha\psi_p\,,\,\big[\Int\,\tLambda\ec\tLambda^{\epsilon}\mathfrak{K}^\circ(h)\big]\mathfrak{T}^{\epsilon}_\beta\, \psi_q\big)_{L^2(\X)}=\\ \nonumber
	&\ =\tLambda\ec\tLambda^{\epsilon}(\alpha,\beta)\Big[\big[\mathring{\mathfrak{m}}^\epsilon_{\B}[\widehat{H}]_{\alpha-\beta}\big ]_{p,q}+\cc\epsilon\big[\widetilde{\mathfrak{M}}\ec_{\B}[\ham\ec_{\B}]_{\alpha,\beta}\big]_{p,q}\Big]\,,
\end{align}
where, for $\mathfrak{M}^{\epsilon,0}_{\B}[\ham^{\epsilon,0}_{\B}]_{\alpha,\beta}=\tLambda^{\epsilon}(\alpha,\beta)\mathring{\mathfrak{m}}^\epsilon_{\B}[\widehat{H}]_{\alpha-\beta}$,  we may use the general result in \eqref{F-PN-V-11}:
\begin{align}
	&\mathfrak{M}^{\epsilon,0}_{\B}[\ham^{\epsilon,0}_{\B}]_{\alpha,\beta}=\tLambda^{\epsilon}(\alpha,\beta)\big[\mathring{\mathfrak{m}}_{\B}[H_{\B}]_{\alpha-\beta}+\epsilon\,\widetilde{\mathfrak{M}}^{\epsilon}_{\B}[\mathfrak{K}^\circ(h)]_{\alpha,\beta}\big]
\end{align}
with the particular choice $B=\epsilon{B}^\bullet$ and also the estimation in Proposition \ref{P-4.30}. Theorem~\ref{T-II} clearly follows from the above results.

	From \eqref{F-PN-VII-4} we also deduce that:
	\[
	\mathfrak{K}[\widetilde{Q}^{\epsilon,\cc}_\B]=\widetilde{\Lambda}^{\epsilon,\cc}\,\big(\mathfrak{K}[P^{\epsilon,0}_\B]\,+\,\epsilon\cc\mathfrak{K}^{\epsilon,\cc}_2\big)\,,
	\]
	where  $P^{\epsilon,0}_{\B}$ is the same as $P^\epsilon_{\B}$ in \eqref{F-P-eps-B-const} in Proposition \ref{P-P-eps-B-const} and $\mathfrak{K}^{\epsilon,\cc}_2$ is a $BC^\infty(\X\times\X)$  integral kernel having rapid off-diagonal decay uniformly in $(\cc,\epsilon)\in[0,1]\times[0,\epsilon_0]$. 

\begin{appendices}

\section{Brief reminder of the Bloch-Floquet Theory}\label{A-BF-Theory}

On $L^2(\X)$ we consider the canonical unitary representation of $\Rd\cong\X$ with the convention (used for the Weyl calculus) $\big(U(z)f\big)(x):=f(x-z)$. 

We are interested in analyzing the structure of self-adjoint, lower-bounded operators of the form $T:\mathcal{D}(T)\rightarrow{L}^2(\X)$ (with $\mathcal{D}(T)$ a dense linear subspace of $L^2(\X)$) that have the property:  
\begin{hypothesis}\label{H-GcomOp}
For any $\gamma\in\Gamma\cong\Zd$ the unitary $U(\gamma)$ leaves $\mathcal{D}(T)$ invariant and commutes with $T$.
	\end{hypothesis} 
More precisely, we shall be interested in operators satisfying Hypothesis \ref{H-GcomOp} and being of the form $T=\overline{\Op^{A^\circ}(F)}$ as in Definition \ref{D-Hcirc}. In \cite{CHP-5} we have discussed the extension of the Bloch-Floquet theory from usual periodic differential operators, to this class of magnetic pseudo-differential operators and thus we shall very briefly recall the main results insisting a little bit more on the vector-bundle structure associated with  the Bloch-Floquet decomposition.

Let us recall the Bloch-Floquet transformation:
\beq\label{DF-UBF}
\forall(\theta,x)\in\T_*\times\X,\quad\big(\U_{BF}\varphi\big)(\theta,x):=\underset{\gamma\in\Gamma}{\sum}\,e^{-i<\s_*(\theta),\gamma>}\,\varphi(x+\gamma),\quad\forall\varphi\in\mathscr{S}(\X),
\eeq
where the series is absolutely convergent due to the rapid decay of $\varphi\in\mathscr{S}(X)$. Noticing that for any $\varphi\in\mathscr{S}(X)$:
\beq
\big(\U_{BF}\varphi\big)(\theta,x+\alpha)=e^{i<\s_*(\theta),\alpha>}\big(\U_{BF}\varphi\big)(\theta,x),\quad\forall(\theta,x)\in\T_*\times\X,\ \forall\alpha\in\Gamma\,,
\eeq
we define the linear space:
\beq
\forall\theta\in\T_*:\qquad\mathscr{F}_\theta:=\big\{F\in{L}^2_{\text{\tt loc}}(\X),\ \tau_{\alpha}F=\chi_\theta(-\alpha)F,\ \forall\alpha\in\Gamma\}\,,
\eeq
where each element is completely determined by its restriction to the unit cell $\E\subset\X$ and endow it with the scalar product:
\beq\label{DF-tsp}
\big(F,G\big)_{\theta}:=\big(F\big|_{\E},G\big|_{\E}\big)_{L^2(\E)}.
\eeq
\begin{remark}
If we consider the family of smooth functions $\mathfrak{f}_{\xi}(x):=e^{i<\xi,x>}$ indexed by $\xi\in\X^*$ we notice that multiplication with $\mathfrak{f}_{\s_*(\theta)}$ defines a bijective linear map $\mathscr{F}_\theta\overset{\sim}{\longrightarrow}\mathscr{F}_{1}$ where $\mathscr{F}_1$ is clearly the space of square integrable $\Gamma$-periodic functions on $\X$.
\end{remark}
\begin{remark}
Given any function $f\in\mathscr{F}_1$ we can associate with  it the function $\mathring{f}:=f\circ\s:\T\rightarrow\Co$ and notice that it defines a unitary map $\mathscr{F}_1\overset{\sim}{\longrightarrow}L^2(\T)$ and also a bijection between smooth functions. Its inverse is clearly $\mathring{f}\mapsto\mathring{f}\circ\p$.
\end{remark}
It is rather evident that each $\mathscr{F}_{\theta}$ with \eqref{DF-tsp} is a Hilbert space and the two remarks above together with the constructions and arguments in \cite{Dix} allow us to define the measurable vector fields $\big\{\mathfrak{f}_{\s_*(\theta)}^{-1}\mathring{f},\ \mathring{f}\in{L}^2(\T)\big\}_{\theta\in\T_*}$ and the associated direct integral Hilbert space:
	\beq\label{DF-dir-prod}
	\mathscr{F}\,:=\,\int^{\oplus}_{\T_*}d\theta\,\mathscr{F}_\theta\,.
	\eeq
One can easily prove that the map $\U_{BF}$ in \eqref{DF-UBF} defines a unitary map $L^2(\X)\overset{\sim}{\longrightarrow}\mathscr{F}$. An important result of the Bloch-Floquet theory is that, for any operator $T$ satisfying Hypothesis \ref{H-GcomOp},  its Bloch-Floquet (BF)  representation $\U_{BF}T\U_{BF}^{-1}$ is a decomposable operator on $\mathscr{F}$ with respect to the direct integral structure \eqref{DF-dir-prod}, i.e.:
\beq
\U_{BF}T\U_{BF}^{-1}\,=\,\int^{\oplus}_{\T_*}d\theta\,\widehat{T}(\theta).
\eeq

Let us also consider the Bloch-Floquet-Zak (BFZ) transformation defined as \linebreak ${[\U_{BFZ}\varphi](\xi,x):=\mathfrak{f}_{-\xi}(x)[\U_{BF}\varphi](\p_*(\xi),x)}$, or explicitely:
\beq\label{DF-UBFZ}
\forall(\xi,x)\in\X^*\times\X,\quad\big(\U_{BFZ}\varphi\big)(\xi,x):=\underset{\gamma\in\Gamma}{\sum}\,e^{-i<\xi,x+\gamma>}\,\varphi(x+\gamma),\quad\forall\varphi\in\mathscr{S}(\X).
\eeq
One easily notice that $[\U_{BFZ}\varphi](\xi,x)=[\U_{BFZ}\varphi]\big(\xi,(\s\circ\p)x\big)$ and we can define:
\beq\label{DF-cUBFZ}
\forall(\xi,\omega)\in\X^*\times\T,\quad\big(\mathring{\U}_{BFZ}\varphi\big)(\xi,\omega):=[\U_{BFZ}\varphi]\big(\xi,\s(\omega)\big),\varphi(x+\gamma),\quad\forall\varphi\in\mathscr{S}(\X).
\eeq
\beq\label{DF-G}
\mathscr{G}\,:=\,\big\{F\in{L}^2_{\text{\tt loc}}\big(\X^*;L^2(\T)\big),\ \tau_{\gamma^*}F=\mathfrak{f}_{-\gamma^*}F\big\}
\eeq
and endow $\mathscr{G}$ with the scalar product:
\beq
\big(F,G\big)_{\mathscr{G}}\,:=\,\int_{{\cal{B}}}d\hat{\xi}\,\big(F(\hat{\xi}),G(\hat{\xi})\big)_{L^2(\T)}.
\eeq

\begin{definition}\label{D-reg-sp}
	\begin{itemize}
		\item For any $p\in\R$ we denote by:
		\begin{itemize}
			\item $\mathscr{F}^p_{\bz^*}:=\mathscr{F}_{\bz^*}\cap\mathscr{H}^p_{\text{\tt loc}}(\X),\quad\mathscr{F}^p:=\int^{\oplus}_{\T_*}d\theta\,\mathscr{F}^p_{\theta}$;
			\item $\mathscr{G}^p:=\big\{F\in{L}^2_{\text{\tt loc}}\big(\X^*;\mathscr{H}^p(\T)\big),\ \tau_{\gamma^*}F=\mathfrak{f}_{-\gamma^*}F\big\}$;
			\item $\mathscr{G}^p_\infty:=\big\{F\in{C}^\infty\big(\X^*;\mathscr{H}^p(\T)\big),\ \tau_{\gamma^*}F=\mathfrak{f}_{-\gamma^*}F\big\}$;
		\end{itemize}
		\item $\mathscr{F}^\infty_{\bz^*}:=\mathscr{F}_{\bz^*}\cap{C}^\infty(\X)$, $\mathscr{F}^\infty:=\int^{\oplus}_{\T_*}d\theta\,\mathscr{F}^\infty_{\theta}$;
		\item $\mathscr{F}^\infty_\infty:=\big\{F\in{C}^\infty\big(\T_*;C^\infty(\X)\big),\ F(\theta,x+\gamma)=\chi_{\theta}(-\gamma)F(\theta,x),\ \forall\gamma\in\Gamma\big\}$;
		\item $\mathscr{G}^\infty:=\big\{F\in{L}^2_{\text{\tt loc}}\big(\X^*;C^\infty(\T)\big),\ \tau_{\gamma^*}F=\mathfrak{f}_{-\gamma^*}F\big\}$;
		\item  $\mathscr{G}^\infty_\infty:=\big\{F\in{C}^\infty\big(\X^*;C^\infty(\T)\big),\ \tau_{\gamma^*}F=\mathfrak{f}_{-\gamma^*}F\big\}$.
	\end{itemize}
\end{definition}

If we denote by $\mathfrak{f}\in{C}^\infty(\X^*\times\X)$ the function $\mathfrak{f}(\xi,x):=e^{i<\xi,x>}$ we notice the evident equalities:
\beq\label{F-GF}
\mathscr{G}^p=\overline{\mathfrak{f}}\mathscr{F}^p,\quad\mathscr{G}^\infty=\overline{\mathfrak{f}}\mathscr{F}^\infty,\quad\mathscr{G}^p_\infty=\overline{\mathfrak{f}}\mathscr{F}^p_\infty,\quad\quad\mathscr{G}^\infty_\infty=\overline{\mathfrak{f}}\mathscr{F}^\infty_\infty.
\eeq

Using this framework, in Subsection 2.3 of \cite{CHP-5} we extend a number of results in \cite{Ku1, Ku} from periodic differential operators to our class of pseudo-differential operators and recall them here in the following theorem:
\begin{theorem}[Bloch-Floquet decomposition]\label{T-FBdec} If $H^{\circ}$ is the magnetic pseudo-differential operator defined in Definition \ref{D-Hcirc}, then it is a lower semi-bounded self-adjoint operator in $L^2(\X)$ with domain $\mathscr{H}^p(\X)$ ($p>0$), that commutes  with all the translations $\big\{U_\gamma,\,\gamma\in\Gamma\big\}$ and satisfies the following properties: 
	\begin{enumerate}
		\item  $H^\circ$  is unitarily equivalent with a direct integral (in the sense of \cite{Dix}). More precisely, in the BF representation \eqref{DF-UBF}, we have the decomposition:
		\beq\nonumber 
		\U_{BF}H^\circ \U_{BF}^{-1}=\int^\oplus_{\T_*}\hspace*{-5pt}d\theta\,\widehat{H^\circ}(\theta) 
		\eeq
		where each $\widehat{H^\circ}(\theta)$ is a lower semi-bounded self-adjoint operator having as domain $\mathcal{D}(\widehat{H^\circ})=\mathscr{F}^p_{\theta}$.
		\item For any $\zz\in\mathbb{C}\setminus\sigma(H^\circ)$, the map $$\X_*\ni\xi\mapsto\widetilde{R}^\circ_\zz(\xi):=\mathfrak{f}_{\xi}^{-1}\big(\widehat{H}^\circ\big(\p_*(\xi)\big)-\zz\bb1\big)^{-1}\mathfrak{f}_{\xi}\in\mathbb{B}\big(L^2(\T)\big)$$
		 is well-defined and smooth.
		\item For each $\theta\in\T_*$ and $\zz\in\mathbb{C}\setminus\sigma(H^\circ)$, the resolvent $\widehat{R}^\circ_\zz(\theta)$ is a compact operator. Thus for any $\theta\in\T_*$, the spectrum of $\widehat{H^\circ}(\theta)$ consists of discrete eigenvalues:  $$\sigma\big(\widehat{H^\circ}(\theta)\big)=\big\{\widetilde{\lambda}_k(\theta)\big\}_{k\in\mathbb{N}_\bullet}\,,$$ which satisfy  $-\infty<\widetilde{\lambda}_k(\theta)<\widetilde{\lambda}_{k+1}(\theta)$ for any $k\in\mathbb{N}_\bullet$. 
  
		\item If the geometric multiplicity of $\widetilde{\lambda}_k(\theta)$ is $N_k(\theta)$, then we can re-index the spectrum by using "simple" eigenvalues in the following way: 
  $$\widetilde{\lambda}_1(\theta)=\lambda_1(\theta)=\lambda_2(\theta)=\dots=\lambda_{N_1} (\theta)<\widetilde{\lambda}_2(\theta)=\lambda_{N_1+1}(\theta)=\dots=\lambda_{N_1+N_2}(\theta)<\dots $$
  In this way, for each $k\in\mathbb{N}_\bullet$, the function $\T_*\ni\theta\mapsto\lambda_k(\theta)\in\mathbb{R}$ is continuous. 
	\end{enumerate}
\end{theorem}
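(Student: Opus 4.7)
The proof will proceed in three stages, leaning on the magnetic pseudo-differential calculus summarized in Appendix \ref{A-m-PsiDO} together with the reduction of $\Op^{A^\circ}(h)$ to an ordinary periodic Weyl operator $\Op(h^\circ)$ recorded in Remark \ref{R-main}.

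First I would establish the basic operator-theoretic properties of $H^\circ$ and its direct integral decomposition. Since $h\in S^p_1(\Xi)_\Gamma$ is elliptic of positive order and $A^\circ\in\Fb^1(\X)$, a magnetic G{\aa}rding inequality gives lower semi-boundedness of $\Op^{A^\circ}(h)$ on the Sobolev space $\mathscr{H}^p(\X)$, with self-adjointness inherited from the corresponding result for $\Op(h^\circ)$ on the ordinary Weyl side. Commutation with $U_\gamma$ for $\gamma\in\Gamma$ is immediate from the $\Gamma$-periodicity of both $h$ and $A^\circ$ together with the covariance of the magnetic Weyl calculus under lattice translations. The Bloch-Floquet transform $\U_{BF}$ intertwines $U_\gamma$ with multiplication by the character $\chi_\theta(-\gamma)$ on $\mathscr{F}=\int^\oplus_{\T_*}\mathscr{F}_\theta\,d\theta$; by a classical direct-integral argument (cf.\ \cite{Dix}), every closed operator commuting with the full translation group is decomposable, so $\U_{BF}H^\circ\U_{BF}^{-1}=\int^\oplus_{\T_*}\widehat{H^\circ}(\theta)\,d\theta$ with fibre domain $\mathscr{F}^p_\theta$ obtained by disintegrating $\mathscr{H}^p(\X)$.

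Next I would address the smoothness and compactness of the resolvent. The key move is to conjugate away the quasi-periodicity: set $\widetilde{H}^\circ(\xi):=\mathfrak{f}_\xi^{-1}\widehat{H^\circ}(\p_*(\xi))\mathfrak{f}_\xi$ as an operator on the fixed Hilbert space $L^2(\T)$. Via the reduction of Remark \ref{R-main}, this is the periodic Weyl quantization of a symbol obtained from $h^\circ$ by translation in the momentum variable by $\xi$, whose $S^p_1$-semi-norms depend smoothly on $\xi$. Hence $\{\widetilde{H}^\circ(\xi)\}_\xi$ is a norm-resolvent smooth family on $\Co\setminus\sigma(H^\circ)$, yielding the smoothness of $\widetilde{R}^\circ_\zz(\xi)$. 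Compactness of $\widetilde{R}^\circ_\zz(\xi)$ is then immediate from the Rellich-Kondrachov embedding $\mathscr{H}^p(\T)\hookrightarrow L^2(\T)$ on the compact manifold $\T$, so $\widehat{H^\circ}(\theta)$ has pure point spectrum with eigenvalues accumulating only at $+\infty$.

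Finally, continuity of the band functions $\theta\mapsto\lambda_k(\theta)$ under the re-indexing by geometric multiplicity follows from norm continuity of $\widetilde{R}^\circ_\zz(\xi)$ in $\xi$ together with the min-max characterization of the re-indexed eigenvalues. The main obstacle, in my view, lies in the careful bookkeeping of fibre domains: one must verify that $\mathscr{F}^p_\theta$ truly is the domain of $\widehat{H^\circ}(\theta)$ uniformly in $\theta$, and that multiplication by $\mathfrak{f}_\xi$ preserves the local Sobolev structure so that $\widetilde{H}^\circ(\xi)$ is genuinely a smooth family of operators on a fixed space --- precisely the point where the reduction to ordinary periodic Weyl calculus on $\mathscr{H}^p(\T)$ does the heavy lifting.
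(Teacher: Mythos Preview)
The paper does not actually prove this theorem; it is stated in Appendix \ref{A-BF-Theory} with the explicit remark that the result is established in Subsection 2.3 of \cite{CHP-5}, extending earlier work of Kuchment \cite{Ku1,Ku} from periodic differential operators to the present pseudo-differential setting. Your outline is precisely the standard argument one expects in that reference: reduce $\Op^{A^\circ}(h)$ to the ordinary periodic Weyl operator $\Op(h^\circ)$ via Remark \ref{R-main}, invoke decomposability for $\Gamma$-periodic operators, conjugate by $\mathfrak{f}_\xi$ to obtain a smooth family of elliptic operators on the fixed space $\mathscr{H}^p(\T)$, and read off compactness from Rellich and eigenvalue continuity from min-max.

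One small remark: once you have passed to $\Op(h^\circ)$ you no longer need a magnetic G{\aa}rding inequality --- the ordinary one suffices, which is consistent with the paper's emphasis that the periodic magnetic potential $A^\circ$ can be absorbed into the symbol and the whole Bloch-Floquet analysis proceeds in the non-magnetic Weyl calculus. Your identification of the domain bookkeeping as the delicate point is apt; this is indeed where the ellipticity of $h^\circ$ and the uniform control of the $S^p_1$-semi-norms under the shift $\xi\mapsto\xi+\cdot$ are used, and it is exactly the content that \cite{CHP-5} is cited for.
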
 
\paragraph{The Bloch eigenprojections.}
In order to define the eigenprojections $\widehat{\pi}_k(\theta)$ associated with the Bloch eigenvalues $\big\{\lambda_k(\theta)\big\}_{k\in\mathbb{N}_\bullet}$, we apply the following procedure. To each $k\in \mathbb{N}_\bullet$ and $\theta\in \mathbb T^d_*$, we introduce the minimal labelling of $\lambda_k(\theta)$:  
$$\nu (k,\theta)=\inf\big\{j\in\mathbb{N}_\bullet\,,\,\lambda_j(\theta*) =\lambda_k(\theta)\big\}\,,$$
and  define the eigenprojections by the following formulas:
\beq\label{DF-pi-k}
\begin{split}
	\widehat{\pi}_k(\theta):=&\left\{
	\begin{array}{ll}
		=-\frac{1}{2\pi i}\oint_{\mathscr{C}_k(\theta)}d\z\,\big(\widehat{H^\circ}(\theta)-\z\bb1\big)^{-1}\in  {\mathbb{L}(\mathscr{F}_{
				\theta})} & \mbox{ if } \nu(k,\theta)=k\,,\\
		=0 & \mbox{ if }  \nu(k,\theta) < k\,,
	\end{array}
	\right.
\end{split}
\eeq 
where $\mathscr{C}_k(\theta)\subset\mathbb{C}$ is a circle surrounding $\lambda_k(\theta)$ in a anticlockwise direction and no other point from $\sigma\big(\widehat{H^\circ}(\theta)\big)$. They define measurable $\mathbb{B}(\mathscr{F}_{\theta})$-valued functions on $\mathbb{T}_*$. {Some standard arguments using the ellipticity of $h\in S^p_1(\X^*\times\X)$ imply the following statement.
\begin{theorem}\label{T-FBdec-proj}
Under the hypothesis of Theorem \ref{T-FBdec} the range of any finite dimensional orthogonal projection $\widehat{\pi}_k(\theta)$ is contained in the space of regular functions $\mathscr{F}_{\theta}^\infty$.
\end{theorem}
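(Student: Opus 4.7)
The plan is to establish elliptic regularity for the fiber operator $\widehat{H^\circ}(\theta)$ and then bootstrap on the Sobolev scale $\{\mathscr{F}^s_\theta\}_{s\geq 0}$ until we reach $\mathscr{F}^\infty_\theta$. Fix $\theta\in\T_*$ and let $u\in\Rge\,\widehat{\pi}_k(\theta)$. By construction of $\widehat{\pi}_k(\theta)$ through the resolvent contour integral \eqref{DF-pi-k}, we already have $u\in\mathcal{D}\bigl(\widehat{H^\circ}(\theta)\bigr)=\mathscr{F}^p_\theta$, and $u$ satisfies the eigenvalue equation
\begin{equation*}
\bigl(\widehat{H^\circ}(\theta)-\lambda_k(\theta)\bb1\bigr)u=0.
\end{equation*}
The goal is to prove $u\in\mathscr{F}^s_\theta$ for every $s\geq 0$; Sobolev embedding (local on $\X$, since the quasi-periodicity condition is preserved) then yields $u\in C^\infty(\X)$, hence $u\in\mathscr{F}^\infty_\theta$ by Definition \ref{D-reg-sp}.

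The key step is the bootstrap. Because $h\in S^p_1(\Xi)_\Gamma$ is elliptic of order $p>0$ (in the sense of Hypothesis \ref{H-h1}), standard pseudo-differential parametrix constructions inside the Weyl calculus with the periodic vector potential $A^\circ$ (cf.\ Remark \ref{R-main}) produce a periodic symbol $q\in S^{-p}_1(\Xi)_\Gamma$ together with $r\in S^{-\infty}(\Xi)_\Gamma$ such that $q\sharp h^\circ=1+r$. This parametrix identity is preserved by the BF decomposition and descends to each fiber as $\widehat{Q}(\theta)\widehat{H^\circ}(\theta)=\bb1+\widehat{R}(\theta)$, where $\widehat{Q}(\theta):\mathscr{F}^s_\theta\to\mathscr{F}^{s+p}_\theta$ is bounded for every $s\in\R$ and $\widehat{R}(\theta):\mathscr{F}^s_\theta\to\mathscr{F}^{s'}_\theta$ is bounded for all $s,s'$. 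Inserting the eigenvalue equation yields
\begin{equation*}
u=\widehat{Q}(\theta)\widehat{H^\circ}(\theta)u-\widehat{R}(\theta)u=\lambda_k(\theta)\widehat{Q}(\theta)u-\widehat{R}(\theta)u.
\end{equation*}
Thus if $u\in\mathscr{F}^s_\theta$ for some $s\geq p$, the right-hand side lies in $\mathscr{F}^{s+p}_\theta$, so $u\in\mathscr{F}^{s+p}_\theta$. Starting from $u\in\mathscr{F}^p_\theta$ and iterating $n$ times gives $u\in\mathscr{F}^{(n+1)p}_\theta$ for every $n\in\mathbb{N}$, which covers every Sobolev order.

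The main technical point to handle with some care is the construction of the fiber parametrix $\widehat{Q}(\theta)$, because the fibers $\mathscr{F}_\theta$ carry the twisted boundary condition $\tau_\alpha u=\chi_\theta(-\alpha)u$. However, ellipticity is a condition on the principal symbol for $|\xi|\to\infty$ and is preserved under the BF decomposition, since the latter is just a unitary rearrangement compatible with the periodicity of $h^\circ$; consequently the parametrix for $H^\circ$ at the level of symbols automatically yields a parametrix on each fiber with the same order shift. Once this is in place, the rest of the argument is the standard elliptic bootstrap described above, and no further subtleties arise.
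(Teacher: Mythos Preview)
Your proposal is correct and is precisely the ``standard arguments using the ellipticity of $h\in S^p_1(\X^*\times\X)$'' that the paper invokes without further detail just before stating Theorem~\ref{T-FBdec-proj}. The elliptic-parametrix-plus-bootstrap scheme you describe is exactly what is meant there, so there is nothing to add.
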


\section{Some properties of the magnetic pseudo-differential calculus}\label{A-m-PsiDO}

\subsection{Main definitions and properties}

In \cite{MP-1} the authors prove that starting from the basic principle that the influence of magnetic fields on Hamiltonian systems may be described by the so-called \textit{'minimal coupling procedure'} that consists in modifying the momentum variables by a translation with the magnetic vector potential, one may define a 'twisted' Weyl quantization providing the algebra of quantum observables of the given Hamiltonian system in magnetic fields. The procedure is developed in \cite{IMP-1, IMP-3, MPR1, AMP, CP-1, CP-2} and we have developed it in connection with periodic Hamiltonians in magnetic fields in (\cite{IP}, \cite{CIP}, \cite{CHP-1} - \cite{CHP-5}.)
\begin{definition}\label{D-OpA}
	Given a regular magnetic field $B\in\Fb^2(\X)$ with vector potential $A\in\Fp^1(\X)$ such that $B=dA$, we have the following explicit formula for the magnetic quantization  for any $\Phi\in\mathscr{S}(\Xi)$: 
	\beq\begin{split}\label{F-OpA}
		&\big(\Op^A(\Phi)\phi\big)(x)\,=\,\int_{\X}dy\int_{\X^*}d\eta\,e^{-i\int_{[x,y]}A}\,e^{i<\eta,x-y>}\,\Phi\big((x+y)/2,\eta\big)\,\phi(y),\\ 
		&\hspace*{10cm}\forall\phi\in\mathscr{S}(\X),\,\forall x\in\X.
	\end{split}\eeq
\end{definition}
\begin{remark}\label{R-KerOpA}
	The explicit formula of $\Op^A(\Phi)\in\mathcal{L}\big(\mathscr{S}(\X);\mathscr{S}(\X)\big)$ in Definition \ref{D-OpA} allows us to obtain a simple formula for the integral kernel $\mathfrak{K}^A[\Phi]\in\mathscr{S}(\X\times\X)$ of this operator:
\beq\label{F-KerOpA}
\mathfrak{K}^A[\Phi]\,=\,\Lambda^A\,\big[\big((\bb1\otimes\mathcal{F}_{\X^*})\Phi\big)\circ\Upsilon\big].
\eeq 
where $\Upsilon:\X\times\X\ni(x,y)\mapsto\big((x+y)/2,x-y\big)\in\X\times\X$ is a linear bijection with Jacobian 1.
	\begin{remark}\label{R-OpA}
		It is proven in \cite{MP-1} that for a vector potential $A$ with components of class $C^\infty_{\text{\tt pol}}(\Rd;\R)$ the map $\Op^A:\mathscr{S}(\Xi)\rightarrow\mathcal{L}\big(\mathscr{S}(\X);\mathscr{S}(\X)\big)$ defines a linear and topological isomorphism $\Op^A:\mathscr{S}(\Xi)\overset{\sim}{\rightarrow}\mathcal{L}\big(\mathscr{S}^\prime(\X);\mathscr{S}(\X)\big)$ that has a canonical extension to a linear and topological isomorphism $\Op^A:\mathscr{S}^\prime(\Xi)\overset{\sim}{\rightarrow}\mathcal{L}\big(\mathscr{S}(\X);\mathscr{S}^\prime(\X)\big)$.
	\end{remark}
\end{remark}
\begin{definition}\label{D-MoyalPr}
	Due to the Remark \ref{R-OpA}, for any magnetic field with components of class $BC^\infty(\Rd)$ we can define the following bilinear map $\sharp^B:\mathscr{S}(\Xi)\times\mathscr{S}(\Xi)\rightarrow\mathscr{S}(\Xi)$ given by the equality: $\Op^A(\Phi)\Op^A(\Psi):=\Op^A(\Phi\sharp^B\Psi)$ and called \textbf{the magnetic Moyal product}.
\end{definition}
\begin{remark}\label{R-MoyalPr}
	The results in \cite{MP-1} imply that the magnetic Moyal product is a bilinear jointly continuous map that has the following explicit formula:
	\beq\label{F-pr-Moyal}
	\big(\Phi\sharp^B\Psi\big)(X)\,=\,4^{-d}\int_{\Xi}dZ\int_{\Xi}dZ^\prime\,e^{-2i(<\xi-\zeta,x-z^\prime>-<\xi-\zeta^\prime,x-z>)}\,e^{-i\Theta^B(x;z,z^\prime)}\,\Phi(Z)\,\Psi(Z^\prime)
	\eeq
	where $\Theta^B(x;z,z^\prime)$ is the flux of the $2$-form $B$ through the following triangle in $\X$:
	\beq
	\mathcal{T}_{x;z,z^\prime}:=\big\{P_{x;z,z^\prime}(s,s^\prime):=z+z^\prime-x+2s(x-z)+2s^\prime(z-z^\prime)\,,\,s\in[0,1],\,s^\prime\in[0,s]\big\}.
	\eeq
\end{remark}

We shall constantly use the following properties of the magnetic pseudodifferential calculus, considering a magnetic field $B$ with components of class $BC^\infty(\Rd)$ and a choice of a vector potential $A$ with components of class $C^\infty_{\text{\tt pol}}(\Rd)$:

\begin{theorem}
	[Theorem 2.1 in \cite{IMP-3}.] The Moyal product extends to a bilinear, continuous map
	(see also Theorem 2.6 in \cite{IMP-1})
	$$
	S^{p_1}_\rho(\X^*,\X)\times S^{p_2}_\rho(\X^*,\X)\ni(F,G)\,\mapsto\,F\sharp^BG\in S^{p_1+p_2}_\rho(\X^*,\X).
	$$
\end{theorem}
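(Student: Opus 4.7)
The plan is to estimate all the seminorms $\nu^{p_1+p_2,\rho}_{n,m}(\Phi\sharp^B\Psi)$ starting from the explicit oscillatory integral formula \eqref{F-pr-Moyal}. The structure of the argument closely parallels the classical proof for the non-magnetic Weyl-Moyal product, with the magnetic phase $e^{-i\Theta^B(x;z,z')}$ playing the role of a bounded (together with its derivatives) modulating factor; the main novelty lies in controlling this factor uniformly with respect to $(x,z,z')$.

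First, I would reduce the problem to showing that, for any $(\alpha,\beta)\in\mathbb{N}^d\times\mathbb{N}^d$, there is an estimate of the form
$$
\underset{(x,\xi)\in\Xi}{\sup}\,<\xi>^{-p_1-p_2+\rho|\beta|}\,\big|\partial_x^\alpha\partial_\xi^\beta(\Phi\sharp^B\Psi)(x,\xi)\big|\,\leq\,C\,\nu^{p_1,\rho}_{n_1,m_1}(\Phi)\,\nu^{p_2,\rho}_{n_2,m_2}(\Psi),
$$
for some $n_1,m_1,n_2,m_2$ depending only on $\alpha,\beta,d,p_1,p_2$; the claimed continuity of $\sharp^B$ then follows from the standard characterization of jointly continuous bilinear maps between Fr\'echet spaces.

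Second, I would treat the integral in \eqref{F-pr-Moyal} as an oscillatory integral. Integration by parts with respect to $\zeta$ and $\zeta'$ produces powers of $<x-z'>^{-1}$ and $<x-z>^{-1}$ coming from the oscillating phase, while integration by parts with respect to $z$ and $z'$ produces powers of $<\xi-\zeta'>^{-1}$ and $<\xi-\zeta>^{-1}$. Iterating these two families of regularisations and combining them with the symbol estimates on $\Phi$ and $\Psi$, together with the Peetre-type inequality $<\xi>^s\leq\,2^{|s|/2}<\xi-\zeta>^{|s|}<\zeta>^s$ applied with both $\zeta$ and $\zeta'$, yields absolute convergence of the integral and the correct exponent $p_1+p_2-\rho|\beta|$ in the final estimate.

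Third, and this is the main technical difficulty, I need to handle the magnetic phase $\Theta^B(x;z,z')$. Because $B\in\Fb^2(\X)$ and the triangle $\mathcal{T}_{x;z,z'}$ is parametrised affinely by the vertices $(x,z,z')$ via the map $P_{x;z,z'}(s,s')$ from Remark~\ref{R-MoyalPr}, a direct computation after changing variables to $(s,s')\in[0,1]^2$ shows that, for any multi-indices $(a,b,c)$, the derivative $\partial_x^a\partial_z^b\partial_{z'}^c\Theta^B$ is bounded by a polynomial in $<x-z>,<x-z'>,<z-z'>$, with coefficients controlled by finitely many $BC^\infty$-seminorms of the components of $B$. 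These polynomial factors can then be absorbed by the fast decay coming from the two integration-by-parts mechanisms of the previous step, at the cost of a finite (and $(a,b,c)$-dependent) number of additional seminorms of $\Phi$ and $\Psi$. The main obstacle is precisely the bookkeeping here: ensuring enough regularising powers in $\langle x-z\rangle^{-1}$, $\langle x-z'\rangle^{-1}$, $\langle\xi-\zeta\rangle^{-1}$, $\langle\xi-\zeta'\rangle^{-1}$ are available to dominate the polynomial growth coming from differentiating $\Theta^B$, while still gaining the required $\rho|\beta|$ powers of $<\xi>^{-1}$. Once this accounting is closed, the Leibniz expansion of $\partial_x^\alpha\partial_\xi^\beta$ under the integral yields the required estimate and hence the theorem.
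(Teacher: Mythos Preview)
The paper does not give its own proof of this theorem: it is stated in Appendix~\ref{A-m-PsiDO} as a result quoted from \cite{IMP-3} (Theorem~2.1) and \cite{IMP-1} (Theorem~2.6), with no accompanying argument. So there is no ``paper's proof'' to compare against.

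That said, your outline is essentially the standard argument carried out in those references: regularise the oscillatory integral \eqref{F-pr-Moyal} by repeated integration by parts in $(\zeta,\zeta',z,z')$, control the derivatives of the magnetic flux $\Theta^B(x;z,z')$ by polynomials in $\langle x-z\rangle$, $\langle x-z'\rangle$, $\langle z-z'\rangle$ (since $B\in\Fb^2(\X)$ and the triangle is parametrised affinely in its vertices), and absorb these polynomial factors by the decay produced from the oscillatory phase, using Peetre's inequality to recover the correct $\langle\xi\rangle$-weight. Two small remarks: the parametrisation domain in Remark~\ref{R-MoyalPr} is the simplex $\{(s,s'):0\leq s'\leq s\leq 1\}$ rather than $[0,1]^2$; and when you integrate by parts in $z$ or $z'$, derivatives also land on $e^{-i\Theta^B}$, not only on $\Phi,\Psi$ --- you acknowledge this in your third step, and the resulting bookkeeping is indeed the bulk of the work, but there is no conceptual gap in your plan.
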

\begin{proposition}
	[Proposition 4.23 in \cite{MP-1}.] The operator $\Op^A:\mathscr{S}^\prime(\Xi)\overset{\sim}{\rightarrow}\mathcal{L}\big(\mathscr{S}(\X);\mathscr{S}^\prime(\X)\big)$ restricts to a linear continuous map $\Op^A:S^p_\rho(\X^*,\X)\rightarrow\mathcal{L}\big(\mathscr{S}(\X);\mathscr{S}(\X)\big)\bigcap\mathcal{L}\big(\mathscr{S}^\prime(\X);\mathscr{S}^\prime(\X)\big)$ for any $(p,\rho)\in\R\times[0,1]$.
\end{proposition}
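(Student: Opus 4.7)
The plan is to start from the explicit oscillatory-integral representation \eqref{F-OpA} and to control it with standard integration-by-parts arguments adapted to the magnetic phase. For $F\in S^p_\rho(\Xi)$ and $\phi\in\mathscr{S}(\X)$ we have
$$\bigl(\Op^A(F)\phi\bigr)(x)=\int_{\X}\!dy\int_{\X^*}\!d\eta\,e^{i\langle\eta,x-y\rangle}\,\Lambda^A(x,y)\,F\bigl((x+y)/2,\eta\bigr)\,\phi(y),$$
and the magnetic phase, rewritten as $\Lambda^A(x,y)=\exp\bigl(-i\int_0^1\langle A(x+t(y-x)),y-x\rangle\,dt\bigr)$, is smooth in $(x,y)$ with derivatives of polynomial growth in both entries because $A\in\Fp^1(\X)$. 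The first step is to make sense of this expression as an oscillatory integral using the two auxiliary operators $L_\eta:=\langle x-y\rangle^{-2}(1-\Delta_\eta)$ and $L_y:=\langle\eta\rangle^{-2}(1-\Delta_y)$, both of which leave $e^{i\langle\eta,x-y\rangle}$ invariant and, after a formal integration by parts, transfer their adjoint derivatives onto $F$, $\Lambda^A$ and $\phi$.

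To prove continuity $\mathscr{S}(\X)\to\mathscr{S}(\X)$, I will estimate $\langle x\rangle^N|\partial_x^\gamma(\Op^A(F)\phi)(x)|$ for arbitrary $N$ and $\gamma$. After the substitution $y\mapsto x-y$, each $\partial_x^\gamma$ distributes by Leibniz across $\Lambda^A(x,x-y)$, $F((2x-y)/2,\eta)$ and $\phi(x-y)$; the $x$-derivatives of $\Lambda^A$ produce polynomial factors in $x$ and $y$, but the weight $\langle x\rangle^N\le C_N\langle y\rangle^N\langle x-y\rangle^N$ can be split between $\phi(x-y)\in\mathscr{S}(\X)$ (absorbing arbitrarily many powers of $\langle x-y\rangle$) and the integration-by-parts decay. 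Applying $L_\eta^{N_1}$ sufficiently many times gives a factor $\langle x-y\rangle^{-2N_1}$ at the cost of symbol gains $\langle\eta\rangle^{p-2\rho N_1}$, making the $\eta$-integral convergent once $N_1$ is large (for $\rho>0$); then $L_y^{N_2}$ buys decay $\langle\eta\rangle^{-2N_2}$ that offsets the remaining growth, including the polynomial factors coming from derivatives of $\Lambda^A$ and from the weight $\langle y\rangle^N$. The final estimate will have the form
$$\bigl|\langle x\rangle^N\partial_x^\gamma(\Op^A(F)\phi)(x)\bigr|\le C\!\!\sum_{n,m\le M}\!\!\nu^{p,\rho}_{n,m}(F)\,\sigma_{N,\gamma}(\phi),$$
for a finite $M=M(N,\gamma,p,\rho,d)$ and a Schwartz seminorm $\sigma_{N,\gamma}$ of $\phi$, which gives Schwartz range \emph{and} joint continuity in $(F,\phi)$.

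The extension to $\mathcal{L}\bigl(\mathscr{S}^\prime(\X);\mathscr{S}^\prime(\X)\bigr)$ will follow by transposition. A direct computation from \eqref{F-OpA} shows that the formal transpose of $\Op^A(F)$ is again of the form $\Op^{A'}(G)$ with $A'=-A\in\Fp^1(\X)$ and $G(x,\eta):=F(x,-\eta)\in S^p_\rho(\Xi)$ (same seminorms). Applying the first half of the proof to this transpose and then dualizing yields a continuous extension $\Op^A(F)\colon\mathscr{S}^\prime(\X)\to\mathscr{S}^\prime(\X)$, and continuity in $F$ follows from the same uniform bounds.

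The main obstacle will be the bookkeeping around the magnetic phase $\Lambda^A(x,y)$: because $A$ is only of polynomial growth, each $x$- or $y$-derivative of $\Lambda^A$ produces polynomial factors in $(x,y)$ whose degree grows with the order of differentiation, and these factors must be reabsorbed into the oscillatory decay. The delicate point is that the two integration-by-parts parameters $N_1$ (in $\eta$) and $N_2$ (in $y$) must be chosen in the right order: $N_1$ is fixed first so that the $\eta$-integral converges, and only then is $N_2$ taken large enough to dominate the polynomial growth produced by all derivatives of $\Lambda^A$ and by the weight $\langle y\rangle^N$. The borderline case $\rho=0$ is the most delicate, since $L_\eta$ then gives no $\eta$-decay; there one must rely solely on the decay $\langle x-y\rangle^{-2N_1}$ together with a large number of $L_y$ steps, and check that the resulting bound still involves only finitely many symbol seminorms of $F$.
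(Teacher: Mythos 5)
This proposition is imported by the paper from \cite{MP-1} (Proposition 4.23) without proof, so there is no in-paper argument to compare against; your proposal is essentially the standard oscillatory-integral proof that the cited reference itself uses, and the overall strategy (regularize, integrate by parts with $L_\eta$ and $L_y$, absorb the polynomial growth of $\partial^\alpha\Lambda^A$ by splitting $\langle x\rangle^N\le C\langle y\rangle^N\langle x-y\rangle^N$ between the Schwartz decay of $\phi$ and the $\langle x-y\rangle^{-2N_1}$ gain, then extend to $\mathscr{S}'$ by transposition using ${}^t\Op^A(F)=\Op^{-A}(F(\cdot,-\cdot))$) is correct.

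One bookkeeping slip in your last paragraph should be fixed: you assign to $N_2$ (the number of $L_y$ steps) the job of dominating the polynomial growth of the $\Lambda^A$-derivatives and the weight $\langle y\rangle^N$, but $L_y=\langle\eta\rangle^{-2}(1-\Delta_y)$ only buys decay in $\eta$; all decay in $x-y$ comes from $L_\eta$, and all decay in $y$ from $\phi$. The correct order of choices is therefore the reverse of what you state: first fix $N_2$ so that $\langle\eta\rangle^{p-2N_2}$ is integrable (this is what is needed when $\rho=0$; for $\rho=1$ one may take $N_2=0$ and let $L_\eta$ do this job), note that this fixes the degree of the polynomial factors produced by the $y$-derivatives of $\Lambda^A$, and only then choose $N_1$ large enough that $\langle x-y\rangle^{-2N_1}$ beats that degree together with the $\langle x-y\rangle^N$ piece of the weight. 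Since $L_\eta$ differentiates only in $\eta$, enlarging $N_1$ afterwards creates no new polynomial factors, so there is no circularity. Your second paragraph already contains the correct mechanism, so this is a misstatement of the bookkeeping rather than a gap in the argument.
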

\begin{theorem} 
	[Theorem 2.6 in \cite{IMP-3} and Theorem 3.1 in \cite{IMP-1}.] If $f\in S^0_\rho(\X^*,\X)$, with any $\rho\in[0,1]$, then $\Op^A(f)\in\mathbb{B}\big(L^2(\X)\big)$ .
\end{theorem}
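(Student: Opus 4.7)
The plan is to establish a magnetic Calderón–Vaillancourt theorem via a Cotlar–Stein phase-space decomposition, exploiting the fact that while the vector potential $A$ may grow polynomially, it enters compositions of magnetic quantizations only through the gauge-invariant flux phase $\Omega^B$ (equivalently, the triangle flux $\Theta^B$ appearing in the Moyal formula \eqref{F-pr-Moyal}), which is controlled by finitely many seminorms of $B\in\Fb^2(\X)$. I would first reduce to the hardest case $\rho=0$, since $S^0_\rho(\Xi)\subset S^0_0(\Xi)$ for all $\rho\in[0,1]$. Fix a non-negative $\chi\in C^\infty_c(\Xi)$ supported in a bounded neighbourhood of $0$ with $\sum_{k\in\Z^{2d}}\chi(\cdot-k)\equiv 1$ and set $f_k(X):=f(X)\chi(X-k)$. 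Each $f_k\in C^\infty_c(\Xi)\subset\mathscr{S}(\Xi)$, so $\Op^A(f_k)$ is Hilbert–Schmidt by Remark \ref{R-OpA}, and the family $\{f_k\}_{k\in\Z^{2d}}$ is bounded in $S^0_0(\Xi)$ with bounds depending on only finitely many seminorms of $f$.

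By the Cotlar–Stein lemma it suffices to find, for every $N\in\mathbb{N}$, a constant $C_N>0$ such that
\[
\bigl\|\Op^A(f_j)^*\Op^A(f_k)\bigr\|_{\mathbb{B}(L^2(\X))}+\bigl\|\Op^A(f_j)\Op^A(f_k)^*\bigr\|_{\mathbb{B}(L^2(\X))}\,\leq\,C_N\,\langle j-k\rangle^{-N},\quad\forall j,k\in\Z^{2d}.
\]
Using Definition \ref{D-MoyalPr}, one has $\Op^A(f_j)^*\Op^A(f_k)=\Op^A(\bar{f}_j\sharp^B f_k)$, and the explicit formula \eqref{F-pr-Moyal} represents $\bar{f}_j\sharp^B f_k$ as an oscillatory integral in $(Z,Z')$ whose amplitude is compactly supported near $(j,k)\in\Xi\times\Xi$ and whose phase is $\psi(Z,Z')=-2\langle\xi-\zeta,x-z'\rangle+2\langle\xi-\zeta',x-z\rangle-\Theta^B(x;z,z')$. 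All $A$-dependence has disappeared, and Stokes' theorem on the triangle $\mathcal{T}_{x;z,z'}$ yields polynomial bounds $|\partial^\alpha\Theta^B(x;z,z')|\leq C_\alpha\langle x-z\rangle\langle x-z'\rangle$ with constants depending only on finitely many seminorms of $B$ in $\Fb^2(\X)$.

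The central estimate is now a non-stationary phase argument: on the support of $\bar{f}_j(Z)f_k(Z')$ the cotangent-direction gradients of $\psi$ in $(x,\xi)$ are comparable to $|j-k|$ when this is large, and repeated integration by parts in those directions produces, for every $N$, a factor $\langle j-k\rangle^{-N}$ in every $S^0_0$-seminorm of $\bar{f}_j\sharp^B f_k$, with the polynomial growth from $\Theta^B$ absorbed by the compact amplitude support. Since $\bar{f}_j\sharp^B f_k$ is then a compactly supported smooth function with $S^0_0$-seminorm $\mathcal{O}(\langle j-k\rangle^{-N})$, the operator-norm bound follows by a Schur test on the kernel formula \eqref{F-KerOpA}: the magnetic prefactor $\Lambda^A$ has modulus one and drops out, and the Schwartz kernel inherits rapid off-diagonal decay. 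The adjoint product $\Op^A(f_j)\Op^A(f_k)^*=\Op^A(f_j\sharp^B\bar{f}_k)$ is handled identically.

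The main obstacle is the control of $\Theta^B$ and its derivatives inside the oscillatory integral: unlike the classical $B=0$ case, $\Theta^B$ is not a symbol on $\Xi$ but a multilinear object whose polynomial growth in the differences $x-z$, $x-z'$ must be dominated by the compactness of the amplitude support while its oscillatory behaviour must not spoil the non-stationary phase integrations. That the entire analysis involves only $B$ (and not $A$) is the reason the procedure succeeds; without this gauge-invariance the polynomial growth of $A$ would destroy every estimate. Once Cotlar–Stein applies, $\Op^A(f)=\sum_k\Op^A(f_k)$ converges strongly to a bounded operator on $L^2(\X)$ with operator norm bounded by a fixed seminorm of $f$ in $S^0_0(\Xi)$ and finitely many seminorms of $B$ in $\Fb^2(\X)$.
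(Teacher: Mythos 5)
The paper does not actually prove this statement: it is quoted verbatim from \cite{IMP-1} (Theorem 3.1) and \cite{IMP-3} (Theorem 2.6), so there is no internal proof to compare against. Your proposal reconstructs what is essentially the argument of the cited references: the magnetic Calder\'on--Vaillancourt theorem is obtained there along the classical Cotlar--Stein scheme (phase-space partition of unity, almost-orthogonality of the localized pieces), and the decisive magnetic observation is exactly the one you isolate, namely that the composed symbol $\bar f_j\sharp^B f_k$ given by \eqref{F-pr-Moyal} depends on the potential $A$ only through the flux $\Theta^B$, whose derivatives are controlled by finitely many $BC^\infty$ seminorms of $B$ and grow only polynomially in the differences of the triangle's vertices. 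One imprecision you should fix: you speak of integrating by parts ``in the $(x,\xi)$ directions'', but $(x,\xi)$ is the \emph{external} variable of the composed symbol; the integrations by parts must be performed in the integration variables $(z,\zeta,z',\zeta')$, where $\nabla_\zeta$ and $\nabla_{\zeta'}$ of the phase produce the factors $x-z'$ and $x-z$ (hence decay in $<X-j>+<X-k>\,\gtrsim\,<j-k>$), while $\nabla_z$ and $\nabla_{z'}$ produce $\xi-\zeta'$ and $\xi-\zeta$ up to corrections of size $<x-z><x-z'>$ coming from $\partial\Theta^B$ --- which is precisely where the balance you describe between polynomial growth and gained decay has to be carried out before the Schur test on \eqref{F-KerOpA} can be applied. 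With that correction the sketch is sound and matches the route of the cited proof.
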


The next theorem is proved in \cite{IMP-3} (Theorem 2.6)  (see also Theorem 3.1 in \cite{IMP-1}):
\begin{theorem}
	Given $\rho \in (0,1)$, a magnetic field $B\in\Fb^2(\X)$ and an associated vector potential of class $\Fp^1(\X)$, if $f\in S^0_\rho(\Xi)$,  then $\Op^A(f)\in\mathbb{B}\big(L^2(\X)\big)$.
\end{theorem}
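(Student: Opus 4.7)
The plan is to adapt the classical Calder\'on--Vaillancourt strategy via the Cotlar--Stein lemma to the magnetic Weyl calculus, exploiting the fact that the oscillatory phase in the magnetic Moyal product \eqref{F-pr-Moyal} carries the magnetic field only through the spatial factor $\Theta^B(x;z,z')$, while the dependence on the dual variables is identical to the non-magnetic case.

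First I would build an anisotropic partition of unity on $\Xi$ adapted to $\rho\in(0,1)$. Fix $\varphi\in C_c^\infty(\R^d)$ with $\sum_{\alpha\in\Z^d}\varphi(\cdot-\alpha)\equiv 1$ and a smooth dyadic partition $\{\chi_k\}_{k\geq 0}$ on $\X^*$ with $\chi_k$ supported in $\{|\xi|\sim 2^k\}$ (for $k\ge 1$) and $\chi_0\in C_c^\infty$. Decompose
\begin{equation*}
f \;=\; \sum_{k\geq 0}\sum_{\alpha,\beta\in\Z^d} f_{k,\alpha,\beta},\qquad f_{k,\alpha,\beta}(x,\xi):=f(x,\xi)\,\varphi(x-\alpha)\,\varphi\!\left(2^{-\rho k}\xi-\beta\right)\chi_k(\xi).
\end{equation*}
The $S^0_\rho$-estimates on $f$ translate, after the natural $\xi$-rescaling, into bounds on a single fixed seminorm of each $f_{k,\alpha,\beta}$ that are uniform in the triple $(k,\alpha,\beta)$. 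Since each piece is compactly supported in both $x$ and $\xi$, the kernel formula \eqref{F-KerOpA} plus a Schur test (the magnetic phase $\Lambda^A$ has modulus one) shows that each $\Op^A(f_{k,\alpha,\beta})$ is bounded on $L^2(\X)$ with norm bounded uniformly in $(k,\alpha,\beta)$.

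Next I would verify the Cotlar--Stein almost-orthogonality conditions. The symbol of $\Op^A(f_{k,\alpha,\beta})^*\Op^A(f_{k',\alpha',\beta'})$ is $\overline{f_{k,\alpha,\beta}}\,\sharp^B\,f_{k',\alpha',\beta'}$ given by \eqref{F-pr-Moyal}. Because the $(\xi,\zeta,\zeta')$-phase is magnetic-independent and linear, integration by parts in $(\zeta,\zeta')$ yields arbitrary polynomial decay in $|\alpha-\alpha'|$, while integration by parts in $(z,z')$ produces decay in the anisotropic quantity $2^{\rho\min(k,k')}|\beta-\beta'|$ and, when combined with the disjointness of the dyadic supports, also in $|k-k'|$. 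The magnetic flux factor $e^{-i\Theta^B(x;z,z')}$ only contributes bounded, smooth factors on the support of the cutoffs $\varphi(x-\alpha),\varphi(z-\alpha'),\ldots$, because $B\in\Fb^2(\X)$. The same analysis applies to $\Op^A(f_{k,\alpha,\beta})\Op^A(f_{k',\alpha',\beta'})^*$. Summability of the resulting decay rates over $(k,\alpha,\beta)$ then gives the Cotlar--Stein estimates, and the lemma produces strong convergence of $\sum_{k,\alpha,\beta}\Op^A(f_{k,\alpha,\beta})$ to a bounded operator on $L^2(\X)$; by continuity of $\Op^A:\mathscr{S}'(\Xi)\rightarrow\mathcal{L}(\mathscr{S}(\X);\mathscr{S}'(\X))$, this limit equals $\Op^A(f)$.

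The main obstacle is the bookkeeping in the second family of integrations by parts: each derivative of $\Theta^B(x;z,z')$ produces spatial weights controlled by $\langle x-z\rangle\langle x-z'\rangle$ (and higher-order monomials after multiple differentiations), which must be absorbed against the localizing cutoffs $\varphi(x-\alpha),\varphi(z-\alpha'),\varphi(z'-\alpha')$ before one sums over the lattice. The condition $\rho<1$ is precisely what ensures that the anisotropic scale $2^{\rho k}$ stays sufficiently below $2^k$ so that the rescaled symbols remain in a bounded set of a seminorm that suffices for the pointwise Schur estimate of each piece and for the integration-by-parts decay to be polynomial in all three indices simultaneously; tracking the uniformity of all constants across $k$ is the delicate combinatorial step.
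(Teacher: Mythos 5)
The paper does not actually prove this statement: it is imported verbatim from the references (Theorem 3.1 in \cite{IMP-1} and Theorem 2.6 in \cite{IMP-3}), so there is no internal proof to compare against. Measured against the argument in those references, your proposal is essentially the right one: a phase-space partition of unity, a Schur-test bound for each localized piece using that $|\Lambda^A|=1$, and Cotlar--Stein almost-orthogonality in which the unbounded vector potential $A\in\Fp^1(\X)$ never enters the estimates --- it factors out of every kernel as the unimodular prefactor $\Lambda^A(x,y)$, leaving only the flux factors $\Omega^B$, whose derivatives grow at most polynomially in the triangle side lengths because $B\in\Fb^2(\X)$. You correctly identify the one delicate point, namely absorbing those polynomial weights into the rapid off-diagonal decay before summing over the lattice. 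Two remarks. First, your anisotropic dyadic layer is unnecessary here: with the symbol classes of this paper one has $S^0_\rho(\Xi)\subset S^0_0(\Xi)$ (the seminorms only get stronger as $\rho$ increases, and there is no loss of $x$-regularity at high frequency), so the case $\rho\in(0,1)$ reduces to the $\rho=0$ Calder\'on--Vaillancourt statement, which is handled by a uniform unit-cube tiling of $\Xi$; the $2^{\rho k}$ bookkeeping you flag as the delicate combinatorial step can simply be deleted. Second, for the almost-orthogonality it is cleaner to estimate the kernels of $T_j^*T_{j'}$ and $T_jT_{j'}^*$ directly (spatial decay comes for free from the supports and the off-diagonal decay of each kernel; frequency decay comes from integration by parts in the intermediate variable) rather than passing through the symbol $\overline{f_j}\,\sharp^B f_{j'}$ via \eqref{F-pr-Moyal}, though both routes work. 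With these adjustments the argument is complete and is, in substance, the proof given in the cited sources.
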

We also need the following results proven in \cite{IMP-3} (Theorem 2.7 and Proposition 2.4):
\begin{theorem}
	For $F\in S^p_1(\Xi)$ positive and elliptic, with $p>0$, the operator $\Op^A(F):\mathscr{S}(\X)\rightarrow L^2(\X)$ is essentially self-adjoint and its closure has the domain:
	\[
	\mathscr{H}^p(\X)\,:=\,\big\{f\in L^2(\X)\,,\,\Op(\mathfrak{m}_p)f\in L^2(\X)\big\}.
	\]
	Moreover, for any $\zz\in\mathbb{C}\setminus\sigma\big(\overline{\Op^A(F)}\big)$ there exists $\mathfrak{r}^B_\zz\in S^{-p}_1(\Xi)$ such that: 
	\beq\label{F-simb-rez}
	\big(\overline{\Op^A(F)}-\zz\bb1\big)^{-1}=\Op^A(\mathfrak{r}^B_\zz).
	\eeq
\end{theorem}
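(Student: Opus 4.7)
The plan is to prove the statement in three linked steps: (i) construction of a parametrix using the ellipticity of $F$, (ii) essential self-adjointness and lower semi-boundedness on $\mathscr{S}(\X)$, (iii) identification of the resolvent as a magnetic pseudo-differential operator with symbol in $S^{-p}_1(\Xi)$ and, as a consequence, identification of the domain with $\mathscr{H}^p(\X)$.

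First I would build a two-sided parametrix. Since $F\in S^p_1(\Xi)$ is elliptic, there exist $c>0$ and $R>0$ with $F(x,\xi)\geq c\langle\xi\rangle^p$ for $|\xi|\geq R$. Let $\chi\in C^\infty(\X^*)$ vanish on $\{|\xi|\leq R\}$ and equal $1$ on $\{|\xi|\geq 2R\}$. Then $g_0:=\chi/F$ lies in $S^{-p}_1(\Xi)$ and the asymptotic expansion of $\sharp^B$ (Theorem 2.1 in \cite{IMP-3}) gives $g_0\sharp^B F=1+r_0$ with $r_0\in S^{-1}_1(\Xi)$. A standard Neumann-type construction inside $S^0_1(\Xi)$ produces $g\in S^{-p}_1(\Xi)$ with $g\sharp^B F=1+r_\infty$, $r_\infty\in S^{-\infty}(\Xi)$, and symmetrically $F\sharp^B g'=1+r'_\infty$.

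Next, essential self-adjointness. The reality of $F$ and $A$ make $\Op^A(F)$ formally symmetric on $\mathscr{S}(\X)$. Positivity together with the parametrix shows $\Op^A(F)\geq-M\bb1$ for some $M>0$. Taking $\zz=\pm i\lambda$ with $\lambda>0$ large, the bounded operator $\Op^A(g)$ together with boundedness of $\Op^A(r_\infty)$ (which is moreover compact, mapping $L^2$ into $\mathscr{H}^N$ for every $N$) gives via a Fredholm/Neumann argument that $\Op^A(F)\pm i\lambda$ has dense range in $L^2(\X)$. This yields essential self-adjointness; denote the closure by $H$.

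The core step is then to show that $(H-\zz\bb1)^{-1}=\Op^A(\mathfrak{r}^B_\zz)$ with $\mathfrak{r}^B_\zz\in S^{-p}_1(\Xi)$. For $\zz$ in the unbounded connected component of the resolvent set one may iterate:
\[
(H-\zz\bb1)^{-1}=\Op^A(g_\zz)\sum_{k\geq 0}\bigl(-\Op^A(r_\infty^\zz)\bigr)^k,
\]
where $g_\zz$ is the parametrix of $F-\zz$ and $r_\infty^\zz\in S^{-\infty}(\Xi)$; since $S^{-\infty}(\Xi)$ is stable under $\sharp^B$ and $\Op^A$ is a morphism, this yields $\mathfrak{r}^B_\zz\in S^{-p}_1(\Xi)$ on that component. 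To extend to all of $\mathbb{C}\setminus\sigma(H)$ I would invoke the magnetic Beals criterion from \cite{IMP-2}: the operator $(H-\zz)^{-1}$ belongs to $\Op^A\bigl(S^{-p}_1(\Xi)\bigr)$ provided iterated commutators with the magnetic momentum $\pi^A_j:=-i\partial_j-A_j$ and the position $Q_k$ remain bounded on $L^2$ after appropriate weighting. These commutator bounds propagate through the resolvent identity and depend holomorphically on $\zz$, giving the claim on every connected component.

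Finally the domain. Since $\mathfrak{m}_p\sharp^B\mathfrak{r}^B_\zz\in S^{0}_1(\Xi)$ the composition $\Op^A(\mathfrak{m}_p)\,\Op^A(\mathfrak{r}^B_\zz)$ is $L^2$-bounded, hence $\Rge(H-\zz)^{-1}\subset\mathscr{H}^p(\X)$. Conversely, $F\sharp^B\mathfrak{m}_{-p}\in S^0_1(\Xi)$ ensures that $\Op^A(F)$ extends to a bounded map $\mathscr{H}^p(\X)\to L^2(\X)$ coinciding with $H$ on $\mathscr{S}(\X)$, which is a core in $\mathscr{H}^p(\X)$; thus $\mathcal{D}(H)=\mathscr{H}^p(\X)$. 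The hard part will be step (iii): verifying that the magnetic Beals criterion propagates through the whole resolvent set and controls the polynomially growing vector potential $A\in\Fp^1(\X)$, since unlike the scalar calculus the commutators mix in the curvature $B$ and careful bookkeeping is required to obtain the $S^{-p}_1$ symbol estimates uniformly away from the spectrum.
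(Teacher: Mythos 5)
The paper does not actually prove this statement: it is quoted from the literature (Theorem 2.7 and Proposition 2.4 of \cite{IMP-3}; see also \cite{IMP-1,IMP-2}), so there is no in-paper argument to compare yours against. Your sketch nevertheless reconstructs essentially the route taken in those references: an elliptic parametrix for $F-\zz$ built inside the magnetic Moyal calculus, essential self-adjointness through triviality of the deficiency spaces, and identification of the resolvent as $\Op^A(\mathfrak{r}^B_\zz)$ with $\mathfrak{r}^B_\zz\in S^{-p}_1(\Xi)$ via the magnetic Beals criterion of \cite{IMP-2} — whose whole point is precisely that the commutators close up because the curvature $[\pi^A_j,\pi^A_k]=-iB_{jk}$ is bounded (as $B\in\Fb^2(\X)$) even though $A\in\Fp^1(\X)$ grows; one can in fact verify the commutator bounds for $(H-\zz)^{-1}$ directly at every $\zz$ in the resolvent set from $\mathrm{ad}_X\big[(H-\zz)^{-1}\big]=-(H-\zz)^{-1}\,\mathrm{ad}_X[H]\,(H-\zz)^{-1}$, so no propagation between connected components is really needed. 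Two small caveats. First, ``positivity together with the parametrix shows $\Op^A(F)\geq -M\bb1$'' is not a one-liner: one needs a G{\aa}rding-type inequality or an approximate square root in the magnetic calculus, which is available but should be invoked explicitly. Second, your domain argument, resting on $F\sharp^B\mathfrak{m}_{-p}\in S^0_1(\Xi)$ and $\mathfrak{m}_p\sharp^B\mathfrak{r}^B_\zz\in S^0_1(\Xi)$ with the \emph{magnetic} quantization throughout, naturally identifies the domain with the magnetic Sobolev space $\mathscr{H}^p_A(\X)$; that is what \cite{IMP-3} proves and what the rest of the paper uses for $H^\epsilon$, and it is the correct statement when $A$ grows (for the Landau Hamiltonian the domain is not the ordinary $H^2$), so the non-magnetic $\mathscr{H}^p(\X)$ displayed in the theorem should be read with that correction rather than taken at face value. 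Neither point undermines your outline.
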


Using Propositions 1.3.3 and 1.3.6 in \cite{ABG} and Lemma A.4 in \cite{MPR1}, one obtains easily the following result:
\begin{proposition}\label{P-ker-OpA} Suppose we have a magnetic field $B\in\Fb^2(\X)$ with a vector potential $A\in\Fp^1(\X)$.
	\begin{enumerate}
		\item If $F\in S^p_1(\Xi)$ for any $(p,\rho)\in\R\times\{0,1\}$, then $\mathfrak{K}^A[F]\in\mathscr{S}^\prime(\X\times\X)$ as given in \eqref{F-KerOpA} is a smooth function on $\X\times\X\setminus\big\{(x,x)\in\X\times\X,\,x\in\X\big\}$ having rapid decay in the variable $x-y\in\X$.
		\item If $F\in S^p_1(\Xi)$ with $p<0$, then $\mathfrak{K}^A[F]\in L^1(\X\times\X)$.
		\item $F\in S^{-\infty}(\Xi)$ if and only if $\mathfrak{K}^A[F]\in C^\infty(\X\times\X)$ with rapid decay in the directions orthogonal to the diagonal of $\X\times\X$, together with all its derivatives.
	\end{enumerate}
\end{proposition}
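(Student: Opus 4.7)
My plan is to prove the proposition by combining the explicit kernel formula of Remark~\ref{R-KerOpA}, namely $\mathfrak{K}^A[F]=\Lambda^A\cdot\big[((\bb1\otimes\mathcal{F}_{\X^*})F)\circ\Upsilon\big]$, with the cited classical results (ABG, Propositions 1.3.3 and 1.3.6; MPR1, Lemma A.4). The strategy splits naturally into two parts: (a) analyze the non-magnetic partial inverse Fourier transform $G(z,v):=[(\bb1\otimes\mathcal{F}_{\X^*})F](z,v)$ under the change of variables $(x,y)\mapsto(z,v)=((x+y)/2,x-y)=\Upsilon(x,y)$, and (b) absorb the effect of the smooth magnetic phase $\Lambda^A$, which has modulus one and whose only obstacle is polynomial growth of its derivatives.

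For the core step (a), I would work directly with $G(z,v)=\int_{\X^*}e^{i\langle\eta,v\rangle}F(z,\eta)\,d\eta$ interpreted as an oscillatory integral. Away from $\{v=0\}$, I use the identity $v^{\beta}e^{i\langle\eta,v\rangle}=(-i\partial_\eta)^{\beta}e^{i\langle\eta,v\rangle}$ followed by integration by parts, yielding
\[
v^{\beta}\,\partial_z^{\alpha}\partial_v^{\gamma}G(z,v)\,=\,(-i)^{|\beta|}\int_{\X^*}e^{i\langle\eta,v\rangle}\,\partial_\eta^{\beta}\bigl[(i\eta)^{\gamma}\partial_z^{\alpha}F(z,\eta)\bigr]\,d\eta.
\]
For $\rho=1$, each $\eta$-derivative gains $\langle\eta\rangle^{-1}$, so for $|\beta|$ large enough the integrand is absolutely integrable uniformly in $z$; for $\rho=0$ I would instead use a dyadic splitting of the $\eta$-integration at $|\eta|\sim 1/|v|$, applying integration by parts only on the high-frequency piece. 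Either way, this produces arbitrary polynomial decay in $v$ together with smoothness on $\{v\ne 0\}$, uniform on compact sets in $z$, which is exactly statement (1) for the non-magnetic kernel. For (3), the same estimates applied with arbitrary numbers of integrations by parts give smoothness across $v=0$ and rapid off-diagonal decay when $F\in S^{-\infty}$; the converse follows from partial Fourier inversion, since a kernel of the claimed type can be pulled back by $\Upsilon^{-1}$ and Fourier-transformed in $v$ to give a symbol whose $S^p_1$-seminorms are controlled by off-diagonal decay norms of the kernel. Statement (2), for $p<0$, follows from the dyadic splitting: the low-frequency part gives $|G(z,v)|\lesssim|v|^{-(d+p)}$ near $v=0$, and since $p<0$ this exponent satisfies $d+p<d$, so the singularity is locally integrable in $v$; combined with rapid decay in $v$ away from $0$, this makes $G\circ\Upsilon$ a locally integrable function.

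The magnetic phase is handled by Lemma A.4 of MPR1: $\Lambda^A$ is smooth, of modulus one, and since $A\in\Fp^1(\X)$ its derivatives grow at most polynomially in $(x,y)$. Multiplication by such a factor preserves local integrability, preserves smoothness on $\{x\ne y\}$, and preserves rapid decay in $x-y$, since any polynomial growth in $(x,y)$ introduced by differentiating $\Lambda^A$ is dominated by the rapid decay already extracted from the non-magnetic part in the transverse variable $v=x-y$ (and polynomial growth in $z=(x+y)/2$ is harmless for smoothness and for local $L^1$ statements). Packaging these three points gives the three conclusions of the proposition.

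The main technical obstacle is the careful bookkeeping between polynomial growth of $A$ (and hence of the derivatives of $\Lambda^A$) and the rapid off-diagonal decay: one must check that the Leibniz expansions, when differentiating the product $\Lambda^A\cdot(G\circ\Upsilon)$, never produce growth in the off-diagonal direction that the decay of $G$ cannot absorb. This is precisely the content of MPR1, Lemma A.4, which is why that reference is invoked. The remaining arguments are routine manipulations of oscillatory integrals.
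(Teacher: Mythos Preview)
Your approach is correct and is exactly what the paper intends: the paper does not give a proof at all but simply writes ``Using Propositions 1.3.3 and 1.3.6 in [ABG] and Lemma A.4 in [MPR1], one obtains easily the following result,'' and your sketch is the standard unpacking of those citations --- oscillatory-integral estimates on the non-magnetic partial Fourier transform, followed by absorption of the smooth modulus-one factor $\Lambda^A$ whose derivatives are polynomially bounded.

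One point worth flagging: statement (2) as literally written, $\mathfrak{K}^A[F]\in L^1(\X\times\X)$, cannot hold globally, since $G(z,v)$ has no decay whatsoever in $z=(x+y)/2$ (take $F(x,\xi)=\langle\xi\rangle^{p}$ with $-d<p<0$: then $G$ is a nonzero $L^1$ function of $v$ alone, and $\int\!\!\int|G(x-y)|\,dx\,dy=\infty$). You correctly prove only that the kernel is a locally integrable function, writing ``this makes $G\circ\Upsilon$ a locally integrable function,'' which is the right statement and is what the ABG reference actually gives. So your argument for (2) is fine; the discrepancy is a slip in the paper's statement, not in your proof.
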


\subsection{Some estimations in weak regular magnetic fields}

Let us formulate {a technical result which - in our current setting -} replaces Proposition~3.5 in \cite{CIP} and which can be proved with similar arguments.

\begin{proposition}\label{P-replP3_5}
	Given $(m,s)\in\R\times\R$, there exists $\epsilon_0>0$ such that for any $\epsilon\in[0,\epsilon_0]$ there exists a  continuous bilinear map  
	$$ S^m_1(\Xi)_\Gamma\times S^s_1(\Xi)_\Gamma \ni (F,G) \mapsto \mathcal{r}_{\epsilon}\big(F,G)\in S^{m+s-2}_1(\Xi)$$ 
	such that for $B$ as in \eqref{HF-Beps} and and $B^\circ$ as in Hypothesis \eqref{H-BGamma} and using notation \eqref{N-A-eps}:
	\beq\nonumber 
	F\sharp^{\epsilon}G\,=\,F\sharp^{\circ}G\,+\,\epsilon\,\mathcal{r}_{\epsilon}\big(F,G)
	\eeq
	{Moreover, the bilinear map $\mathcal{r}_{\epsilon}$ is uniformly bounded with respect to $\epsilon \in[0,\epsilon_0]$.}
\end{proposition}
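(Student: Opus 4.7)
The plan is to start from the explicit oscillatory integral formula \eqref{F-pr-Moyal} for the magnetic Moyal product, applied once with total field $B^\circ+\epsilon B^\epsilon$ and once with $B^\circ$. Since only the phase factor $e^{-i\Theta^B(x;z,z')}$ changes, the difference factors cleanly as
\[
(F\sharp^\epsilon G-F\sharp^\circ G)(X)=4^{-d}\!\int\!\!\int e^{-2i(\langle\xi-\zeta,x-z'\rangle-\langle\xi-\zeta',x-z\rangle)}e^{-i\Theta^{B^\circ}(x;z,z')}\bigl[e^{-i\epsilon\Theta^{B^\epsilon}(x;z,z')}-1\bigr]F(Z)G(Z')\,dZdZ'.
\]
Writing $e^{-i\epsilon\Theta^{B^\epsilon}}-1=-i\epsilon\,\Theta^{B^\epsilon}\int_0^1 e^{-is\epsilon\Theta^{B^\epsilon}}\,ds$ isolates the overall factor of $\epsilon$, so the natural definition is to take $\mathcal{r}_\epsilon(F,G)$ equal to the resulting oscillatory integral with $-i\,\Theta^{B^\epsilon}\int_0^1(\cdots)\,ds$ in place of the bracket.

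The core structural observation, which drives the gain of two orders in the symbol class, is that thanks to the antisymmetry of the $2$-form $B^\epsilon$ and the substitution $z-z'=(x-z')-(x-z)$, the flux admits a factorization
\[
\Theta^{B^\epsilon}(x;z,z')=\sum_{i,j}(x-z)_i\,(x-z')_j\,\mathcal{B}^\epsilon_{ij}(x,z,z'),
\]
where each $\mathcal{B}^\epsilon_{ij}$ is a smooth function of $(x,z,z')$, bounded together with all its derivatives uniformly in $\epsilon\in[0,\epsilon_0]$ by virtue of $\BB$ being a bounded subset of $\Fb^2(\X)$. The factors $(x-z)_i$ and $(x-z')_j$ can then be rewritten as $\zeta'$- and $\zeta$-derivatives of the oscillatory kernel respectively, namely $(x-z)_i e^{\mathrm{phase}}=(2i)^{-1}(-1)\partial_{\zeta'_i}e^{\mathrm{phase}}$ and $(x-z')_j e^{\mathrm{phase}}=(2i)^{-1}\partial_{\zeta_j}e^{\mathrm{phase}}$. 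Two integrations by parts shift these derivatives onto $F(z,\zeta)$ and $G(z',\zeta')$ (nothing else in the integrand depends on $\zeta$ or $\zeta'$), yielding
\[
\mathcal{r}_\epsilon(F,G)=-\tfrac{i}{4}\sum_{i,j}4^{-d}\!\int\!\!\int e^{-2i(\cdots)}e^{-i\Theta^{B^\circ}}\mathcal{B}^\epsilon_{ij}\Bigl(\int_0^1 e^{-is\epsilon\Theta^{B^\epsilon}}ds\Bigr)(\partial_{\zeta_j}F)(Z)(\partial_{\zeta'_i}G)(Z')\,dZdZ'.
\]

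The final step is to recognize this expression as a \emph{modified} magnetic Moyal product of the symbols $\partial_{\zeta_j}F\in S^{m-1}_1(\Xi)_\Gamma$ and $\partial_{\zeta'_i}G\in S^{s-1}_1(\Xi)_\Gamma$, with the standard phase $e^{-i\Theta^{B^\circ}}$ augmented by the smooth bounded amplitude $\mathcal{B}^\epsilon_{ij}\int_0^1 e^{-is\epsilon\Theta^{B^\epsilon}}ds$. One then invokes the standard oscillatory integral machinery of the magnetic Weyl calculus (Theorem~2.1 in \cite{IMP-3}, together with the amplitude-type extensions used in Section~3 of \cite{CIP}): the composition is continuous $S^{m-1}_1\times S^{s-1}_1\to S^{m+s-2}_1$ provided that the amplitude has all its $(x,z,z')$-derivatives uniformly bounded. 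Since $B^\epsilon$ ranges over the bounded set $\BB\subset\Fb^2(\X)$ and $B^\circ$ is fixed in $\Fb^2(\X)$, both $\Theta^{B^\circ}$ and $\mathcal{B}^\epsilon_{ij}$, together with all their derivatives in $(x,z,z')$, are polynomially bounded in the difference variables $(x-z,x-z')$, which is the admissible behaviour for magnetic phase functions. The $\epsilon$-dependent factor $\int_0^1 e^{-is\epsilon\Theta^{B^\epsilon}}ds$ has modulus at most $1$ and each of its derivatives produces extra powers of $\epsilon$ times polynomially bounded quantities, hence contributes only $\mathcal{O}(1)$ to every seminorm uniformly in $\epsilon\in[0,\epsilon_0]$.

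The main technical obstacle is precisely the last point: tracking that the amplitude factor $\mathcal{B}^\epsilon_{ij}\int_0^1 e^{-is\epsilon\Theta^{B^\epsilon}}ds$ remains in a bounded set of the appropriate amplitude class as $\epsilon$ varies, so that every seminorm of $\mathcal{r}_\epsilon(F,G)$ in $S^{m+s-2}_1(\Xi)$ is bounded by a finite product of seminorms of $F$, $G$, and a fixed finite family of seminorms of the components of $B^\circ$ and of the set $\BB$, independently of $\epsilon$. This is the $\epsilon$-uniform version of the IMP composition argument and will require spelling out the Faà di Bruno formula for differentiating $\int_0^1 e^{-is\epsilon\Theta^{B^\epsilon}}ds$ and showing, by induction on the order of differentiation, that each resulting term is bounded uniformly in $\epsilon$ after integration by parts absorbs any growth in $(x-z,x-z')$ into decaying oscillatory weights. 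Once this uniform amplitude estimate is in place, continuity and boundedness of the bilinear map $\mathcal{r}_\epsilon$ follow immediately from the standard magnetic pseudodifferential calculus.
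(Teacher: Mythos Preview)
Your proposal is correct and follows precisely the route the paper has in mind: the paper does not give a proof of Proposition~\ref{P-replP3_5} but simply states that it ``replaces Proposition~3.5 in \cite{CIP} and can be proved with similar arguments'', and what you have written is exactly that argument---Taylor expansion of the extra magnetic phase factor, factorization of the flux $\Theta^{B^\epsilon}$ as a bilinear expression in the difference variables, two integrations by parts in the momentum variables to shift one $\zeta$-derivative onto each symbol, and the amplitude-type extension of the Iftimie--M\u{a}ntoiu--Purice composition theorem.
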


\begin{proposition}\label{P-1_8-CIP} Suppose given $p\in\R$ and a real symbol $F\in{S}^p_1(\Xi)$ such that $\overline{\Op^\epsilon(F)}$ is a self-adjoint operator in $L^2(\X)$ for any $\epsilon\in[0,\epsilon_1]$ with some $\epsilon_1>0$.
	For $K\subset\big[\mathbb{C}\setminus\sigma\big(\overline{\Op^\circ(F)}\big)\big]$ with $B^\circ=dA^\circ$ as in Hypothesis \ref{H-BGamma}, there exists $\epsilon_0\in(0,\epsilon_1]$ such that:
	\begin{enumerate}
		\item  $K\subset\big[\mathbb{C}\setminus\sigma\big(\overline{\Op^\epsilon(F)}\big)\big]$ for any $\epsilon\in[0,\epsilon_0]$.
		\item We consider the magnetic symbols defined by $\Op^\epsilon\big(\mathfrak{r}^{\epsilon}_{F,\zz}\big)=\big(\overline{\Op^\circ(F)}-\zz\bb1\big)^{-1}$; then $\mathfrak{r}^{\epsilon}_{F,\zz}\in{S}^{-p}_1(\Xi)$ for $p\geq0$ and $\mathfrak{r}^{\epsilon}_{F,\zz}-1\in{S}^{p}_1(\Xi)$ for $p<0$ and the map $K\ni\zz\mapsto\mathfrak{r}^{\epsilon}_{F,\zz}\in S^{p}_1(\Xi)$, is continuous for the Fr\'{e}chet topology on $S^{p}_1(\Xi)$, uniformly in $(\epsilon,\zz)\in[0,\epsilon_0]\times{K}$.
		\item We have the equality:
		\beq \nonumber 
		\mathfrak{r}^{\epsilon}_{F,\zz}\,=\,(1+\epsilon\,\mathcal{r}_{\epsilon}(F,\mathfrak{r}^{\circ}_{F,\zz})\big)^-_{\epsilon}\,{\sharp^\epsilon}\,\mathfrak{r}^{\circ}_{F,\zz}.
		\eeq
	\end{enumerate}
\end{proposition}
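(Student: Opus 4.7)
The strategy is to construct the resolvent symbol of the perturbed operator as a perturbative series around the unperturbed resolvent, using Proposition \ref{P-replP3_5} to transfer the symbolic resolvent identity from $\sharp^\circ$ to $\sharp^\epsilon$ and then the symbolic Beals calculus to invert what remains.

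\emph{Step 1 --- transfer of the resolvent identity.} Starting from the symbolic identities $(F-\zz)\sharp^\circ\mathfrak{r}^\circ_{F,\zz}=1=\mathfrak{r}^\circ_{F,\zz}\sharp^\circ(F-\zz)$ (cf.\ \eqref{F-simb-rez}), and using that constants are central for both Moyal products (so that $\mathcal{r}_\epsilon(\zz,\cdot)=\mathcal{r}_\epsilon(\cdot,\zz)=0$), Proposition \ref{P-replP3_5} delivers
\[
\mathfrak{r}^\circ_{F,\zz}\sharp^\epsilon(F-\zz)=1+\epsilon\,\mathcal{r}_\epsilon(\mathfrak{r}^\circ_{F,\zz},F),\qquad (F-\zz)\sharp^\epsilon\mathfrak{r}^\circ_{F,\zz}=1+\epsilon\,\mathcal{r}_\epsilon(F,\mathfrak{r}^\circ_{F,\zz}).
\]
Since $F\in S^p_1$ and $\mathfrak{r}^\circ_{F,\zz}\in S^{-p}_1$ (the case $p<0$ being treated with $\mathfrak{r}^\circ_{F,\zz}-1\in S^p_1$), these remainders belong to $S^{-2}_1(\Xi)$ and are continuous in $\zz\in K$ with Fr\'echet semi-norms bounded uniformly in $\epsilon\in[0,\epsilon_0]$.

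\emph{Step 2 --- Neumann inversion in the symbolic calculus.} Because $\epsilon\,\mathcal{r}_\epsilon(F,\mathfrak{r}^\circ_{F,\zz})$ has semi-norms of order $\epsilon$ in $S^{-2}_1(\Xi)$ uniformly in $\zz\in K$, its magnetic quantization is a bounded operator on $L^2(\X)$ whose operator norm is $\mathscr{O}(\epsilon)$ uniformly. Hence for $\epsilon_0>0$ sufficiently small and any $(\epsilon,\zz)\in[0,\epsilon_0]\times K$, the geometric series
\[
\big(1+\epsilon\,\mathcal{r}_\epsilon(F,\mathfrak{r}^\circ_{F,\zz})\big)^-_\epsilon\,:=\,\sum_{n\geq 0}(-\epsilon)^n\,\bigl[\mathcal{r}_\epsilon(F,\mathfrak{r}^\circ_{F,\zz})\bigr]^{\sharp^\epsilon n}
\]
converges operator-theoretically. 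The key symbolic point is that its sum is a magnetic pseudo-differential operator whose symbol belongs to $1+S^{-2}_1(\Xi)$ with Fr\'echet semi-norms controlled uniformly in $(\epsilon,\zz)$; this is a consequence of the magnetic Beals criterion together with the joint continuity of $\sharp^\epsilon$ on H\"ormander classes, exactly as in the bootstrap argument of Lemma \ref{L-theta-tq}. The analogous construction on the other side produces a left Moyal-inverse with the same symbolic regularity.

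\emph{Step 3 --- identification and conclusion.} Composing the left and right Moyal-inverses of $1+\epsilon\,\mathcal{r}_\epsilon(F,\mathfrak{r}^\circ_{F,\zz})$ with $\mathfrak{r}^\circ_{F,\zz}$ under $\sharp^\epsilon$ yields symbols of class $S^{-p}_1(\Xi)$ which are respectively a right and a left $\sharp^\epsilon$-inverse of $F-\zz$; by associativity of $\sharp^\epsilon$ they coincide with a single symbol $\mathfrak{r}^\epsilon_{F,\zz}$, and choosing the left-inversion form yields the explicit formula of point~(3). Consequently $\Op^\epsilon(\mathfrak{r}^\epsilon_{F,\zz})$ inverts $\overline{\Op^\epsilon(F)}-\zz\bb1$, which proves points~(1) and~(2); the continuous dependence on $\zz\in K$ in the Fr\'echet topology of $S^{-p}_1(\Xi)$, uniform in $\epsilon\in[0,\epsilon_0]$, is inherited from the same property for $\mathfrak{r}^\circ_{F,\zz}$, for $\mathcal{r}_\epsilon$, and for the Neumann inverse constructed above. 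The main obstacle is the symbolic identification in Step~2: producing an $L^2$-bounded inverse via Neumann series is routine, but extracting from it a bona fide symbol in $1+S^{-2}_1(\Xi)$ with uniform Fr\'echet control is exactly the delicate input from the magnetic pseudo-differential calculus of \cite{IMP-1,IMP-2,IMP-3}.
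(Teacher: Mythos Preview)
Your proposal is correct and follows essentially the same approach as the paper's own proof: transfer the symbolic resolvent identity from $\sharp^\circ$ to $\sharp^\epsilon$ via Proposition~\ref{P-replP3_5}, then invert the resulting small perturbation of the identity by a Neumann series controlled through the magnetic Beals criterion. The paper's write-up is much more terse---it simply records the key identity $1=(F-\zz)\sharp^{\epsilon}\mathfrak{r}^{\circ}_{F,\zz}-\epsilon\,\mathcal{r}_{\epsilon}(F,\mathfrak{r}^{\circ}_{F,\zz})$ and defers the remaining details to the proof of Proposition~1.8 in \cite{CIP}---while you have spelled out those details (Steps~2 and~3) explicitly.
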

\begin{proof}
	We shall  only present the small changes  to be done in the proof of Proposition 1.8 in \cite{CIP}. In fact, the main point is to replace Equation (3.22) in \cite{CIP} by the  equality:
	\beq\nonumber 
	1=\big(F-\zz\big)\sharp^{\circ}\mathfrak{r}^{\circ}_{F,\zz}=\big(F-\zz\big)\sharp^{\epsilon}\mathfrak{r}^{\circ}_{F,\zz}\,-\,\epsilon\,\mathcal{r}_{\epsilon}\big(F,\mathfrak{r}^{\circ}_{F,\zz}\big)
	\eeq 
	with $\mathcal{r}_{\epsilon}\big(F,\mathfrak{r}^{\circ}_{F,\zz}\big)$ in $ S^{-2}_1(\Xi)$ uniformly for $\epsilon\in[0,\epsilon_0]$. Once we made these replacements in \cite{CIP} the arguments in the proof of Proposition 1.8 of \cite{CIP} remain  true.
\end{proof}

\subsection{Some formulas concerning the magnetic 'phase functions'}

\subsubsection{Some estimations for $\tOmega^\epsilon$.}\label{SSS-PN-VI-2}

For $B:=\epsilon\,B^\epsilon$ as in \eqref{HF-Beps} we have to estimate several times scalar products of the following two forms:
\beq\left\{
\begin{array}{l}[\mathcal{I}^\epsilon_1(\phi,\psi)]_{\alpha,\beta}:=\int_{\X}dx\int_{\X}dy\ \tOmega^{\epsilon}(x,\alpha,y)\,\overline{\phi(x-\alpha)}\,\psi(y-\beta)\,,\\ 
{[\mathcal{I}^\epsilon_2(\phi,\psi)]_{\alpha,\beta}}:=\int_{\X}dx\int_{\X}dy\ \tOmega^{\epsilon}(\alpha,x,\beta)\,\overline{\phi(x-\alpha)}\,\psi(y-\beta)\,,
\end{array}\right.\ \forall(\phi,\psi)\in[\mathscr{S}(\X)]^2\,,
\eeq
taking into account that:
\beq
\tOmega^{\epsilon}(x,y,z):=\exp\Big(-i\epsilon\int_{<x,y,z>}B^\epsilon\Big)=1-i\epsilon\Big(\int_{<x,y,z>}B\Big)\int_0^1ds\exp\Big(-is\epsilon\int_{<x,y,z>}B\Big)\,,
\eeq
where $\int_{<x,y,z>}B$ denotes the integral of the 2-form $B$ on the triangle with vertices $\{x,y,z\}$.
We have to consider the following magnetic flux integrals:
\begin{align}
\int_{<\alpha,x,y>}\hspace*{-0.4cm}B^\epsilon\,&=\hspace*{-0.2cm}\underset{1\leq j<k\leq d}{\sum}\hspace*{-0.2cm}\Phi_{\alpha}[B^\epsilon]_{j,k}(x,y)\,(x-\alpha)_j(y-\alpha)_k \\ \nonumber
	&=\hspace*{-0.2cm}\underset{1\leq j<k\leq d}{\sum}\hspace*{-0.2cm}\Phi_{\alpha}[B^\epsilon]_{j,k}(x,y)\,(x-\alpha)_j(y-\beta)_k+\hspace*{-0.3cm}\underset{1\leq j<k\leq d}{\sum}\hspace*{-0.2cm}\Phi_{\alpha}[B^\epsilon]_{j,k}(x,y)\,(x-\alpha)_j(\beta-\alpha)_k\,,\\ \nonumber \int_{<\alpha,\beta,y>}\hspace*{-0.4cm}B^\epsilon\,&=\hspace*{-0.2cm}\underset{1\leq j<k\leq d}{\sum}\hspace*{-0.2cm}\Phi_{\alpha,\beta}[B^\epsilon]_{j,k}(x,y)\,(y-\alpha)_j(\beta-\alpha)_k=\hspace*{-0.3cm}\underset{1\leq j<k\leq d}{\sum}\hspace*{-0.2cm}\Phi_{\alpha,\beta}[B^\epsilon]_{j,k}(x,y)\,(y-\beta)_j(\beta-\alpha)_k\,,
\end{align}
for which we have used the notation:
\begin{align}
	&\Phi_{\alpha}[B^\epsilon]_{j,k}(x,y):=\int_0^1ds\int_0^1sdu\,B^\epsilon_{j,k}\big(\alpha+s(x-\alpha)+us(y-x)\big),\\
	&\Phi_{\alpha,\beta}[B^\epsilon]_{j,k}(y):=\int_0^1ds\int_0^1sdu\,B^\epsilon_{j,k}\big(\alpha+s(\beta-\alpha)+us(y-\beta)\big).
\end{align}
Finally, we obtain the following formulas:
\begin{align}
[\mathcal{I}^\epsilon_1(\phi,\psi)]_{\alpha,\beta}:&=\int_{\X}dx\int_{\X}dy\ \tOmega^{\epsilon}(x,\alpha,y)\,\overline{\phi(x-\alpha)}\,\psi(y-\beta)\\ \nonumber & =\big(\tau_\alpha\phi\,,\,\tau_\beta\psi\big)_{L^2(\X)}\\ \nonumber & \quad 
+i\epsilon\int_{\X}dx\int_{\X}dy\hspace*{-0.2cm}\underset{1\leq j<k\leq d}{\sum}\hspace*{-0.2cm}\Phi_{\alpha}[B^\epsilon]_{j,k}(x,y)\,\big((x-\alpha)_j\overline{\phi(x-\alpha)}\big)\,\big((y-\beta)_k\psi(y-\beta)\big)\\ \nonumber
&\quad+i\epsilon\int_{\X}dx\int_{\X}dy\hspace*{-0.2cm}\underset{1\leq j<k\leq d}{\sum}\hspace*{-0.2cm}\Phi_{\alpha}[B^\epsilon]_{j,k}(x,y)\,\big((x-\alpha)_j\overline{\phi(x-\alpha)}\big)\,\big((\alpha-\beta)_k\psi(y-\beta)\big)\\ \nonumber
&\quad+\mathscr{O}(\epsilon^2)\,,
	\end{align}
    \begin{align}
[\mathcal{I}^\epsilon_2(\phi,\psi)]_{\alpha,\beta}:&=\int_{\X}dx\int_{\X}dy\ \tOmega^{\epsilon}(\alpha,x,\beta)\,\overline{\phi(x-\alpha)}\,\psi(y-\beta)=\big(\tau_\alpha\phi\,,\,\tau_\beta\psi\big)_{L^2(\X)}+\\ \nonumber
&\quad-i\epsilon\int_{\X}dx\int_{\X}dy\hspace*{-0.2cm}\underset{1\leq j<k\leq d}{\sum}\hspace*{-0.2cm}\Phi_{\alpha}[B^\epsilon]_{j,k}(x,y)\,\big((\alpha-\beta)_j\overline{\phi(x-\alpha)}\big)\,\big((y-\beta)_k\psi(y-\beta)\big)\\ \nonumber
&\quad+\mathscr{O}(\epsilon^2)\,,
\end{align}
and estimations of the form (with $a=1,2$) for any $N\in\mathbb{N}$ {with a constant $C_N(\BB)>0$ (like in Notation \ref{N-B}) independent of $\epsilon\in[0,\epsilon_0]$}:
\beq\begin{split}
&\big|[\mathcal{I}^\epsilon_a(\phi,\psi)]_{\alpha,\beta}-\big(\tau_\alpha\phi\,,\,\tau_\beta\psi\big)_{L^2(\X)}\big|\,\leq\\
&\hspace*{3cm}\leq\,{C_N(\BB)} \,<\alpha-\beta>^{-N}\,\big[\epsilon\,\lnu_{N+d+2,0}(\phi)\,\lnu_{N+d+2,0}(\psi)\,+\,\mathscr{O}(\epsilon^2)\big]\,.
\end{split}\eeq

We shall use the notation $Q_j$ for the operator of multiplication with the variable $x_j$ in $\mathscr{S}(\X)$.
Using these computations in \eqref{F-PN-V-12} we obtain that:
\begin{align}
	&\big[\mathfrak{m}^{\epsilon}_{\B}[\mathfrak{K}[h^\circ]]_{\alpha,\beta}\big]_{p,q}=
	\end{align}
	\begin{align*}
	&\ =-i\,\Big\langle\,\mathfrak{K}[h^\circ]\,,\,(\tau_{-\alpha}\otimes\tau_{-\beta})\underset{1\leq j<k\leq d}{\sum}\,\Phi_{\alpha}[B^\epsilon]_{j,k}\big(\overline{Q_j\psi_p}\otimes{Q}_k\psi_q\big)\Big\rangle_{\mathscr{S}(\X\times\X)}\\
	&\quad -i\, \Big\langle\,\mathfrak{K}[h^\circ]\,,\,(\tau_{-\alpha}\otimes\tau_{-\beta})\underset{1\leq j<k\leq d}{\sum}\,\Phi_{\alpha}[B^\epsilon]_{j,k}(\beta-\alpha)_k\big(\overline{Q_j\psi_p}\otimes\psi_q\big)\Big\rangle_{\mathscr{S}(\X\times\X)}\\ \nonumber
	&\quad -i\, \Big\langle\,\mathfrak{K}[h^\circ]\,,\,(\tau_{-\alpha}\otimes\tau_{-\beta})\underset{1\leq j<k\leq d}{\sum}\,\Phi_{\alpha,\beta}[B^\epsilon]_{j,k}(\beta-\alpha)_k\big(\overline{\psi_p}\otimes{Q}_j\psi_q\big)\Big\rangle_{\mathscr{S}(\X\times\X)}
	\end{align*}
	\begin{align}\label{F-PN-VI-1}
	&\ =
	\Big\langle\,\,\mathfrak{K}[h^\circ]\,,\,(\tau_{-\alpha}\otimes\tau_{-\beta})\underset{1\leq j<k\leq d}{\sum}\,\Big[\Phi_{\alpha}[B^\epsilon]_{j,k}\Big(Q_j\otimes{Q}_k+(\beta-\alpha)_k(Q_j\otimes\bb1)\Big)\\ \nonumber
	&\qquad \qquad +\Phi_{\alpha,\beta}[B^\epsilon]_{j,k}(\beta-\alpha)_k(\bb1\otimes{Q}_j)\Big]\big(\overline{\psi_p}\otimes\psi_q\big)\Big\rangle_{\mathscr{S}(\X\times\X)}.
\end{align}

\subsubsection{Derivatives of $\Omega^B$}\label{SSS-PN-VI-1}

Let us consider the magnetic flux integral:
\[
\int_{<\alpha,x,y>}\hspace*{-0.4cm}B^{\epsilon}=\underset{1\leq j,k\leq d}{\sum}(x-\alpha)_j(y-x)_k\int_0^1ds\int_0^sdr\,[B{\epsilon}]_{j,k}\big(\alpha+s(x-\alpha)+r(y-x)\big)
\] 
and compute the derivative:
\begin{align*}
	\partial_{x_\ell}\int_{<\alpha,x,y>} & B^{\epsilon}=\underset{1\leq k\leq d}{\sum}(y-x)_k\int_0^1ds\int_0^sdr\,B^{\epsilon}_{\ell,k}\big(\alpha+s(x-\alpha)+r(y-x)\big)\\
	&+\underset{1\leq k\leq d}{\sum}(x-\alpha)_k\int_0^1ds\int_0^sdr\,B^{\epsilon}_{k,\ell}\big(\alpha+s(x-\alpha)+r(y-x)\big)\\
	&+\underset{1\leq j,k\leq d}{\sum}(x-\alpha)_j(y-x)_k\int_0^1ds\int_0^sdr\,(s-r)\big(\partial_{x_\ell}B^{\epsilon}_{j,k}\big)\big(\alpha+s(x-\alpha)+r(y-x)\big).
\end{align*}

\subsubsection{Estimations for \eqref{F-D-Omega}.}
Let us consider the derivatives in \eqref{F-D-Omega}
\begin{align}\label{F-D-Omega-1}
	\big(\partial^c_v\partial_z^{a}\tOmega^\epsilon\big)(z+v/2,\gamma,z-v/2)=\partial^c_v\partial_z^{a}\Big[\exp\Big(-i\epsilon\int_{<z+v/2,\gamma,z-v/2>}B^\epsilon\Big)\Big].
\end{align}

First we notice that:
\begin{align}
&\partial_t^p\exp\big(-i\epsilon{F}(t)\big)=\\ \nonumber
&\quad=\underset{1\leq N\leq p}{\sum}\,(-i\epsilon)^N \underset{p_1+\cdots+p_N=p}{\sum}\,C_p(p_1,\ldots,p_N)\,\underset{1\leq \ell \leq N}{\prod}\big[\big(\partial_t^{p_\ell}F\big)(t)\big]\,\Big[\exp\big(-i\epsilon{F}(t)\big)\Big]\,,
\end{align}
and thus:
\begin{align}
&\partial^a_v\partial_z^{b}\exp\big(-i\epsilon{F}(z,v)\big)=\\ \nonumber
&\quad=\underset{1\leq N\leq |a|+|b|}{\sum}\,(-i\epsilon)^N \underset{|a^{(1)}|+\cdots+|a^{(N)}|=|a|}{\sum}\,C_a(a^{(1)},\ldots,a^{(N)})\,\underset{|b^{(1)}|+\cdots+|b^{(N)}|=|b|}{\sum}\,C_b(b^{(1)},\ldots,b^{(N)})\,\times\\ \nonumber
&\qquad\times\,\underset{1\leq \ell \leq N}{\prod}\big[\big(\partial^{a^{(\ell)}}_v\partial_z^{b^{(\ell)}}F\big)(z,v)\big]\,\Big[\exp\big(-i\epsilon{F}(z,v)\big)\Big]\,.
\end{align}
Finally let us compute $\int_{<z+v/2,\gamma,z-v/2>}B^\epsilon$ as the flux of the 2-form $B^\epsilon\in\Fb^2(\X)$ through the triangle $<z+v/2,\gamma,z-v/2>=<z-v/2,z+v/2,\gamma>\subset\X$ that we parametrize starting with the vertex $z-v/2$ and defined by the oriented pair of egdes: $z+v/2-(z-v/2)=v$ and $\gamma-(z+v/2)$, i.e.:
\begin{align}\nonumber
\int_{<z+v/2,\gamma,z-v/2>}\hspace*{-1.5cm}B^\epsilon\hspace*{0.8cm}&=\underset{1\leq j<k\leq d}{\sum}v_j(\gamma-z-v/2)_k\int_0^1\hspace*{-0.3cm}ds\int_0^1\hspace*{-0.3cm}s\,d\spr\,B_{j,k}\big(z-v/2+sv+\spr(\gamma-z-v/2)\big)\\
&=\underset{1\leq j<k\leq d}{\sum}v_j(\gamma-z)_k\int_0^1\hspace*{-0.3cm}ds\int_0^1\hspace*{-0.3cm}s\,d\spr\,B_{j,k}\big(z-v/2+sv+\spr(\gamma-z-v/2)\big)
\end{align}
and its derivatives $\partial^a_v\partial_z^{b}\int_{<z+v/2,\gamma,z-v/2>}B^\epsilon$ will be finite sums (with at most $|a|+|b|$ terms) of terms of the form:
\[\begin{split}
&v_j(\gamma-z)_k\int_0^1\hspace*{-0.3cm}ds\int_0^1\hspace*{-0.3cm}s\,d\spr\,\partial^a_v\partial_z^{b}B_{j,k}\big(z-v/2+sv+\spr(\gamma-z-v/2)\big),\\
&v_j\int_0^1\hspace*{-0.3cm}ds\int_0^1\hspace*{-0.3cm}s\,d\spr\,\partial^a_v\partial_z^{b}B_{j,k}\big(z-v/2+sv+\spr(\gamma-z-v/2)\big),\\
&(\gamma-z)_k\int_0^1\hspace*{-0.3cm}ds\int_0^1\hspace*{-0.3cm}s\,d\spr\,\partial^a_v\partial_z^{b}B_{j,k}\big(z-v/2+sv+\spr(\gamma-z-v/2)\big),\\
&\int_0^1\hspace*{-0.3cm}ds\int_0^1\hspace*{-0.3cm}s\,d\spr\,\partial^a_v\partial_z^{b}B_{j,k}\big(z-v/2+sv+\spr(\gamma-z-v/2)\big).
\end{split}\]

The previous arguments imply that some $C(\BB)>0$ (like in Notation \ref{N-B}) exists, independent of $\epsilon\in[0,\epsilon_0]$ such that the following estimation holds:
\begin{equation}\label{C-PN-IV-1}
\big|\big(\partial^c_v\partial_z^{a}\tOmega^\epsilon\big)(z+v/2,\gamma,z-v/2)\big|\,\leq\,C\big(\epsilon\,|v|\,|z-\gamma|\,+\,\epsilon^2\big[|v|\,|z-\gamma|\big]^{|a|+|b|},\quad\forall\epsilon\in[0,\epsilon_0].
\end{equation}

\section{Some basic properties of frames on  Hilbert spaces}\label{A-frames}
In this appendix,  $\H$ denotes  a complex Hilbert space.

\begin{definition}\label{D-frame}~
	\begin{enumerate}
		\item We call  frame of $\H$, an at most countable family  of vectors $\big\{\psi_p\,,\,p\in\mathbb{N}_\bullet\big\}\subset\H$ such that there exists two positive constants $0<A\leq B$ satisfying:
		\beq\label{DF-fr}
		A\big\|f\|_{\H}^2\,\leq\,\underset{M\nearrow\infty}{\lim}\underset{1\leq p\leq M}{\sum}\big|\big(\psi_p\,,\,f\big)_{\H}\big|^2\,\leq\,B\big\|f\|_{\H}^2,\quad\forall f\in\H.
		\eeq
		\item A  frame  of $\H$  is called a Parseval frame, when the following equality is verified for any $f\in\H$:
		\beq\label{DF-Pfr}
		\big\|f\big\|^2_{\H}=\underset{p\in\mathbb{N}_\bullet}{\sum}\,\big|\big(\psi_p\,,\,f\big)_{\H}\big|^2\,,
		\eeq
		with the series converging in $\ell^2(\mathbb{N}_\bullet)$.
		\item Given a frame $\vec{\psi}\big\{\psi_p\,,\,p\in\mathbb{N}_\bullet\big\}\subset\H$,  we introduce its  coordinate map $\mathfrak{C}$ as: 
		\beq\label{DF-fr-1}
		\H\ni f\mapsto\big (\mathfrak{C}_{\vec{\psi}} f ) :=\{\big(\psi_p\,,\,f\big)_{\H}\big\}_{p\in \Nb} \in\ell^2(\Nb).
		\eeq
	\end{enumerate}
\end{definition}
\begin{lemma}\label{L-Parseval}
	Given a Parseval frame in $\H$, the following identities are valid, for $(f,g)\in\H\times\H$:
	\beq\label{F-P-1}
	\big(f\,,\,g\big)_{\H}=\underset{p\in\mathbb{N}_\bullet}{\sum}\,\big(f,\psi_p\big)_{\H}\big(\psi_p,g\big)_{\H},\qquad
	f=\underset{p\in\mathbb{N}_\bullet}{\sum}\,\big(\psi_p\,,\,f\big)_{\H}\,\psi_p\,,
	\eeq
	with the series converging in $\ell^2(\mathbb{N}_\bullet)$ and resp. in the strong topology of $\H$ and the coordinate map $\mathfrak{C}_{\vec{\psi}}:\H\rightarrow\ell^2(\Nb)$ is an isometry that may not be surjective.
\end{lemma}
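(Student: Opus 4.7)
The plan is to read everything as a statement about the coordinate (analysis) operator $\mathfrak{C}_{\vec{\psi}}$ and its adjoint, so the two displayed identities become the two components of the polar decomposition of an isometry. First I would observe that the Parseval identity \eqref{DF-Pfr} is, by definition, the equality $\|\mathfrak{C}_{\vec{\psi}}f\|_{\ell^2(\Nb)}^{2}=\|f\|_{\H}^{2}$ for every $f\in\H$; in particular $\mathfrak{C}_{\vec{\psi}}$ is a well-defined bounded linear operator from $\H$ into $\ell^2(\Nb)$ and it is an isometry. The polarization identity $4(f,g)_{\H}=\sum_{k=0}^{3}i^{k}\|f+i^{k}g\|_{\H}^{2}$ then yields, applied on both sides, the first identity in \eqref{F-P-1}, with absolute convergence of the series in $\ell^1(\Nb)$ following from the Cauchy--Schwarz inequality for $\ell^2$-sequences.

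Next I would establish the reconstruction formula. The adjoint $\mathfrak{C}_{\vec{\psi}}^{*}:\ell^2(\Nb)\to\H$ is characterized by $(\mathfrak{C}_{\vec{\psi}}^{*}c,f)_{\H}=(c,\mathfrak{C}_{\vec{\psi}}f)_{\ell^2(\Nb)}=\sum_{p\in\Nb}\overline{c_p}(\psi_p,f)_{\H}$ for all $f\in\H$ and $c=(c_p)\in\ell^2(\Nb)$; testing against an orthonormal basis of $\ell^2(\Nb)$ identifies $\mathfrak{C}_{\vec{\psi}}^{*}(c)=\sum_{p\in\Nb}c_{p}\psi_{p}$, the series converging in the norm topology of $\H$ by a standard Cotlar--Stein-type truncation argument (partial sums form a Cauchy sequence because of the Parseval bound). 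Since $\mathfrak{C}_{\vec{\psi}}$ is an isometry, $\mathfrak{C}_{\vec{\psi}}^{*}\mathfrak{C}_{\vec{\psi}}=\id_{\H}$, and applying this to $f\in\H$ produces exactly the second identity in \eqref{F-P-1}, with convergence in the strong topology of $\H$.

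Finally, non-surjectivity of $\mathfrak{C}_{\vec{\psi}}$ is exhibited by a concrete counterexample rather than by any structural argument: if $\{e_p\}_{p\in\Nb}$ is an orthonormal basis of $\H$, the enlarged family $\{2^{-1/2}e_1,\,2^{-1/2}e_1,\,e_2,\,e_3,\ldots\}$ is easily checked to satisfy \eqref{DF-Pfr}, so it is a Parseval frame, but its coordinate map misses the vector $(2^{-1/2},-2^{-1/2},0,0,\ldots)\in\ell^2(\Nb)$, which lies in the kernel of $\mathfrak{C}_{\vec{\psi}}^{*}$. There is no technically hard step: the whole argument is a rewriting of the operator identity $\mathfrak{C}_{\vec{\psi}}^{*}\mathfrak{C}_{\vec{\psi}}=\id_{\H}$ together with polarization, and the only point requiring a little care is the justification of the strong convergence of $\mathfrak{C}_{\vec{\psi}}^{*}c$ for an arbitrary $c\in\ell^2(\Nb)$, which is handled by the upper frame bound in \eqref{DF-fr} applied to partial sums.
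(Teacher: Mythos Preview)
Your proposal is correct and follows essentially the same route as the paper: both derive the first identity in \eqref{F-P-1} by polarization of the Parseval equality \eqref{DF-Pfr}, and both obtain the reconstruction formula by showing the partial sums are norm-Cauchy (the paper phrases this as weak convergence plus Cauchy, you phrase it as $\mathfrak{C}_{\vec{\psi}}^{*}\mathfrak{C}_{\vec{\psi}}=\id_{\H}$, which is the same content). Your explicit counterexample for non-surjectivity is a nice addition that the paper omits; the only cosmetic remark is that invoking ``Cotlar--Stein'' for the Cauchy property of $\sum c_p\psi_p$ is overkill---it is just the upper frame bound applied to tails.
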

\begin{proof}
	Let us write the polarization identity in $\H$:
	\[\nonumber 
	\big(f\,,\,g\big)_{\H}=(1/4)\underset{0\leq k\leq 3}{\sum}i^k\big\|f+(-i)^kg\big\|^2_{\H}\,,
	\]
	and use \eqref{DF-Pfr} to obtain:
	\begin{align*}
		\big(f\,,\,g\big)_{\H}&=(1/4)\underset{0\leq k\leq 3}{\sum}i^k\big\|f+(-i)^kg\big\|^2_{\H}=(1/4)\underset{0\leq k\leq 3}{\sum}i^k\underset{p\in\mathbb{N}_\bullet}{\sum}\,\big|\big(\psi_p\,,\,f+(-i)^kg\big)_{\H}\big|^2\\
		&=\underset{p\in\mathbb{N}_\bullet}{\sum}\,(2/4)\Big(2\Re\big(\psi_p\,,\,f\big)_{\H}\overline{\big(\psi_p\,,\,g\big)}_{\H}-2\Im\big(\psi_p\,,\,f\big)_{\H}\overline{\big(\psi_p\,,\,g\big)}_{\H}\Big)\\
		&=\underset{p\in\mathbb{N}_\bullet}{\sum}\big(\psi_p\,,\,g\big)_{\H}\overline{\big(\psi_p\,,\,f\big)}_{\H}.
	\end{align*}
	For the second identity one proves using the first identity that $\sum\limits_{p=1}^M\,\big(\psi_p\,,\,f\big)_{\H}\psi_p$ is a sequence converging weakly to $f$ and being Cauchy for the strong topology, due to the Parseval condition.
\end{proof}

\begin{proposition}\label{C-wlim-id}
If given a pair of vectors $(v,w)\in\H\times\H$ we denote by $|v><w|$ the linear rank 1 operator $|v> <w|u:=(w,u)_{\H}v$, the following limit exists in the weak operator topology:
$$
\underset{p\nearrow\infty}{\lim}|\psi_p><\psi_p|\,=\,\bb1.
$$
\end{proposition}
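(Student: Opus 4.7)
The plan is to read this off directly from the polarization identity already established in Lemma \ref{L-Parseval}. Recall from the paper's conventions that the inner product $(\cdot,\cdot)_\H$ is antilinear in the first argument, so the rank-one operator $|v><w|:u\mapsto (w,u)_\H\,v$ satisfies
\[
\bigl(f\,,\,|v><w|\,u\bigr)_\H\,=\,(w,u)_\H\,(f,v)_\H.
\]
Applied to the frame vectors, this gives $(f,|\psi_p><\psi_p|\,g)_\H=(f,\psi_p)_\H(\psi_p,g)_\H$ for every $f,g\in\H$.

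First I would fix $f,g\in\H$ and interpret the ``limit'' in the statement as the limit of the partial sums $S_M:=\sum_{p=1}^{M}|\psi_p><\psi_p|$ as $M\nearrow\infty$ (this is consistent with the notation used for Parseval frames throughout the appendix). Then I would simply compute
\[
\bigl(f\,,\,S_M\,g\bigr)_\H\,=\,\sum_{p=1}^{M}(f,\psi_p)_\H\,(\psi_p,g)_\H
\]
and invoke the first identity of \eqref{F-P-1} from Lemma \ref{L-Parseval}, which asserts exactly that $\sum_{p\in\Nb}(f,\psi_p)_\H(\psi_p,g)_\H=(f,g)_\H$, the series being convergent in $\ell^2(\Nb)$ (in particular, the partial sums converge in $\mathbb{C}$). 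This shows $(f,S_M g)_\H\to(f,g)_\H$ for every $f,g\in\H$, which is precisely the definition of convergence $S_M\to\bb1$ in the weak operator topology on $\mathbb{B}(\H)$.

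There is essentially no obstacle here: the corollary is a one-line consequence of the polarization identity in Lemma \ref{L-Parseval}, up to tracking the antilinearity convention of the scalar product. If one wanted a cleaner statement, one could also remark that by the Parseval frame inequality \eqref{DF-fr} with $A=B=1$ the partial sums $S_M$ are positive operators with $\|S_M\|_{\mathbb{B}(\H)}\leq1$ for all $M$, so that WOT-convergence on the dense subset of finite linear combinations of frame vectors would already suffice; but since the identity has been verified on every pair $(f,g)\in\H\times\H$, no density argument is needed.
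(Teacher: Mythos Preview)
Your argument is correct and is exactly the intended one: the paper states Proposition~\ref{C-wlim-id} immediately after Lemma~\ref{L-Parseval} without an explicit proof, precisely because it is the one-line consequence of the polarization identity \eqref{F-P-1} that you wrote down. Your reading of the limit as the partial sums $S_M=\sum_{p=1}^M|\psi_p\rangle\langle\psi_p|$ is also the correct interpretation of the (admittedly loose) notation.
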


We also recall the following theorem in \cite{Chris} (Theorem 5.5.1.)
\begin{theorem} A sequence $\vec{f}\equiv\{f_k\}_{k=1}^\infty$ in $\H$ is a frame for $\H$ if and only if the application:
	\[ \nonumber 
	\vec{c}\equiv\{c_k\}_{k=1}^\infty \mapsto\mathfrak{S}_{\vec{f}}\big(\vec{c}\big)\,:=\,\underset{k\in\mathbb{N}_\bullet}{\sum}c_\ell \,f_\ell 
	\]
	is a well defined surjective map from $\ell^2(\mathbb{N}_\bullet)$ onto $\H$.
\end{theorem}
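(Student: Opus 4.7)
The statement identifies a frame with a sequence whose synthesis operator $\mathfrak{S}_{\vec{f}}:\ell^2(\Nb)\to\H$ is well defined and surjective, so the natural approach is to pass through the duality between analysis and synthesis operators and reduce the frame inequalities to standard Banach space facts: the closed graph theorem, the open mapping theorem, and the characterization of surjective operators via bounded-below adjoints. Throughout, let $\mathfrak{C}_{\vec{f}}:\H\to\ell^2(\Nb)$, $f\mapsto\{(f_k,f)_\H\}_{k\in\Nb}$ denote the analysis operator of Definition \ref{D-frame}(3). The key algebraic identity is the adjoint relation $\mathfrak{S}_{\vec{f}}=\mathfrak{C}_{\vec{f}}^*$, which is obtained by pairing $\big(\mathfrak{S}_{\vec{f}}\vec{c},f\big)_\H=\sum_k c_k(f_k,f)_\H=\big(\vec{c},\mathfrak{C}_{\vec{f}}f\big)_{\ell^2}$ on finitely supported sequences and extending by continuity once boundedness is established on either side.

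For the direct implication, I would assume the frame inequality \eqref{DF-fr} and observe that the upper bound $\sum|(f_k,f)_\H|^2\leq B\|f\|_\H^2$ is exactly $\|\mathfrak{C}_{\vec{f}}f\|_{\ell^2}^2\leq B\|f\|_\H^2$, so $\mathfrak{C}_{\vec{f}}\in\mathbb{B}(\H;\ell^2(\Nb))$. Its Hilbert-space adjoint $\mathfrak{C}_{\vec{f}}^*$ is therefore bounded, and a direct computation on the canonical basis $\{\mathbf{e}_k\}$ of $\ell^2(\Nb)$ yields $\mathfrak{C}_{\vec{f}}^*\mathbf{e}_k=f_k$, whence $\mathfrak{C}_{\vec{f}}^*=\mathfrak{S}_{\vec{f}}$; in particular $\mathfrak{S}_{\vec{f}}$ is well defined on all of $\ell^2(\Nb)$. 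The lower frame bound $A\|f\|_\H^2\leq\|\mathfrak{C}_{\vec{f}}f\|_{\ell^2}^2$ says that $\mathfrak{C}_{\vec{f}}$ is bounded below, hence injective with closed range; by the closed range theorem this forces $\mathrm{Ran}\,\mathfrak{S}_{\vec{f}}=\mathrm{Ran}\,\mathfrak{C}_{\vec{f}}^*=(\mathrm{Ker}\,\mathfrak{C}_{\vec{f}})^\perp=\H$, i.e.\ surjectivity.

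For the converse, I would assume that $\mathfrak{S}_{\vec{f}}$ is a well defined surjective map $\ell^2(\Nb)\to\H$. Well-definedness means the series $\sum_k c_k f_k$ converges in $\H$ for every $\vec{c}\in\ell^2(\Nb)$; applying the closed graph theorem (the convergence of partial sums implies that if $\vec{c}^{(n)}\to\vec{c}$ in $\ell^2$ and $\mathfrak{S}_{\vec{f}}\vec{c}^{(n)}\to g$ in $\H$, then one can identify $g=\mathfrak{S}_{\vec{f}}\vec{c}$ by testing against $f_k$ via continuity of the inner product) gives $\mathfrak{S}_{\vec{f}}\in\mathbb{B}(\ell^2(\Nb);\H)$. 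Setting $B:=\|\mathfrak{S}_{\vec{f}}\|^2$, the adjoint identity $\mathfrak{S}_{\vec{f}}^*=\mathfrak{C}_{\vec{f}}$ then yields the upper frame bound $\sum|(f_k,f)_\H|^2=\|\mathfrak{C}_{\vec{f}}f\|_{\ell^2}^2\leq B\|f\|_\H^2$. For the lower bound, I would invoke the standard Banach-space fact (a consequence of the open mapping theorem) that a surjective bounded operator between Hilbert spaces has an adjoint that is bounded below: if $\mathfrak{S}_{\vec{f}}$ is surjective, then there exists $A>0$ with $\|\mathfrak{C}_{\vec{f}}f\|_{\ell^2}\geq\sqrt{A}\,\|f\|_\H$ for all $f\in\H$, which is precisely the lower frame bound.

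The only mildly delicate point is the automatic boundedness of $\mathfrak{S}_{\vec{f}}$ in the backward direction from mere well-definedness; this is the step where one must invoke the closed graph theorem, and it is the conceptual heart of the proof. Everything else is formal manipulation of adjoints together with the two cornerstone theorems (open mapping and closed range) of Banach space theory, so the argument is short once the adjoint identification $\mathfrak{S}_{\vec{f}}=\mathfrak{C}_{\vec{f}}^*$ is in place.
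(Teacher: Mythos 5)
First, on provenance: the paper does not prove this statement at all --- it is recalled as Theorem 5.5.1 of \cite{Chris} --- so there is no in-paper argument to compare against, and your proof has to stand on its own. Its overall architecture is the standard one and is sound: identify $\mathfrak{S}_{\vec{f}}=\mathfrak{C}_{\vec{f}}^*$, deduce well-definedness and surjectivity from the two frame bounds via the closed range theorem, and conversely recover the upper bound from boundedness of $\mathfrak{S}_{\vec{f}}$ and the lower bound from the open mapping theorem (a bounded surjection has an adjoint bounded below). The forward direction and the ``surjective $\Rightarrow$ lower frame bound'' step are correct as written.

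The one step that does not go through as stated is exactly the one you identify as the conceptual heart: the closed-graph verification. If $\vec{c}^{(n)}\to\vec{c}$ in $\ell^2(\Nb)$ and $\mathfrak{S}_{\vec{f}}\vec{c}^{(n)}\to g$ in $\H$, continuity of the inner product only yields $(f_k,\mathfrak{S}_{\vec{f}}\vec{c}^{(n)})_{\H}\to(f_k,g)_{\H}$; to conclude $g=\mathfrak{S}_{\vec{f}}\vec{c}$ by ``testing against $f_k$'' you also need $(f_k,\mathfrak{S}_{\vec{f}}\vec{c}^{(n)})_{\H}=\sum_j c_j^{(n)}(f_k,f_j)_{\H}\to\sum_j c_j(f_k,f_j)_{\H}$, and that requires $\{(f_k,f_j)_{\H}\}_{j}\in\ell^2(\Nb)$ --- i.e.\ an instance of the very upper frame bound you are in the middle of proving. (Totality of $\{f_k\}$, which the test also uses, is fine: it follows from surjectivity.) The standard repair sidesteps the closed graph theorem: the partial-sum operators $T_N\vec{c}:=\sum_{k\leq N}c_k f_k$ are bounded, $T_N\vec{c}\to\mathfrak{S}_{\vec{f}}\vec{c}$ for every $\vec{c}$ by the well-definedness hypothesis, so $\sup_N\|T_N\vec{c}\|_{\H}<\infty$ pointwise and Banach--Steinhaus gives $M:=\sup_N\|T_N\|<\infty$, whence $\|\mathfrak{S}_{\vec{f}}\vec{c}\|_{\H}\leq M\|\vec{c}\|_{\ell^2}$. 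With that single substitution the rest of your argument is complete.
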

\begin{remark}\label{R-A-01}
	If we denote by $\{\mathcal{e}_\gamma\}_{\gamma\in\Gamma}$ the canonical orthonormal basis of $\ell^2(\Nb)$ and use the notation introduced in the beginning of Subsection \ref{ss1.3} we have the identities:
	\[\nonumber 
	\mathfrak{C}_{\vec{\psi}}=\underset{p\in\Nb}{\sum}\mathcal{e}_\p\bowtie\psi_p,\qquad\Id_{\H}=\mathfrak{C}_{\vec{\psi}}^*\mathfrak{C}_{\vec{\psi}},\qquad\mathfrak{C}_{\vec{\psi}}\mathfrak{C}_{\vec{\psi}}^*=:\mathfrak{P}_{\vec{\psi}}
	\]
	with $\mathfrak{P}_{\vec{\psi}}\in\mathbb{B}\big(\ell^2(\Nb)\big)$ the orthogonal projection on the range of $\mathfrak{C}_{\vec{\psi}}$.
\end{remark}

Let us consider the map:
\beq\label{DF-coord-op-hom}
\mathbb{B}\big(\H\big)\ni\,T\,\mapsto\,\mW_{\vec{\psi}}[T]:=\mathfrak{C}_{\vec{\psi}}\,T\,\mathfrak{C}_{\vec{\psi}}^*\in\mathbb{B}\big(\ell^2(\Nb)\big).
\eeq
We can associate with  $\mathfrak{W}_{\vec{\psi}}[T]$ the infinite matrix of the operator $\mathfrak{C}_{\vec{\psi}}\,T\,\mathfrak{C}_{\vec{\psi}}^*\in\mathbb{B}\big(\ell^2(\Nb)\big)$ with respect to the canonical orthonormal basis $\{\mathcal{e}_p\}_{p\in\Nb}$.

\begin{proposition}
The map $\mW_{\vec{\psi}}:\mathbb{B}\big(\H\big)\,\rightarrow\,\mathbb{B}\big(\ell^2(\Nb)\big)$ in \eqref{DF-coord-op-hom} is an injective $C^*$-algebras morphism.
\end{proposition}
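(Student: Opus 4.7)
The plan is to derive everything from the two structural identities collected in Remark \ref{R-A-01}, namely $\mathfrak{C}_{\vec{\psi}}^{*}\mathfrak{C}_{\vec{\psi}}=\mathrm{Id}_{\H}$ and $\mathfrak{C}_{\vec{\psi}}\mathfrak{C}_{\vec{\psi}}^{*}=\mathfrak{P}_{\vec{\psi}}$, the latter being an orthogonal projection in $\ell^2(\Nb)$. The first identity is exactly the isometry statement proved in Lemma \ref{L-Parseval} (the Parseval equality $\|f\|_{\H}^2=\|\mathfrak{C}_{\vec{\psi}}f\|_{\ell^2}^2$ written operator-theoretically), and it will carry the entire argument.

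First I would check that $\mW_{\vec{\psi}}$ is well-defined and linear, which is immediate from its defining formula $\mW_{\vec{\psi}}[T]=\mathfrak{C}_{\vec{\psi}}T\mathfrak{C}_{\vec{\psi}}^{*}$ and the bound $\|\mW_{\vec{\psi}}[T]\|\leq\|\mathfrak{C}_{\vec{\psi}}\|\,\|T\|\,\|\mathfrak{C}_{\vec{\psi}}^{*}\|=\|T\|$ coming from the fact that $\mathfrak{C}_{\vec{\psi}}$ is an isometry. Next I would verify the $*$-property: $\mW_{\vec{\psi}}[T^{*}]=\mathfrak{C}_{\vec{\psi}}T^{*}\mathfrak{C}_{\vec{\psi}}^{*}=(\mathfrak{C}_{\vec{\psi}}T\mathfrak{C}_{\vec{\psi}}^{*})^{*}=\mW_{\vec{\psi}}[T]^{*}$, which is purely formal. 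The crucial step is multiplicativity, and this is where the Parseval hypothesis is used: for $T_1,T_2\in\mathbb{B}(\H)$,
\[
\mW_{\vec{\psi}}[T_1]\mW_{\vec{\psi}}[T_2]=\mathfrak{C}_{\vec{\psi}}T_1\mathfrak{C}_{\vec{\psi}}^{*}\mathfrak{C}_{\vec{\psi}}T_2\mathfrak{C}_{\vec{\psi}}^{*}=\mathfrak{C}_{\vec{\psi}}T_1\,\mathrm{Id}_{\H}\,T_2\mathfrak{C}_{\vec{\psi}}^{*}=\mW_{\vec{\psi}}[T_1T_2]\,,
\]
the middle equality being exactly $\mathfrak{C}_{\vec{\psi}}^{*}\mathfrak{C}_{\vec{\psi}}=\mathrm{Id}_{\H}$.

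For injectivity, the same identity gives a one-line recovery of $T$ from its image: if $\mW_{\vec{\psi}}[T]=0$ then
\[
T=(\mathfrak{C}_{\vec{\psi}}^{*}\mathfrak{C}_{\vec{\psi}})\,T\,(\mathfrak{C}_{\vec{\psi}}^{*}\mathfrak{C}_{\vec{\psi}})=\mathfrak{C}_{\vec{\psi}}^{*}\mW_{\vec{\psi}}[T]\mathfrak{C}_{\vec{\psi}}=0\,.
\]
There is no real obstacle here; the only point worth flagging is that one should not expect $\mW_{\vec{\psi}}$ to be surjective, and indeed its range is contained in $\mathfrak{P}_{\vec{\psi}}\mathbb{B}(\ell^2(\Nb))\mathfrak{P}_{\vec{\psi}}$, as follows from $\mathfrak{P}_{\vec{\psi}}\mW_{\vec{\psi}}[T]=\mathfrak{C}_{\vec{\psi}}\mathfrak{C}_{\vec{\psi}}^{*}\mathfrak{C}_{\vec{\psi}}T\mathfrak{C}_{\vec{\psi}}^{*}=\mW_{\vec{\psi}}[T]$ and the analogous identity on the right. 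This confirms the parenthetical remark in the statement that $\mW_{\vec{\psi}}$ is in general not surjective, and identifies its image as a $C^{*}$-subalgebra of the corner $\mathfrak{P}_{\vec{\psi}}\mathbb{B}(\ell^2(\Nb))\mathfrak{P}_{\vec{\psi}}$, which is isomorphic to $\mathbb{B}(\H)$ via $\mW_{\vec{\psi}}$.
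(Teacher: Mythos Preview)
Your proof is correct and complete. The paper states this proposition without proof, evidently regarding it as an immediate consequence of the identities in Remark \ref{R-A-01}; your argument is precisely the standard one-line verification from $\mathfrak{C}_{\vec{\psi}}^{*}\mathfrak{C}_{\vec{\psi}}=\mathrm{Id}_{\H}$ that the authors had in mind.
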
}
\end{appendices}
\vspace{0.5cm}

\noindent{\bf Acknowledgements.} HC acknowledges support from Grant DFF–5281-00046B of the Independent Research Fund Denmark $|$ Natural Sciences. RP acknowledges support from a grant of the Romanian Ministry of Research, Innovation and Digitization, CNCS -UEFISCDI, project number PN-IV-P1-PCE-2023-0264, within PNCDI IV. BH and RP acknowledge support from the CNRS International Research Network ECO-Math. We also thank our home institutions for hosting our reciprocal visits.
\bigskip


\begin{thebibliography}{99}
	
	\bibitem{ABG} W.O. Amrein, A. Boutet de Monvel, and V. Georgescu: \textit{$C^0$-Groups, Commutator Methods and Spectral Theory of N-Body Hamiltonians}. Birkh\"{a}user Verlag, 1996.

\bibitem{AMP} N. Athmouni, M. M\u{a}ntoiu, and  R. Purice: \textit{On the continuity of spectra for families of magnetic pseudodifferential operators}. Journal of Mathematical Physics 51, 083517 (2010).


\bibitem{AK} D. Auckly and P. Kuchment: \textit{On Parseval frames of exponentially decaying composite Wannier functions}. Contemp. Math. {\bf 717},  227--240 (2018)

\bibitem{Bea-77} R. Beals: {\it  Characterization of pseudodifferential operators and applications.} Duke Math. J. 44 (1), 45--57 (1977).

\bibitem{Be1} J. Bellissard: {\it $C^*$ algebras in Solid State Physics. 2D electrons in a uniform magnetic field.}  Operator
algebras and applications {\bf 2}, 49--76, London Math. Soc. Lecture Note Ser., 136, Cambridge
Univ. Press, Cambridge, (1988).

\bibitem{Be2} J. Bellissard: {\it Lipschitz continuity of gap boundaries for Hofstadter-like spectra}. Comm. Math.
Phys. {\bf 160}(3), 599--613 (1994).

\bibitem{Bo} R. Bott: \textit{Homogeneous Vector Bundles}. Annals of Mathematics \textbf{66}(2),  203-248 (1957).

\bibitem{BC} P. Briet, H.D. Cornean: \textit{Locating the spectrum for magnetic Schr\"odinger and Dirac operators}. Commun. P.D.E. {\bf 27} 1079--1101 (2002).

\bibitem{Chris} O. Christensen: \textit{An Introduction to Frames and Riesz Bases}. In the series Applied and Numerical Harmonic Analysis, Springer International Publishing Switzerland (2016).

\bibitem{C-99} H.D. Cornean: \textit{On the essential spectrum of two dimensional periodic magnetic Schr\"odinger operators}. Lett. Math. Phys.  \textbf{49}, 197-211 (1999).

\bibitem{C-10} H.D. Cornean: {\it On the Lipschitz Continuity of Spectral Bands of Harper-Like and Magnetic Schr\"odinger Operators}. Ann. Henri Poincar{\' e} {\bf 11}, 973--990 (2010).

\bibitem{CHP-1} H. D. Cornean, B. Helffer, and R. Purice: \textit{Low lying spectral gaps induced by slowly varying magnetic fields}. Journal of Functional Analysis {\bf 273}(1), 206--282 (2017).

\bibitem{CHP-3} H. D. Cornean, B. Helffer, and R. Purice: \textit{A Beals criterion for magnetic pseudo-differential operators proved with magnetic Gabor frames}. Comm. in P.D.E. {\bf 43}(8), 1196--1204 (2018).

\bibitem{CHP-2} H. D. Cornean, B. Helffer, and R. Purice: \textit{Peierls' substitution for low lying spectral energy windows}. Journal of Spectral Theory {\bf 9}(4), 1179--1222 (2019).

\bibitem{CHP-4} H. D. Cornean, B. Helffer, and R. Purice: \textit{Spectral analysis near a Dirac type crossing in a weak non-constant magnetic field}. Transactions of the American Mathematical Society 374 (10),  7041--7104 (2021).

\bibitem{CHP-4-1} H. D. Cornean, B. Helffer, and R. Purice: \textit{Matrix Representation of
	Magnetic Pseudo-Differential Operators via Tight Gabor Frames}. J Fourier Anal
	Appl 30, 21 (2024).

\bibitem{CHP-5} H. D. Cornean, B. Helffer, and R. Purice: \textit{The fibre operators in the Bloch-Floquet decomposition of periodic magnetic pseudo-differential operators}. Preprint 2025.


\bibitem{CHN} H. D. Cornean, I. Herbst, and G. Nenciu: {\it On the construction of composite Wannier functions}.  Ann. H. Poincar{\'e} {\bf 17}, 3361--3398 (2016).


\bibitem{CIP} H.D. Cornean, V. Iftimie, and R. Purice: \textit{Peierls substitution via minimal coupling and magnetic pseudo-differential calculus}. {Reviews in Mathematical Physics} {\bf 31}(3), 1950008 (2019).

\bibitem{CM}
H.D.~Cornean and D.~Monaco: {\it On the construction of Wannier functions in topological insulators: the 3D case}. Ann. H. Poincar{\'e} {\bf 18}, 3863--3902 (2017).

\bibitem{CMM}
H.D.~Cornean, D.~Monaco and M.~Moscolari: {\it Parseval frames of exponentially localized magnetic Wannier functions}. Commun. Math. Phys. {\bf 371}(3), 1179--1230 (2019).

\bibitem{CN-98} H. D. Cornean, Gh. Nenciu: \textit{On eigenfunction decay of two dimensional magnetic Schr\"odinger operators}. Commun. Math. Phys. \textbf{192}, 671-685 (1998).

\bibitem{CN-00} H. D. Cornean, Gh. Nenciu: \textit{Two dimensional magnetic Schr\"odinger operators: width of mini-bands in the tight-binding approximation}.
Ann. Henri Poincar\'e \textbf{1}, 203-222 (2000).

\bibitem{CP-1} H.D. Cornean and   R. Purice: \textit{On the regularity of the Hausdorff distance between spectra of perturbed magnetic Hamiltonians}. Operator Theory: Advances and Applications 224, 55--66 (2012).

\bibitem{CP-2} H.D.  Cornean and  R. Purice: \textit{Spectral edge regularity of magnetic Hamiltonians}. {Journal of the London Mathematical Society}, {\bf 92}(1), 89--104 (2015).

\bibitem{CH} D. E. Crabtree, E. V. Haynsworth: \textit{An identity for the Schur complement of a matrix}. Proceedings of the American Mathematical Society. \textbf{22} (2), 364--366  (1969).

\bibitem{D-95} E. B. Davies: \textit{The functional calculus}. J. London Math. Soc. \textbf{52}, 166--176, (1995).

\bibitem{dNL} G. De Nittis and M. Lein: {\it Applications of magnetic $\Psi$DO techniques to space-adiabatic perturbation
	theory}. Rev. Math. Phys. {\bf 23}(3), 233--260 (2011).

\bibitem{Dix} J. Dixmier: {\it Les Alg\`{e}bres d'Op\'{e}rateurs dans l'Espace Hilbertien}.  (2-\`{e}me edition), Paris, Gauthier-Villars \'{e}diteur, (1969).

\bibitem{FMP} D. Fiorenza, D. Monaco, and G. Panati:  {\it Construction of real-valued localized composite Wannier functions for insulators}. Ann. H. Poincar{\'e} {\bf 17}(1), 63--97 (2016).

\bibitem{F} G. B. Folland: {\it Harmonic Analysis in Phase Space.} Princeton University Press, Princeton, New Jersey, (1989).

\bibitem{FT} S. Freund and S. Teufel: {\it Peierls substitution for magnetic Bloch bands}. Anal. \& PDE {\bf 9}(4),
773--811 (2016).
\bibitem{GMSj} C. G\'erard, A. Martinez and J. Sj\"ostrand:
\newblock{\it  A mathematical approach to the effective Hamiltonian in perturbed periodic problems.}
\newblock  Comm. Math. Phys. 142, 217--244, (1991).







\bibitem{HM} B. Helffer and A. Mohamed: \textit{Asymptotic of the density of states for the Schr\"odinger operator with periodic electric potential}. Duke Math. J. 92(1),  1-60 (1998).

\bibitem{HS} B. Helffer and  J. Sj\"ostrand: {\it Equation de Schr\"odinger avec
	champ magn{\'e}tique et  \'equation de
	Harper}.  in LNP {\bf 345}, Springer-Verlag, Berlin, Heidelberg and New York,
118--197 (1989).

\bibitem{HS1}  B. Helffer and  J. Sj\"ostrand:
\newblock {\it Analyse  semi-classique pour l'\'equation de Harper (avec application \`a l'\'equation de Schr\"odinger avec champ magn\'etique).}
\newblock M\'emoire de la SMF, No 34 (1988).






\bibitem{HH} R. Hempel and I. Herbst: \textit{Bands and gaps for magnetic periodic Hamiltonians}. In \textit{Partial Differential Operators and Mathematical Physics}, OT 78, Birkh\"{a}user,  175--184 (1995).

\bibitem{H-1} L. H\"{o}rmander:
\textit{The Analysis of Linear Partial Differential Operators I: Distribution Theory and Fourier Analysis}. Springer-Verlag Berlin Heidelberg, (2003).

\bibitem{H-3} L. H\"{o}rmander:
\textit{The Analysis of Linear Partial Differential Operators III: Pseudo-Differential Operators}. Springer-Verlag Berlin Heidelberg, (2007).

\bibitem{Hu} D. Husemoller: \textit{Fibre Bundles}.  3rd edition. No. 20 in Graduate Texts in Mathematics. Springer-Verlag, New
York, (1994).

\bibitem{IMP-1} V. Iftimie, M. Mantoiu, and R. Purice: \textit{Magnetic pseudodifferential operators}. Publ. Res. Inst. Math. Sci. 43, 585--623 (2007).

\bibitem{IMP-2} V. Iftimie, M. Mantoiu, and R. Purice: \textit{Commutator criteria for magnetic pseudodifferential operators}. Comm. Partial Diff. Eq. 35, 1058--1094  (2010).

\bibitem{IMP-3} Viorel Iftimie, Marius Mantoiu, and Radu Purice: \textit{Quantum observables as magnetic pseudodifferential operators}, Revue Roumaine de Math\'{e}matiques Pures et Appliqu\'{e}es 64 (2-3),  197--223 (2019).

\bibitem{IP} Viorel Iftimie and Radu Purice: \textit{The Peierls-Onsager effective Hamiltonian in a complete gauge covariant setting: determining the spectrum}.  J. Spectral Th. 5, 1-87 (2015).

\bibitem{K} T. Kato: \textit{Perturbation Theory for Linear Operators}. Springer-Verlag Berlin Heidelberg, (1995).


\bibitem{Ku} P. Kuchment: {\it An overview of periodic elliptic operators.} Bull. Amer. Math. Soc. {\bf 53}, 343--414 (2016).

\bibitem{Ku2} P. Kuchment: {\it Tight frames of exponentially decaying Wannier functions.} J. of Phys. A: Math. and Theor. {\bf 42}(2), Article number 025203 (2009)

\bibitem{Ku1} P. Kuchment: \textit{Floquet Theory for Partial Differential Equations}.  Operator Theory: Advances
and Applications, vol. 60, Birkh\"{a}user Verlag, Basel, (1993).

\bibitem{Lu} J.M. Luttinger: {\it The effect of a magnetic field on electrons in a periodic potential.} Phys.
Rev. {\bf 84}, 814-817 (1951).

\bibitem{Ma-51} G. W. Mackey: \textit{On induced representations of groups}.  American Journal of Mathematics, \textbf{73}, No. 3, 576-592 (1951).

\bibitem{MMP} G. Marcelli, M. Moscolari and G. Panati: {\it Localization of generalized Wannier bases implies Chern triviality in non-periodic Insulators.} Ann. H. Poincar{\'e} {\bf 24}(3), 895--930 (2023).

\bibitem{MP-1} M. M\u{a}ntoiu and  R. Purice: \textit{The magnetic Weyl calculus}. Journal of Mathematical Physics 45 (4), 1394--1417 (2004).

\bibitem{MPR1} M. M\u antoiu, R. Purice, and S. Richard: {\it Spectral and
	propagation results for magnetic Schr\"odinger operators; a $C^*$-Algebraic Approach}.
\newblock  J. Funct. Anal. {\bf 250}, 42--67 (2007).

\bibitem{Mo} A. Mohamed: \textit{Asymptotic of the density of states for the Schr\"odinger operator with periodic electromagnetic potential}. J. Math. Phys. {\bf 38}(8), 4023-4051 (1997).

\bibitem{Mon} D. Monaco: \textit{ Chern and Fu-Kane-Mele invariants as topological obstructions.} Springer INdAM Series {\bf 18}, 201--222 (2017).

\bibitem{Ne-LMP} G. Nenciu: {\it Bloch electrons in a magnetic field, rigorous
	justification of the Peierls-Onsager effective Hamiltonian}.
\newblock Lett. Math.Phys.
{\bf 17}, 247--252 (1989).

\bibitem{Ne-RMP} G. Nenciu: {\it Dynamics of Bloch electrons in electric and
	magnetic fields, rigorous
	justification of the effective Hamiltonians}.
\newblock Rev. Mod.Phys.
{\bf 63} (1), 91--127 (1991).

\bibitem{Ne-02} G. Nenciu: {\it On asymptotic perturbation theory for quantum mechanics: almost invariant subspaces and gauge invariant magnetic perturbation theory.} J. Math. Phys. {\bf 43}(3),  1273--1298 (2002).


\bibitem{Pa} R.S. Palais: {\it Homotopy theory of infinite dimensional manifolds}. Topology {\bf 5}, 1--16 (1966).

\bibitem{P-07} G. Panati: \textit{Triviality of Bloch and
Bloch-Dirac bundles}. Ann. Henri Poincar\'{e} 8 , 995--011  (2007).

\bibitem{PP} G. Panati, A. Pisante: \textit{Bloch bundles, Marzari-Vanderbilt functional and maximally localized Wannier functions}. Commun. Math. Phys. \textbf{322}, 835--875 (2013).

\bibitem{PST-0} G. Panati, H. Spohn, and S. Teufel: \textit{Space-Adiabatic Perturbation Theory}. Adv. Theor. Math. Phys. 7, 145-204 (2003).

\bibitem{PST} G. Panati, H. Spohn, and S. Teufel: {\it Effective dynamics for
	Bloch electrons: Peierls substitution and beyond}. 
\newblock Comm. Math. Phys. {\bf 242}(3), 547--578 (2003).

\bibitem{Pe} R.E. Peierls: {\it Quantum Theory of Solids}.
\newblock  Oxford University
Press, (1955).




\bibitem{Sj} J. Sj\"ostrand: {\it Microlocal analysis for the periodic magnetic Schr\"odinger equation and related questions}. CIME Lectures 1989. In Microlocal Analysis and Applications. 
237--332 (1989).

\bibitem{T03} S. Teufel: \textit{Adiabatic Perturbation Theory in Quantum Dynamics}. Lecture Notes in Mathematics 1821, Springer Science \& Business Media, (2003). 



\bibitem{Z} J. Zak: {\it Magnetic translation group}. Phys. Rev {\bf 134}, 1602 A, (1964).
	
\end{thebibliography}
\end{document}